%% file: main.tex
\documentclass[reqno,centertags,12pt]{amsart}
\usepackage[utf8]{inputenc}
\usepackage[T1]{fontenc}
\usepackage[american]{babel}
\usepackage[babel=true,expansion=alltext,protrusion=alltext-nott,final]{microtype}
\usepackage{times}
\usepackage{dsfont}
\usepackage{mathtools}

\usepackage{tikz}
\usetikzlibrary{arrows.meta,calc}
\usepackage{pgfplots}
\pgfplotsset{compat=1.17}

\usepackage[svgnames]{xcolor}
\usepackage[breaklinks=true,pagebackref]{hyperref}

\hypersetup{pdftitle={PaperTemplate}, pdfauthor={M. R. Schulz}, pdfsubject={35P15; 35J10, 81Q10}, pdfkeywords={}}

\usepackage[utf8]{inputenc}
\usepackage[sb]{libertine}
\usepackage[varqu,varl]{inconsolata}
\usepackage[libertine]{newtxmath}
\useosf
\usepackage[T1]{fontenc}
\usepackage{textcomp}

\usepackage{amsthm}
\usepackage{enumitem,multirow}

\usepackage[margin={3cm},marginparwidth={2.5cm}]{geometry}

\newcommand{\abs}[1]{\left| #1 \right|}
\newcommand{\scp}[2]{\left\langle #1, #2 \right\rangle}

\newcommand{\norm}[1]{\left\Vert #1 \right\Vert}
\newcommand{\N}{{\mathbb{N}}}
\newcommand{\R}{{\mathbb{R}}}

\newcommand{\loc}{\text{\rm{loc}}}

\allowdisplaybreaks
\DeclareMathOperator{\supp}{supp}

\DeclareMathOperator {\ess}{ess}
\DeclareMathOperator {\disc}{disc}

\newtheorem{theorem}{Theorem}[section]
\newtheorem{proposition}[theorem]{Proposition}
\newtheorem{lemma}[theorem]{Lemma}
\newtheorem{corollary}[theorem]{Corollary}

\theoremstyle{definition}
\newtheorem{definition}[theorem]{Definition}

\newtheorem{remark}[theorem]{Remark}

\numberwithin{theorem}{section}
\numberwithin{equation}{section}
\newcounter{smalllist}

\usepackage{color}

\thanks{\copyright 2026 by the authors. Faithful reproduction of this article,
       in its entirety, by any means is permitted for non-commercial purposes}
\keywords{Efimov Effect, Resonances, Virtual Levels, Quantum Tunneling}
\date{\today}

\begin{document}

\title[Four Particles in Dimension Two]{On the Efimov Effect for Four Particles in Dimension Two}

\author{Jonathan Rau}
\address{Department of Mathematics, Institute for Analysis, Karlsruhe Institute of Technology, 76131 Karlsruhe, Germany}
 \email{jonathan.rau@kit.edu}
\author{Marvin R. Schulz}
\address{Department of Mathematical Sciences, University of Copenhagen, Universitetsparken 5, 2100 Copenhagen, Denmark}
%}
\email{masc@math.ku.dk}

\begin{abstract}
We prove that the Schrödinger operator describing four particles in two dimensions, interacting solely through short--range three--body forces, can possess infinitely many bound states. This holds under the assumption that each three--body subsystem has a virtual level at zero energy. Our result establishes an analog of the Efimov effect for such four--particle systems in two dimensions.
\end{abstract}
%\MSC[2010]{81Q05 (primary);  35Q40, 81V45 (secondary)}
%\end{frontmatter}    

\maketitle
{\hypersetup{linkcolor=black}
\tableofcontents}

%%%%%%%%%%%%%%%%%%%%%%%%%%%%%%%%%%%%%%%%%%%
%Introduction
%%%%%%%%%%%%%%%%%%%%%%%%%%%%%%%%%%%%%%%%%%%
\section{Introduction}
The results in this work are twofold. We establish an Efimov type effect for four particles in dimension two. To this end, we prove a quantum tunneling effect for the double--well operator in the critical four--dimensional setting at the level of resonances.

Our analysis of quantum tunneling is fully variational and of independent interest beyond its application to the Efimov effect.
Notably, we rigorously prove an adiabatic reduction reminiscent of the Born–-Oppenheimer approximation. This approximation is frequently assumed in the study of the Efimov effect (see, e.g., \cite[Section 4]{HS:2025}, \cite{BFT:2026}), while its quantitative control is known to add substantial technical difficulties.

The Efimov effect, named after the physicist Vitaly N. Efimov, is a striking phenomenon in three--particle quantum mechanics. It describes a situation in which a three--particle system in $\R^3$ with short-range interactions can exhibit infinitely many negative eigenvalues. This occurs despite the absence of bound states in all associated two--particle subsystems, provided that at least two of them possess a zero-energy resonance at the bottom of the essential spectrum. This behavior is particularly remarkable because one--particle Schrödinger operators with short-range potentials usually only have a finite discrete spectrum. Another fundamental feature of the Efimov effect is its universality, meaning the effect in the same way independently of the microscopic details of the involved potentials. In particular, let $N(z)$ be the number of eigenvalues below the threshold $z<0$. Then there exists a constant $\mathcal{A}_0>0$, depending only on the masses of the particles, such that
\begin{equation}
    \lim_{z \to 0^-} \frac{N(z)}{|\log \abs{z}|} = \mathcal{A}_0.\label{sec1:eq:universality efimov}
\end{equation}

The effect was first discovered in 1970 by Efimov ~\cite{E:1970}. A rigorous mathematical proof was first established in 1974 by Yafaev  ~\cite{J:1974}  using so-called symmetrized Faddeev equations. Moreover, it was shown in~\cite{Y:1975} and \cite{Z:1974} that if at most one of the two--particle subsystems has a virtual level, then the corresponding Schrödinger has at most finitely many bound states. Variational proofs of the Efimov effect were established by Ovchinnikov and Sigal \cite{OS:1979} and subsequently improved by Tamura \cite{T:1991}. The universality relation \eqref{sec1:eq:universality efimov} was proven in 1993 by A.~Sobolev \cite{S:1993} (In particular see \cite[Theorem 3.3]{S:1993} for the explicit constant $\mathcal{A}_0$ in the case of equal masses). For reference to significant physical and mathematical findings until the end of the 1990s see for example, \cite{TVS:1993}, \cite{KS:1979},  \cite{P:1995}, \cite{P:1996}, \cite{VZ:1982}, \cite{VZ:1984} and \cite{VZ:1992}.

The mechanism of the effect can be understood as the binding of particles through a conspiracy of potential wells. This mechanism can only occur in spatial dimensions $d\leq 4$, since for $d\geq 5$ the Schrödinger operator does not admit zero-energy resonances but only bound states in $L^2(\R^d)$. Klaus and Simon proved in \cite{KS:1979} the asymptotic behavior for the ground state energy of the double--well operator at large separation $L\gg1$ of potential wells. In particular, they showed for that ground state energy $E(L)\sim -\mu_0^2 L^{-2}$, where $\mu_0\in(0,1)$ is the unique solution of the equation $e^{-\mu_0}=\mu_0$. It is remarkable that this constant is independent of the microscopic properties of the potential. This bound on the ground state energy of the double--well operator plays a decisive role in Tamura's variational proof of the Efimov effect \cite{T:1991}. For the critical dimension $d=4$, Pinchover \cite{P:1996} proved that $-L^{-2} \lesssim E(L) \lesssim -L^{-2}\log(L)^{-1}$. Recently, one of the authors showed in \cite{thesis:S:2025} that the lower bound can be improved to match the logarithmic correction in the upper bound. This allowed us to identify the setting of four particles in spatial dimension two as a configuration that is able to show an Efimov type effect under the additional assumptions on the interacting potentials provided below.

Due to technological improvements, resonant quantum systems became accessible to experimental investigation through so-called Feshbach resonances \cite{TVS:1993},\cite{C++:1998} and \cite{I++:1998}. In 2002, the Efimov effect was experimentally observed for the first time in an ultracold caesium gas; the results were later published in 2006~\cite{kraemer2006evidence}. Further progress in experiments with ultra--cold gases did lead to various experiments for configurations beyond three particles in three--dimensionals space, see e.g. \cite{Z:2009}, \cite{G:2009}, \cite{G:Li7:2009}, \cite{P:2009}, \cite{B:2009} and \cite{X:2020}). For a mathematical treatment of such cases see \cite{SZ:2025} and \cite{HSV:2025}.

For a broader reference from physics literature we refer to the review by Naidon and Endo \cite{NE:2017}. Motivated by the recent experimental developments, the Efimov effect once again became the focus of greater interest among physicists and mathematicians. 

Following Amado and Greenwood \cite{AG:1973} it was predicted that there will be no Efimov effect for more than three particles in dimension three. In 2013, Gridnev proved rigorously in \cite{G:2013} the absence of the Efimov effect for four bosons in three dimensions. This was subsequently strengthened in \cite{unpub:BBV:2020}, where the non-occurrence of the Efimov effect for $N\geq 4$ bosons in spatial dimensions $d\geq 3$ was established. Considering so--called spinless fermions (one can think of highly polarized fermions) Gridnev showed in \cite{G:2014} an Efimove--type effect (Super--Efimov effect) for three of these particles in dimension three. Later, in \cite{unpub:BBV:2020} it was proven that for these spinless fermions in dimension one and two a similar Efimov type effect does not exist for particle number $N\geq 4$.

In \cite{BBV:2022}, the absence of the Efimov effect for $N$ bosons interacting via $(N-1)$-body interactions was established for $N\geq 4$ in spatial dimension $d=1$ and for $N\geq 5$ if $d=2$. For the case $N = 4$ and $d = 2$, the physicist Nishida predicted that an Efimov type effect may occur if the particles interact exclusively through three--body potentials \cite{N:2017}. Apparently, such configurations of particles can be produced in experiments; see, for instance,~\cite{P:2014}. 

In the present work, we provide a rigorous proof that an Efimov type effect indeed can emerge in this setting of four bosons in two dimensions interacting solely via three-body interactions that are short--range. Thereby we prove the effect predicted in physics literature. This also shows that the prediction by Amado and Greenwood does not extend to spatial dimension $d=2$. The mechanism discussed here relies crucially on the absence of genuine two-body interactions. We do not address the effect of additional two-body forces, which is expected to fundamentally alter the low-energy behavior.

While our approach is conceptually related to the variational framework of Tamura \cite{T:1991}, substantial new arguments are required in the present setting. In particular, we extend the ideas of Pinchover \cite{P:1996} by constructing a suitable test function for the associated double-well operator, which satisfies the upper bound $E(R)\lesssim -L^{-2}\log(L)^{-1}$ for the ground state energy $E(R)$. This allows us to establish the Efimov effect for four particles in two dimensions.

In the construction of the test function, we study the tunneling effect of a particle with resonances rather than bound states. Near the potential wells, the test function is chosen to coincide with the corresponding resonance function, while away from the wells it is taken to be a solution of $-\Delta + \mu^2$ for a suitable $\mu > 0$. 

\subsection{Structure of the Paper}
The paper is organized as follows. In Section~\ref{sec1:prelim} we introduce the Schrödinger operators and suitable center of mass coordinates. In Section~\ref{sec2:main theorem} we state the main theorem and its proof, which relies on the two Lemmas \ref{sec2:lem:proto_efimov} and \ref{sec2:lem:adiabatic}. These two lemmas are proved in Section~\ref{sec4:adiabatic_approx} and Section~\ref{sec3:new:Tunneling}. We defer some technical details of the proof of Lemma~\ref{sec2:lem:proto_efimov} to Section~\ref{sec:aux_lemmatas}. Appendix~\ref{app:Bessel} provides some details on modified Bessel functions of the second kind, including their series expansions near zero. In Appendix~\ref{app:integrals} we evaluate several integrals involving modified Bessel functions which appear in the previous sections. In the Appendix~\ref{app:D} we discuss the decay of the resonance function in dimension four and as Corollary~\ref{cor:lambda_crit} we present an explicit example of a potential satisfying all imposed conditions.\\

\textbf{Acknowledgments:} 
The authors wish to thank Dirk Hundertmark, Semjon Vugalter and Andreas Bitter for insightful discussions, which greatly contributed to the development and refinement of this manuscript. M.~R.~Schulz  also acknowledges support by the ERC Advanced Grant MathBEC - 101095820 and by the VILLUM Foundation grant no. 10059.

%%%%%%%%%%%%%%%%%%%%%%%%%%%%%%%%%%%%%%%%%%%
%Prelimanries 
%%%%%%%%%%%%%%%%%%%%%%%%%%%%%%%%%%%%%%%%%%%
\section{Preliminaries}\label{sec1:prelim}
We study the Schrödinger operator of four identical bosonic particles in dimension two of identical mass $m>0$ at position $x_1,x_2,x_3,x_4 \in \R^2$. We denote by $I=\{ (123),(124),(134),(234)\}$ the three--particle subsystems. We study the case of short-range potentials defined as follows:
\begin{definition}[Short-Range Potentials]\label{sec2:def:short-range}
    We call a (real--valued) potential $V \in L^{d/2}(\R^d)$ short--range if there exist $R,\delta>0$ such that
    \begin{equation*}
        \abs{V(x)} \leq \abs{x}^{-2-2\delta} \quad \text{ for } \abs{x}>R.
    \end{equation*}
\end{definition}
We assume that there is no background field meaning that the system is invariant under mutual translation and rotations of the particles. Consequently, we only consider potentials depending on relative distances $x_i-x_j$ for $i\neq j$ and $i,j \in \{1,2,3,4\}$.

The four--particle Schrödinger operator then is
\begin{equation} \label{sec2:eq:4part}
   \mathcal{H} = \left( \sum_{k=1}^4 \frac{P_{x_k}^2 }{2m}\right)+ \sum_{\alpha \in I} V_\alpha, \quad P_{x} = -i\nabla_{x_k}, \quad \text{ on } L^2((\R^2)^4).
\end{equation}
For any $\alpha \in I$ the Schrödinger operator of the corresponding three--particle subsystems is given as
\begin{equation}\label{eq:three_particle_operator}
    h_\alpha = \left(\sum_{j\in \alpha } \frac{P_{x_j}^2 }{2m}\right) + V_\alpha.
\end{equation}
\begin{remark}
    As we consider four particles in dimension two,  the three--particle operator above operates on $L^2((\R^2)^3)$ and after reduction to its center of mass motion the corresponding operator will be defined in $L^2((\R^2)^2) \simeq L^2(\R^4)$. Similarly, the four--particle operator $\mathcal{H}$ in \eqref{sec2:eq:4part} is acting on  $L^2((\R^2)^4) \simeq L^2(\R^8)$ and after passing to its center mass, this reduces to $L^2(\R^6)$ which can be decomposed into four inner degrees of freedom of a three--particle subsystem and two inner degrees of freedom corresponding to a separation of three--particle subsystems. Therefore, dimensions four and two will be relevant in the subsequent analysis.
\end{remark}
Due to the HVZ-Theorem (named after Zhislin \cite{Z:1960}, van Winter \cite{vW:1964} and Hunziker \cite{H:1966})
\begin{equation*}
  \sigma_{\ess}(H) = [\Sigma, \infty), \quad  \text{ where }  \Sigma = \min_{\alpha \in I} \inf \sigma(h_\alpha).
\end{equation*}
In the following, we consider the case $\Sigma=0$ and $h_\alpha \ge 0$, where the operator $h_\alpha$ has a virtual level at zero in the sense of Yafaev \cite{Y:1975}; see Definition~\ref{sec1:def:virtual_lvl} below. In this situation, there exists a nontrivial zero--energy solution $\varphi_0$ of $h_\alpha$, satisfying
\[
h_\alpha \varphi_0 = 0
\]
in the distributional sense, with $\varphi_0 \notin L^2(\R^4)$ but $\varphi_0 \in \dot H^1(\R^4)$. For this statement and further background and examples, we refer to \cite[Chapter~1.4]{thesis:S:2025}. 

The homogeneous Sobolev space $\dot H^1(\R^4)$ has already been introduced by Birman \cite{B:1961} and we follow the definition given in \cite[Chapter~2]{book:FLW:2022}:
\begin{definition}[Homogeneous Sobolev Spaces]\label{sec2:def:hom_SOB}
    Let
\begin{equation*}
    \begin{split}
        \dot H^1(\R^4) = \{u \in L^2(\R^4, \abs{x}^{-2} dx): \nabla u \in L^2(\R^4)  \}
    \end{split}
\end{equation*}
be the homogeneous Sobolev space equipped with the norm 
\begin{equation*}
        \norm{u}_{\dot H^1(\R^4)} \coloneqq \left( \int_{\R^4}\abs{\nabla u}^2 dx \right)^{1/2} \, .
\end{equation*}
\end{definition}
\subsection{Center of Mass Coordinates} \label{subsec:com}

We assume that the interaction $V_\alpha$ is invariant under mutual translations of the corresponding three--particle subsystem. Since this holds for all $\alpha \in I$, the four--particle Schrödinger $H$ is invariant under simultaneous translations of all particle coordinates and hence commutes with the total momentum operator. As a consequence,
\begin{equation*}
\sigma(H)=\sigma_{\mathrm{ess}}(H).
\end{equation*}

To study the internal degrees of freedom, we separate the center--of--mass motion from the relative motion. Fix $\alpha \in I$. By standard arguments, the three--particle Schrödinger decomposes as
\begin{equation*}
h_\alpha=(T_\alpha+V_\alpha)\otimes 1 + 1\otimes \frac{-1}{2M}\Delta_{X_\alpha},
\qquad M=\sum_{i\in\alpha}m_i=3m,
\end{equation*}
where $X_\alpha$ denotes the center of mass of the subsystem. The operator $T_\alpha$ acts on the internal configuration space
\begin{equation*}
R_0[\alpha]=\bigl\{(x_i)_{i\in\alpha}\in(\mathbb{R}^2)^3:\, X_\alpha=0\bigr\},
\end{equation*}
which is a four--dimensional linear subspace.

Since the center--of--mass motion is free, we restrict attention to the internal Schrödinger operator
\begin{equation*}
h_\alpha^{\mathrm{COM}}=T_\alpha+V_\alpha
\quad \text{on } L^2(R_0[\alpha]).
\end{equation*}
Introducing Jacobi coordinates for $\alpha=(ijk)$,
\begin{equation*}
q_{ij}=x_j-x_i, \qquad
\eta_{ij}=x_k-\frac{x_i+x_j}{2},
\end{equation*}
the constraint $X_\alpha=0$ together with the variables $(q_{ij},\eta_{ij})$ yields a one–to–one parametrization of $R_0[\alpha]\cong\mathbb{R}^4$.

Identifying $R_0[\alpha]$ with $\mathbb{R}^4$ by using the Jacobi coordinates $(q_{ij},\eta_{ij})$, the internal Schrödinger takes the form
\begin{equation}\label{sec1:eq:three_part_op}
h_\alpha^{\mathrm{COM}}
= P_{q_{ij}}^2 + \frac34 P_{\eta_{ij}}^2
+ V_\alpha\!\left(q_{ij},\eta_{ij}\right)
\quad \text{on } L^2(\mathbb{R}^4).
\end{equation}
Following Yafaev \cite{J:1974}, we introduce the following notion.
\begin{definition}\label{sec1:def:virtual_lvl}
Let $V\in L^2(\mathbb{R}^4)$ be real--valued and let
\[
h(\varepsilon)= -\Delta_y + (1+\varepsilon)V(y)
\quad \text{on } L^2(\mathbb{R}^4).
\]
We say that $h(0)$ has a \emph{virtual level at zero} if
\begin{equation}\label{sec1:eq:resonance}
    h(0)\geq 0
    \quad \text{and} \quad
    \inf \sigma\!\left(h(\varepsilon)\right)<0
    \quad \text{for all } \varepsilon>0 .
\end{equation}
\end{definition}

\begin{remark}\label{sec1:rem:virtual_lvl}
The definition above does not apply directly to the three--particle operator
$h_{\alpha}^{\mathrm{COM}}$ in \eqref{sec1:eq:three_part_op}, since it is defined
on a different configuration space.  
We shall nevertheless say that $h_{\alpha}^{\mathrm{COM}}$ has a virtual level
at zero if the operator obtained from $h_{\alpha}^{\mathrm{COM}}$ by the natural
rescaling of variables is unitarily equivalent to an operator of the form in the Definition \ref{sec1:def:virtual_lvl} possessing a virtual level at zero.

Condition \eqref{sec1:eq:resonance} describes a non-generic situation at the threshold, as it requires a specific tuning of the potential $V$. The Efimov effect arises precisely in this regime of zero-energy instability of the subsystem Schrödinger operator.  To demonstrate that this class of potentials is nonempty, we give in \ref{app:D} (Corollary~\ref{cor:lambda_crit}) an explicit compactly supported potential in dimension $d\geq 3$ that exhibits a virtual level at zero energy.
\end{remark}

We proceed analogously for the four--particle Schrödinger $H$, which is invariant
under joint translations of all particles. As a consequence, $H$ decomposes as
$$
H
=
\Bigl[T + \sum_{\alpha\in I} V_\alpha \Bigr]\otimes 1
\;+\;
1\otimes \frac{1}{2\mathcal M}P^2_\vartheta,
\qquad
\mathcal M=\sum_{i=1}^4 m_i=4m .
$$
Here $T$ denotes the Laplace--Beltrami operator on $L^2(\mathcal{R}_0)$, where
$$
\mathcal{R}_0
\coloneqq
\Bigl\{(x_1,\dots,x_4)\in(\mathbb R^2)^4 \,\Big|\, \sum_{j=1}^4 x_j=0\Bigr\}
\subset \mathbb R^8 .
$$
We define the four--particle center--of--mass Schrödinger
$$
H_{\mathrm{COM}}
=
T+\sum_{\alpha\in I} V_\alpha
\quad \text{on } L^2(\mathcal{R}_0).
$$

Let $J\coloneqq\{(12),(13),(14),(23),(24),(34)\}$ denote the set of all two--particle subsystems. Fix $\gamma,\delta\in J$ such that $\gamma=(ij)$ and $\delta=(kl)$ are
disjoint, i.e.\ $\{i,j,k,l\}=\{1,2,3,4\}$. We introduce the coordinates
\begin{equation}\label{sec1:eq:jacobi_coords_2plus2}
q_\gamma=x_j-x_i,\qquad
q_\delta=x_l-x_k,\qquad
\xi_{\gamma,\delta}=\frac{x_k+x_l}{2}-\frac{x_i+x_j}{2},
\qquad
\vartheta=\frac{1}{4}\sum_{r=1}^4 x_r .
\end{equation}
Let $P_r=-i\nabla_r$ for $r\in\mathbb R^2$. Then
$$
H \simeq H_{\mathrm{COM}}\otimes (8m)^{-1}P_\vartheta^2 ,
$$
where, by a slight abuse of notation,
$$
H_{\mathrm{COM}}
=
m^{-1}P_{q_\gamma}^2
+m^{-1}P_{q_\delta}^2
+(2m)^{-1}P_{\xi_{\gamma,\delta}}^2
+\sum_{\alpha\in I} V_\alpha
\quad \text{on } L^2(\mathbb R^6).
$$
and define
$$
\eta_\gamma^\pm=\xi_{\gamma,\delta}\pm q_\delta/2,
\qquad
\eta_\delta^\pm=-\xi_{\gamma,\delta}\pm q_\gamma/2 .
$$
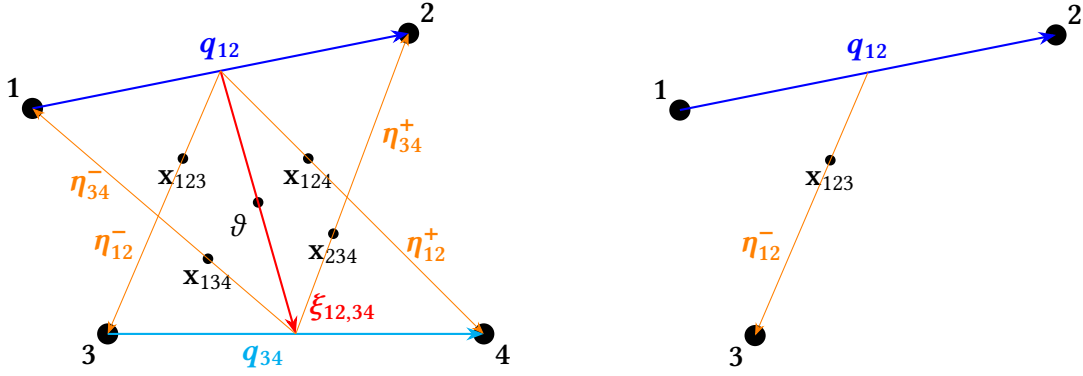
\begin{figure}[ht!]
\centering
\begin{minipage}{0.48\textwidth}
    \centering
    \input{figure_001}
\end{minipage}
\hfill
\begin{minipage}{0.48\textwidth}
    \centering
    \input{figure_002}
\end{minipage}
\caption{The choice of coordinates for the inner degrees of freedom and their connection to usual Jacobian coordinates of the subsystem $(123)$.} \label{fig:001}
\end{figure}
The pairs $(q_\gamma,\eta_\gamma^\pm)$ and $(q_\delta,\eta_\delta^\pm)$ are the same as the usual Jacobi coordinates defined earlier for the four distinct three--particle subsystems $\alpha\in I$. See Figure \ref{fig:001}. They are related to the original particle positions
$x_1,\dots,x_4\in\mathbb R^2$ and satisfy, in the corresponding three--particle
center--of--mass frame $X_\alpha=\sum_{s\in\alpha}x_s=0$, the relation
$$ 
\frac12|q|^2+\frac23|\eta|^2
=
\sum_{j\in\alpha}|x_j|^2 .
$$

We assume that the three--particle interaction potentials $V_\alpha$ are short--range and identical, i.e.\ $V_\alpha=V$. Fixing the coordinates $(q_\gamma,q_\delta,\xi_{\gamma,\delta})$, the operator
$H_{\mathrm{COM}}$ is unitarily equivalent to the operator on $L^2((\R^2)^3)$
given by
\begin{equation}\label{sec1:eq:4_part_op}
\begin{aligned}
H_{\mathrm{COM}}
={}&
m^{-1}P_{q_\gamma}^2
+m^{-1}P_{q_\delta}^2
+(2m)^{-1}P_{\xi_{\gamma,\delta}}^2
\\
&+V\!\left(q_\gamma,\xi_{\gamma,\delta}+q_\delta/2\right)
+V\!\left(q_\gamma,\xi_{\gamma,\delta}-q_\delta/2\right)
\\
&+V\!\left(q_\delta,\xi_{\gamma,\delta}+q_\gamma/2\right)
+V\!\left(q_\delta,\xi_{\gamma,\delta}-q_\gamma/2\right).
\end{aligned}
\end{equation}
In the following we always assume that $V \in L^2((\R^2)^2)$ is short--range (see Definition \ref{sec2:def:short-range}) with fixed parameters $R,\delta>0$.
\begin{remark}
It is sufficient to prove the infinitude of the discrete spectrum of the operator above, since the spectrum of $H_{\mathrm{COM}}$ is independent of the particular choice of coordinates.
\end{remark}
%%%%%%%%%%%%%%%%%%%%%%%%%%%%%%%%%%%%%%%%%%%%%%%%%%%%%%%%%%%%
%%%%%%%%%%% SECTION MAIN THEOREM
%%%%%%%%%%%%%%%%%%%%%%%%%%%%%%%%%%%%%%%%%%%%%%%%%%%%%%%%%%%
\section{Main Theorem}\label{sec2:main theorem}
\begin{theorem}\label{sec2:thrm:main}
Assume the setting of Section~\ref{sec1:prelim}, with identical particles and short--range three--body interactions in $L^2((\R^2)^2)$.
If the internal three--particle Schrödinger $h^{\mathrm{COM}}_\alpha$ has a
virtual level at zero, then the four--particle internal Schrödinger
$H_{\mathrm{COM}}$ has infinitely many negative eigenvalues
accumulating at zero.
\end{theorem}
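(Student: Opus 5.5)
The plan follows Tamura's variational scheme. It suffices to construct, for every $N\in\N$, an $N$-dimensional subspace $\mathcal F_N\subset H^1(\R^6)$ on which the quadratic form of $H_{\mathrm{COM}}$ in \eqref{sec1:eq:4_part_op} is negative definite. By the min--max principle, and since $\sigma_{\ess}(H_{\mathrm{COM}})=[0,\infty)$ by the HVZ theorem with $\Sigma=0$, this yields infinitely many negative eigenvalues. They must accumulate at zero because $H_{\mathrm{COM}}$ is bounded below.

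The test functions will be chosen with disjoint supports in the separation variable, localized in dyadic-type shells $L_n\le |q_\delta|\le L_n'$ with $L_n\to\infty$ very rapidly. Disjointness of supports makes the quadratic form diagonal on their span, so each function only needs to be checked individually.

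Each test function is built in an adiabatic (Born--Oppenheimer) form. Fix the "heavy" coordinate $q_\delta=r$ with $|r|$ large. Among the four potentials, the two terms $V(q_\gamma,\xi\pm q_\delta/2)$ form, in the four-dimensional variable $y=(q_\gamma,\xi)$ after the natural rescaling, a double-well operator $-\Delta_y+V(y-a)+V(y+a)$ with separation $|a|\sim |r|$. By the double-well tunneling lemma (Lemma~\ref{sec2:lem:proto_efimov}), this operator has a trial function $\phi_r$, equal to the virtual-level function $\varphi_0$ near each well and to a $K_1$-type solution of $-\Delta+\mu^2$ away from them. Its energy satisfies
\[
E(r)\le -\frac{c}{|r|^2\log|r|}
\]
for some $c>0$ that is not too small.

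One then sets $\Psi(q_\gamma,\xi,q_\delta)=\phi_{q_\delta}(q_\gamma,\xi)\,f(q_\delta)$. Lemma~\ref{sec2:lem:adiabatic} is used to control the error terms arising from the $q_\delta$-derivatives of $\phi_{q_\delta}$, and to show that the remaining potentials $V(q_\delta,\xi\pm q_\gamma/2)$, which are small on the support, contribute only lower order. This reduces the quadratic form to that of an effective two-dimensional operator
\[
-\Delta_{q_\delta}-\frac{c'}{|q_\delta|^2\log|q_\delta|}\qquad\text{for }|q_\delta|\text{ large.}
\]
The interaction term $V(q_\delta,\cdot)$ is nonpositive in effect, or negligible, and can be discarded by short-range decay.

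For the effective two-dimensional operator, radial test functions $f(\rho)$ supported in $[L,L']$ give kinetic energy $\int |f'|^2\rho\,d\rho$. With the substitution $t=\log\rho$, this becomes $\int |g'|^2dt$, and the potential becomes $-c'\int |g|^2/t\,dt$. A one-dimensional operator $-\partial_t^2-c'/t$ has infinitely many negative eigenvalues, since the potential $1/t$ decays slower than $t^{-2}$. Hence disjointly supported $g_n$ exist with negative form for every $n$, and this produces $\mathcal F_N$. Note that the argument in fact gives more than the theorem requires: the count grows faster than $|\log|z||$.

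The main obstacle is the adiabatic error control. The derivative $\partial_{q_\delta}\phi_{q_\delta}$ produces a term of the form $\int|\partial_r\phi_r|^2\,dy\,|f|^2$. Since $\varphi_0\notin L^2(\R^4)$, with decay of order $|y|^{-2}$, the norm $\|\phi_r\|_{L^2}$ itself grows like $(\log|r|)^{1/2}$ relative to the profile. One must show this term is of order $o\bigl(|r|^{-2}(\log|r|)^{-1}\bigr)$ after normalization, so that it does not destroy the critical logarithmic gain.

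My first attempt would be to normalize $\phi_r$ in $L^2$ and exploit scaling: write $\phi_r(y)\approx\Phi(y/|r|)$ away from the wells, so that $\partial_r$ costs a factor $|r|^{-1}$. I would then compare this contribution against $|E(r)|\sim|r|^{-2}(\log|r|)^{-1}$, possibly choosing $\mu$ slightly smaller than optimal to gain room.
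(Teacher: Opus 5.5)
Your outer structure matches the paper's: a product ansatz in the separation coordinate, reduction to $-\Delta_r-c|r|^{-2}(\log|r|)^{-1}$, the substitution $t=\log|r|$, and disjoint Dirichlet bumps. If you take Lemma~\ref{sec2:lem:adiabatic} as given, your last step alone proves the theorem. But that lemma is the \emph{output} of the adiabatic reduction, not a tool for controlling its errors. Your account of the reduction fails at exactly the point you call the main obstacle.

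\textbf{The derivative term is not small.} The derivative of the fiber state in the heavy coordinate is not $o(|r|^{-2}(\log|r|)^{-1})$. Near each well, $\phi_r$ is a translate of $\varphi_0$ that moves with $r$. So $\int|\partial_r\phi_r|^2\,dy$ is bounded below by a positive multiple of $\|\nabla_\xi\varphi_0\|^2_{L^2(B_\rho)}$. After normalization by $\|\phi_r\|^2\sim4\pi^2\log|r|$ this is of order $(\log|r|)^{-1}$, larger than $|E(r)|$ by a factor $|r|^2$. The heuristic $\phi_r(y)\approx\Phi(y/|r|)$ misses the $O(1)$-scale structure at the moving wells.

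\textbf{Missing idea: kinetic bookkeeping.} At fixed heavy coordinate, $H_{\mathrm{COM}}$ supplies only $\tfrac12P_\xi^2$. But $V$ has its virtual level for the three-body kinetic energy $P_q^2+\tfrac34P_\eta^2$. With the kinetic energy actually available, the single wells are not critical, so Lemma~\ref{sec2:lem:proto_efimov} does not apply. The paper therefore writes $H_{\mathrm{COM}}=A+B-\tfrac14P_\xi^2$ as in \eqref{sec4:eq:hcom_a_b}, so that $B$ is the critical double well. It then pairs the leftover $-\tfrac14P_\xi^2$ with the heavy-coordinate derivative of the fiber state. On the balls around the wells the two cancel identically, by the chain rule \eqref{sec5:eq:chainrule}. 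This is why Lemma~\ref{sec2:lem:proto_efimov} contains the estimate \eqref{sec3:eq:for_born_op} for $P_r^2-\tfrac14P_{(x_3,x_4)}^2$ rather than for $P_r^2$.

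\textbf{The remainder is of the same order as the binding energy.} Even after this cancellation, the remainder is not negligible. In the far field, the $L$-dependence of $\mu=\mu_0/L$ in $G_\mu(\cdot\pm\ell/2)$ contributes $\tfrac{4\pi^2}{3}L^{-2}$, independently of $\mu_0$. The cross term tends to the same value as $\mu_0\to0$ (see the proof of Proposition~\ref{sec5:prop:Ableitung nach R von G mu Termen}). After normalization this gives $\tfrac23L^{-2}(\log L)^{-1}$, the same order as the binding energy. So shrinking $\mu$ buys no room, and ``some $c>0$ not too small'' is not enough. The argument closes only by comparing sharp constants: $\lambda^{-2}\cdot 2=\tfrac32$ from the energy bound against $\tfrac23$ from \eqref{sec3:eq:for_born_op}, leaving $\tfrac56>0$.

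\textbf{Two smaller points.} First, dyadic shells in $|q_\delta|$ do not work. A bump at height $t=\log|q_\delta|\sim T$ needs length of order $\sqrt T$ to beat the $c/t$ gain, hence the paper's intervals $I_n=[n^2,n^2+Kn]$. Second, your aside on counting is wrong. The effective operator yields only on the order of $\sqrt{|\log|z||}$ eigenvalues below $z$, not more than $|\log|z||$.
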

\begin{remark}
For identical particles the operators $h_{\alpha}^{\mathrm{COM}}$, $\alpha\in I$, are unitarily equivalent. Hence, it suffices to assume the existence of a virtual level at zero for a single three--particle subsystem. Theorem~\ref{sec2:thrm:main} establishes an Efimov type effect for four identical bosons in two spatial dimensions interacting exclusively via three--body forces.
\end{remark}
The proof of Theorem~\ref{sec2:thrm:main} is carried out in three steps.

In \textbf{Step~1} we show that, due to a \emph{conspiracy of potential wells} in the sense of~\cite{KS:1979}, an associated double--well operator in $\mathbb{R}^4$ exhibits a tunneling effect at the level of resonance functions.

In \textbf{Step~2} we prove that this tunneling mechanism induces an effective long--range attractive potential in the four--particle internal Schrödinger $H_{\mathrm{COM}}$. The corresponding statements are formulated as Lemmas~\ref{sec2:lem:proto_efimov} and~\ref{sec2:lem:adiabatic} below.
Assuming these two lemmas, \textbf{Step~3} completes the argument and yields Theorem~\ref{sec2:thrm:main}.

We begin with the first step, which consists of establishing the following.
\begin{lemma}[Quantum Tunneling at Threshold in Dimension Four]\label{sec2:lem:proto_efimov}
Let $V:\mathbb R^4\to\mathbb R$ be as in Theorem~\ref{sec2:thrm:main}. For
$\ell \in\mathbb R^4$ with $L=|\ell|$ define
\begin{equation}\label{sec3:eq:dw_op}
H[\ell]=P_x^2+V(x-\ell/2)+V(x+\ell/2)\quad \text{on }L^2(\mathbb R^4,dx).
\end{equation}
For any given $\varepsilon>0$ there exists $L_0=L_0(\varepsilon)>0$ such that there exists a family of functions $\{ \widetilde\Phi(\ell,\cdot ) \}_{\ell\in \R^4} \subset H^1(\mathbb R^4)$ with the following properties:
\begin{equation}\label{sec2:eq:tunneling_gs}
\quad \|\widetilde\Phi(\ell,\cdot )\|_{L^2(\mathbb R^4)}=1 \, \text{ and } \, \langle \widetilde\Phi(\ell,\cdot ), H[\ell]\widetilde\Phi(\ell,\cdot )\rangle
\le -(2-\varepsilon)\,L^{-2}\log(L)^{-1},
\qquad L>L_0.
\end{equation}
Moreover, for the explicit choice $\ell=(0,0,r)$
with $r\in\mathbb R^2$ and $L=|r|>L_0$, the map \mbox{
$r \mapsto \widetilde \Phi((0,0,r), \cdot)$ } may be chosen in $H^1(\mathbb R^2;L^2(\mathbb R^4))$ and satisfies
\begin{equation}\label{sec3:eq:for_born_op}
\left\langle \widetilde\Phi(\ell,\cdot ),
\left(P_r^2-\frac14 P_{(x_3,x_4)}^2\right)\widetilde\Phi(\ell,\cdot )\right\rangle_{L^2(dx)}
\le \left( \frac{2}{3} +\varepsilon \right)\,L^{-2}\log(L)^{-1},
\qquad L>L_0.
\end{equation}
\end{lemma}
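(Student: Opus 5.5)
Our plan is to construct $\widetilde\Phi$ explicitly as a glued trial function, following the strategy announced in the introduction. We build it from the zero-energy resonance $\varphi_0\in\dot H^1(\mathbb R^4)$ of $-\Delta+V$ and the decaying solution of $-\Delta+\mu^2$ in $\mathbb R^4$, namely $K(\rho)=\rho^{-1}K_1(\mu\rho)$.

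\textbf{Step 1: asymptotics of the resonance.} We first record the decay of the resonance function (Appendix~\ref{app:D}). In dimension four, $\varphi_0(x)=c\,|x|^{-2}+O(|x|^{-2-\delta'})$ with $c\neq 0$, and $\nabla\varphi_0$ obeys the corresponding bound. Normalize so that $c=1$. The Bessel expansion $\rho^{-1}K_1(\mu\rho)=\mu^{-1}\rho^{-2}+\tfrac{\mu}{2}\log(\mu\rho)+O(\mu)$ as $\mu\rho\to0$ shows that $\mu\,\rho^{-1}K_1(\mu\rho)$ matches $\varphi_0$ in the intermediate zone $R\ll\rho\ll\mu^{-1}$.

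\textbf{Step 2: the trial function and its energy.} We set
\[
\Phi=\chi_+\varphi_0(\cdot-\ell/2)+\chi_-\varphi_0(\cdot+\ell/2)+(1-\chi_+-\chi_-)\,\mu\bigl(K(|x-\ell/2|)+K(|x+\ell/2|)\bigr),
\]
with cutoffs $\chi_\pm$ supported in balls of radius $\rho_0$ around $\pm\ell/2$, where $R\ll\rho_0\ll L$ (for instance $\rho_0=L^{1/2}$). We choose $\mu=\mu(L)$ with $\mu^2\approx 2L^{-2}\log(L)^{-1}$. Integrating by parts, and using $h\varphi_0=0$ near the wells and $(-\Delta+\mu^2)K=0$ away from them, the quadratic form reduces to three pieces:
\begin{itemize}
\item $-\mu^2\|\Phi\|^2$;
\item cross terms in which the potential at one well acts on the tail of the other well's function, of the form $2\mu\,K(L)\int V\varphi_0$;
\item gluing errors on the annuli.
\end{itemize}
The norm is dominated by the logarithmically divergent part, $\|\mu K\|^2\sim\int_{\rho<1/\mu}\rho^{-4}\rho^3\,d\rho\approx\log(1/\mu)\approx\log L$. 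The tunneling cross term is of order $\mu K_1(\mu L)/L$. The condition that this term beats the remaining errors is precisely the transcendental equation that the Klaus–Simon analogue solves. In $d=4$ it yields $\mu^2\log L\to 2L^{-2}$ up to $(1+o(1))$, and hence a Rayleigh quotient $\le-(2-\varepsilon)L^{-2}\log(L)^{-1}$. The identity $\int V\varphi_0=-\lim\oint\partial_\nu\varphi_0=2|S^3|$ (after normalization) supplies the universal constant. Dividing by $\|\Phi\|$ gives \eqref{sec2:eq:tunneling_gs}.

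\textbf{Step 3: the Born–Oppenheimer bound.} For $\ell=(0,0,r)$ everything depends smoothly on $r$, so $\widetilde\Phi\in H^1(\mathbb R^2;L^2)$. Since $\widetilde\Phi$ depends on $r$ essentially through $x\mp\ell/2$, we have $\nabla_r\widetilde\Phi\approx\tfrac12(\nabla_{x'}\Phi_+-\nabla_{x'}\Phi_-)$, where $x'=(x_3,x_4)$. Thus $\|\nabla_r\widetilde\Phi\|^2-\tfrac14\|\nabla_{x'}\widetilde\Phi\|^2$ reduces to a cross term $-\tfrac12\mathrm{Re}\langle\nabla_{x'}\Phi_+,\nabla_{x'}\Phi_-\rangle$ plus the contribution of the $r$-dependence of $\mu(L)$ and of the normalization. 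We compute these with the Bessel integrals of Appendix~\ref{app:integrals}. The dominant contributions come again from the $\log L$-sized normalization against $L^{-2}$-sized overlaps, which should give $\tfrac23L^{-2}\log(L)^{-1}$; the angular average over $S^3$ of the two directions in $x'$ supplies factors like $\tfrac12$ and $\tfrac13$.

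\textbf{Main obstacle.} The main obstacle is the precise bookkeeping in Step 2: every error term must be $o(L^{-2}\log(L)^{-1})$ relative to a norm that is only logarithmically large. This is delicate because the matching error $\tfrac{\mu}{2}\log(\mu\rho)$ in the Bessel expansion is of the same logarithmic order. We would first try choosing $\rho_0$ to optimize the gluing error and treat the $\log$ term exactly via the integrals in the appendix. We defer these estimates to Section~\ref{sec:aux_lemmatas}.
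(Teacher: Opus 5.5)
Your trial function is the paper's: the resonance $\varphi_0(\cdot\mp\ell/2)$ on $B_{\rho}(\pm\ell/2)$, $\Gamma=G_\mu(\cdot+\ell/2)+G_\mu(\cdot-\ell/2)$ with $G_\mu=\mu K_1(\mu|x|)|x|^{-1}$ outside, and interpolation on the annuli. However, Step~2 does not prove the energy bound. Its bookkeeping does not match this function, and the terms you call gluing errors are where the result comes from.

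\textbf{The potential cross term.} On $B_{\rho_0}(\ell/2)$ your $\Phi$ is exactly $\varphi_0(\cdot-\ell/2)$, a zero-energy solution. So no $-\mu^2$ is produced there, although this region carries a fixed fraction of $\|\Phi\|^2\approx 4\pi^2\log L$. Also, $V(\cdot-\ell/2)$ never meets the other well's tail, so the term $2\mu K(L)\int V\varphi_0$ does not arise. Its sign is also wrong: $\int V\varphi_0=\int\Delta\varphi_0=+\lim\oint\partial_\nu\varphi_0=-2|S^3|$ for $c_0=1$.

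\textbf{The annulus terms are leading order.} Tunneling enters only through the annulus mismatch $\varphi_0(y)-G_\mu(y)-G_\mu(y+\ell)\approx-\tfrac{\mu^2}{2}\log(\mu|y|)-G_\mu(\ell)+O(\mu^2)$. This produces terms of size $\mu^2\log(1/(\mu\rho_0))$ and $L^{-2}$, the same order as the answer. What the paper proves, for $\mu=\mu_0/L$, is that the $\pm 8\pi^2L^{-2\theta}$ terms and the $\pm C_2(\mu_0)\log(\mu_0L^{\theta-1})L^{-2}$ terms cancel exactly. This leaves $\langle\phi,H[\ell]\phi\rangle=2\pi^2\mu_0^2(1-2K_2(\mu_0))L^{-2}+O(L^{-3+\theta})$, which tends to $-8\pi^2L^{-2}$ as $\mu_0\to0$. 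Meanwhile $\mu^2\|\phi\|^2\approx 4\pi^2\mu_0^2L^{-2}\log L$. So $-\mu^2\|\Phi\|^2$ is cancelled to leading order by the terms you treat as negligible. The bound then follows by fixing $\mu_0$ small depending on $\varepsilon$; no matching equation for $\mu$ is involved, and your choice $\mu^2\approx 2L^{-2}\log(L)^{-1}$ proves nothing by itself.

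\textbf{The cutoff radius.} $\rho_0=L^{1/2}$ is too small for general $\delta$. The tails $\epsilon_1,\epsilon_2$ of $\varphi_0$ leave uncancelled terms such as $\int_{|x|>\rho_0}|x|^{-3}|\epsilon_2(x)|\,dx\sim\rho_0^{-2-\delta'}$, which must be $o(L^{-2})$. This forces $\rho_0=L^{\theta}$ with $\theta>2/(2+\delta')$; the paper uses $\theta=4/(4+\delta)$.

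\textbf{Step~3 has the same gap, and the sketched mechanism is wrong.} The constant matters, because Lemma~\ref{sec2:lem:adiabatic} needs $\tfrac34\cdot 2-\tfrac23>0$. With $\nabla_r\Phi_\pm\approx\mp\tfrac12\nabla_{x'}\Phi_\pm$, the overlap term is $-\langle\nabla_{x'}\Phi_+,\nabla_{x'}\Phi_-\rangle$ (coefficient $1$, not $\tfrac12$). On the far field it equals $2\pi^2\bigl(\mu_0^2K_2(\mu_0)-2\mu_0K_1(\mu_0)\bigr)L^{-2}=2\pi^2\mu_0^2K_0(\mu_0)L^{-2}\to0$. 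This is because $K_0(\mu|\cdot|)$ solves $(-\Delta+\mu^2)w=0$ in $\R^2\setminus\{0\}$.

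The whole $\tfrac23$ comes from the $r$-dependence of $\mu=\mu_0/|r|$. Write $\nabla_rG_\mu(\cdot\pm\ell/2)=\pm\tfrac12\nabla_{x'}G_\mu(\cdot\pm\ell/2)-\tfrac{\mu}{L}\tfrac{r}{L}\partial_\mu G_\mu(\cdot\pm\ell/2)$ with $\partial_\mu G_\mu=-\mu K_0(\mu|\cdot|)$. Then:
\begin{itemize}
\item the diagonal term gives $2\tfrac{\mu^2}{L^2}\|\partial_\mu G_\mu\|^2=\tfrac{4\pi^2}{3}L^{-2}$, using $\int_{\R^4}K_0(\mu|x|)^2dx=2\pi^2/(3\mu^4)$;
\item the $\partial_\mu$--$\partial_\mu$ overlap gives $\tfrac{2\pi^2}{3}\mu_0^2K_2(\mu_0)L^{-2}\to\tfrac{4\pi^2}{3}L^{-2}$;
\item the mixed term gives $2\pi^2\mu_0^2K_0(\mu_0)L^{-2}\to0$.
\end{itemize}
Dividing $\tfrac{8\pi^2}{3}L^{-2}$ by $4\pi^2\log L$ yields $\tfrac23$.

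To make this rigorous you also need four further facts:
\begin{itemize}
\item the normalization only helps, $\|P_r\widetilde\Phi\|^2=N^{-2}(\|P_r\phi\|^2-|P_rN|^2)$ with $N=\|\phi\|$;
\item the balls contribute exactly zero;
\item the annuli contribute $O(L^{-3+\theta}\log L)$, via pointwise bounds on $f$, $\nabla_xf$ and $P_\ell f\pm\tfrac12P_xf$ (Section~\ref{sec:aux_lemmatas});
\item extending the far-field integrals from $\Omega_\rho$ to $\R^4$, where the Bessel convolution identities apply, costs only $O(L^{-3+\theta})$.
\end{itemize}
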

\begin{remark}
The estimate \eqref{sec3:eq:for_born_op} quantifies the variation of the energy
expectation under shifts of the two potential wells and is required to control
the corresponding correction terms in the subsequent analysis, where an
adiabatic reduction with respect to the parameter $r$ is performed.
\end{remark}
\begin{proof}
    The proof of this Lemma is the main difficulty, and present it in various steps in  Section~\ref{sec3:new:Tunneling}.
\end{proof}
\begin{remark}
In the following we will always denote $L = |\ell|$. From Lemma~\ref{sec2:lem:proto_efimov} we directly obtain
\begin{equation*}
\inf_{\psi\in H^1(\mathbb R^4)}
\frac{\langle \psi, H[\ell]\psi\rangle}{\|\psi\|^2}
\;\lesssim\;
L^{-2}\log(L)^{-1}.
\end{equation*}
This bound on the ground--state energy of the double--well operator
\eqref{sec3:eq:dw_op} agrees with estimates previously obtained in
\cite{P:1996} and \cite[Chapter~6]{thesis:S:2025}, but here it is derived from an
explicit test function and yields an explicit bound on the involved constant.
Moreover, it provides the input needed to establish the estimate
\eqref{sec3:eq:for_born_op}.
\end{remark}
The second step of the proof will be showing that
\begin{lemma}[Effective Long--Range Behavior]\label{sec2:lem:adiabatic}
Let $H_{\mathrm{COM}}$ be the four--particle internal Schrödinger on
$L^2(\mathbb R^6)$ defined in \eqref{sec1:eq:4_part_op}.
Then there exist $\Phi\in H^1(\mathbb R^6)$, $T>0$, and $c>0$ with
$\|\Phi\|_{L^2(\mathbb R^6)}=1$ such that for every
$u\in H^1(\mathbb R^2)$ with $\|u\|_{L^2(\mathbb R^2)}=1$ and
\[
\supp(u)\subset\{r\in\mathbb R^2:\ |r|>T\},
\]
one has
\begin{equation}\label{sec2:eq:effective_potential}
\langle u\Phi, H_{\mathrm{COM}}(u\Phi)\rangle_{L^2(\mathbb R^6)}
\le
\langle u,\bigl(P^2_r - c\,|r|^{-2}\log(|r|)^{-1}\bigr)u\rangle_{L^2(\mathbb R^2)} .
\end{equation}
\end{lemma}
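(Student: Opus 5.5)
We build $\Phi$ as a product-type test function in the coordinates of \eqref{sec1:eq:4_part_op}, with $r=\xi_{\gamma,\delta}$ as the slow variable and $x=(q_\gamma,q_\delta)\in\mathbb R^4$ as the fast one. After rescaling $q_\gamma,q_\delta,r$ so that the kinetic energy reads $P_x^2+P_r^2$ up to constants, the four potentials split into two pairs. For the pair $V(q_\gamma,\xi\pm q_\delta/2)$, the three-body subsystem coordinates $(q_\gamma,\eta_\gamma^\pm)$ depend on both $q_\delta$ and $r$. Our plan is to change variables so that, for fixed $r$, the two wells of one pair appear as $V(y-\ell/2)+V(y+\ell/2)$ in a four-dimensional variable $y$, with $\ell=(0,0,r)$ up to a linear rescaling. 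This is the geometric meaning of Lemma~\ref{sec2:lem:proto_efimov}: the wells sit at $\eta=\pm$(shift proportional to $r$) in the $\eta$-plane, and the factor $\frac14 P_{(x_3,x_4)}^2$ in \eqref{sec3:eq:for_born_op} reflects the mixing between $r$ and the $\eta$-directions.

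We then set $\Phi(x,r)=\widetilde\Phi((0,0,\kappa r),Ax)$ for a suitable linear map $A$ and constant $\kappa$ (with a Jacobian normalisation making $\|\Phi(\cdot,r)\|_{L^2(dx)}=1$ for each $r$). Here $\widetilde\Phi$ is the family from Lemma~\ref{sec2:lem:proto_efimov}, chosen to be $H^1$ in $r$. For $u$ supported in $|r|>T$ we expand
\[
\langle u\Phi,H_{\mathrm{COM}}u\Phi\rangle=\langle u,P_r^2u\rangle+\int|u|^2\Bigl(\langle\Phi,(H_x+\text{remaining wells})\Phi\rangle_x+\|\nabla_r\Phi\|_x^2\Bigr)dr+\text{cross terms}.
\]
The cross terms $2\operatorname{Re}\int\bar u\,\nabla u\cdot\langle\Phi,\nabla_r\Phi\rangle_x$ vanish, because $\langle\Phi,\nabla_r\Phi\rangle_x=\tfrac12\nabla_r\|\Phi\|^2=0$ for real $\Phi$. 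By \eqref{sec2:eq:tunneling_gs}, the fibre energy of the double well is at most $-(2-\varepsilon)L^{-2}\log(L)^{-1}$. The Born--Oppenheimer correction $\|\nabla_r\Phi\|^2$, combined with the part of the kinetic energy that \eqref{sec3:eq:for_born_op} subtracts, contributes at most $(\tfrac23+\varepsilon)L^{-2}\log(L)^{-1}$. Since $\tfrac23<2$, the net effective potential is at most $-c|r|^{-2}\log(|r|)^{-1}$ after adjusting constants. The main bookkeeping here is verifying that the kinetic operator in the new variables splits exactly as $P_y^2+P_r^2-\tfrac14P_{(x_3,x_4)}^2$-type terms, which is the algebra encoded in \eqref{sec3:eq:for_born_op}.

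The remaining, and hardest, step is the other pair of wells, $V(q_\delta,\xi\pm q_\gamma/2)$. Their contribution $\langle\Phi,(V(q_\delta,\cdot)+\dots)\Phi\rangle_x$ has to be shown to be $o(|r|^{-2}\log|r|^{-1})$, or at least nonpositive up to such an error. By construction $\widetilde\Phi$ is concentrated near the first pair of wells, where it equals the resonance function, and elsewhere it is built from $K$-Bessel-type solutions of $-\Delta+\mu^2$ with $\mu^2\sim L^{-2}\log(L)^{-1}$. In the region where the second pair of wells is active ($|q_\delta|\lesssim R$ and $|\xi\pm q_\gamma/2|\lesssim R$), the first pair is at distance $\sim|r|$, so $\widetilde\Phi$ is of size $\lesssim |r|^{-2}$ times the $L^2$ normalisation factor, which itself carries $(\log L)^{-1/2}$. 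Integrating the short-range $V$ over a four-dimensional slab in which two coordinates range freely over a region of size $\sim|r|$ should then give a bound like $|r|^{-2}(\log|r|)^{-1}\cdot o(1)$. The decay $|V|\le|x|^{-2-2\delta}$ must be used carefully here, since the tails of the $V$-terms extend in the free directions. If this error is not small, the fallback is to symmetrise $\Phi$ over both pairs, i.e.\ add the analogous test function with the roles of $\gamma$ and $\delta$ swapped, and control the overlap terms by the same decay estimates. Finally, we choose $T\ge L_0$ so large that all errors are absorbed, which fixes $c>0$.
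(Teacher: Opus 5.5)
\textbf{Different route, and the decisive step is not carried out.} You take $\xi_{\gamma,\delta}$ as the slow variable; the paper takes $q_\gamma$.

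In the paper, $H_{\mathrm{COM}}=A+B-\tfrac14P_{\xi_{\gamma,\delta}}^2$, where $B[q_\gamma]=P_{q_\delta}^2+\tfrac34P_{\xi_{\gamma,\delta}}^2+V(q_\delta,\xi_{\gamma,\delta}\pm q_\gamma/2)$. After the dilation $D[\lambda]$ with $\lambda=2/\sqrt3$, this is exactly $H[(0,0,\lambda q_\gamma)]$. The borrowed $-\tfrac14P_{\xi_{\gamma,\delta}}^2$ and the adiabatic term $\langle\Phi,P_{q_\gamma}^2\Phi\rangle_\circ$ combine into $\lambda^2$ times the left-hand side of \eqref{sec3:eq:for_born_op}. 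The other pair $W=V(q_\gamma,\xi_{\gamma,\delta}\pm q_\delta/2)$ obeys $\abs{W(q_\gamma,\cdot)}\lesssim\abs{q_\gamma}^{-2-2\delta}$ uniformly in the fast variables, so it is negligible using only $\norm{\Phi(q_\gamma,\cdot)}=1$. What you call the hardest step never arises there.

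With your choice, the fibre at fixed $\xi$ contains all four wells, at $(q_\gamma,q_\delta)=(0,\mp2\xi)$ and $(\mp2\xi,0)$. Each is localized in all four fast directions, so there is no slab. Controlling the second pair needs the pointwise bound $\abs{\widetilde\Phi}\lesssim L^{-2}\log(L)^{-1/2}$ at distance $\sim L$ from the first pair. That bound comes from the explicit construction $\Gamma/\norm{\phi}$, not from Lemma~\ref{sec2:lem:proto_efimov}. With it the contribution is $o(L^{-2}\log(L)^{-1})$, so this is extra work, not the real obstacle.

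\textbf{The genuine gap is the kinetic bookkeeping, on which the sign of $c$ depends.} At fixed $\xi$, the fibre kinetic energy $P_{q_\gamma}^2+P_{q_\delta}^2$ reads $P_q^2+\tfrac14P_\eta^2$ in the subsystem coordinates $(q,\eta)=(q_\gamma,\xi\pm q_\delta/2)$, not $P_q^2+\tfrac34P_\eta^2$. So the fibre operator is not a copy of $H[\ell]$: each single well has less kinetic energy than the critical $h_\alpha^{\mathrm{COM}}$ and hence a negative eigenvalue. You also cannot discard the deficit by monotonicity, because $\tfrac12\norm{\nabla_\xi\Phi}^2$ on its own is of order $\log(L)^{-1}$.

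What you must do is the following. Use $\widetilde B[\xi]=P_{q_\gamma}^2+3P_{q_\delta}^2+V(q_\gamma,\xi\pm q_\delta/2)$, which after $y=(q_\gamma,q_\delta/\sqrt3)$ is $H[(0,0,2\lambda\xi)]$. Pair the leftover $-2P_{q_\delta}^2$ with $\tfrac12P_\xi^2$. On your $\Phi$ this gives $\tfrac83$ times the left-hand side of \eqref{sec3:eq:for_born_op} at $r=2\lambda\xi$, with $L=\tfrac{4}{\sqrt3}\abs{\xi}$. The tunneling gain is then $-\tfrac38\abs{\xi}^{-2}\log(L)^{-1}$ and the adiabatic cost $+\tfrac13\abs{\xi}^{-2}\log(L)^{-1}$, up to $\varepsilon$, for a net coefficient of $-\tfrac1{24}$.

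Your reason ``since $\tfrac23<2$'' ignores these scale factors. With slow kinetic energy $m_sP_r^2$ and $\ell=\kappa r$, the net coefficient is $-(2\kappa^{-2}-\tfrac23 m_s)$, so you need $m_s\kappa^2<3$. The paper has $m_s\kappa^2=\tfrac43$, giving net coefficient $\tfrac56$; your variables give $\tfrac83$. Your route can therefore be completed, but only by a narrow margin that the proposal neither computes nor correctly identifies. As written, the central inequality of the lemma is unproved.
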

\begin{proof}
In the proof we use the tunneling estimates of
Lemma~\ref{sec2:lem:proto_efimov} and perform the adiabatic reduction with remaining parameter $\ell(r)$ to find an effective long--range attractive potential. We give the proof in Section~\ref{sec4:adiabatic_approx}.
\end{proof}
Let us assume Lemma~\ref{sec2:lem:adiabatic} for the moment. We now complete the
argument by showing how it implies Theorem~\ref{sec2:thrm:main}.
\begin{proof}[Proof of Theorem~\ref{sec2:thrm:main}]
We prove that $H_{\mathrm{COM}}$ has infinitely many discrete eigenvalues by
constructing linearly independent functions $(\psi_k)_{k\in\N}\subset H^1(\R^6)$
with $\langle \psi_k, H_{\mathrm{COM}}\psi_k\rangle < 0$ for all sufficiently large $k\in\N$.
By the min--max principle this implies that $H_{\mathrm{COM}}$ has infinitely many
negative eigenvalues.

By Lemma~\ref{sec2:lem:adiabatic} there exist a normalized $\Phi\in H^1(\R^6)$ and
constants $T,\varepsilon>0$ such that for any sequence
$(u_k)_{k\in\N}\subset H^1(\R^2)$ with
$\|u_k\|_{L^2(\R^2)}=1$ and $\supp(u_k)\subset\{r\in\R^2:\,|r|>T\}$,
the functions $\psi_k = u_k\Phi$ satisfy
\begin{equation}\label{sec2:eq:main_thrm_prove_adiabiatic}
\langle \psi_k, H_{\mathrm{COM}}\psi_k\rangle
\le 
\langle u_k,(P^2_r - \varepsilon |r|^{-2}\log(|r|)^{-1})u_k\rangle_{L^2(\R^2)} \, .
\end{equation}
It therefore suffices to construct $u_k$ such that the right--hand side of
\eqref{sec2:eq:main_thrm_prove_adiabiatic} is negative.

Assume $u$ is radial, meaning $u$ only depends on $|r|$ and supported in $\{r>T\}$, then introducing spherical coordinates with $\rho=|r|$ and by abuse of notation we denote the function $u:\R_+ \to \R$ by the same letter such that
\begin{equation*}
\langle u,(P^2_r - \varepsilon |r|^{-2}\log(|r|)^{-1})u\rangle_{L^2(\R^2)}
= 2\pi \int_T^\infty \bigl(|\partial_\rho u|^2 - \varepsilon \rho^{-2}\log(\rho)^{-1}|u|^2\bigr)\, \rho\,d\rho.
\end{equation*}
Setting $t=\log (\rho)$ and $f(t)=u(e^t)$ yields
\begin{equation} \label{sec3:eq:subst}
   \langle u,(P^2_r - \varepsilon |r|^{-2}\log(|r|)^{-1})u\rangle_{L^2(\R^2)}
=2\pi \int_{\log(T)}^\infty \bigl(| f'|^2 - \varepsilon t^{-1}|f|^2\bigr)\,dt . 
\end{equation}

For $n\in\N$ with $n\geq (T)^{1/2}$ define the interval $I_n = [n^2,\, n^2 + K n]$ where $K >0$ will be chosen below. Let $f_n\in H^1_0(I_n)$ be the first Dirichlet eigenfunction of $-\partial_t^2$
on $I_n$, normalized by $\|f_n\|_{L^2(I_n)}=1$.
Then
\begin{equation} \label{sec3:eq:dirichlet}
    \int_{I_n} | \partial_t f_n|^2\,dt = \frac{\pi^2}{K^2 n^2}, \quad \int_{I_n} t^{-1}|f_n|^2\,dt \ge \frac{1}{n^2 + K n}.
\end{equation}
Inserting \eqref{sec3:eq:dirichlet} into \eqref{sec3:eq:subst} yields
\begin{equation*}
    \int_{I_n}\bigl(|\partial_t f_n|^2 - \varepsilon t^{-1}|f_n|^2\bigr)\,dt
\le 
\frac{\pi^2}{K^2 n^2} - \frac{\varepsilon}{n^2+K n}.
\end{equation*}
Choose $K >\pi\varepsilon^{-1/2}$. Then for all sufficiently large $n$ the right--hand
side is strictly negative.

Consecutive intervals $I_n$ may overlap, but since $n\mapsto n^2$ grows
quadratically, we can pick a subsequence $(n_k)_{k\in\N}$ with $n_k\to\infty$
sufficiently fast so that the intervals $I_{n_k}$ are
pairwise disjoint.  Define
\begin{equation*}
    u_k(\rho)=f_{n_k}(\log \rho ),
\qquad
\text{so that}\quad
\supp(u_k)=\{\, \rho :\, \log( \rho) \in I_{n_k}\,\}
\end{equation*}
are pairwise disjoint intervals contained in $(T,\infty)$.
For all large $k \in \N$ we have
\begin{equation*}
    \langle u_k,(P^2_r - \varepsilon |r|^{-2}\log(|r|)^{-1})u_k\rangle < 0.
\end{equation*}
Setting $\psi_k=u_k\Phi$, we obtain $\psi_k\in H^1(\mathbb R^6)$ with pairwise
disjoint supports and, by \eqref{sec2:eq:main_thrm_prove_adiabiatic},
\begin{equation*}
    \langle \psi_k, H_{\mathrm{COM}}\psi_k\rangle < 0
\qquad\text{for all sufficiently large }k \in \N \, .
\end{equation*}
Moreover, since
$\langle \psi_k, H_{\mathrm{COM}}\psi_k\rangle$ tends to $0$ as $k\to\infty$,
the corresponding negative eigenvalues accumulate at $0$.
This completes the proof of Theorem~\ref{sec2:thrm:main}.
\end{proof}
%%%%%%%%%%%%%%%%%%%%%%%%%%%%%%%%%%%%%%%%%%%%%%%%%%%%%%%%%%%%%%
%%%%%%%%%%%%%% ADIABATIC SECTION 
%%%%%%%%%%%%%%%%%%%%%%%%%%%%%%%%%%%%%%%%%%%%%%%%%%%%%%%%%%%%%%%
\section{Adiabatic Approximation of Ground States}\label{sec4:adiabatic_approx}
In this section, prove Lemma~\ref{sec2:lem:adiabatic} under the assumption of Lemma~\ref{sec2:lem:proto_efimov}. This corresponds to \textbf{Step~2} in the proof of the main Theorem~\ref{sec2:thrm:main}.

The proof is based on an adiabatic reduction in configuration space, reminiscent of the Born--Oppenheimer method, though without any smallness condition on a mass--imbalance. Working in the coordinates introduced in \eqref{sec1:eq:jacobi_coords_2plus2}, we regard the variable $q_\gamma\in\R^2$ as an external parameter. For fixed $q_\gamma$, we identify an operator acting on the remaining variables $(q_\delta,\xi_{\gamma,\delta})\in(\R^2)^2$ which admits a virtual level at zero.
This produces an effective, attractive, long--range potential in the $q_\gamma$--direction. We then compare the resulting effective expression with the full Schrödinger operator $H_{\mathrm{COM}}$ acting on $H^1((\R^3)^2)$.

\begin{proof}[Proof of Lemma~\ref{sec2:lem:adiabatic}]
In Section~\ref{sec1:prelim} we fixed Jacobi coordinates
$q_\gamma,q_\delta,\xi_{\gamma,\delta}\in\R^2$, where
$\gamma,\delta\in J$ denote two disjoint two--particle subsystems.
In these coordinates for fixed $m=1$, the center--of--mass Operator
$H_{\mathrm{COM}}$ acting on $H^1((\R^3)^2)$ takes the form
\begin{equation}\label{sec4:eq:4_part_op}
\begin{aligned}
H_{\mathrm{COM}}
={}&
P_{q_\gamma}^2
+P_{q_\delta}^2
+\frac{1}{2}P_{\xi_{\gamma,\delta}}^2
\\
&+V\!\left(q_\gamma,\xi_{\gamma,\delta}+q_\delta/2\right)
+V\!\left(q_\gamma,\xi_{\gamma,\delta}-q_\delta/2\right)
\\
&+V\!\left(q_\delta,\xi_{\gamma,\delta}+q_\gamma/2\right)
+V\!\left(q_\delta,\xi_{\gamma,\delta}-q_\gamma/2\right).
\end{aligned}
\end{equation}

Since multiplication of \eqref{sec4:eq:4_part_op} by $m$ amounts to a rescaling of
the potential, it suffices to consider the case $m=1$.

Our first step is to isolate, within $H_{\mathrm{COM}}$, an operator that
coincides with the double--well operator $H[\cdot]$ from
Lemma~\ref{sec2:lem:proto_efimov}.
By assumption, for each fixed $q_\gamma\in\R^2$, the single--well operators
\begin{equation}\label{sec3:eq:single_well}
P_{q_\delta}^2 + \tfrac34 P_{\xi_{\gamma,\delta}}^2
+ V\!\left(\tfrac12 q_\delta, \xi_{\gamma,\delta}\pm q_\gamma/2\right)
\end{equation}
admit a virtual level at zero.

We now decompose $H_{\mathrm{COM}}$ accordingly.
For fixed $(q_\delta,\xi_{\gamma,\delta})\in(\R^2)^2$, we define an operator
on $H^1(\R^2,dq_\gamma)$ by
\begin{equation*}
\begin{split}
A[q_\delta,\xi_{\gamma,\delta}]
&= P_{q_\gamma}^2
 + V\!\left(q_\gamma,\xi_{\gamma,\delta}+q_\delta/2\right)
 + V\!\left(q_\gamma, \xi_{\gamma,\delta}-q_\delta/2\right) \\
&= P_{q_\gamma}^2 + W\!\left(q_\gamma,\xi_{\gamma,\delta}+q_\delta/2,\xi_{\gamma,\delta}-q_\delta/2\right).
\end{split}
\end{equation*}
Since $V$ is short--range by assumption (see Definition \ref{sec2:def:short-range} it follows that
\begin{equation*}
    \abs{W(q_\gamma,\cdot)}\leq \abs{q_\gamma}^{-2-2\delta}, \quad \abs{q_\gamma}> R.
\end{equation*}

Conversely, for fixed $q_\gamma\in\R^2$, we define the operator
$B[q_\gamma]$ acting on $H^1((\R^2)^2,dq_\delta\,d\xi_{\gamma,\delta})$ by
\begin{equation*}
\begin{split}
B[q_\gamma]
=\, &P_{q_\delta}^2 + \tfrac34 P_{\xi_{\gamma,\delta}}^2 + V\!\left(q_\delta,\xi_{\gamma,\delta}+q_\gamma/2\right)
 + V\!\left(q_\delta,\xi_{\gamma,\delta}-q_\gamma/2\right).
\end{split}
\end{equation*}

Let $A$ and $B$ denote the corresponding operators on
\begin{equation}\label{sec4:eq:hilbert_space_6}
\mathbb{H}^1
= H^1((\R^2)^3,
dq_\gamma\,d\xi_{\gamma,\delta}\,dq_\delta),
\end{equation}
acting pointwise as $A[q_\delta,\xi_{\gamma,\delta}]$ and $B[q_\gamma]$,
respectively. This then yields
\begin{equation}\label{sec4:eq:hcom_a_b}
H_{\mathrm{COM}} = A + B - \tfrac14 P_{\xi_{\gamma,\delta}}^2.
\end{equation}
Up to scaling, the operator $B[q_\gamma]$ coincides with the double--well operator $H[\cdot]$ from Lemma~\ref{sec2:lem:proto_efimov}.

Next, we construct a product--type trial state based on the
near--threshold state of the operator $B[q_\gamma]$. This will yield an effective contribution in the $q_\gamma$--direction. The subsequent estimates then compare this effective expression with the full Schrödinger \eqref{sec4:eq:4_part_op}. To keep track of variable dependence, we introduce abbreviations for the
relevant Hilbert spaces and their inner products.

We denote by $\langle\cdot,\cdot\rangle_\circ$ the inner product of
$H^1((\R^2)^2,dq_\delta\, d\xi_{\gamma,\delta})$, and by
$\langle\cdot,\cdot\rangle_\ast$ the inner product of
$H^1(\R^2,dq_\gamma)$.
For any $f\in H^1((\R^2)^2,dq_\delta\,d\xi_{\gamma,\delta})$ we define
the dilation $D[\lambda]$ in the $\xi_{\gamma,\delta}$--variable by
\begin{equation*}
(D[\lambda]f)(q_\delta,\xi_{\gamma,\delta})
= \lambda f(q_\delta,\lambda \xi_{\gamma,\delta}).
\end{equation*}
Then for any fixed $q_\delta \in \R^2$ and the usual norm in $L^2(\R^2, d\xi_{\gamma,\delta})$
\begin{equation*}
    \norm{ D[\lambda] f(q_\delta,\cdot) }^2 =  \norm{ f(q_\delta,\cdot) }^2 \,  .
\end{equation*}
Fixing $\lambda = 2(3)^{-1/2}$, we obtain for any $q_\gamma\in\R^2$
\begin{equation}\label{sec3:eq:double_well_scaling}
        \left\langle D[\lambda]f, B[q_\gamma] D[\lambda]f \right\rangle_\circ  =  \left\langle f, H[(0,0,\lambda q_\gamma)] f \right\rangle_{dx}
\end{equation}
where $H[\cdot]$ is the double-well operator on $H^1(\R^4,dx)$ defined in Lemma \ref{sec2:lem:proto_efimov}. For any fixed $q_\gamma \in \R^2$ let $\widetilde \Phi_R \in H^1(\R^4,dx)$ be the function according to Lemma \ref{sec2:lem:proto_efimov} with the explicit choice 
\begin{equation}\label{sec4:eq:choice_of_R}
    \ell(r)=(0,0,r), \quad \text{ with } \quad r=\lambda q_\gamma
\end{equation}
and define
\begin{equation} \label{sec4:eq:dilated_Phi}
    \Phi(q_\gamma, \cdot )= D[1/\lambda]\widetilde \Phi(\ell(\lambda q_\gamma), \cdot) 
\end{equation}
 such that
 \begin{equation*}
     \Phi(q_\gamma, q_\delta,\xi_{\gamma,\delta}) = \lambda \widetilde \Phi\left((0,0,\lambda q_\gamma), (q_\delta,  \lambda \xi_{\gamma,\delta})\right)
 \end{equation*}
 We choose $u \in H^1(\R^2, d{q_\gamma})$ with $\norm{u}_\ast^2 = \langle u, u \rangle_\ast = 1$ supported outside a compact set which will be fixed later. We construct the Ansatz $\psi \in \mathbb{H}^1$ (with $\mathbb{H}^1$ defined in \eqref{sec4:eq:hilbert_space_6}) by setting
 \begin{equation*}
     \psi(q_\gamma,\xi_{\gamma,\delta},q_\delta) = u(q_\gamma) \Phi(q_\gamma,q_\delta,\xi_{\gamma,\delta}), \quad \forall q_\gamma,q_\delta,\xi_{\gamma,\delta} \in \R^2 .
 \end{equation*}

In the next step, we analyze the energy expectation $H_{\mathrm{COM}}$ in \eqref{sec4:eq:hcom_a_b}. Let $\langle \cdot, \cdot \rangle$ be the inner product of the Hilbert space $\mathbb{H}^1$ defined in \eqref{sec4:eq:hilbert_space_6}, then
\begin{equation} \label{sec4:eq:energy_exp}
    \langle \psi, H_{\mathrm{COM}} \psi \rangle = \langle \psi, A \psi \rangle + \langle \psi, B \psi \rangle  - (1/4) \left\langle \psi, P_{\xi_{\gamma,\delta}}^2 \psi \right\rangle
\end{equation}
We study each of the terms in the right-hand side of \eqref{sec4:eq:energy_exp} separately starting with the last one. By definition and Fubini's theorem,
\begin{equation} \label{sec4:eq:born_op_correction}
    \left\langle \psi, P_{\xi_{\gamma,\delta}}^2 \psi \right\rangle = \left\langle u,\left \langle\Phi ,  P_{\xi_{\gamma,\delta}}^2 \Phi \right\rangle_\circ u\right\rangle_\ast . 
\end{equation}
Using the definition of $\Phi$ in \eqref{sec4:eq:dilated_Phi} one finds
\begin{equation} \label{sec4:eq:xi_derivative_scaled}
    \begin{split}
        \left \langle \Phi, P_{\xi_{\gamma,\delta}}^2 \Phi \right\rangle_\circ &= \lambda^2 \int_{\R^2\times \R^2} \abs{\nabla_{y} \widetilde\Phi(\ell(r),(x,y))}^2 d(x,y) \\
        &= \lambda^2 \left \langle \widetilde\Phi(\ell(r),\cdot), P_{y}^2 \widetilde\Phi(\ell(r),\cdot) \right\rangle_{d(x,y)}
    \end{split}
\end{equation}
where the expectation value over $d(x,y)$ is the usual inner product of
$L^2(\R^2\times \R^2)$. Note that this expectation value is a function of parameter $q_\gamma$ as the shift $\ell(r)$ depends on $q_\gamma$ (see \eqref{sec4:eq:choice_of_R}).
 
Next we study the expectation $\langle \psi, B \psi \rangle$ and find by using the relation \eqref{sec3:eq:double_well_scaling}
\begin{equation*}
    \langle \psi, B\psi \rangle = \left \langle u, \left \langle \Phi, B[q_\gamma] \Phi\right \rangle_\circ u \right \rangle_\ast = \left \langle u, \left \langle \widetilde\Phi, H[\ell] \widetilde \Phi\right \rangle_{dx} u \right \rangle_\ast\, .
\end{equation*}
By definition of $\ell(r)$ in \eqref{sec4:eq:choice_of_R} with $r=(0,0,\lambda q_\gamma)$ we have $L = \abs{\ell(r)} = \lambda\abs{q_\gamma}$ and consequently by application of Lemma \ref{sec2:lem:proto_efimov} for given $\varepsilon>0$ there exists $T>0$ depending on $\varepsilon>0$ such that for $\abs{q_\gamma}>T$
\begin{equation*}
  \left \langle \widetilde\Phi_R, H[\ell] \widetilde \Phi_R\right \rangle_{dx} \leq    -(2-\varepsilon)\lambda^{-2}\abs{q_\gamma}^{-2} \log(\lambda \abs{q_\gamma})^{-1}
\end{equation*}
Choosing 
\begin{equation}\label{sec4:eq:supp(u)}
    \supp(u) \subset \{ q_\gamma \in \R^2 : \abs{q_\gamma}>2T\}
\end{equation}
we conclude
\begin{equation}\label{sec4:eq:B_scaled}
     \langle \psi, B\psi \rangle \leq -(2-\varepsilon) \left \langle u, \abs{q_\gamma}^{-2} \lambda^{-2} \log(\lambda \abs{q_\gamma})^{-1} u\right \rangle_\ast 
\end{equation}

We analyze now the expectation of operator $A$ to find
\begin{equation}\label{sec4:eq:A_exp_step1}
    \langle \psi, A \psi \rangle = \left \langle u\Phi,  P_{q_\gamma}^2  \left( u\Phi \right) \right \rangle  + \left \langle u \left \langle \Phi , W \Phi \right \rangle_\circ   u \right \rangle_\ast
\end{equation}
where $W$ is short--range. Since $\|\Phi\|_{L^2(dq_\delta\,d\xi_{\gamma,\delta})}=1$ for all
$q_\gamma$ and $\Phi$ depends smoothly on $q_\gamma$, we have
\begin{equation}\label{sec4:eq:cross_term_vanishes}
    P_{q_\gamma}\langle \Phi,\Phi\rangle_\circ
= 2\,\Re\langle \Phi,P_{q_\gamma} \Phi\rangle_\circ = 0,
\end{equation}
and hence $\langle \Phi,P_{q_\gamma}\Phi\rangle_\circ=0$.
Using \eqref{sec4:eq:cross_term_vanishes} we derive from \eqref{sec4:eq:A_exp_step1}
\begin{equation}\label{sec4:eq:A_exp_step2}
    \langle \psi, A \psi \rangle = \left \langle u,  \left( P^2_{q_\gamma} +\langle \Phi, W \Phi \rangle_\circ \right) u \right \rangle_\ast + \left \langle u, \left \langle \Phi, P_{q_\gamma}^2  \Phi\right \rangle_\circ u \right \rangle_\ast
\end{equation}

By Lemma~\ref{sec2:lem:proto_efimov}, the family $r\mapsto\widetilde\Phi(\ell(r),z)$ with $\ell(r) = (0,0,r)$ for any $z\in \R^4,r\in \R^2$ can be chosen $H^1(\R^2,dr)$--smooth for $r$ sufficiently large. We keep track of the dilation $D[\lambda]$ by using the definition of $\Phi$ in \eqref{sec4:eq:dilated_Phi} and find
\begin{equation}\label{sec4:eq:deriviative_q_gamma}
    \left \langle \Phi, P_{q_\gamma}^2  \Phi \right \rangle_\circ = \lambda^2 \int_{\R^4 }\abs{\nabla_{r} \widetilde\Phi(\ell(r),z) }^2 dz \Big\vert_{r=\lambda q_\gamma}  = \lambda^2 \langle\widetilde\Phi(\ell(r),\cdot ), P^2_r \widetilde\Phi(\ell(r),\cdot) \rangle_{\R^4} \Big\vert_{r=\lambda q_\gamma} \, .
\end{equation}
Combining \eqref{sec4:eq:A_exp_step1}, \eqref{sec4:eq:A_exp_step2} together with \eqref{sec4:eq:deriviative_q_gamma} yields
\begin{equation} \label{sec4:eq:A_op_scaled}
    \langle \psi, A \psi \rangle =  \left \langle u, \left( P^2_{q_\gamma} +\langle \Phi, W \Phi \rangle_\circ \right) u \right \rangle_\ast +  \lambda^2 \left \langle u,  \langle\widetilde\Phi(\ell(r),\cdot ), P^2_r \widetilde\Phi(\ell(r),\cdot) \rangle_{\R^4} \Big\vert_{r=\lambda q_\gamma}  u \right \rangle_\ast
\end{equation}

Inserting \eqref{sec4:eq:xi_derivative_scaled}, \eqref{sec4:eq:B_scaled} and \eqref{sec4:eq:A_op_scaled} we find
\begin{equation*}
    \begin{split}
        &\langle \psi, H_{\mathrm{COM}} \psi \rangle \\
        &=  \left \langle u, \left( P^2_{q_\gamma} +\langle \Phi, W \Phi \rangle_\circ \right) u \right \rangle_\ast +  \left \langle u, \left \langle \widetilde\Phi(\ell(r),\cdot ), H[(0,0,r)] \widetilde \Phi(\ell(r),\cdot )\right \rangle_{d(x,y)}\Big\vert_{r=\lambda q_\gamma} u \right \rangle_\ast \\
        &+ \lambda^2 \left \langle u,  \left \langle\widetilde \Phi(\ell(r),\cdot ), \left( P^2_r  - \frac{1}{4}  P_{y}^2  \right) \widetilde\Phi(\ell(r),\cdot ) \right\rangle_{d(x,y)} \Bigg \vert_{r=\lambda q_\gamma}  u \right \rangle_\ast
    \end{split}
\end{equation*}
For fixed $\varepsilon>0$ we use that $W$ is short range. Therefore, we can choose $T>0$ large enough, meaning that $u$ is supported outside a large compact set to find
\begin{equation*}
    \langle \Phi, W(q_\gamma,\cdot) \Phi \rangle_\circ \leq \varepsilon \abs{q_\gamma}^{-2} \log(\lambda \abs{q_\gamma})^{-1}, \quad \abs{q_\gamma} \geq T
\end{equation*}
By application of Lemma \ref{sec2:lem:proto_efimov} for the same $\varepsilon>0$ we find $c(\varepsilon)>0$ such that
\begin{equation*}
    \langle \psi, H_{\mathrm{COM}} \psi \rangle \leq  \left \langle u, \left[ P^2_{q_\gamma}  - c(\varepsilon) \abs{q_\gamma}^{-2} \log(\lambda \abs{q_\gamma})^{-1} \right]  u \right \rangle_\ast 
\end{equation*}
where (recall $\lambda = 2(3)^{-1/2}$)
\begin{equation*}
    c(\varepsilon) = \lambda^{-2}(2-\varepsilon)-\left(\tfrac{2}{3}+\varepsilon\right) -\varepsilon = \tfrac{3}{4}(2-\varepsilon)-\left(\tfrac{2}{3}+\varepsilon\right) -\varepsilon = \tfrac{5}{6} - \tfrac{11}{4}\varepsilon.
\end{equation*}
For $\varepsilon>0$ small enough we have $c(\varepsilon)=c>1/2$. Consequently for $T>0$ large enough 
\begin{equation*}
    \langle \psi, H_{\mathrm{COM}} \psi \rangle \leq  \left \langle u, \left[ P^2_{q_\gamma}  - \tfrac{1}{2} \abs{q_\gamma}^{-2} \log( \abs{q_\gamma})^{-1} \right]  u \right \rangle_\ast 
\end{equation*}
which proves the statement of Lemma \ref{sec2:lem:adiabatic}.
\end{proof}
%%%%%%%%%%%%%%%%%%%%%%%%%%%%%%%%%%%%%%%%
%%%%%%%%%%% SECTION TUNNELING (NEW VERSION)
%%%%%%%%%%%%%%%%%%%%%%%%%%%%%%%%%%%%%%%%%
\section{Quantum Tunneling for Resonances in $\R^4$} \label{sec3:new:Tunneling}
In this section we give the proof of Lemma~\ref{sec2:lem:proto_efimov}. In particular, we study  the double--well operator $H[\ell]$ from \eqref{sec3:eq:dw_op}, where $\ell\in\R^4$ denotes the separation vector. For convenience, we recall the definition of $H[\ell]$ here
\begin{equation*}
H[\ell]=P_x^2+V(x-\ell/2)+V(x+\ell/2)\quad \text{on }L^2(\mathbb R^4,dx).
\end{equation*}
Our analysis treats the situation for which the single-well operator $P^2_x+V$ admits a virtual level at zero according to Definition \ref{sec1:def:virtual_lvl}. We give two lemmas corresponding to two statements in Lemma~\ref{sec2:lem:proto_efimov}.
\begin{lemma}[Ground State Estimate]
\label{sec5:lem:energy}
There exists a family of functions $\{\phi(\ell,\cdot)\}_{\ell \in \R^4} \subset H^1(\R^4)$ such that
for every $\varepsilon>0$ there exists $L_0(\varepsilon)>0$ with
\begin{equation} \label{sec5:eq:gs_est}
   \frac{\langle\phi(\ell,\cdot),H[\ell]\phi(\ell,\cdot)\rangle}{\|\phi(\ell,\cdot)\|^2}
\le -(2-\varepsilon)L^{-2}\log(L)^{-1},
\quad\text{ if }\,  |\ell|=L>L(\varepsilon). 
\end{equation}
\end{lemma}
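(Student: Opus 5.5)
Let $\varphi_0\in\dot H^1(\R^4)$ be the zero-energy resonance of $P_x^2+V$, i.e. $(P_x^2+V)\varphi_0=0$ distributionally. I take it normalized so that $\varphi_0(x)=|x|^{-2}+o(|x|^{-2})$ as $|x|\to\infty$ (Appendix~\ref{app:D}). The $|x|^{-2}$ decay is the fundamental-solution behavior in dimension four, and it is what makes the ground state energy $-L^{-2}$ receive a logarithmic correction.

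\textbf{Construction of the test function.} Following Pinchover's idea, I glue resonance functions near the wells to a Yukawa-type function away from them. Write $\mu=\mu(L)>0$ (to be tuned) and let $G_\mu(x)=\mu|x|^{-1}K_1(\mu|x|)$, the radial solution of $(-\Delta+\mu^2)G_\mu=0$ on $\R^4\setminus\{0\}$. Its expansion at zero (Appendix~\ref{app:Bessel}) is
\[
G_\mu(x)=|x|^{-2}+\tfrac{\mu^2}{2}\log(\mu|x|)+O(\mu^2).
\]
I set
\[
\phi(\ell,x)=\chi_+(x)\varphi_0(x-\ell/2)+\chi_-(x)\varphi_0(x+\ell/2)+(1-\chi_+-\chi_-)\bigl(G_\mu(x-\ell/2)+G_\mu(x+\ell/2)\bigr),
\]
where $\chi_\pm$ are cutoffs equal to one on the balls $B(\pm\ell/2,\rho)$ and supported in $B(\pm\ell/2,2\rho)$, with $\rho=\rho(L)\to\infty$ but $\rho\ll L$ (for instance $\rho=L^{1/2}$). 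The function $\phi$ lies in $H^1$ because of the exponential decay of $G_\mu$. An alternative is to take $\phi$ equal to the $G_\mu$-sum outside the balls and to $\varphi_0$ times a matching constant inside, with continuous gluing at $|x\mp\ell/2|=\rho$.

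\textbf{Energy computation.} Inside the balls $\varphi_0$ solves the zero-energy equation exactly, so the interior contributes only boundary terms. By Green's identity the quadratic form reduces to
\[
-\mu^2\|\phi\|^2_{\text{outside}}+\text{(boundary flux terms)}+\text{(cutoff/mismatch errors)}+\text{(cross potential terms)}.
\]
The cross potential terms are $V(\cdot\mp\ell/2)$ acting on the tail of the opposite well, of size $L^{-2}$ times a short-range integral; these are lower order. The key term is the interaction: $G_\mu(x+\ell/2)$ evaluated near $x=\ell/2$ equals $G_\mu(\ell)\approx\mu^2 K_1(\mu L)/(\mu L)$, and it enters the flux with a negative sign, contributing roughly $-c\,G_\mu(\ell)$.

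For the normalization, $\|\phi\|^2\approx 2\|G_\mu\|^2_{L^2(|x|>\rho)}\approx 2\pi^2\log(1/(\mu\rho))$, since $|x|^{-2}$ is logarithmically non-$L^2$ in dimension four. This logarithm is the origin of the $\log(L)^{-1}$ factor.

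The mismatch between $G_\mu$ and $\varphi_0$ on the sphere $|x\mp\ell/2|=\rho$ is of order $\mu^2\log(\mu\rho)$ and $o(\rho^{-2})$. Optimizing the Rayleigh quotient with $\mu=\mu_*/L$ should produce $-\mu^2$ minus terms of relative size $1/\log L$, giving a bound of the form $-L^{-2}\log(L)^{-1}\cdot F(\mu_*)$. I would then choose $\mu_*$ to maximize $F$ and expect the supremum to equal $2$, as the statement requires. The point is that the interaction gain $\mu_*K_1(\mu_*)$, balanced against the $-\mu_*^2$ kinetic cost weighted by the logarithmic norm, saturates at $2$ as $\mu_*\to0$ or at an explicit optimum. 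The needed Bessel integrals are those recorded in Appendix~\ref{app:integrals}.

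\textbf{Main obstacle.} The hard step is the precise bookkeeping at order $L^{-2}\log(L)^{-1}$. The leading terms $-\mu^2\|\phi\|^2$ and the interaction term are each of order $L^{-2}$ relative to a norm of order $\log L$. Every error must therefore be $o(L^{-2})$ in absolute terms, not merely small relative to the leading terms. This includes the gluing errors on the spheres of radius $\rho$, the $o(|x|^{-2})$ remainder of $\varphi_0$, and the effect of $V$ on the tail of the other well. The first thing I would try is to control the resonance remainder using the short-range decay $|V|\le|x|^{-2-2\delta}$, which gives $\varphi_0-|x|^{-2}=O(|x|^{-2-\delta'})$, and then to choose $\rho=L^{\theta}$ with $\theta$ small enough that all boundary errors are $O(L^{-2-\delta''})$. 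The remaining task, identifying the sharp constant $2$, is a one-variable optimization in $\mu_*$, which I expect to be routine once the Bessel asymptotics are in place.
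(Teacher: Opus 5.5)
Your test function is exactly the paper's ($\chi_\pm=u_\rho(\cdot\mp\ell/2)$, $\mu=\mu_0/L$, $\rho=L^\theta$), so the construction is fine. The gap is in the energy bookkeeping, which would not go through as you plan it.

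\textbf{The leading-order picture is wrong.} You predict a Rayleigh quotient of about $-\mu^2=-\mu_*^2L^{-2}$. That is not of the form $-F(\mu_*)L^{-2}\log(L)^{-1}$ you then claim. Taking $\mu_*$ large would also give arbitrarily large multiples of $-L^{-2}$, contradicting even Pinchover's lower bound $E(L)\gtrsim -L^{-2}$ quoted in the introduction.

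What really happens is a chain of exact cancellations. The term $-\mu^2\|\phi\|^2_{\mathrm{out}}\approx 4\pi^2\mu^2\log(\mu\rho)$ has size $L^{-2}\log L$, and it is cancelled by the gluing mismatch. Likewise, the interior flux $-8\pi^2\rho^{-2}$ cancels the exterior flux $+8\pi^2\rho^{-2}$.

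Your sharp-gluing variant makes this transparent (interior $\alpha\varphi_0$ with $\alpha$ matching on the sphere, resonance remainder neglected). Set $\Gamma=G_\mu(\cdot-\ell/2)+G_\mu(\cdot+\ell/2)$ and $y=x\mp\ell/2$. The mismatch on $\abs{y}=\rho$ is
$m=\Gamma-\abs{y}^{-2}=\tfrac{\mu^2}{2}\bigl(\log\tfrac{\mu\rho}{2}+\gamma-\tfrac12\bigr)+G_\mu(\ell)+o(L^{-2})$.
Green's identity on the balls and on the exterior leaves boundary terms $-8\pi^2m-2\pi^2\mu^2+o(L^{-2})$. Separately,
$-\mu^2\|\Gamma\|^2_{\mathrm{out}}=4\pi^2\mu^2\bigl(\log\tfrac{\mu\rho}{2}+\gamma+\tfrac12-K_0(\mu_0)\bigr)+o(L^{-2})$.
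The logarithms cancel, and the total is $2\pi^2\mu^2-8\pi^2G_\mu(\ell)-4\pi^2\mu^2K_0(\mu_0)=2\pi^2\mu_0^2(1-2K_2(\mu_0))L^{-2}$. This is the value the paper obtains from Propositions~\ref{sec5:prop: varphi0 Terme}--\ref{sec5:prop:nabla G mu Terme}.

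So the gluing terms cannot be made $O(L^{-2-\delta''})$ for any $\theta$. They are leading order, and their $O(L^{-2})$ part is the answer itself; it contains the interaction $G_\mu(\ell)=\mu_0K_1(\mu_0)L^{-2}$ that you single out. For the same reason, nothing of absolute size $L^{-2}$, such as your cross-potential terms, can be dismissed as lower order. Every contribution of order $L^{-2\theta}$, $L^{-2}\log L$ and $L^{-2}$ must be computed exactly.

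\textbf{Three further points would also spoil the constant.}

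First, $\theta$ must be close to $1$, not small. The resonance remainder enters at size $\rho^{-2-\delta}$, through $\epsilon_1(\rho)$ in the mismatch and through $\epsilon_2$ in $\|\nabla\varphi_0\|^2_{L^2(\R^4\setminus B_\rho(0))}$. This is $o(L^{-2})$ only if $\theta>2/(2+\delta)$, so your $\rho=L^{1/2}$ fails for $\delta\le 2$. The paper uses $\theta=4/(4+\delta)$.

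Second, $\|\phi\|^2=4\pi^2\log L+O(1)$ (Lemma~\ref{sec5:lem:Norm Testfunktion}). The two balls contribute $4\pi^2\theta\log L$ and the exterior $4\pi^2(1-\theta)\log L$. Your $2\pi^2\log(1/(\mu\rho))$ omits the ball part and a factor $2$. Since the limiting numerator $-8\pi^2L^{-2}$ is divided by exactly $4\pi^2\log L$ to produce the $2$, this error changes the constant.

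Third, $\tfrac12\mu_0^2(1-2K_2(\mu_0))$ is strictly increasing in $\mu_0$, because $\frac{d}{dt}\bigl(t^2K_2(t)\bigr)=-t^2K_1(t)$. So there is no interior optimum: the value $-2$ is reached only as $\mu_0\to0^+$, via $\mu_0^2K_2(\mu_0)\to 2$. The implied constants in $O(L^{-3+\theta})$ and $O(1)$ depend on $\mu_0$. You must therefore fix $\mu_0=\mu_0(\varepsilon)$ first and only then choose $L_0$, which is where the $2-\varepsilon$ comes from.
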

\begin{lemma}[Kinetic Control]
\label{sec5:lem:aux}
The family $\{\phi(\ell,\cdot)\}_{\ell \in \R^4} \subset H^1(\R^4)$ from Lemma~\ref{sec5:lem:energy} can be chosen such that the map $\ell \mapsto \phi(\ell,\cdot) (\norm{\phi(\ell,\cdot)})^{-1}$ is in $H^{1}_{\mathrm{loc}}(\R^4)$. For the choice $\ell(r)=(0,0,r)$ with $r\in \R^2$ for any
$\varepsilon>0$ there exists $L_0(\varepsilon)>0$ with
\begin{equation} \label{sec5:eq:aux_kin}
    \left\langle \frac{\phi(\ell(r),\cdot)}{\norm{\phi(\ell(r),\cdot)}} ,
\left(P_{r}^2-\tfrac{1}{4}P_{(x_3,x_4)}^2\right)\frac{\phi(\ell(r),\cdot)}{\norm{\phi(\ell(r),\cdot)}} \right\rangle
\le \left(\tfrac{2}{3}+\varepsilon\right){L}^{-2}\log(L)^{-1}, \quad  |\ell|=L>L(\varepsilon).
\end{equation}
\end{lemma}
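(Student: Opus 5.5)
The strategy is to take the family constructed for Lemma~\ref{sec5:lem:energy} to be explicit and two-scale. Near each well it is a cut-off copy of the zero-energy resonance; in the exterior region it is a Yukawa-type solution of $(-\Delta+\mu^2)g=0$ in $\R^4$. We then compute the two kinetic quantities in \eqref{sec5:eq:aux_kin} directly.

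\textbf{Construction.} Let $\varphi_0$ be the resonance of $P_x^2+V$. By Appendix~\ref{app:D}, $\varphi_0(x)\sim c|x|^{-2}$ at infinity. Take $\mu=\mu(L)$ with $\mu^2\approx 2L^{-2}\log(L)^{-1}$, and set
\[
\phi(\ell,x)=\chi_+(x)\varphi_0(x-\ell/2)+\chi_-(x)\varphi_0(x+\ell/2)+\bigl(1-\chi_+-\chi_-\bigr)\,\bigl[g(x-\ell/2)+g(x+\ell/2)\bigr],
\]
where $g(y)=a\,\mu|y|^{-1}K_1(\mu|y|)$ is matched to $\varphi_0$ at radius $\rho\ll L$. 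The cut-offs $\chi_\pm(x)=\chi((x\mp\ell/2)/\rho)$ are rigid translates, so the whole of $\phi$ depends on $\ell$ only through the translations $x\mp\ell/2$ and through the scalar $\mu(L)$. This gives smoothness in $\ell$, and hence the $H^1_{\mathrm{loc}}$ claim for $\ell\mapsto\phi/\|\phi\|$, by differentiating under the integral. It also gives, for $\ell=(0,0,r)$,
\[
\nabla_r\phi=-\tfrac12\nabla_{(x_3,x_4)}\phi_+ +\tfrac12\nabla_{(x_3,x_4)}\phi_- + (\partial_\mu\phi)\,\nabla_r\mu ,
\]
where $\phi_\pm$ denote the pieces centered at $\pm\ell/2$. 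Since $|\nabla_r\mu|\lesssim \mu/L$, the last term contributes only $O(L^{-2})\cdot\|\partial_{\log\mu}\phi\|^2/\|\phi\|^2$. The normalization derivative is handled as in \eqref{sec4:eq:cross_term_vanishes}: it only subtracts a nonnegative term, so it can be dropped in the upper bound.

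\textbf{Key computation.} Expanding $\|\nabla_r\phi\|^2$ gives two types of contribution. The diagonal terms are $\frac14\|\nabla_{(x_3,x_4)}\phi_\pm\|^2$. The cross term is $-\frac12\langle\nabla_{(x_3,x_4)}\phi_+,\nabla_{(x_3,x_4)}\phi_-\rangle$, which has the opposite sign to the corresponding cross term in $\frac14\|\nabla_{(x_3,x_4)}\phi\|^2$. Hence
\[
\|\nabla_r\phi\|^2-\tfrac14\|\nabla_{(x_3,x_4)}\phi\|^2 = -\Re\langle\nabla_{(x_3,x_4)}\phi_+,\nabla_{(x_3,x_4)}\phi_-\rangle+\text{(errors)} .
\]
So the quantity to bound is an \emph{overlap} of the two wells' gradients, normalized by $\|\phi\|^2$.

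In dimension four, $\|\phi\|^2$ is dominated by the tails $g$. Using $\int |y|^{-4}\mathbf 1_{|y|<\mu^{-1}}dy\sim 2\pi^2\log(\mu^{-1})\sim \log L$, this gives $\|\phi\|^2\approx 2a^2\mu^{-2}\cdot C\log L$ up to constants, with the exact values supplied by the Bessel integrals of Appendix~\ref{app:integrals}. The overlap $\langle\nabla g(\cdot-\ell/2),\nabla g(\cdot+\ell/2)\rangle$ is a convolution of Bessel kernels. Via Fourier transform, $\widehat{g}\propto(|k|^2+\mu^2)^{-1}$, it equals an explicit expression $\propto a^2\,\partial$-type combination of $K_0,K_1(\mu L)$. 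The step with $\mu L\sim(\log L)^{-1/2}\to0$ is to use the small-argument expansions from Appendix~\ref{app:Bessel}: the overlap is $O(a^2\log(\mu L)^{-1})$-type, i.e.\ one factor of $\log L$ smaller relative to $\|\phi\|^2L^{-2}$. The precise constant $\frac23$ comes from the anisotropic projection: only two of the four gradient components are taken, with angular average $\frac12$, together with the specific $\mu^2$ chosen in Lemma~\ref{sec5:lem:energy}. One then checks that the cut-off, matching and $\partial_\mu$ errors are $o(L^{-2}\log(L)^{-1})$.

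\textbf{Main obstacle.} The hard part is extracting the sharp leading constant $\tfrac23$, rather than merely an $O(L^{-2}\log(L)^{-1})$ bound, from the overlap integral. The logarithmic divergence must be tracked exactly in both numerator and denominator, and the direction-dependent factor from projecting onto $(x_3,x_4)$ requires care. The first thing to try is to compute the overlap exactly in Fourier space, $\int k_{3}^2+k_4^2$ against $e^{ik\cdot\ell}(|k|^2+\mu^2)^{-2}$, and reduce it to $K_0(\mu L)$ and $\mu L K_1(\mu L)$. Then apply the expansions near zero, absorbing every matching-region contribution into $\varepsilon$ by taking $\rho=\rho(L)\to\infty$ slowly.
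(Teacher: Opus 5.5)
Your test function is essentially the paper's. On the annulus around $\ell/2$ it equals $\chi_+\varphi_0(\cdot-\ell/2)+(1-\chi_+)\Gamma$, which is exactly $\Gamma+f$ from Definition~\ref{sec5:def:Testfunktion}; the only change is that you let $\mu L\to0$ instead of fixing $\mu_0$ small. Dropping the normalization derivative is also the paper's Step~1.

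The gap is in the key computation: you discard precisely the term that produces $\tfrac23$. The contribution of $\nabla_r\mu\,\partial_\mu\phi$ is not an error term. Since $\partial_\mu G_\mu(x)=-\mu K_0(\mu|x|)$, Corollary~\ref{app2:cor:L2 norm K_0} gives the scale-invariant value $\|\mu\,\partial_\mu G_\mu\|^2_{L^2(\R^4)}=\tfrac{2\pi^2}{3}$. Lemma~\ref{app2:lem:Faltung K0 mit sich selbst} gives the cross term. With $|\nabla_r\mu|\simeq\mu/L$ and $a=1$,
\[
|\nabla_r\mu|^2\,\bigl\|\partial_\mu\bigl(G_\mu(\cdot+\ell/2)+G_\mu(\cdot-\ell/2)\bigr)\bigr\|^2\simeq\Bigl(\tfrac{4\pi^2}{3}+\tfrac{2\pi^2}{3}(\mu L)^2K_2(\mu L)\Bigr)L^{-2},
\]
and the coefficient tends to $\tfrac{8\pi^2}{3}$ as $\mu L\to0$. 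Your estimate $\|\phi\|^2\approx 2a^2\mu^{-2}C\log L$ carries a spurious factor $\mu^{-2}$. Since $g\approx a|y|^{-2}$ on $\rho<|y|<\mu^{-1}$, one has $\|\phi\|^2=4\pi^2a^2\log L\,(1+o(1))$, cf.\ Lemma~\ref{sec5:lem:Norm Testfunktion}. So after normalization the $\partial_\mu$ term equals $\tfrac23L^{-2}\log(L)^{-1}(1+o(1))$, which is the entire constant. Your check that it is $o(L^{-2}\log(L)^{-1})$ therefore fails.

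Conversely, the overlap you keep contributes nothing at leading order. The Fourier computation you propose gives, by Lemma~\ref{app2:lem:Faltung Ableitung Gmu mit sich selbst} and $z^2K_2=z^2K_0+2zK_1$, the identity $-\langle\nabla_{(x_3,x_4)}G_\mu(\cdot+\ell/2),\nabla_{(x_3,x_4)}G_\mu(\cdot-\ell/2)\rangle=2\pi^2(\mu L)^2K_0(\mu L)\,L^{-2}$. This is $o(L^{-2})$ when $\mu L\to0$. In the massless limit the per-component overlap is $2\pi^2\bigl(|a-b|^{-2}-2(a_i-b_i)^2|a-b|^{-4}\bigr)$, which sums to zero over the two coordinates spanning $\ell$. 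The whole overlap $4\pi^2L^{-2}$ lies in the directions orthogonal to $\ell$, so there is no angular factor $\tfrac12$ to exploit.

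What is needed is the paper's Proposition~\ref{sec5:prop:Ableitung nach R von G mu Termen}. Write $\nabla_ru_\pm=\pm\tfrac12\nabla_{(x_3,x_4)}u_\pm-\tfrac{\mu}{L}\tfrac{r}{L}\partial_\mu u_\pm$ and keep all diagonal, mixed and cross terms. Evaluate them on $\R^4$ with the identities of Appendix~\ref{app:integrals}; restricting to the balls costs only $O(L^{-3+\theta})$. This yields $C_5(\mu_0)=\tfrac{4\pi^2}{3}\bigl(1+\mu_0K_1(\mu_0)+\tfrac72\mu_0^2K_0(\mu_0)\bigr)$. The $1$ is the diagonal $\partial_\mu$ term, and $\mu_0K_1(\mu_0)\to1$ comes from $\langle\partial_\mu u_+,\partial_\mu u_-\rangle$. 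Letting $\mu_0\to0$ then gives $\tfrac23$.

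Two minor points. With $\rho=\rho(L)$, your $\phi$ also depends on $\ell$ through $\rho$; this is harmless, cf.\ the function $g$ in Lemma~\ref{app:lem:Anulus terme}. And since your $\mu$ differs from the one in Lemma~\ref{sec5:lem:energy}, the energy bound must be rechecked for your family.
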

\begin{remark}
Whereas the bound stated in Lemma~\ref{sec5:lem:energy} does apply for any $\ell \in \R^4$ the inequality in \eqref{sec5:eq:aux_kin} from Lemma~\ref{sec5:lem:aux} uses the fact that we choose $\ell(r)= (0,0,r)$ with $ r\in \R^2$.

Assuming Lemmas~\ref{sec5:lem:energy} and~\ref{sec5:lem:aux},
Lemma~\ref{sec2:lem:proto_efimov} follows directly, which we
present below this remark.
The construction of the functions  $\{\phi(\ell,\cdot)\}_{\ell \in \R^4}$  is given in
Subsection~\ref{sec5:sub:test}, while the proofs of
Lemmas~\ref{sec5:lem:energy} and~\ref{sec5:lem:aux} are provided in Subsections~\ref{sec5:sub:energy} and~\ref{sec5:sub:aux}, respectively.
Some technical estimates in the proofs of these Lemmas are deferred to Section~\ref{sec:aux_lemmatas}.
\end{remark}
\begin{proof}[Proof of Lemma~\ref{sec2:lem:proto_efimov}]
Fix $r\in\R^2$ and set $\ell(r)=(0,0,r)\in\R^4$, so $L=|\ell(r)|=|r|$.
Let $\widetilde\Phi(\ell(r),\cdot)=\phi(\ell(r),\cdot)/\|\phi(\ell(r),\cdot)\|$. The bound in \eqref{sec5:eq:gs_est} is exactly the inequality required in Lemma~\ref{sec2:lem:proto_efimov}, hence \eqref{sec3:eq:dw_op} follows for $|r|$ large enough.

Moreover, by Lemma~\ref{sec5:lem:aux} the map $r\mapsto\widetilde\Phi(\ell(r),\cdot)$ is in $H^{1}_{\mathrm{loc}}(\R^2)$, and \eqref{sec5:eq:aux_kin} implies \eqref{sec3:eq:for_born_op}. This proves Lemma~\ref{sec2:lem:proto_efimov}.
\end{proof}
\subsection{The Test Function}\label{sec5:sub:test}
In this section we give the construction of the family of test functions $\{\phi(\ell,\cdot)\}_{\ell \in \R^4} \subset H^1(\R^4)$ from Lemma~\ref{sec5:lem:energy} (respectively Lemma~\ref{sec2:lem:proto_efimov}). In contrast to the standard tunneling scenario, the single--well operator
\[
h = P_x^2 + V(x) \qquad \text{on } L^2(\R^4),
\]
admits no square--integrable ground state but only a virtual level at zero energy
in the sense of Definition~\ref{sec1:def:virtual_lvl}.
As a consequence, the usual tunneling ansatz based on symmetric and antisymmetric
linear combinations of translates of a ground state cannot be applied:
any such combination remains non--square integrable and therefore cannot yield an
energy expectation below the bottom of the essential spectrum of $H[\ell]$.
To overcome this difficulty, we adapt a threshold trial--state construction due to Tamura \cite{T:1991} based on a suitable modification of the resonance profile using the Green function of the free Laplacian in $\R^4$. In dimension four this Green function can be expressed in terms of modified Bessel functions of the second kind, whose asymptotic behavior near zero plays a crucial role in the analysis.
We begin by collecting the properties of the zero--energy resonance and of the Green function that will be needed in the following. These are summarized in the following two lemmas.
\begin{lemma}[Resonance Function]\label{sec5:lem:AbfallverhaltenResonanz}
    Let $V \in L^2{(\R^4)}$ be real--valued and short--range in the sense of Definition \ref{sec2:def:short-range} with parameters $R,\delta>0$.
Assume that the operator
\[
h=P_x^2+V
\]
has a virtual level at zero in the sense of Definition~\ref{sec1:def:virtual_lvl}.
Then there exists a (up to normalization) unique function
$\varphi_0\in\dot H^1(\R^4)$ solving $h\varphi_0=0$ in the sense of distributions, which can be chosen real-valued and positive.
Moreover, there exists a constant $c_0\in\R$ such that
\begin{equation}
\label{sec5:eq:exact_tail}
\varphi_0(x)=c_0 |x|^{-2} + \epsilon_1(x).
\end{equation}
Moreover, for any $0<\kappa<1$ there exists a constant $C_\kappa<\infty$ and $L_2>0$ (depending on $R$) such that the error is bounded by  
  	  \begin{equation*}
  	  	|\epsilon_1(x)| 
  	  	  \le 
  	  	    C_\kappa |x|^{-2-2\kappa \delta} 
  	  	    \quad \text{for } |x|\ge L_2\, .
  	  \end{equation*}
      Furthermore, for the gradient of
  of $\varphi_0$, one has a continuous version and 
  \begin{align}\label{eq:asymptotic virtual level_deriv}
  	\nabla\varphi_0(x)
  	=  -2c_0x|x|^{-4} 
  	  +\epsilon_2(x)
  \end{align} 
  where for any $0<\kappa<1$ there exists $c_\kappa>0$ and $L_3>0$ (depending on $R$) with
    	  \begin{equation*}
  	  	|\epsilon_2(x)| 
  	  	  \le 
  	  	    C_\kappa |x|^{-3-2\kappa \delta} 
  	  	    \quad \text{for } |x|\ge L_3.
  	  \end{equation*}
\end{lemma}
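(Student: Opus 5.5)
We will prove the lemma in three stages: existence and uniqueness of $\varphi_0$, the leading $|x|^{-2}$ asymptotics, and positivity, with the gradient statement obtained along the way.

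\textbf{Existence of $\varphi_0$.} We use the Birman--Schwinger characterization of the virtual level. Write $V=V_+-V_-$ and consider the compact operator
\[
K=|V|^{1/2}(P_x^2)^{-1}V^{1/2},\qquad V^{1/2}=\operatorname{sgn}(V)|V|^{1/2}.
\]
Its kernel is $(4\pi^2)^{-1}|x-y|^{-2}$ in $\R^4$. Compactness should follow from $V\in L^2(\R^4)$ and Hardy--Littlewood--Sobolev, since $L^2(\R^4)=L^{d/2}$ for $d=4$.

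The two conditions in Definition~\ref{sec1:def:virtual_lvl} translate as follows: $h(0)\ge0$ gives $\sup\sigma(-K)\le 1$, and $\inf\sigma(h(\varepsilon))<0$ for all $\varepsilon>0$ gives that $1$ is attained. Hence there is $\psi\in L^2$ with $-K\psi=\psi$. We then set
\[
\varphi_0=-(P_x^2)^{-1}|V|^{1/2}\psi .
\]
This lies in $\dot H^1(\R^4)$ by HLS/Sobolev and solves $h\varphi_0=0$ in the distributional sense.

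\textbf{Uniqueness, real-valuedness and positivity.} $\varphi_0$ minimizes the quadratic form $\int|\nabla u|^2+V|u|^2\ge0$ on $\dot H^1$, with value $0$. Replacing $\varphi_0$ by $|\varphi_0|$ does not increase the form, so we may take $\varphi_0\ge0$. A Harnack inequality (valid since $V\in L^2=L^{d/2}$, possibly requiring a local Kato-class refinement) then gives $\varphi_0>0$. Uniqueness up to normalization follows by the standard ground-state argument: if two independent zero-energy minimizers existed, a linear combination vanishing at a point would be a nonnegative minimizer with a zero, contradicting strict positivity. Alternatively, simplicity of the top eigenvalue of a positivity-improving operator gives the same conclusion.

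\textbf{Asymptotics.} We use the representation
\[
\varphi_0(x)=-\frac{1}{4\pi^2}\int\frac{V(y)\varphi_0(y)}{|x-y|^2}\,dy,
\]
and set $c_0=-(4\pi^2)^{-1}\int V\varphi_0$. This integral converges because $|V\varphi_0|\lesssim|y|^{-2-2\delta}\varphi_0$ and $\varphi_0\in L^2(|y|^{-2}dy)$; a preliminary bound $\varphi_0\lesssim|x|^{-2}$ at infinity is obtained by bootstrap. Then
\[
\epsilon_1(x)=-\frac{1}{4\pi^2}\int V\varphi_0(y)\bigl(|x-y|^{-2}-|x|^{-2}\bigr)\,dy,
\]
which we split into three regions:
\begin{itemize}
\item $|y|<|x|^{\kappa}$: here $\bigl||x-y|^{-2}-|x|^{-2}\bigr|\lesssim|y||x|^{-3}$.
\item $|x|^\kappa<|y|<|x|/2$: the tail $\int_{|y|>|x|^\kappa}|V\varphi_0|\lesssim|x|^{-2\kappa\delta}$ controls this piece.
\item $|y|\ge|x|/2$: near the singularity we use the decay $|V\varphi_0|\lesssim|y|^{-4-2\delta}$ together with local integrability of $|x-y|^{-2}$.
\end{itemize}
Optimizing the splitting yields $|x|^{-2-2\kappa\delta}$ for any $\kappa<1$. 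The gradient bound follows identically from differentiating under the integral, with kernel $-2(x-y)|x-y|^{-4}$. Continuity of $\nabla\varphi_0$ follows for large $|x|$, since $V\varphi_0$ is locally $L^p$ with $p$ slightly above $1$; elsewhere we rely on elliptic regularity.

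\textbf{Main obstacle.} We expect the main obstacle to be the bootstrap giving the a priori pointwise decay $\varphi_0\lesssim|x|^{-2}$, and the analogous control near $y=x$ for the gradient, since $V$ is only $L^2$ locally. We would first iterate the integral equation outside the ball of radius $R$, where $|V|\le|x|^{-2-2\delta}$ is bounded. There the gain $2\delta$ improves the decay exponent at each step until it reaches the harmonic rate $|x|^{-2}$.
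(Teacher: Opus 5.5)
Your approach takes a genuinely different and more general route than the paper, but two steps need repair: the positivity/uniqueness step and the decay rate for $\delta>1/2$.

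\textbf{Comparison with the paper.} The paper proves this lemma in full only for radial, compactly supported $V$ (Appendix~\ref{app:D}) and defers the short-range case to the thesis or to Yafaev-type resolvent estimates. Its argument is variational:
\begin{itemize}
\item normalize near-minimizers $\varphi_\varepsilon$ of $h(\varepsilon)$ in $\dot H^1$ and take a weak limit;
\item use Rellich on a compact set containing $\supp V$ to get $\langle\varphi_0,V\varphi_0\rangle=-1$, so $\varphi_0$ is a nontrivial null vector of the form;
\item read off the exact tail $c_0|x|^{-2}$ from the radial Laplace equation outside the support, so $\epsilon_1=\epsilon_2=0$;
\item get $\varphi_0\ge 0$ from $|\varphi_0|$, and uniqueness from unique continuation.
\end{itemize}
Your Birman--Schwinger construction gives existence equally well. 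It also hands you the Newton-potential representation $\varphi_0=-(4\pi^2)^{-1}|\cdot|^{-2}*(V\varphi_0)$, which is exactly what non-radial, non-compactly supported $V$ requires. The bootstrap plus splitting then proves the error bounds the paper outsources.

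The bootstrap is easier than you fear. Since $|x-y|^{-2}\in L^{4/3}_{\loc}$ and $\varphi_0\in L^4$, H\"older in the integral equation already gives $|\varphi_0(x)|\lesssim |x|^{-\min\{2,1+2\delta\}}$, and each iteration gains $2\delta$. Note that your Cauchy--Schwarz argument for $V\varphi_0\in L^1$ via $\varphi_0\in L^2(|y|^{-2}dy)$ only works for $\delta>1/2$, so $c_0$ really rests on this bootstrap.

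\textbf{First problem: positivity and uniqueness.} Harnack is not available at the critical integrability $V\in L^{d/2}=L^2(\R^4)$. A Kato-class refinement does not help, because $L^2(\R^4)\not\subset$ Kato class. Harnack genuinely fails here: $u(x)=\exp(-(\log(1/|x|))^{\beta})$ with $0<\beta<1/2$ solves $-\Delta u+Vu=0$ near $0$ with $0\le V=\Delta u/u\in L^2_{\loc}(\R^4)$, yet $u(x)\to 0$ as $x\to 0$.
\begin{itemize}
\item A.e.\ positivity should instead come from unique continuation for $L^{d/2}_{\loc}$ potentials (Jerison--Kenig), as in the paper.
\item Pointwise positivity and continuity hold only on $\{|x|>R\}$, where $V$ is bounded and Harnack does apply.
\item Your uniqueness sentence also skips the real step: a combination $\psi-\lambda\varphi_0$ vanishing at a point is not a priori nonnegative.
\item What you need is that every real null vector $u$ of the form has a sign. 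This holds because $u_\pm=(|u|\pm u)/2$ are again null vectors with disjoint supports. Evaluating at some $x_0$ with $|x_0|>R$ then forces $u\equiv0$.
\end{itemize}

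\textbf{Second problem: the decay rate.} Your claim that the splitting yields $|x|^{-2-2\kappa\delta}$ for every $\kappa<1$ fails when $\delta>1/2$. The region $|y|<|x|^\kappa$ only gives $O(|x|^{-3})$, a dipole term, which is not $O(|x|^{-2-2\kappa\delta})$ once $2\kappa\delta>1$. This cannot be improved. Translating a radial compactly supported example by $a$ gives
$$\epsilon_1(x)=c_0(|x-a|^{-2}-|x|^{-2})=2c_0(a\cdot x)|x|^{-4}+O(|x|^{-4}),$$
and such a potential is short-range for every $\delta$. So the statement itself is only correct with $\delta$ replaced by $\min\{\delta,1/2\}$, and your proof should state this reduction explicitly. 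For $\delta<1/2$, your estimates even give $\kappa=1$ if you split directly at $|y|=|x|/2$.

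\textbf{Minor points.}
\begin{itemize}
\item With your $K$, the zero-energy solution is $\varphi_0=-(P_x^2)^{-1}V^{1/2}\psi$, not $-(P_x^2)^{-1}|V|^{1/2}\psi$.
\item A continuous version of $\nabla\varphi_0$ can only be claimed on $\{|x|>R\}$. ``Elliptic regularity'' does not give it globally, since $V\in L^2$ allows unbounded gradients near singularities of $V$.
\end{itemize}
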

\begin{proof}
    The proof of this statement is the content of \cite[Section 3.2, 3.3 and Appendix D]{thesis:S:2025}. In particular, we refer to \cite[Theorem 3.2.2]{thesis:S:2025} and the subsequent Remarks. For convenience of the reader we proof in the Appendix Lemma~\ref{sec5:lem:AbfallverhaltenResonanz} which corresponds to the case of compactly supported radial symmetric potentials for which the functions $\epsilon_1,\epsilon_2$ vanish. The proof of Lemma~\ref{sec5:lem:AbfallverhaltenResonanz} in \cite{thesis:S:2025} is a careful extension of these arguments. 
\end{proof}
\begin{remark}\label{sec5:rem:parameters}
     Note that in \cite[Proposition 3.1]{T:1991} a similar statement with parameter $\kappa=1$ is provided without a proof. The estimates above on the error terms with $\kappa<1$ are sufficient for the subsequent analysis. In the following, we fix $\kappa=1/2$ and define $\rho_0 = \max\{L_2,L_3\}$.

     The constant $c_0$ can be expressed more explicitly meaning that (in dimesion $d=4$)
     \begin{equation*}
         c_0 = -\frac{\langle V,\varphi_0\rangle_{L^2(\R^4)}}{4\pi^2}.
     \end{equation*}
     The inner product above is in the bosonic case always non--vanishing whereas for different particle statistics the leading contribution of the resonance may vanish. This is the only place where the bosonic symmetry of particles is used.
\end{remark}
\begin{lemma}[Green Function of $-\Delta+\mu^2$ in $\R^4$]
\label{sec5:lem:EigenschaftenGmu}
Let $\mu>0$ and define
\begin{equation}\label{sec:eq:Gmu_def}
   G_\mu(x) := \mu\,K_1(\mu|x|)|x|^{-1}, \qquad x\in\R^4\setminus\{0\}, 
\end{equation}
where $K_1$ denotes the modified Bessel function of the second kind.
Then $G_\mu \in C^\infty(\R^4\setminus\{0\})$ and satisfies
\begin{equation}\label{sec5:eq:pdeG}
   (-\Delta+\mu^2)G_\mu = 0
\qquad \text{in }\R^4\setminus\{0\}. 
\end{equation}
Moreover, $G_\mu \in L^1_{\mathrm{loc}}(\R^4)$ and solves
\begin{equation}
(-\Delta+\mu^2)G_\mu = 4\pi^2\,\delta
\label{sec5:eq:GleichungGmu}
\end{equation}
in the sense of distributions.
\end{lemma}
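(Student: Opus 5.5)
The plan is to reduce everything to the radial ODE for $K_1$ together with a distributional computation near the origin.

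\textbf{Smoothness and the homogeneous equation.} Write $G_\mu(x)=g(|x|)$ with $g(\rho)=\mu K_1(\mu\rho)\rho^{-1}$. Since $K_1$ is real analytic on $(0,\infty)$, we get $G_\mu\in C^\infty(\R^4\setminus\{0\})$. For radial functions in $\R^4$ the Laplacian is $\Delta=\partial_\rho^2+\tfrac{3}{\rho}\partial_\rho$. Substituting $g(\rho)=\mu\rho^{-1}K_1(\mu\rho)$ and setting $s=\mu\rho$, a direct computation turns $(-\Delta+\mu^2)g=0$ into the modified Bessel equation
\[
s^2K_1''(s)+sK_1'(s)-(s^2+1)K_1(s)=0 .
\]
Indeed, with $g=\rho^{-1}f(\rho)$ one has
\[
g''+\tfrac{3}{\rho}g'=\rho^{-1}\bigl(f''+\tfrac{1}{\rho}f'-\tfrac{1}{\rho^2}f\bigr).
\]
This is the standard reduction $\rho^{1-d/2}K_{d/2-1}$, and it gives \eqref{sec5:eq:pdeG}.

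\textbf{Local integrability and behaviour at infinity.} I will use the expansions of $K_1$ collected in Appendix~\ref{app:Bessel}. Near zero, $K_1(s)=s^{-1}+O(s\log s)$, so $G_\mu(x)=|x|^{-2}+O(\log|x|)$ and $\nabla G_\mu(x)=-2x|x|^{-4}+O(|x|^{-1})$. Since $|x|^{-2}$ is integrable near the origin in $\R^4$ (the volume element is $\rho^3\,d\rho$), this gives $G_\mu\in L^1_{\loc}$, and in fact $\nabla G_\mu\in L^1_{\loc}$. At infinity $K_1$ decays exponentially, so $G_\mu$ and its derivatives decay exponentially.

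\textbf{The distributional identity.} Take a test function $\varphi\in C_c^\infty(\R^4)$. Because $G_\mu\in L^1_{\loc}$, we have
\[
\langle(-\Delta+\mu^2)G_\mu,\varphi\rangle=\lim_{\epsilon\to0}\int_{|x|>\epsilon}G_\mu(-\Delta+\mu^2)\varphi\,dx .
\]
Apply Green's second identity on $\{|x|>\epsilon\}$ and use \eqref{sec5:eq:pdeG}. The volume term vanishes, and what remains are the boundary terms on the sphere $S_\epsilon$, which has area $2\pi^2\epsilon^3$:
\[
\int_{S_\epsilon}\bigl(G_\mu\,\partial_\rho\varphi-\varphi\,\partial_\rho G_\mu\bigr)\,dS ,
\]
with orientation chosen so that the final sign comes out positive.

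\begin{itemize}
\item The first term is $O(\epsilon^{-2}\cdot\epsilon^3)\to0$.
\item In the second term, $\partial_\rho G_\mu=-2\epsilon^{-3}+O(\epsilon^{-1})$, so it tends to $2\cdot2\pi^2\varphi(0)=4\pi^2\varphi(0)$.
\end{itemize}

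This yields \eqref{sec5:eq:GleichungGmu}.

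The main care point is this boundary computation: the exact leading coefficient of $K_1$ at zero and the sign conventions for the normal must be tracked, since they fix the constant $4\pi^2$. The rest is routine Bessel-function calculus.
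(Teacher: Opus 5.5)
Your proposal is correct and follows essentially the same route as the paper. You reduce the radial equation via $g=\rho^{-1}f$ to the modified Bessel equation of order one, use $K_1(s)=s^{-1}+O(s\log s)$ for local integrability, and obtain $4\pi^2\delta$ by integration by parts on $\{|x|>\epsilon\}$; the paper only calls that last step ``standard'', and you carry it out explicitly. One cosmetic point: the sign of the boundary term is not a choice, because the outward normal of $\{|x|>\epsilon\}$ on $S_\epsilon$ is $-x/|x|$, and this produces exactly your displayed expression $\int_{S_\epsilon}(G_\mu\partial_\rho\varphi-\varphi\,\partial_\rho G_\mu)\,dS$ and hence the limit $2\cdot 2\pi^2\varphi(0)=4\pi^2\varphi(0)$.
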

\begin{remark}
The singular behavior $G_\mu(x)\sim |x|^{-2}$ as $|x|\to0$ implies that, although
$G_\mu\in L^1_{\mathrm{loc}}(\R^4)$, one has
\[
G_\mu \notin L^2_{\mathrm{loc}}(\R^4)
\qquad\text{and}\qquad
\nabla G_\mu \notin L^2_{\mathrm{loc}}(\R^4).
\]
In particular, $G_\mu\notin H^1_{\mathrm{loc}}(\R^4)$. 
Moreover, $G_\mu$ decays exponentially at infinity. In particular, for every $s>0$,
\begin{equation*}
    G_\mu \in L^2(\R^4\setminus B_s(0))
\qquad\text{and}\qquad
\nabla G_\mu \in L^2(\R^4\setminus B_s(0)).
\end{equation*}
Up to the normalization constant $4\pi^2$, the function $G_\mu$ coincides with the
fundamental solution obtained via Fourier transform.
\end{remark}
\begin{proof}[Proof of Lemma~\ref{sec5:lem:EigenschaftenGmu}]
    The arguments are standard and we only give a brief summary. By rotational invariance of $-\Delta+\mu^2$, the function $G_\mu$ is radially symmetric. Consequently, for $\overline G_\mu: \abs{x} \mapsto G_\mu(\abs{x}) $ using the radial part of the Laplacian in $\R^4$ the PDE in \eqref{sec5:eq:pdeG} reads 
    \begin{equation*}
    -(\overline G_\mu)''(r) - \frac{3}{r}(\overline G_\mu)'(r) + \mu^2 \overline G_\mu(r)=0, \quad r\neq 0\, .
    \end{equation*}
    Introducing $u(r):=r\,\overline G_\mu(r)$ reduces this equation to the modified
    Bessel equation of order one. The solution decaying at infinity is given by
    $u(r)=\mu K_1(\mu r)$, which yields the stated form of $G_\mu$.
    Using the asymptotic expansion in \eqref{app1:eq:def_bessels} yields
    \begin{equation*}
        K_1(r)=r^{-1}+O(r\log r)\qquad\text{as } r\to0.
    \end{equation*}
    Consequently $G_\mu \in L^1_{\mathrm{loc}}(\R^4)$ and by a standard integration by parts argument for test function $y \in C^\infty_c(\R^4)$ then shows
    \begin{equation*}
       \langle(-\Delta+\mu^2)G_\mu,y\rangle = 4\pi^2\,y(0) 
    \end{equation*}
    which proves \eqref{sec5:eq:GleichungGmu}.
\end{proof}
We give now the explicit definition of function $\phi(\ell,\cdot)$ for any fixed $\ell\in \R^4\setminus\{0\}$. For this we define the following subsets of $\R^4$:
    \begin{align}\label{sec5:eq:Gebiete_def}
        A_\rho(\ell/2) \coloneqq B_{2\rho}(\ell/2)\setminus B_\rho(\ell/2) \, \text{ and }\, \Omega_{\rho}(\ell/2)\coloneqq \R^4 \setminus \left( B_{\rho}(\ell/2)\cup B_{\rho}(-\ell/2)\right)
    \end{align}
where $B_s(\eta)\subset \R^4$ denotes the ball of radius $s>0$ with center at $\eta\in \R^4$.
\begin{figure}[h]
    \centering
    \input{figure_003}
    \caption{The sets $A_\rho(\pm\ell/2), B_\rho(\pm\ell/2)$ and $\Omega_{2\rho}(\ell/2)$. In the relevant regime the parameter $L=|\ell|$ is large which also means that the sets $A_\rho(+\ell/2), B_\rho(+\ell/2)$ are far from $A_\rho(-\ell/2), B_\rho(-\ell/2)$.   }
    \label{sec5:fig::areas}
\end{figure}
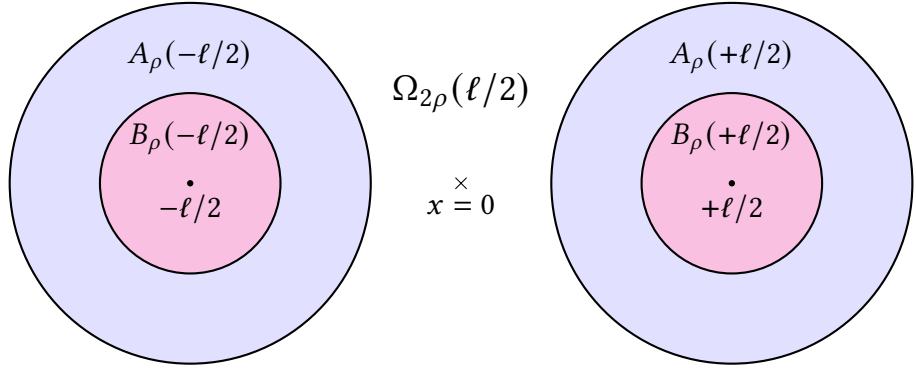
We define for $\rho>0$ the cut--off functions $u_{\rho}:\R^4 \to [0,1]$ and $v_{\rho} = 1-u_{\rho}$ as
\begin{equation}\label{sec5:cut_off_u}
    u_{\rho}(x) = \chi\left(\abs{x}/\rho\right)
\end{equation}
with $\chi\colon[0,\infty)\to[0,1]$ defined as
\begin{equation}\label{sec5:eq:cut_off_chi}
             \chi(r) \coloneqq \min\{ 1, (2-r)_+\} =\begin{cases}
                1, & r\leq 1,\\
                2-r, &1<r\leq 2,\\
                0, & r>2.
            \end{cases}
\end{equation}
The test function $\phi(\ell, \cdot)$ is then defined as:
\begin{definition}\label{sec5:def:Testfunktion}
        Let $G_\mu$ be the function defined in \eqref{sec:eq:Gmu_def} and $\varphi_0$ the zero-energy solution in \eqref{sec5:eq:exact_tail} with normalization $c_0=1$. For any $\ell\in \R^4\setminus\{0\}$ with $L=|\ell|$ let $\rho= \rho(L) = L^\theta$ for some $\theta\in (0,1)$ and $\mu= \mu(L) = \mu_0/L$ for some $\mu_0>0$. We then define $\phi(\ell, \cdot) \in L^2(\R^4)$ for any $x\in \R^4$ as
        \begin{equation*}
            \begin{split}
                \phi(\ell,x) = & u_{\rho}(x-\ell/2)\left(\varphi_0(x-\ell/2)-G_\mu(x+\ell/2)\right) + v_\rho(x-\ell/2)G_\mu(x-\ell/2)\\
                +& u_{\rho}(x+\ell/2)\left(\varphi_0(x+\ell/2)-G_\mu(x-\ell/2)\right) +  v_{\rho}(x+\ell/2)G_\mu(x+\ell/2)
            \end{split}
        \end{equation*}
        with $u_\rho,v_\rho$ defined in \eqref{sec5:cut_off_u}, \eqref{sec5:eq:cut_off_chi}. To restrict the construction to the large–separation regime and avoid
overlapping wells, fix $L_1(\theta)>0$ and let $\Upsilon\in C^\infty(\R^4;[0,1])$
be radial with $\Upsilon(\ell)=0$ for $|\ell|\le L_1$ and
$\Upsilon(\ell)=1$ for $|\ell|\ge 2L_1$.
We then define
\[
\widetilde\Phi(\ell,\cdot)
=
\Upsilon(\ell)\frac{\phi(\ell,\cdot)}{\|\phi(\ell,\cdot)\|}.
\]
\end{definition}
Before proceeding we define the following functions on $\R^4$ for any fixed $\ell\in \R^4$, $\mu_0>0$ and $\theta \in (0,1)$ if $L>0$ is sufficiently large as
\begin{equation}\label{sec5:eq:def_Gamma&f}
    \begin{split}
            \Gamma(\ell,x)&= G_\mu(x +\ell/2)+G_\mu(x -\ell/2) \\
            f(\ell,x)&=\phi(\ell,x) - \Gamma(\ell,x)
    \end{split}
\end{equation}
for any $x\in \R^4$. Then
\begin{equation}\label{sec5:eq:Testfunktion auf Gebieten}
        \phi(\ell,x) =\begin{cases}
            \varphi_0(x \mp \ell/2), & x\in B_\rho(\pm\ell/2),\\
            \Gamma(\ell,x)+f(\ell,x), &x \in A_\rho(\ell/2)\cup A_\rho(-\ell/2),\\
            \Gamma(\ell,x), & x \in  \Omega_{2\rho}(\ell/2).
        \end{cases}
\end{equation}
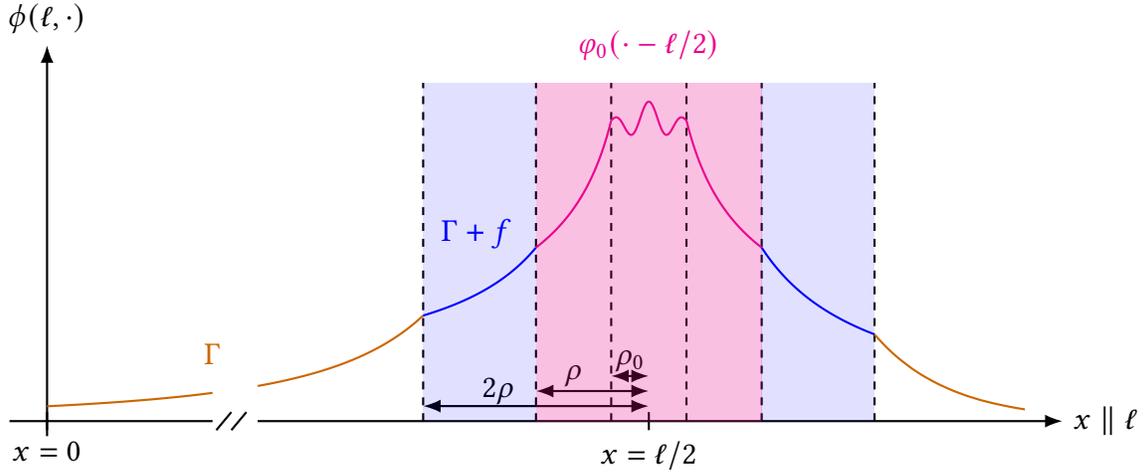
\begin{figure}[h]
    \centering
    \input{figure_004}
    \caption{Sketch of the function $\phi(\ell,\cdot)$ in the vicinity of the potential well centered at $\ell/2$. The function $\phi(\ell,\cdot)$ is symmetric with respect to reflections $x\mapsto -x$. We reduced the sketch to the displayed part.}
    \label{sec5:fig:Skizze Testfunktion}
\end{figure}

\begin{remark}\label{sec5:rem:weak_l_der}
The function $\phi$ coincides with the resonance near each potential well and with a sum of Green functions away from the wells, while the correction term $f$ is supported only on the annuli $A_\rho(\pm \ell/2)$.
For any fixed $x\in \R^4$, the mappings
\[
\ell \mapsto u_{\rho(\ell)}(x\pm \ell/2),
\qquad
\ell \mapsto \varphi_0(x\pm \ell/2)
\]
are locally Lipschitz (hence weakly differentiable) in $\ell$ on $\{|\ell|\ge L_1\}$. The functions $G_\mu(x \pm \ell/2)$ are singular when $x=\mp \ell/2$. However, in the definition of $\phi(\ell,x)$ the terms $G_\mu(x\pm \ell/2)$ appear only multiplied by $v_\rho(x\pm \ell/2)$, and hence are supported in the region $|x\pm \ell/2|\ge \rho(L)$. Therefore, for each fixed $x\in\R^4$ the map $\ell\mapsto \phi(\ell,x)$ is locally Lipschitz (hence weakly differentiable) on $\{|\ell|\ge L_1\}$. We prove below that for each fixed $\ell \in \R^4$ the map $x \mapsto \phi(\ell,x)$ is square integrable. One can check directly from its definition that the weak $\ell$–derivatives of $\phi(\ell,\cdot)$ belong to $L^2(\R^4)$ locally in $\ell$ and therefore by construction the map $\ell \mapsto \|\widetilde \Phi (\ell,\cdot)\|_{L^2(\R^4)}$ belongs to $H^{1}_{\mathrm{loc}}(\R^4)$.

Note that the function $\phi$ depends on the auxiliary parameters $\rho$, $\mu$. The $L$--dependence of $\rho$ and $\mu$ in Definition~\ref{sec5:def:Testfunktion} is chosen in the present form for two reasons.
First, it ensures that the local energy heuristics discussed in the introduction apply in the present setting. Second, since $G_\mu$ satisfies $(-\Delta+\mu^2)G_\mu=0$ and the norm $\|\phi(\ell,\cdot)\|^2$ diverges logarithmically as $L\to\infty$, the ratio $\mu^2/\|\phi(\ell,\cdot)\|^2$ must exhibit the characteristic $L^{-2}\log(L)^{-1}$ scaling of the tunneling energy in dimension four (see \cite{P:1996} and \cite[Chapter 6.2]{thesis:S:2025}).

We suppress the explicit dependence on $\theta\in(0,1),\mu_0>0$ in this function and chose these constants later. The normalized function $\widetilde\Phi$ then coincides with the test function appearing in Lemma~\ref{sec2:lem:proto_efimov} after fixing $\theta,\mu_0>0$. 
\end{remark}

\begin{remark} Before continuing we briefly discuss the choice $\mu_0>0$ and $\theta \in (0,1)$ fixed later and compare it to the case in dimension three in \cite{T:1991}. In dimension three the Green's function of the operator $-\Delta+\mu^2$ is given by $e^{-\mu\abs{x}}/\abs{x}$. On the annulus $A_\rho(\ell/2)$, the interpolation function then behaves as
\begin{equation*}
    \varphi_0(x-\ell/2)
    -\frac{e^{-\mu|x-\ell/2|}}{|x-\ell/2|}
    -\frac{e^{-\mu|x+\ell/2|}}{|x+\ell/2|}.
\end{equation*}
In spatial dimension $d=3$, the resonance satisfies $\varphi_0(x)\sim 1/|x|$. Let $\mu=\mu_0/L$. Choosing $\theta = \theta(\delta) \in (0,1)$ large enough (with $\delta>0$ from Definition \ref{sec2:def:short-range})
\begin{equation} \label{tamura001}
    \frac{1}{|x-\ell/2|}
    -\frac{e^{-\mu|x-\ell/2|}}{|x-\ell/2|}
    \sim \frac{\mu_0}{L}, \quad \text{ on } A_\rho(\ell/2).
\end{equation}
Moreover, for $x\in A_\rho(\ell/2)$ we have $|x+\ell/2|\sim L$, and hence
\begin{equation} \label{tamura002}
    \frac{e^{-\mu|x+\ell/2|}}{|x+\ell/2|}
    \sim \frac{e^{-\mu_0}}{L}.
\end{equation}
In order to make the interpolation function small meaning minimizing the difference between \eqref{tamura001} and \eqref{tamura002} one chooses $\mu_0$ as the unique solution of $t=e^{-t}$ in dimension $d=3$.

In dimension four we cannot proceed in that way. Indeed, due to the series expansion of the modified Bessel functions one finds (for $\theta(\delta)\in(0,1)$ large enough)
\begin{equation*}
    \varphi_0(x-\ell/2)-G_\mu(x-\ell/2)
    \sim -\frac{\mu^2}{2}\left(
    \log|x-\ell/2|+\log(\mu)-\log(2)+\gamma-\frac12
    \right),
\end{equation*}
where $\gamma>0$ is the Euler Mascheroni constant. Consequently, the leading order of this relation is not constant in $x$ (compare to \eqref{tamura001}. In particular, it exhibits a logarithmic dependence. In our analysis we keep track of the  errors due to interpolation and the parameter $\mu_0>0$ enters the discussion as a free parameter. Parameter $\theta \in (0,1)$ will be fixed depending on $\delta>0$. In particular the following choice is not optimal but sufficient
\begin{equation*}
    \theta=\theta(\delta) = \frac{4}{4+\delta}.
\end{equation*}
\end{remark}

\subsubsection{Normalizing the Test Function}
Before continuing with the proofs of Lemmas~\ref{sec5:lem:energy} and~\ref{sec5:lem:aux} we compute the norm $\|\phi(\ell,\cdot)\|$ to leading order in $L=|\ell|$. In the following we always have fixed the choice
\[
\rho=\rho(L)=L^{\theta}, \qquad \mu=\mu(L)=\mu_0/L,
\]
with $\mu_0>0$ and $\theta\in (0,1)$ to be chosen later. We will always assume $L\geq 2L_1$ such that $\Upsilon \equiv 1$ and $B_{2\rho}(\ell/2)\cap B_{2\rho}(-\ell/2)=\varnothing$, see Definition \ref{sec5:def:Testfunktion}.
\begin{remark}\label{sec5:rem:landau}
In the following we use Landau symbols with respect to the limit $L\to\infty$.
For functions, $y_1,y_2\colon(0,\infty)\to\R$ we write
$y_1(L)\in O(y_2(L))$ if there exist constants $C>0$ and $L_0>0$ such that
\[
|y_1(L)|\le C|y_2(L)| \qquad \text{for all } L\ge L_0 .
\]
We also write $a\lesssim b$ if $a\le Cb$ for a constant $C>0$ independent of $L$. We will also use the abuse of notation $O(L^s)=O(L\mapsto L^s)$ for any $s\in \R$.
\end{remark}

We prove the following bound on the norm of the test function.
\begin{lemma}\label{sec5:lem:Norm Testfunktion}
        For fixed $\ell \in \R^4$ let $\phi(\ell,\cdot)$ be the function defined in Definition \ref{sec5:def:Testfunktion} then for $|\ell|=L>0$ large we find
        \begin{align*}
            \|\phi(\ell,\cdot)\|^2_{L^2(\R^4)} =  4\pi^2\log(L) + O(1).
        \end{align*}
\end{lemma}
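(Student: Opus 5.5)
We split $\R^4$ according to \eqref{sec5:eq:Testfunktion auf Gebieten} into the balls $B_\rho(\pm\ell/2)$, the annuli $A_\rho(\pm\ell/2)$ and the exterior region $\Omega_{2\rho}(\ell/2)$. We then compute $\|\phi(\ell,\cdot)\|^2$ piece by piece. By the reflection symmetry $x\mapsto -x$ it suffices to treat the well at $\ell/2$ and double. Throughout, $\rho=L^\theta$ and $\mu=\mu_0/L$.

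\textbf{Step 1: the balls.} On $B_\rho(\ell/2)$ we have $\phi=\varphi_0(x-\ell/2)$. By Lemma~\ref{sec5:lem:AbfallverhaltenResonanz} with $c_0=1$, $\varphi_0(y)^2=|y|^{-4}+O(|y|^{-4-\delta})$ for $|y|\ge\rho_0$, while $\varphi_0\in L^2(B_{\rho_0})$; the latter holds since $\varphi_0\in\dot H^1\subset L^2(|x|^{-2}dx)$ and $|x|^{-2}$ is bounded below on bounded sets. In polar coordinates, with $|S^3|=2\pi^2$,
\begin{equation*}
\int_{B_\rho}\varphi_0^2\,dy=2\pi^2\int_{\rho_0}^{\rho}s^{-1}\,ds+O(1)=2\pi^2\theta\log L+O(1).
\end{equation*}

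\textbf{Step 2: the exterior region.} On $\Omega_{2\rho}$ we have $\phi=\Gamma$, and $\Gamma^2=G_\mu(\cdot+\ell/2)^2+G_\mu(\cdot-\ell/2)^2+2G_\mu(\cdot+\ell/2)G_\mu(\cdot-\ell/2)$. Scaling $x=y/\mu$ gives $G_\mu(x)=\mu^2 G_1(\mu x)$, so
\begin{equation*}
\int_{|y|>2\rho}G_\mu(y)^2\,dy=\int_{|z|>2\mu\rho}G_1(z)^2\,dz .
\end{equation*}
Using $K_1(t)=t^{-1}+O(t\log t)$ as $t\to0$ (Appendix~\ref{app:Bessel}) and the exponential decay at infinity, this equals $2\pi^2\log\bigl(1/(2\mu\rho)\bigr)+O(1)=2\pi^2(1-\theta)\log L+O(1)$. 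The main thing to verify here is that the $O(t\log t)$ correction only produces an $O(1)$ term, since $\int_0 t^{3}\,t^{-1}\,t\log t\,dt/t^2$-type contributions are finite.

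Next, the term $G_\mu(x+\ell/2)^2$ integrated over the small region $B_{2\rho}(\ell/2)$, where $|x+\ell/2|\sim L$, is $O(\rho^4L^{-4})=o(1)$. For the cross term we use scaling again: it equals $\int G_1(z+\mu_0 e/2)G_1(z-\mu_0 e/2)\,dz$ minus the contributions of the small balls. This integral converges because both singularities $|z\mp\mu_0e/2|^{-2}$ are locally integrable in $\R^4$ and are separated. Hence the cross term is $O(1)$, with a bound depending only on $\mu_0$.

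\textbf{Step 3: the annuli.} On $A_\rho(\ell/2)$ we write $\phi=\chi\,(\varphi_0(y)-G_\mu(x+\ell/2))+(1-\chi)\,G_\mu(y)$ with $y=x-\ell/2$. Each of the three functions involved is $O(\rho^{-2})$ there: $\varphi_0(y)$ and $G_\mu(y)$ because $|y|\ge\rho$, and $G_\mu(x+\ell/2)=O(L^{-2})$. Since the annulus has volume $O(\rho^4)$, we get $\int_{A_\rho}\phi^2=O(1)$.

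Summing twice over the two wells gives $2\bigl(2\pi^2\theta+2\pi^2(1-\theta)\bigr)\log L+O(1)=4\pi^2\log L+O(1)$. The $\theta$-dependence cancels, as it must. The only delicate point is the uniform control of the Bessel small-argument expansion in Step 2; everything else is a crude bound.
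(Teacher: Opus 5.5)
Your argument is correct. Its skeleton matches the paper's: you split into $B_\rho(\pm\ell/2)$, $A_\rho(\pm\ell/2)$ and the far region, get $2\pi^2\theta\log L$ per well from $\varphi_0$ and $2\pi^2(1-\theta)\log L$ per well from $G_\mu$, and show the cross term is $O(1)$. Where you differ is in the error terms, which you handle more economically.

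\textbf{Annuli.} The paper writes $\phi=\Gamma+f$ there and uses the interpolation bound $|f|\lesssim L^{-2}\log L$ of Lemma~\ref{app:lem:Anulus terme}, proved for its specific choice $\theta=4/(4+\delta)$. It then estimates $\langle f,G_\mu(\cdot\pm\ell/2)\rangle_{A_\rho}$ separately and absorbs $\|\Gamma\|^2_{A_\rho}$ into a Green-function integral starting at radius $\rho$. You instead note that every constituent of $\phi$ is $O(\rho^{-2})$ on the annulus, so a single dyadic shell in $\R^4$ contributes $O(1)$. You then start the exterior integral at $2\rho$, which shifts the logarithm only by $\log 2$. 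This needs no cancellation inside $f$ and works for every $\theta\in(0,1)$.

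\textbf{Cross term and exterior integral.} The paper evaluates the cross term exactly as $2\pi^2K_0(\mu_0)$ via Fourier transform (Corollary~\ref{app2:cor:Faltung Gmu mit sich selbst}). It integrates $sK_1(s)^2$ using the antiderivative in Remark~\ref{app:cor:bessel_aux}. You use the scaling $G_\mu(x)=\mu^2G_1(\mu x)$ with local integrability of $|z|^{-2}$ in $\R^4$, and the direct expansion $K_1(t)=t^{-1}+O(t\log t)$. These give exactly the $O(1)$ and logarithmic information the lemma needs.

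\textbf{Comparison.} The paper's route reuses estimates it needs later anyway: the bounds on $f$ and $\nabla f$, and the explicit constants in the energy and kinetic estimates. Your route makes the norm lemma self-contained and more elementary.

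\textbf{Minor slip.} On $A_\rho(\ell/2)$ your formula for $\phi$ omits the term $v_\rho(x+\ell/2)G_\mu(x+\ell/2)=G_\mu(x+\ell/2)$. That term is $O(L^{-2})$ there, so your bound is unchanged.
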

\begin{proof}
The proof proceeds by reducing the norm of $\phi(\ell,\cdot)$ to the contribution of the
Green's functions $G_\mu$ outside the balls $B_\rho(\pm \ell/2)$, while all interpolation errors remain uniformly bounded.
We subdivide the proof in two steps corresponding to the proof of the following two relations: 
\begin{equation} \label{sec5:eq:relations}
    \begin{split}
        \|\phi(\ell,\cdot)\|^2_{L^2(\R^4)}
         &= \|\Gamma(\ell,\cdot)\|^2_{L^2(\Omega_\rho(\ell/2))} +  4\pi^2\theta\log(L) +O(1), \\
        \|\Gamma(\ell,\cdot)\|^2_{L^2(\Omega_\rho(\ell/2))}&=   4\pi^2(1-\theta)\log(L) +O(1).
    \end{split}
\end{equation}

\medskip
\noindent
\textbf{Step 1:}
We prove the first relation in \eqref{sec5:eq:relations}. Using the symmetry under reflections meaning $\phi(\ell,\cdot)=\phi(-\ell,\cdot)$, we decompose
\begin{equation}\label{sec5:eq:Zerlegung Norm}
\|\phi(\ell,\cdot)\|^2_{L^2(\R^4)}
=
\|\phi(\ell,\cdot)\|^2_{L^2(\Omega_{2\rho}(\ell/2))}
+2\|\phi(\ell,\cdot)\|^2_{L^2(A_\rho(\ell/2))}
+2\|\phi(\ell,\cdot)\|^2_{L^2(B_\rho(\ell/2))}.
\end{equation}
We estimate the third and second term in the right-hand side of \eqref{sec5:eq:Zerlegung Norm} separately.
By \eqref{sec5:eq:Testfunktion auf Gebieten} and the change of variables
$x\mapsto x- \ell/2$ 
\begin{equation}\label{sec5:eq:norm_on_ball}
\begin{split}
    \|\phi(\ell,\cdot)\|^2_{L^2(B_\rho(\ell/2))}
&=\|\varphi_0\|^2_{L^2(B_\rho(0))} \\
&=\|\varphi_0\|^2_{L^2(B_{\rho_0}(0))}
+\|\varphi_0\|^2_{L^2(B_\rho(0)\setminus B_{\rho_0}(0))},
\end{split}
\end{equation}
where $\rho_0>0$ is fixed according to Remark \ref{sec5:rem:parameters}. Since $\rho_0>0$ is independent of $L$ and $\varphi_0 \in \dot H^1$ and consequently also in $L^2_{\loc}$ the norm over $B_{\rho_0}(0)$ is a constant independent of $L$. Using the explicit expression
\begin{equation*}
    \varphi_0(x)= |x|^{-2} + \epsilon_1(x) \quad \text{ where } \quad  \abs{\epsilon_1(x)} \lesssim \abs{x}^{-2-\delta} \quad \text{ for } \, \abs{x} > \rho_0
\end{equation*}
(see, Lemma~\ref{sec5:lem:AbfallverhaltenResonanz} with $c_0=1$ and $\kappa=1/2$) we find 
\begin{equation}\label{sec5:eq:varphi0NormBall}
    \|\varphi_0\|^2_{L^2(B_\rho(0)\setminus B_{\rho_0}(0))}=\abs{S^3} \int_{\rho_0}^\rho r^{-1} \, dr + \int_{B_\rho(0)\setminus B_{\rho_0}(0)} 2\abs{x}^{-2}\epsilon_1(x) + \abs{\epsilon_1(x)}^2\, dx 
\end{equation}
We will prove
\begin{equation}\label{sec5:eq:varphiNormBall_O(1)}
     2\|\varphi_0\|^2_{L^2(B_\rho(0)\setminus B_{\rho_0}(0))} = 4\pi^2\theta\log(L) +O(1).
\end{equation}
The integral on the right--hand side of \eqref{sec5:eq:varphi0NormBall} gives the leading term on the right--hand side of \eqref{sec5:eq:varphiNormBall_O(1)}. It remains to estimate the terms involving $\epsilon_1$, which follows directly by noting that
\begin{equation*}
\begin{split}
    \abs{\int_{B_\rho(0)\setminus B_{\rho_0}(0)} 2\abs{x}^{-2}\epsilon_1(x) + \abs{\epsilon_1(x)}^2\, dx} &\lesssim \int_{B_\rho(0)\setminus B_{\rho_0}(0)} \abs{x}^{-2}\abs{\epsilon_1(x)}\, dx\\
    &\lesssim \int_{B_\rho(0)\setminus B_{\rho_0}(0)} \abs{x}^{-4-\delta}\, dx \in O(1).
    \end{split}
\end{equation*}
Combining \eqref{sec5:eq:norm_on_ball} and \eqref{sec5:eq:varphiNormBall_O(1)} we arrive at
\begin{equation} \label{sec5:eq:contribution_res}
    2\|\phi(\ell,\cdot)\|^2_{L^2(B_{\rho}(\ell/2))} =   4\pi^2\theta\log(L) +O(1).
\end{equation}

We continue with the norm over $A_{\rho}(\ell/2)$ in the right--hand side of \eqref{sec5:eq:Zerlegung Norm}. Inserting \eqref{sec5:eq:Testfunktion auf Gebieten} yields
\begin{align} \label{sec:eq:norm_anulli_expanded}
    \|\phi(\ell,\cdot)\|^2_{L^2(A_\rho(\ell/2))} 
        = &\|\Gamma(\ell,\cdot)\|^2_{L^2(A_\rho(\ell/2))}\\
        &+ 2\Re \scp{f(\ell,\cdot)}{\Gamma(\ell,\cdot)}_{L^2(A_\rho(\ell/2))} + \|f(\ell,\cdot)\|^2_{L^2(A_\rho(\ell/2))}.\notag
\end{align}
We estimate each of the terms in the right-hand side of \eqref{sec:eq:norm_anulli_expanded} individually. To estimate the terms that involve $f$ we use the following bound 
\begin{equation}\label{sec5:eq:prop_fR}
|f(\ell,x)|\lesssim L^{-2}\log L,\, \quad \forall x\in  A_\rho(\ell/2),\, \theta = 4/(4+\delta),
\end{equation}
which follows immediately from the definition of $f$ and the series expansion of $K_1$ in the Appendix in Lemma~\ref{app1:lem:Kj_expansion}. We defer the short proof of \eqref{sec5:eq:prop_fR} to Lemma~\ref{app:lem:Anulus terme}.

Using \eqref{sec5:eq:prop_fR} and $|A_\rho|\sim\rho^4 = L^{4\theta}$ we obtain
\begin{equation}
    \|f(\ell,\cdot)\|^2_{L^2(A_\rho(\ell/2))}\lesssim   L^{-4}\log(L)^2 \abs{A_\rho(\ell/2)} \lesssim  L^{4(\theta-1)}\log(L)^2 \in O(1).\label{sec5:eq: Norm f_R}
\end{equation}
We continue with the inner product term in the right-hand side of \eqref{sec:eq:norm_anulli_expanded} and show that this term is in $O(1)$ using the decay of $G_\mu$ and the scaling $\mu\rho=\mu_0L^{\theta-1}$. Note that for any  $x\in A_\rho(\ell/2)$ it holds $\abs{x+\ell/2}\geq L/2$ and consequently 
\begin{equation*}
        \abs{G_\mu(x+\ell/2)} \lesssim L^{-2}
\end{equation*}
such that
\begin{equation}
        \abs{\scp{f(\ell,\cdot)}{G_\mu(\cdot+\ell/2)}_{L^2(A_\rho(\ell/2))}}
        \lesssim L^{-4}\log(L)\abs{A_\rho(\ell/2)} \lesssim L^{4(\theta-1)}\log(L)\in O(1).\label{sec5:eq:f_R Gmu +R}
\end{equation}
For the term involving $G_\mu(\cdot-\vec \ell/2)$ we use the definition of $G_\mu$ in \eqref{sec:eq:Gmu_def} and a change to spherical coordinates centered at $\ell/2$. This yields  
\begin{equation*}
    \begin{split}
        \abs{\scp{f(\ell,\cdot)}{G_\mu(\cdot-\ell/2)}_{L^2(A_\rho(\ell/2))}}
        &\lesssim L^{-2}\log(L)\int_{A_\rho(\ell/2)}
        \mu \frac{K_1\!\left(\mu\abs{x-\ell/2}\right)}{\abs{x-\ell/2}} \, dx \\
        &\lesssim L^{-2}\log(L)\int_\rho^{2\rho} \mu r^2 K_1(\mu r)\, dr .
    \end{split}
\end{equation*}
Substituting $s=\mu r$ where $\mu=\mu_0L^{-1}$ and using that $s^2K_1(s)$ is bounded on $[0,1]$ (cf.\ Lemma~\ref{app1:lem:Kj_expansion}), we obtain for $L>0$ large enough
\begin{equation}\label{sec5:eq:f_R Gmu -R}
    \begin{split}
        \abs{\scp{f(\ell,\cdot)}{G_\mu(\cdot-\ell/2)}_{L^2(A_\rho(\ell/2))}}
        &\;\lesssim\;
        \mu_0^{-2}\log(L)\int_{\mu\rho}^{2\mu\rho} s^2 K_1(s)\, ds \\
        &\;\lesssim\;
        \log(L)L^{\theta-1} \in O(1),
    \end{split}
\end{equation}
where in the last step we used the scaling relation $\mu\rho=\mu_0L^{\theta-1}$. Combining \eqref{sec5:eq:f_R Gmu +R}, \eqref{sec5:eq:f_R Gmu -R}, \eqref{sec5:eq: Norm f_R} with \eqref{sec:eq:norm_anulli_expanded} and the triangle inequality yields
\begin{align} \label{sec:eq:norm_anulli_estimated}
    \|\phi(\ell,\cdot)\|^2_{L^2(A_\rho(\ell/2))} 
        = &\|\Gamma(\ell,\cdot)\|^2_{L^2(A_\rho(\ell/2))} + O(1).
\end{align}

By definition of the sets $A_\rho, \Omega_\rho$ in \eqref{sec5:eq:Gebiete_def} it follows $A_\rho(+\ell/2) \cup A_\rho(-\ell/2) \cup \Omega_{2\rho}(\ell/2)  = \Omega_{\rho}(\ell/2)$. Therefore, inserting \eqref{sec:eq:norm_anulli_estimated} and \eqref{sec5:eq:contribution_res} in \eqref{sec5:eq:Zerlegung Norm} together with the reflection symmetry of $\phi$ yields the stated relation in the first line of \eqref{sec5:eq:relations}.

\medskip
\noindent
\textbf{Step 2:}
We show for $L>0$ large enough:
\begin{equation} \label{sec5:eq:this_then_proves_norm_lemma}
    \|\Gamma(\ell,\cdot)\|^2_{L^2(\Omega_\rho(\ell/2))}=   4\pi^2(1-\theta)\log(L) +O(1).
\end{equation}
Expanding the square together with the reflection symmetry gives
\begin{equation}\label{sec5:eq:remaining_Gmu_norm_expanded}
    \begin{split}
        \|\Gamma(\ell,\cdot)&\|^2_{L^2(\Omega_\rho(\ell/2))} \\
        &= 2\|G_\mu(\cdot-\ell/2)\|^2_{L^2(\Omega_\rho(\ell/2))} + 2\scp{G_\mu(\cdot+\ell/2)}{G_\mu(\cdot-\ell/2)}_{L^2(\Omega_\rho(\ell/2))}
    \end{split}
\end{equation}
We show that second term in the right--hand side of \eqref{sec5:eq:remaining_Gmu_norm_expanded} is bounded by a constant independent of $L$. Since $G_\mu \geq 0$ we have
\begin{equation*}
    \scp{G_\mu(\cdot+\ell/2)}{G_\mu(\cdot-\ell/2)}_{L^2(\Omega_\rho(\ell/2))} \leq  \scp{G_\mu(\cdot+\ell/2)}{G_\mu(\cdot-\ell/2)}_{L^2(\R^4)}.
\end{equation*}
This inner product over $\R^4$ be solved explicitly (cf. Corollary~\ref{app2:cor:Faltung Gmu mit sich selbst}) to find
\begin{equation} \label{sec5:eq:scp_gmu_o(1)}
    \scp{G_\mu(\cdot+\ell/2)}{G_\mu(\cdot-\ell/2)}_{L^2(\R^4)}= 2\pi^2 K_0(\mu_0) \in O(1).
\end{equation}
Note that For $x\in B_\rho(-\ell/2)$ we have $\abs{G_\mu(x-\ell/2)}\leq c L^{-2} $ and therefore
\begin{equation*}
    \|G_\mu(\cdot-\ell/2)\|^2_{L^2(B_\rho(-\ell/2))} \lesssim  L^{-4}\abs{B_\rho(-\ell/2)} \in O(1)
\end{equation*}
such that
\begin{equation}\label{sec5:eq:norm_gmu_B_rho_o(1)}
        \|G_\mu(\cdot-\ell/2)\|^2_{L^2(\Omega_\rho(-\ell/2))} =% \|G_\mu(\cdot-\Vec{R}/2)\|^2_{L^2(\R^4\setminus B_\rho(\Vec{R}/2))} + O(1) =
        \|G_\mu\|^2_{L^2(\R^4\setminus B_\rho(0))} + O(1).
\end{equation}
Inserting \eqref{sec5:eq:norm_gmu_B_rho_o(1)} and \eqref{sec5:eq:scp_gmu_o(1)} into  the right-hand side of \eqref{sec5:eq:remaining_Gmu_norm_expanded} yields
\begin{equation*}
    \begin{split}
       \|\Gamma(\ell,\cdot)\|^2_{L^2(\Omega_\rho(\ell/2))}= 2\|G_\mu\|^2_{L^2(\R^4\setminus B_\rho(0))}  + O(1).
    \end{split}
\end{equation*}
The remaining norm of $G_\mu$ can be computed explicitly. Using the definition of $G_\mu$ in \eqref{sec:eq:Gmu_def} we find
\begin{equation} \label{sec5:eq:remainin_norm_for_norm}
\begin{split}
    2\|G_\mu\|^2_{L^2(\R^4\setminus B_\rho(0))} &= 2\int_{\R^4\setminus B_\rho(0)} \mu^2\frac{K_1(\mu\abs{x})^2}{\abs{x}^2} dx \\
    &= 4\pi^2 \int_\rho^\infty \mu^2 tK_1(\mu t)^2 dt \\
    & =4\pi^2 \int_{\mu\rho}^\infty sK_1(s)^2 ds
\end{split}
\end{equation}
Using the antiderivative of $sK_1(s)^2$ in Remark \ref{app:cor:bessel_aux} implies
\begin{equation*}
    2\|G_\mu\|^2_{L^2(\R^4\setminus B_\rho(0))} = \left[\frac{1}{2}s^2\left(K_1(s)^2-K_0(s)K_2(s)\right)\right]_{s=\mu\rho}^\infty
\end{equation*}
In Remark~\ref{app:cor:bessel_aux} we show
\begin{equation*}
    \left[\frac{1}{2}s^2\left(K_1(s)^2-K_0(s)K_2(s)\right)\right]_{s=\mu\rho}^\infty =-\log(\mu\rho) +O(1).
\end{equation*}
Since $\mu\rho=\mu_0L^{\theta-1}$ this proves
\eqref{sec5:eq:this_then_proves_norm_lemma} and completes the proof of Lemma~\ref{sec5:lem:Norm Testfunktion}.
\end{proof}
%%%%%%%%%%%%%%%%%%%%%%%%%%%%%%%%%%%%%%%%
%%%%%%%%%%% SUBSECTION ENERGY ESTIMATE
%%%%%%%%%%%%%%%%%%%%%%%%%%%%%%%%%%%%%%%%%
\subsection{Energy Estimate}\label{sec5:sub:energy}
In this subsection we prove Lemma~\ref{sec5:lem:energy}.
We begin by stating three propositions. We then show that these propositions together imply Lemma~\ref{sec5:lem:energy}.
The propositions themselves are proved separately in Subsections~\ref{sec5:subsec:prop1}--\ref{sec5:subsec:prop3}.
Throughout, we assume that $\ell\in\R^4$ with $|\ell|=L>0$ sufficiently large and $\theta=4/(4+\delta)$.
 \begin{proposition}\label{sec5:prop: varphi0 Terme}
        Define
        \begin{equation}\label{sec5:eq:dw_pot}
            V_\pm[\ell](\cdot) =V(\cdot-\ell/2)+V(\cdot+\ell/2),
        \end{equation}
        then for the test function $\phi$ we have
        \begin{equation} \label{sec5:eq:prop01}
            \scp{\phi(\ell,\cdot)}{V_\pm[\ell] \phi(\ell,\cdot)} + 2\|\nabla\phi(\ell,\cdot)\|^2_{L^2(B_\rho(\ell/2))}  = -8\pi^2  L^{-2\theta} + O(L^{-3+\theta}).
        \end{equation}
\end{proposition}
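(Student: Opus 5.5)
The plan is to reduce everything to a boundary term on $\partial B_\rho(\pm\ell/2)$, using that $\varphi_0$ solves $h\varphi_0=0$. All potential contributions away from the well where $\phi$ coincides with the resonance are then treated as errors. By the reflection symmetry $\phi(\ell,\cdot)=\phi(-\ell,\cdot)$ and the definition of $V_\pm[\ell]$ in \eqref{sec5:eq:dw_pot}, I split
\begin{equation*}
\scp{\phi}{V_\pm[\ell]\phi}=2\int_{B_\rho(\ell/2)}V(x-\ell/2)\,\varphi_0(x-\ell/2)^2\,dx+\mathcal R(\ell),
\end{equation*}
where $\mathcal R(\ell)$ collects two kinds of terms: $V(\cdot-\ell/2)\phi^2$ integrated over $\R^4\setminus B_\rho(\ell/2)$, and the mirror terms. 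The left-hand side of \eqref{sec5:eq:prop01} then equals
\begin{equation*}
2\Bigl(\int_{B_\rho(0)}|\nabla\varphi_0|^2+V\varphi_0^2\,dx\Bigr)+\mathcal R(\ell)
\end{equation*}
after the shift $x\mapsto x+\ell/2$, using \eqref{sec5:eq:Testfunktion auf Gebieten}.

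\textbf{Step 1 (Green identity).} I claim that for $\rho>\rho_0$,
\begin{equation*}
\int_{B_\rho(0)}|\nabla\varphi_0|^2+V\varphi_0^2\,dx=\int_{\partial B_\rho(0)}\varphi_0\,\partial_r\varphi_0\,dS .
\end{equation*}
To justify it I test the distributional equation $h\varphi_0=0$ against $\eta_k\varphi_0$, where $\eta_k$ is a radial Lipschitz cutoff equal to $1$ on $B_\rho$ and decreasing linearly to $0$ on $B_{\rho+1/k}\setminus B_\rho$. This is admissible since $\varphi_0\in\dot H^1$ and $V\in L^2$, so $V\varphi_0\eta_k\in L^1$ by local boundedness of $\varphi_0$ away from the core (or by a density argument). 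The identity gives
\begin{equation*}
\int\eta_k\bigl(|\nabla\varphi_0|^2+V\varphi_0^2\bigr)=k\int_{\rho<|x|<\rho+1/k}\varphi_0\,\partial_r\varphi_0 .
\end{equation*}
Since $\varphi_0$ and $\nabla\varphi_0$ are continuous on $|x|\ge\rho_0$ by Lemma~\ref{sec5:lem:AbfallverhaltenResonanz}, the right-hand side converges to the surface integral as $k\to\infty$.

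\textbf{Step 2 (evaluate the surface term).} I insert $\varphi_0=|x|^{-2}+\epsilon_1$ and $\partial_r\varphi_0=-2|x|^{-3}+\epsilon_2\cdot x/|x|$ with $\kappa=1/2$, and use $|\partial B_\rho|=2\pi^2\rho^3$. The main term is
\begin{equation*}
\rho^{-2}\cdot(-2\rho^{-3})\cdot 2\pi^2\rho^3=-4\pi^2\rho^{-2}=-4\pi^2L^{-2\theta}.
\end{equation*}
Doubling it gives $-8\pi^2L^{-2\theta}$. The error terms are $O(\rho^{-2-\delta})=O(L^{-\theta(2+\delta)})$. Since $\theta=4/(4+\delta)$, we have $\theta(2+\delta)\ge 3-\theta$, because this is equivalent to $4(3+\delta)\ge 3(4+\delta)$. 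Hence these errors are $O(L^{-3+\theta})$.

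\textbf{Step 3 (remainder $\mathcal R$).} I bound $\mathcal R$ using $|V(y)|\le|y|^{-2-2\delta}$ for $|y|>R$, on each region of \eqref{sec5:eq:Testfunktion auf Gebieten}.
\begin{itemize}
\item On the annulus $A_\rho(\ell/2)$, $|\phi|\lesssim\rho^{-2}$ by \eqref{sec5:eq:prop_fR} and the bound $G_\mu(y)\lesssim|y|^{-2}$. The contribution is $\lesssim\rho^{-4}\rho^{-2-2\delta}\rho^4=\rho^{-2-2\delta}$.
\item On $\Omega_{2\rho}(\ell/2)$, $\phi^2\lesssim|x-\ell/2|^{-4}+|x+\ell/2|^{-4}$. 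The piece $\int_{|y|>2\rho}|y|^{-6-2\delta}\,dy$ contributes $\lesssim\rho^{-2-2\delta}$. Near the opposite well the potential is $\lesssim L^{-2-2\delta}$, and the local integral of $\phi^2$ is at most $\log L$.
\item On $B_\rho(-\ell/2)$, $V(x-\ell/2)\lesssim L^{-2-2\delta}$ and $\int_{B_\rho}\varphi_0^2\lesssim\log L$, giving $O(L^{-2-2\delta}\log L)$.
\end{itemize}
All of these are $O(L^{-3+\theta})$ for the chosen $\theta$: $\rho^{-2-2\delta}$ is even smaller than $\rho^{-2-\delta}$, and $2+2\delta>3-\theta$ up to logarithms, possibly after slightly adjusting the exponents.

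\textbf{Main obstacle.} The main obstacle is Step 1: justifying the Green identity for a merely weak, non-$L^2$ zero-energy solution with $V\in L^2$. I would first try the cutoff approximation described in Step 1, relying on the continuity of $\nabla\varphi_0$ outside $B_{\rho_0}$ from Lemma~\ref{sec5:lem:AbfallverhaltenResonanz}. A secondary check is that the exponent bookkeeping in Step 3, including the logarithmic factors, indeed lands in $O(L^{-3+\theta})$ for $\theta=4/(4+\delta)$.
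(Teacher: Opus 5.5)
Your proof is correct. It differs from the paper's in the key identity you use for the in-well energy.

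\textbf{The paper's route.} The paper never integrates by parts over $B_\rho(0)$. It proceeds in three steps.
\begin{enumerate}
\item It shows $\scp{\phi}{V_\pm[\ell]\phi}=2\scp{\varphi_0}{V\varphi_0}_{L^2(\R^4)}+O(L^{-3+\theta})$. This requires an extra tail estimate for $\int_{\R^4\setminus B_\rho(0)}V\varphi_0^2$.
\item It invokes the \emph{global} zero-energy identity $\norm{\nabla\varphi_0}_2^2+\scp{\varphi_0}{V\varphi_0}=0$. This rewrites the left-hand side as $-2\norm{\nabla\varphi_0}^2_{L^2(\R^4\setminus B_\rho(0))}+O(L^{-3+\theta})$.
\item It integrates $|\nabla\varphi_0|^2=4|x|^{-6}+\dots$ over the exterior of the ball, which gives $4\pi^2\rho^{-2}$.
\end{enumerate}

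\textbf{How your route relates.} You use the \emph{local} flux identity on $B_\rho(0)$ and evaluate $\varphi_0\,\partial_r\varphi_0$ on the sphere. The two are equivalent: integrating by parts on $\R^4\setminus B_\rho(0)$ gives $\int_{\partial B_\rho}\varphi_0\,\partial_r\varphi_0\,dS=-\norm{\nabla\varphi_0}^2_{L^2(\R^4\setminus B_\rho(0))}-\int_{\R^4\setminus B_\rho(0)}V\varphi_0^2$, where the boundary term at infinity decays like $r^{-2}$. So your flux is exactly the paper's exterior energy plus the tail term the paper discards. Both give $-4\pi^2\rho^{-2}$ per well, with an error $O(\rho^{-2-\delta})$.

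\textbf{What each route buys.} Yours uses only the equation tested against compactly supported functions. It needs no global energy identity, no control at infinity, and no tail of $V\varphi_0^2$. The price is that you need pointwise values of $\varphi_0$ and $\nabla\varphi_0$ on a sphere, hence the continuity statement in Lemma~\ref{sec5:lem:AbfallverhaltenResonanz}. The paper needs only the integrated gradient asymptotics.

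\textbf{The remainder.} The paper simply uses $\sup_{\Omega_\rho(\ell/2)}|V_\pm[\ell]|\cdot\norm{\phi(\ell,\cdot)}^2\lesssim L^{-\theta(2+2\delta)}\log L$, via Lemma~\ref{sec5:lem:Norm Testfunktion}. This is shorter than your region-by-region bookkeeping. In that bookkeeping you also omitted $A_\rho(-\ell/2)$, although its contribution is trivially $O(L^{-2-2\delta})$.

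\textbf{Two small corrections.}
\begin{enumerate}
\item In Step~1, admissibility of $\eta_k\varphi_0$ does not follow from local boundedness of $\varphi_0$ away from the core, since the test function is also supported on the core. The right reason is $\dot H^1(\R^4)\hookrightarrow L^4(\R^4)$. Then $V\varphi_0\in L^{4/3}$, so $\psi\mapsto\int V\varphi_0\psi$ is continuous on $H^1_0(B_{\rho+1})$, and the density argument goes through.
\item No ``adjustment of exponents'' is needed in Step~3. The inequality $2+2\delta>3-\theta$ holds strictly, because $1-\theta=\delta/(4+\delta)<2\delta$, so the logarithmic factor is absorbed.
\end{enumerate}
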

\begin{proof}
    See Subsection~\ref{sec5:subsec:prop1}.
\end{proof}
\begin{proposition}\label{sec5:prop:nabla Annulus Terme}
        Let $\Gamma,f$ be the functions defined in \eqref{sec5:eq:def_Gamma&f}, then
        \begin{equation}
            \|\nabla f(\ell, \cdot)\|^2_{L^2(A_\rho(\ell/2))} \in O(L^{-3+\theta})\label{sec5:eq:Norm nabla f_R}
        \end{equation}
        and 
        \begin{equation}\label{sec5:eq:Cross Term nabla f_R}
            \begin{split}
                &4\Re\langle \nabla f(\ell,\cdot),\nabla\Gamma(\ell,\cdot) \rangle_{L^2(A_\rho(\ell/2))} \\
                &= C_1(\mu_0)L^{-2}-C_2(\mu_0)\log(\mu_0L^{\theta-1})L^{-2}+ O(L^{-3+\theta}),
            \end{split}
        \end{equation}
        with
        \begin{equation}\label{sec5:eq:Konstanten nabla Gmu f_R}
        \begin{split}
            C_1(\mu_0)&= 8\pi^2\mu_0^2(\log(2)-\gamma) + 4\pi^2\mu_0^2 - 16\pi^2\mu_0K_1(\mu_0), \\
            C_2(\mu_0)&=8\pi^2\mu_0^2\, .
        \end{split}
    \end{equation}
    and $\gamma$ the Euler--Mascheroni constant.
\end{proposition}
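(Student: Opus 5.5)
On $A_\rho(\ell/2)$ I write $f$ explicitly and compute both estimates directly. Since $v_\rho=1-u_\rho$, on $A_\rho(\ell/2)$ we have
\[
\phi=u_\rho(x-\ell/2)\bigl(\varphi_0(x-\ell/2)-G_\mu(x-\ell/2)-G_\mu(x+\ell/2)\bigr)+\Gamma(\ell,x).
\]
The other well contributes nothing there, since $B_{2\rho}(-\ell/2)$ is disjoint from $A_\rho(\ell/2)$. Hence
\[
f=u_\rho(x-\ell/2)\,g(x),\qquad g(x):=\varphi_0(x-\ell/2)-G_\mu(x-\ell/2)-G_\mu(x+\ell/2).
\]
I set $y=x-\ell/2$, so that $\rho\le|y|\le 2\rho$. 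By Lemma~\ref{sec5:lem:AbfallverhaltenResonanz} with $c_0=1$, and by the small-argument expansion $K_1(s)=s^{-1}+\tfrac s2(\log(s/2)+\gamma-\tfrac12)+O(s^3\log s)$ from Lemma~\ref{app1:lem:Kj_expansion}, I get
\[
g=-\tfrac{\mu^2}{2}\bigl(\log|y|+\log\mu-\log2+\gamma-\tfrac12\bigr)-G_\mu(x+\ell/2)+\epsilon_1+O(\mu^4\rho^2\log),
\]
\[
\nabla g=-\tfrac{\mu^2}{2}\tfrac{y}{|y|^2}-\nabla G_\mu(x+\ell/2)+\epsilon_2+\dots
\]
The far Green function satisfies $|\nabla G_\mu(x+\ell/2)|\lesssim L^{-3}$ on $A_\rho$. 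Moreover $|\epsilon_2|\lesssim\rho^{-3-\delta}$, and the choice $\theta=4/(4+\delta)$ makes $\rho^{-3-\delta}\lesssim \mu^2\rho^{-1}$ up to admissible errors. Together with $|\nabla u_\rho|=\rho^{-1}$ and $|g|\lesssim L^{-2}\log L$ from \eqref{sec5:eq:prop_fR}, this gives $|\nabla f|\lesssim L^{-2}\rho^{-1}\log L$. Integrating over a volume $\sim\rho^4$ yields
\[
\|\nabla f\|^2_{L^2(A_\rho)}\lesssim L^{-4}\rho^{2}\log^2 L=L^{-4+2\theta}\log^2L\in O(L^{-3+\theta}),
\]
which proves \eqref{sec5:eq:Norm nabla f_R}.

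For the cross term I integrate by parts on the annulus rather than estimate pointwise:
\[
\langle\nabla f,\nabla\Gamma\rangle_{A_\rho}=\int_{\partial A_\rho} f\,\partial_n\Gamma-\int_{A_\rho}f\,\Delta\Gamma .
\]
On $A_\rho$ we have $\Delta\Gamma=\mu^2\Gamma$, so the volume term is bounded by $\mu^2\|f\|_\infty\int_{A_\rho}|\Gamma|\lesssim L^{-4}\log L\cdot\rho^2$, which is negligible. On the outer sphere $|y|=2\rho$ we have $f=0$. This leaves only the inner sphere $|y|=\rho$, where $u_\rho=1$ and $f=g$, with outward normal $-y/|y|$ relative to the annulus:
\[
\langle\nabla f,\nabla\Gamma\rangle_{A_\rho}=-\int_{|y|=\rho}g\,\partial_{|y|}\Gamma\,dS+\text{small}.
\]
Here $\partial_{|y|}G_\mu(y)=-\mu^2K_2(\mu|y|)/|y|\approx-2|y|^{-3}+\tfrac{\mu^2}{2}|y|^{-1}$, and the far term contributes $O(L^{-3})$. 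Multiplying by $|S^3|\rho^3=2\pi^2\rho^3$ gives
\[
\langle\nabla f,\nabla\Gamma\rangle_{A_\rho}\approx 4\pi^2\,g|_{|y|=\rho}+O(\cdot).
\]
On the inner sphere $G_\mu(x+\ell/2)=\mu K_1(\mu L)/L\cdot(1+O(\rho/L))=\mu_0K_1(\mu_0)L^{-2}+O(L^{-3+\theta})$. Inserting the expansion of $g$ at $|y|=\rho$, and multiplying by $4$, I get
\[
-8\pi^2\mu_0^2L^{-2}\bigl(\log(\mu\rho)-\log2+\gamma-\tfrac12\bigr)-16\pi^2\mu_0K_1(\mu_0)L^{-2}.
\]
Since $\mu\rho=\mu_0L^{\theta-1}$, this matches $C_1L^{-2}-C_2\log(\mu_0L^{\theta-1})L^{-2}$ exactly.

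The main obstacle is the error bookkeeping. First, $\epsilon_1$ on the sphere contributes $\rho^3\cdot\rho^{-3}\cdot\rho^{-2-\delta}=\rho^{-2-\delta}$, and I must check that $\theta(2+\delta)\ge 3-\theta$ with $\theta=4/(4+\delta)$, or else absorb this term into the ball contribution of Proposition~\ref{sec5:prop: varphi0 Terme}. Second, the $O(s^3\log s)$ Bessel remainders must be controlled, and integration by parts must be justified given the Lipschitz cut-off; the latter is fine because $f$ is $W^{1,\infty}$ on $A_\rho$.
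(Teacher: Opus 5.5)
Your proof is correct. The first estimate follows the paper, but for the cross term you take a genuinely different route.

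\textbf{First estimate.} This is essentially the paper's argument. There, the pointwise bound $|\nabla f|\lesssim L^{-2-\theta}\log L$ comes from writing $\nabla f=h_1+\dots+h_5$ (derivatives of the cut-off and of the interpolated functions) and bounding each $h_j$. The result is then multiplied by $|A_\rho|\lesssim L^{4\theta}$, as you do. One small point: the bound $|g|\lesssim L^{-2}\log L$ on the whole annulus should be taken from your own expansion of $g$, not from the bound on $f=u_\rho g$, since $u_\rho$ vanishes at the outer sphere.

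\textbf{Cross term: the paper's route.} The paper never integrates by parts. It discards $\langle\nabla f,\nabla G_\mu(\cdot+\ell/2)\rangle$ by pointwise bounds. It then evaluates each $\langle h_j,\nabla G_\mu(\cdot-\ell/2)\rangle_{L^2(A_\rho(\ell/2))}$ as an explicit radial Bessel integral over $[\mu\rho,2\mu\rho]$. This yields the logarithm together with $D_1=2\pi^2\mu_0^2(\tfrac32-\gamma-\log 2)$, $D_2=-4\pi^2\mu_0K_1(\mu_0)$ and $D_3=2\pi^2\mu_0^2(2\log 2-1)$, with $C_1=4(D_1+D_2+D_3)$. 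Separately it shows that $h_4,h_5$ contribute only $O(L^{-3+\theta})$.

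\textbf{Cross term: your route.} Green's identity with $\Delta\Gamma=\mu^2\Gamma$ on the annulus and $f=0$ on $|y|=2\rho$ collapses all of this into $4\pi^2$ times the spherical mean of $\varphi_0-\Gamma$ at $|y|=\rho$. Inserting the expansions of $K_1$ and $K_2$ reproduces $C_1$ and $C_2$ exactly. The remaining pieces are all $O(L^{-3+\theta})$:
\begin{itemize}
\item the volume term, which is $\lesssim L^{-4+2\theta}\log L$;
\item the $\mu^2/(2\rho)$ correction to $\partial_{|y|}G_\mu$;
\item the variation of $G_\mu(\cdot+\ell/2)$ over the sphere;
\item the normal derivative of the far Green function.
\end{itemize}

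\textbf{The check you left open.} It goes through. The $\epsilon_1$ contribution is $O(L^{-\theta(2+\delta)})$, and $\theta(2+\delta)\ge 3-\theta$ is equivalent to $\theta\ge 3/(3+\delta)$. This is satisfied by $\theta=4/(4+\delta)$ and is the same condition the paper uses in its Claim~5. So no fallback to the ball terms is needed.

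\textbf{Comparison.} Your route is shorter and more transparent: the constant appears visibly as the mismatch between the resonance and the Green-function sum at the inner radius. It also makes evident that the result does not depend on the cut-off profile on the annulus. In the paper this is hidden: the $\log 2$ terms coming from the annulus width in $D_1$ and $D_3$ cancel only after summation. The paper's route avoids Green's formula for the merely Lipschitz $f$, and it produces pointwise bounds on the $h_j$ that it reuses in the kinetic control. Your justification, $f\in W^{1,\infty}(A_\rho)$ with $\Gamma$ smooth on the closed annulus, is adequate for the integration by parts.
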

\begin{proof}
    See Subsection~\ref{sec5:subsec:prop2}.
\end{proof}
\begin{proposition}\label{sec5:prop:nabla G mu Terme}
       Let $G_\mu$ be the function defined in \eqref{sec:eq:Gmu_def}, then
        \begin{equation}\label{sec5:eq:Ergebnis Norm nabla Gmu}
            2\|\nabla G_\mu(\cdot -\ell/2)\|^2_{L^2(\Omega_\rho(\ell/2))} = 8\pi^2 L^{-2\theta} + C_2(\mu_0) \log(\mu_0 L^{\theta-1})L^{-2} + C_3(\mu_0) L^{-2} + O(L^{-3+\theta}).
        \end{equation}
        and
        \begin{equation}\label{sec5:eq:cross Term Gmu}
            2\scp{\nabla G_\mu(\cdot-\ell/2)}{\nabla G_\mu(\cdot+\ell/2)}_{L^2(\Omega_\rho(\ell/2))} =C_4(\mu_0) L^{-2} + O(L^{-3+\theta}),
        \end{equation}
        where $\gamma$ is the Euler--Mascheroni constant and
    \begin{equation}\label{sec5:eq:Konstanten Norm nabla Gmu}
        \begin{split}
            C_2(\mu_0)&=8\pi^2\mu_0^2,\\
            C_3(\mu_0)&= 8\pi^2\mu_0^2(\gamma-\log(2)) - 2\pi^2\mu_0^2, \\
            C_4(\mu_0)&= 16\pi^2 \mu_0K_1(\mu_0)-4\pi^2\mu_0^2K_2(\mu_0).\\
        \end{split}
    \end{equation}
\end{proposition}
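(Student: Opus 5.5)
We treat the two identities separately. Both reduce to explicit one-dimensional integrals of modified Bessel functions once the domain $\Omega_\rho(\ell/2)$ is replaced by a simpler set, at the cost of errors of order $O(L^{-3+\theta})$.

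\textbf{Diagonal term \eqref{sec5:eq:Ergebnis Norm nabla Gmu}.} First we translate $x\mapsto x+\ell/2$. The domain becomes $\R^4\setminus(B_\rho(0)\cup B_\rho(-\ell))$.

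We remove the ball $B_\rho(-\ell)$ first. On this ball $|x|\ge L/2$, and $|\nabla G_\mu(x)|\lesssim \mu^2K_2(\mu|x|)|x|^{-1}\lesssim L^{-3}$. Hence the removed contribution is at most $L^{-6}\rho^4=L^{-6+4\theta}$. This is $O(L^{-3+\theta})$ provided $3\theta<3$, which holds.

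Next we compute the radial integral. By $\frac{d}{dr}(K_1(\mu r)/r)=-\mu K_2(\mu r)/r$ we get $|\nabla G_\mu(x)|=\mu^2K_2(\mu r)/r$ with $r=|x|$. Therefore
\[
2\|\nabla G_\mu\|^2_{L^2(\R^4\setminus B_\rho)}=4\pi^2\mu^2\int_{\mu\rho}^\infty sK_2(s)^2\,ds .
\]
We evaluate this with the antiderivative $\tfrac12 s^2(K_2^2-K_1K_3)$, or equivalently via the recurrences and the small-$s$ expansions of Lemma~\ref{app1:lem:Kj_expansion}. For $t=\mu\rho\to0$ we expect
\[
\int_t^\infty sK_2(s)^2\,ds=2t^{-2}+2\log t+\mathrm{const}+O(t^2\log^2 t).
\]
The constant should combine $\gamma$ and $\log 2$ so as to produce $C_3$. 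Multiplying by $4\pi^2\mu^2$ gives three pieces:
\begin{itemize}
\item the term $8\pi^2\rho^{-2}=8\pi^2L^{-2\theta}$;
\item the term $8\pi^2\mu_0^2L^{-2}\log(\mu_0L^{\theta-1})$, which is $C_2\log(\cdot)L^{-2}$;
\item the term $C_3L^{-2}$.
\end{itemize}
The remainder is $O(L^{-2}\,L^{2\theta-2}\log^2L)$, which is $O(L^{-3+\theta})$ because $\theta<1$. We will verify the constant in $C_3$ by careful bookkeeping of the $K_2$ expansion, namely $K_2(s)=2s^{-2}-\tfrac12+\dots$ plus the logarithmic term at order $s^2\log s$.

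\textbf{Cross term \eqref{sec5:eq:cross Term Gmu}.} We use Green's identity on $\Omega_\rho(\ell/2)$ together with $(-\Delta+\mu^2)G_\mu=0$ there:
\[
\int_{\Omega_\rho}\nabla G^-\cdot\nabla G^+=-\mu^2\int_{\Omega_\rho}G^-G^+-\sum_\pm\int_{\partial B_\rho(\pm\ell/2)}G^-\partial_n G^+ .
\]
For the volume term, Corollary~\ref{app2:cor:Faltung Gmu mit sich selbst} gives the inner product over all of $\R^4$. The missing balls contribute a small error: on each ball one factor is $O(L^{-2})$ and the other integrates to $O(\rho^2)$. The volume term therefore yields $-2\pi^2\mu^2K_0(\mu_0)$, up to corrections.

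On the sphere $\partial B_\rho(\ell/2)$ the singular factor carries the flux. The flux of $\nabla G_\mu$ through a small sphere is $\approx -4\pi^2$, corrected at order $\mu^2\rho^2\log$. The other factor is smooth there, with value $G_\mu(\ell)=\mu K_1(\mu_0)/L$ up to $O(\rho/L)$ corrections. Together these give the term $16\pi^2\mu_0K_1(\mu_0)L^{-2}$ after doubling.

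The remaining pieces, namely $\mu^2\int G^-G^+$ and the companion boundary term $\int G^+\partial_nG^-$, are to be combined into $-4\pi^2\mu_0^2K_2(\mu_0)L^{-2}$. For this we will use $K_2=K_0+2K_1/s$. An alternative that may be cleaner is to compute $\langle\nabla G(\cdot-\ell/2),\nabla G(\cdot+\ell/2)\rangle_{\R^4}$ directly in Fourier space, where it equals $\int|\xi|^2(|\xi|^2+\mu^2)^{-2}e^{i\ell\xi}$, and then subtract the ball contributions. Each ball contributes $\approx\int_{B_\rho}\nabla G\cdot\nabla G(\ell)$, which is $O(\rho^{2}\cdot L^{-3})$ after symmetric cancellation.

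\textbf{Main obstacle.} The hardest step is the precise Bessel bookkeeping. This means extracting exactly $C_3$, including the $\gamma-\log2$ and the $-2\pi^2\mu_0^2$, and exactly $C_4$. We must also check that every boundary and ball error is $O(L^{-3+\theta})$ for $\theta=4/(4+\delta)$.
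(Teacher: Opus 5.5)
Your treatment of the diagonal term \eqref{sec5:eq:Ergebnis Norm nabla Gmu} is the paper's argument: translate, discard the far ball at cost $L^{-6+4\theta}$, and integrate $sK_2(s)^2$ with the antiderivative $\tfrac12 s^2(K_2^2-K_1K_3)$. The constant you left open is $2\gamma-2\log 2-\tfrac12$, by the expansion of $\tfrac12 s^2(K_1K_3-K_2^2)$ in Remark~\ref{app:cor:bessel_aux}. Since $4\pi^2\mu^2(2\gamma-2\log2-\tfrac12)=C_3(\mu_0)L^{-2}$, that part closes.

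For the cross term, however, the bookkeeping you announce for the Green's-identity route is wrong and would not close. In your identity only the sphere enclosing the singularity of the differentiated function carries a singular flux. That flux is $-2\pi^2(\mu\rho)^2K_2(\mu\rho)=-4\pi^2+O(\mu^2\rho^2)$, and the undifferentiated factor averages to $G_\mu(\ell)=\mu_0K_1(\mu_0)L^{-2}$ up to $O(\rho^2L^{-4})$. So this piece is $4\pi^2\mu_0K_1(\mu_0)L^{-2}$, i.e. $8\pi^2\mu_0K_1(\mu_0)L^{-2}$ after doubling, not $16\pi^2\mu_0K_1(\mu_0)L^{-2}$.

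On the other sphere the singular function enters only through its value $\mu K_1(\mu\rho)/\rho\approx\rho^{-2}$. It multiplies a flux equal to $\mu^2\int_{B_\rho}$ of the smooth Green function, which is $\lesssim\mu^2L^{-2}\rho^4$, so this term is $O(L^{-4+2\theta})$. There is no separate ``companion'' term $\int G^+\partial_nG^-$. Both functions solve $(-\Delta+\mu^2)u=0$ in $\Omega_\rho$, so by Green's second identity that boundary integral equals the one already counted, and adding it would double count. What remains is the volume term, which gives $-4\pi^2\mu_0^2K_0(\mu_0)L^{-2}$ after doubling. It cannot produce the $-4\pi^2\mu_0^2K_2(\mu_0)L^{-2}$ you assign to it.

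The fix is to accept the correct split:
$2\langle\nabla G_\mu(\cdot-\ell/2),\nabla G_\mu(\cdot+\ell/2)\rangle_{L^2(\Omega_\rho(\ell/2))}=\bigl(8\pi^2\mu_0K_1(\mu_0)-4\pi^2\mu_0^2K_0(\mu_0)\bigr)L^{-2}+O(L^{-4+2\theta})$.
Then $\mu_0^2K_2=\mu_0^2K_0+2\mu_0K_1$ turns this into $C_4(\mu_0)L^{-2}$.

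With that correction your route is valid and genuinely different from the paper's. The paper discards the two balls by a sup times $L^1$ bound: one factor is $\lesssim L^{-3}$ on the ball and the other has $L^1$-norm $\lesssim\rho$, giving $O(L^{-3+\theta})$. It then evaluates the full-space integral by differentiating the Fourier identity of Corollary~\ref{app2:cor:Faltung Gmu mit sich selbst} in both centers (Lemma~\ref{app2:lem:Faltung Ableitung Gmu mit sich selbst}). Your version needs only the undifferentiated identity plus one explicit flux. It also shows directly that the $K_1$-part is the flux $4\pi^2$ times the far Green function.

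Your fallback ``alternative'' is exactly the paper's route, but the ball error you quote there, $O(\rho^2L^{-3})=O(L^{-3+2\theta})$, is not $O(L^{-3+\theta})$. What is needed is the crude bound $O(\rho L^{-3})$, with no cancellation; using the odd-symmetry cancellation one actually gets $O(\rho^2L^{-4})$.
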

\begin{proof}
    See Subsection~\ref{sec5:subsec:prop3}.
\end{proof}

\begin{remark}
    A comparison of the decay rates in Lemma~\ref{sec5:lem:energy} together with the estimate for $\norm{\phi(\ell,\cdot)}^2$ in Lemma~\ref{sec5:lem:Norm Testfunktion} shows that we want to show
    \[
        \langle\phi(\ell,\cdot),H[\ell]\phi(\ell,\cdot)\rangle \sim -c L^{-2}
    \]
    for some constant $c>0$. However, Propositions~\ref{sec5:prop: varphi0 Terme}--\ref{sec5:prop:nabla G mu Terme} contain terms that decay more slowly than $L^{-2}$. We will show that all of these terms cancel.
\end{remark}

\paragraph{Conclusion of the Proof of Lemma~\ref{sec5:lem:energy}:}
Assume for now that Propositions~\ref{sec5:prop: varphi0 Terme}--\ref{sec5:prop:nabla G mu Terme} hold for the test function $\phi$ from Definition~\ref{sec5:def:Testfunktion}. We continue by showing that these estimates imply \eqref{sec5:eq:gs_est}, and hence complete the proof of Lemma~\ref{sec5:lem:energy}.
\begin{proof}[Proof of Lemma~\ref{sec5:lem:energy}]
    By definition of $H[\ell]$ in \eqref{sec3:eq:dw_op} we find for any $\ell \in \R^4$
    \begin{equation} \label{sec5:eq:Quadratische Form H}
        \begin{split}
        \langle\phi(\ell,\cdot),H[\ell]\phi(\ell,\cdot)\rangle = &\|\nabla\phi(\ell,\cdot)\|^2_{L^2(\R^4)} +\int_{\R^4} V_\pm[\ell](x)\abs{\phi(\ell,x)}^2 dx
        \end{split}
    \end{equation}
    where $V_\pm$ is the double--well potential defined in \eqref{sec5:eq:dw_pot}. In the following we will always assume that $|\ell|=L>0$ is chosen sufficiently large. Due to the reflection symmetry of $\phi(\ell,\cdot)$ we conclude from \eqref{sec5:eq:Quadratische Form H} 
    \begin{equation} \label{sec5:eq:quad_form_H_symm}
        \begin{split}
        \langle\phi(\ell,\cdot),H[\ell]\phi(\ell,\cdot)\rangle = &2\|\nabla\phi(\ell,\cdot)\|^2_{L^2(B_\rho(\ell/2))} +2 \|\nabla\phi(\ell,\cdot)\|^2_{L^2(A_\rho(\ell/2))} \\
        &+ \|\nabla\phi(\ell,\cdot)\|^2_{L^2(\Omega_{2\rho}(\ell/2))}+ \langle \phi(\ell,\cdot), V_\pm[\ell]\phi(\ell,\cdot)\rangle_{L^2(\R^4)}.
        \end{split}
    \end{equation}
    Inserting \eqref{sec5:eq:prop01} of Proposition \ref{sec5:prop: varphi0 Terme} into \eqref{sec5:eq:quad_form_H_symm} we find
    \begin{equation}\label{sec5:eq:quad_form_H_firstPROP}
        \begin{split}
&\langle\phi(\ell,\cdot),H[\ell]\phi(\ell,\cdot)\rangle \\
            =&2 \|\nabla\phi(\ell,\cdot)\|^2_{L^2(A_\rho(\ell/2))} + \|\nabla\phi(\ell,\cdot)\|^2_{L^2(\Omega_{2\rho}(\ell/2))}-8\pi^2  L^{-2\theta}  + O(L^{-3+\theta})\\
            =&\Sigma[\phi(\ell,\cdot)] -8\pi^2  L^{-2\theta}+ O(L^{-3+\theta}),
        \end{split}
    \end{equation}
    where we have defined
    \begin{equation*}
        \Sigma[\phi(\ell,\cdot)]=2 \|\nabla\phi(\ell,\cdot)\|^2_{L^2(A_\rho(\ell/2))} + \|\nabla\phi(\ell,\cdot)\|^2_{L^2(\Omega_{2\rho}(\ell/2))}\, .
    \end{equation*}
     We continue by estimating $\Sigma[\phi(\ell,\cdot)]$. Due to definition of functions $f$ and $\Gamma$ in \eqref{sec5:eq:def_Gamma&f} we have
    \begin{equation}\label{sec5:eq:def_f_R:copy}
        \phi(\ell,x) =  \Gamma(\ell, x) + f(\ell,x) , \qquad \text{ on } \,A_\rho(\ell/2) \cup A_\rho(-\ell/2).
   \end{equation}
   Inserting \eqref{sec5:eq:def_f_R:copy} into $\Sigma[\phi(\ell,\cdot)]$ yields
   \begin{equation} \label{sec5:eq:L_term}
    \begin{split}
           \Sigma[\phi(\ell,\cdot)]=& 2\|\nabla f(\ell,\cdot) \|^2_{L^2(A_\rho(\ell/2))} + 4\Re \langle\nabla f(\ell,\cdot) ,\nabla\Gamma(\ell,\cdot)\rangle_{L^2(A_\rho(\ell/2))} \\
            &+ 2 \|\nabla \Gamma(\ell,\cdot)\|^2_{L^2(A_{\rho}(\ell/2))} + \|\nabla \Gamma(\ell,\cdot)\|^2_{L^2(\Omega_{2\rho}(\ell/2))}.
    \end{split}
    \end{equation}
    Since $A_\rho(+\ell/2)\cup A_\rho(-\ell/2)\cup\Omega_{2\rho}(\ell/2) = \Omega_\rho(\ell/2)$ and the reflection symmetry of $\Gamma(\ell,\cdot)$ we see
    \begin{equation}\label{sec5:eq:Geometrie Omega ausgenutzt}
        2 \|\nabla \Gamma(\ell,\cdot)\|^2_{L^2(A_{\rho}(\ell/2))} + \|\nabla \Gamma(\ell,\cdot)\|^2_{L^2(\Omega_{2\rho}(\ell/2))} = \|\nabla \Gamma(\ell,\cdot)\|^2_{L^2(\Omega_{\rho}(\ell/2))}.    
    \end{equation}
    We use \eqref{sec5:eq:Norm nabla f_R} and \eqref{sec5:eq:Cross Term nabla f_R} of Proposition \ref{sec5:prop:nabla Annulus Terme} to estimate the first two terms in the right-hand side of \eqref{sec5:eq:L_term} and conclude with \eqref{sec5:eq:Geometrie Omega ausgenutzt}
    \begin{equation} \label{sec5:eq:L_term_PROP_2}
    \begin{split}
    \Sigma[\phi(\ell,\cdot)]= &\phantom{+} C_1(\mu_0)L^{-2}-C_2(\mu_0)\log(\mu_0L^{\theta-1})L^{-2}\\
    &+ \|\nabla \Gamma(\ell,\cdot)\|^2_{L^2(\Omega_{\rho}(\ell/2))}+O(L^{-3+\theta}).
    \end{split}
    \end{equation}
    Next, we use the definition of $\Gamma$ in \eqref{sec5:eq:def_Gamma&f}, expand the squared norm on the right-hand side of \eqref{sec5:eq:L_term_PROP_2} and use the reflection symmetry to find
    \begin{equation}\label{sec5:eq:derivative_Gmu_norm}
        \begin{split}
            &\|\nabla\Gamma(\ell,\cdot)\|^2_{L^2(\Omega_{\rho}(\ell/2))}=\|\nabla(G_\mu(\cdot+\ell/2)+G_\mu(\cdot-\ell/2))\|^2_{L^2(\Omega_{\rho}(\ell/2))} \\
            &= 2\|  \nabla G_\mu(\cdot-\ell/2) \|^2_{L^2(\Omega_{\rho}(\ell/2))} + 2 \Re \langle \nabla G_\mu(\cdot-\ell/2), \nabla G_\mu(\cdot+\ell/2) \rangle_{L^2(\Omega_{\rho}(\ell/2))}\, .
        \end{split}
    \end{equation}
    Combining \eqref{sec5:eq:Ergebnis Norm nabla Gmu} and \eqref{sec5:eq:cross Term Gmu} with \eqref{sec5:eq:derivative_Gmu_norm} yields
    \begin{equation} \label{sec5:eq:nabla_gmu_final}
        \begin{split}
            \|&\nabla\Gamma(\ell,\cdot)\|^2_{L^2(\Omega_{\rho}(\ell/2))} \\
            &=  8\pi^2 L^{-2\theta} + C_2(\mu_0) \log(\mu_0 L^{\theta-1})L^{-2} + \left(C_3(\mu_0) +C_4(\mu_0)\right) L^{-2} + O(L^{-3+\theta})\, .
        \end{split}
    \end{equation}
    Inserting \eqref{sec5:eq:nabla_gmu_final} in \eqref{sec5:eq:L_term_PROP_2} implies
    \begin{equation*} 
        \Sigma[\phi(\ell,\cdot)] =  \left( C_1(\mu_0) + C_3(\mu_0)+ C_4(\mu_0)\right) L^{-2} +  8\pi^2 L^{-2\theta}+ O(L^{-3+\theta}).
    \end{equation*}
    Inserting the expression for $\Sigma[\phi(\ell,\cdot)]$ into \eqref{sec5:eq:quad_form_H_firstPROP} gives
\begin{equation*}
    \begin{split}
        \langle\phi(\ell,\cdot),H[\ell]\phi(\ell,\cdot)\rangle &= \left( C_1(\mu_0) + C_3(\mu_0)+ C_4(\mu_0)\right) L^{-2}  +  O(L^{-3+\theta}) \\
        &= 2\pi^2 \mu_0^2(1-2K_2(\mu_0)) L^{-2}+ O(L^{-3+\theta}),
    \end{split}
\end{equation*}
where we have used the explicit expression for $C_1(\mu_0), C_3(\mu_0)$ and $ C_4(\mu_0)$ from Proposition \ref{sec5:prop:nabla Annulus Terme} and  \ref{sec5:prop:nabla G mu Terme}. Taking into account the logarithmic growth of the norm of the test Function established in Lemma~\ref{sec5:lem:Norm Testfunktion}, we arrive at
\begin{equation} \label{sec5:eq:almost_done_energy}
\frac{\langle\phi(\ell,\cdot),H[\ell]\phi(\ell,\cdot)\rangle}{\|\phi(\ell,\cdot)\|^2} = \tfrac{1}{2}\mu_0^2(1-2K_2(\mu_0)) L^{-2}\log(L)^{-1}+ O(L^{-3+\theta}).
\end{equation}
Using the expansion $K_2(t)=2t^{-2}+O(1)$ as $t\to0^+$ (see,  \eqref{app1:eq:K-expansions}) we obtain
\begin{equation*}
    \tfrac12\mu_0^2(1-2K_2(\mu_0))\to -2\quad \text{ for }\, \mu_0\to 0^+.
\end{equation*}
Recall that we aim to show an upper bound for the right--hand side of \eqref{sec5:eq:almost_done_energy} of the form $-(2-\varepsilon)L^{-2}\log(L)^{-1}$. Consequently, for this fixed $\varepsilon>0$ we fix $L_0(\varepsilon)>0$ such that for any $L>L_0(\varepsilon)$ the contribution in $O(L^{-3+\theta})$ may only add a term of the size $(\varepsilon/2)L^{-2}\log(L)^{-1}$ and then choose $\mu_0>0$ small enough such that
\begin{equation*}
    \tfrac12\mu_0^2(1-2K_2(\mu_0)) \leq -2+\varepsilon/2.
\end{equation*}
This proves \eqref{sec5:eq:gs_est} and therefore finishes the proof of Lemma~\ref{sec5:lem:energy}.
\end{proof}
\subsubsection{Proof of Proposition~\ref{sec5:prop: varphi0 Terme}}\label{sec5:subsec:prop1}
 \begin{proof}
        We begin by showing
        \begin{equation}\label{sec5:eq:firststep_Vphi0_Terme}
                \int_{\R^4} V(x\pm\ell/2)\abs{\phi(\ell,x)}^2 dx = 2\int_{\R^4} V(x)\abs{\varphi_0(x)}^2 dx + O(L^{-3+\theta}).
            \end{equation}
            First note that
            \begin{equation}\label{sec5:eq:splitting_Vphi0_Terme}
            \begin{split}
                \int_{\R^4} V(x\pm\ell/2)\abs{\phi(\ell,x)}^2 dx =& \scp{\varphi_0(\cdot-\ell/2)}{V_\pm\varphi_0(\cdot-\ell/2)}_{L^2(B_\rho(+\ell/2))}\\
                &+ \scp{\varphi_0(\cdot+\ell/2)}{V_\pm\varphi_0(\cdot+\ell/2)}_{L^2(B_\rho(-\ell/2))}\\
                &+\scp{\phi(\ell,\cdot)}{V_{\pm}\phi(\ell,\cdot)}_{L^2(\Omega_\rho(\ell/2))}.
            \end{split}
            \end{equation}
            Due to the short--range condition of $V_\pm$ and the growth of $\norm{\phi(\ell,\cdot)}^2$ in Lemma \ref{sec5:lem:Norm Testfunktion} we find
            \begin{equation}\label{sec5:eq:error_VPhi}
                \begin{split}
                    \abs{\scp{\phi(\ell,\cdot)}{V_{\pm}\phi(\ell,\cdot)}_{L^2(\Omega_\rho(\ell/2))}} &\leq \sup_{x\in\Omega_\rho(\ell/2)}\abs{V_\pm(x)} \norm{\phi(\ell,\cdot)}^2_{L^2(\R^4)}\\
                    &\lesssim L^{-\theta(2+2\delta)}\log(L) \in O(L^{-3+\theta}),
                \end{split}
            \end{equation}
            where we used that $\theta=4/(4+\delta)$. Next we show
            \begin{equation}\label{sec5:eq:Vphi0_auf_Ganzraum_ergänzen}
                \scp{\varphi_0(\cdot-\ell/2)}{V_\pm\varphi_0(\cdot-\ell/2)}_{L^2(B_\rho(+\ell/2))} = \scp{\varphi_0}{V\varphi_0}_{L^2(\R^4)} + O(L^{-3+\theta}).
            \end{equation}
            Since $V$ is short--range and $\norm{\varphi_0}^2_{L^2(B_\rho(0))} = 4\pi^2\theta \log(L)+O(1)$ (see \eqref{sec5:eq:varphiNormBall_O(1)}) we have
            \begin{equation}\label{sec5:eq:Vpm_zu_V_aufBall}
                \abs{\scp{\varphi_0(\cdot-\ell/2)}{V(\cdot+\ell/2)\varphi_0(\cdot-\ell/2)}_{L^2(B_\rho(+\ell/2))}}\lesssim L^{-\theta(2+2\delta)}\norm{\varphi_0}^2_{L^2(B_\rho(0))} \in O(L^{-3+\theta}).
            \end{equation}
             Using the explicit decay of $\varphi_0$ and $V$ short--range we find, since $\theta=4/(4+\delta)>3/(3+\delta)$,
            \begin{equation}\label{sec5:eq:Vphi0_auserhalb_von_Ball}
                \abs{\int_{\R^4\setminus B_\rho(0)} V\abs{\varphi_0}^2\, dx} \lesssim\int_{\R^4\setminus B_\rho(0)} \abs{x}^{-2-2\delta}\abs{x}^{-4}\, dx \lesssim L^{-\theta(2+2\delta)}\in O(L^{-3+\theta}).
            \end{equation}
            Consequently \eqref{sec5:eq:Vphi0_auf_Ganzraum_ergänzen} follows from \eqref{sec5:eq:Vpm_zu_V_aufBall} and \eqref{sec5:eq:Vphi0_auserhalb_von_Ball}.
            Proceed similarly to see
            \begin{equation}\label{sec5:eq:Vphi0plus_auf_Ganzraum_ergänzen}
                \scp{\varphi_0(\cdot+\ell/2)}{V_\pm\varphi_0(\cdot+\ell/2)}_{L^2(B_\rho(-\ell/2))} = \scp{\varphi_0}{V\varphi_0}_{L^2(\R^4)} + O(L^{-3+\theta}).
            \end{equation}
            Inserting \eqref{sec5:eq:error_VPhi}, \eqref{sec5:eq:Vphi0_auf_Ganzraum_ergänzen} and \eqref{sec5:eq:Vphi0plus_auf_Ganzraum_ergänzen} into \eqref{sec5:eq:splitting_Vphi0_Terme} implies \eqref{sec5:eq:firststep_Vphi0_Terme}.

            By definition, we have
            \begin{equation}\label{sec5:eq:nablaPhi_Ball}
                2 \|\nabla\phi(\ell,\cdot)\|^2_{L^2(B_\rho(\ell/2))} = 2\|\nabla\varphi_0\|^2_{L^2(B_\rho(0))}
            \end{equation}
            Since $\varphi_0 \in \dot{H}^1(\R^4)$ is the zero energy solution, in the sense that
            \begin{equation*}
                \int_{\R^4} V\abs{\varphi_0}^2 dx + \|\nabla\varphi_0\|^2_{L^2(\R^4)} = 0,
            \end{equation*}
            we find
            \begin{equation}\label{sec5:eq:Trick_resonanz}
                \int_{\R^4} V\abs{\varphi_0}^2 dx + \|\nabla\varphi_0\|^2_{L^2(B_\rho(0))} = - \|\nabla\varphi_0\|^2_{L^2(\R^4\setminus B_\rho(0))}.
            \end{equation}
            By \eqref{sec5:eq:firststep_Vphi0_Terme}, \eqref{sec5:eq:nablaPhi_Ball} and \eqref{sec5:eq:Trick_resonanz} it follows
            \begin{equation*}
                \scp{\phi(\ell,\cdot)}{V_\pm\phi(\ell,\cdot)} + 2 \|\nabla\phi(\ell,\cdot)\|^2_{L^2(B_\rho(\ell/2))} = -2\norm{\nabla\varphi_0}^2_{L^2(\R^4\setminus B_\rho(0)} + O(L^{-3+\theta}).
            \end{equation*}
            Using the explicit decay of $\varphi_0$ from Lemma \ref{sec5:lem:AbfallverhaltenResonanz} implies
            \begin{equation} \label{sec5:eq:Ergebnis Norm nabla phi_0 außerhalb}
                \|\nabla\varphi_0\|^2_{L^2(\R^4\setminus B_\rho(0))} = \int_{\R^4\setminus B_\rho(0)}\abs{\frac{2}{\abs{x}^3}}^2 -\frac{4}{\abs{x}^3}\frac{x}{\abs{x}}\cdot\epsilon_2(x) + \abs{\epsilon_2(x)}^2\,dx .
            \end{equation}
            A direct calculation shows
            \begin{equation*}
                \int_{\R^4\setminus B_\rho(0)}\abs{\frac{2}{\abs{x}^3}}^2\, = 2\pi^2\int_\rho^\infty 4r^{-6} r^3 dr = 4\pi^2L^{-2\theta}.
            \end{equation*}
            For the terms involving $\epsilon_2$ in the right--hand side of \eqref{sec5:eq:Ergebnis Norm nabla phi_0 außerhalb} we calculate for the specific choice $\theta=4/(4+\delta)>3/(3+\delta)$
            \begin{equation*}
                \abs{\int_{\R^4\setminus B_\rho(0)}-\frac{4}{\abs{x}^3}\frac{x}{\abs{x}}\cdot\epsilon_2(x) + \abs{\epsilon_2(x)}^2\,dx }\lesssim\int_{\R^4\setminus B_\rho(0)} \abs{x}^{-6-\delta} \, dx \in O(R^{-3+\theta}).
            \end{equation*}
            This concludes the proof of Proposition~\ref{sec5:prop: varphi0 Terme}.
        \end{proof}
\subsubsection{Proof of Proposition~\ref{sec5:prop:nabla Annulus Terme}}\label{sec5:subsec:prop2}
\begin{proof}[Proof of Proposition~\ref{sec5:prop:nabla Annulus Terme}]
    We prove the following two relations
            \begin{equation}
            \|\nabla f(\ell, \cdot)\|^2_{L^2(A_\rho(\ell/2))} \in O(L^{-3+\theta})\label{sec5:eq:Norm nabla f_R:copy},
        \end{equation}
        and 
        \begin{equation}\label{sec5:eq:Cross Term nabla f_R:copy}
            \begin{split}
                &4\Re\langle \nabla f(\ell,\cdot),\nabla\Gamma(\ell,\cdot) \rangle_{L^2(A_\rho(\ell/2))} \\
                &= C_1(\mu_0)L^{-2}-C_2(\mu_0)\log(\mu_0L^{\theta-1})L^{-2}+ O(L^{-3+\theta}).
            \end{split}
        \end{equation}
    with $C_1(\mu_0)$ and $C_2(\mu_0)$ defined in Proposition \ref{sec5:prop:nabla Annulus Terme}.
    
    Note that $|A_\rho(\ell/2)|\lesssim L^{4\theta} $ and consequently to prove \eqref{sec5:eq:Norm nabla f_R:copy} it suffices to show the pointwise inequality
    \begin{equation}\label{sec5:eq:Abschätzung nabla f_R:copy}
         \abs{\nabla_x f(\ell,x)}\lesssim  L^{-2-\theta}\log L, \quad 
        x\in A_\rho(\pm \ell/2).
    \end{equation}
    such that
    \begin{equation}\label{sec5:eq:proof Norm nabla f_R}
                \|\nabla f(\ell,\cdot)\|^2_{L^2(A_\rho(\ell/2))} \lesssim L^{-4+2\theta}\log(L)^2 \in O(L^{-3+\theta}).
    \end{equation}
    The pointwise estimate \eqref{sec5:eq:Abschätzung nabla f_R:copy} follows from the definition of $f$ and the specific choice of $\theta=4/(4+\delta)
    >3/(3+\delta)$. We defer this estimate to Lemma~\ref{app:lem:Anulus terme} in which we prove several bounds on $f$ and its derivatives.

    We continue by proving \eqref{sec5:eq:Cross Term nabla f_R:copy}. As any of the involved functions are real we will omit the real-part expression. Due to the definition of $\Gamma$ in \eqref{sec5:eq:def_Gamma&f} we have
    \begin{equation}\label{sec6:eq:prop_two_terms_fr_gmu}
        \begin{split}
            \langle \nabla f(\ell,\cdot),\nabla\Gamma(\ell,\cdot) \rangle_{L^2(A_\rho(\ell/2))} = &\langle \nabla f(\ell,\cdot),\nabla G_\mu(\cdot+\ell/2) \rangle_{L^2(A_\rho(\ell/2))} \\
            &+ \langle \nabla f(\ell,\cdot),\nabla G_\mu(\cdot-\ell/2) \rangle_{L^2(A_\rho(\ell/2))}.
        \end{split}
    \end{equation}
    We investigate the two terms in the right-hand side of \eqref{sec6:eq:prop_two_terms_fr_gmu} separately.
    We first show
    \begin{equation} \label{sec6:eq:prop_two_terms_fr_gmu first}
        \scp{\nabla f(\ell, \cdot)}{\nabla G_\mu(\cdot+\ell/2)}_{L^2(A_\rho(\ell/2))} \in O(L^{-3+\theta}).
     \end{equation}
    On can take the derivative of $G_\mu$  to find
    \begin{equation}\label{sec5:eq:Ableitung_Gmu}
        \begin{split}
        \nabla G_\mu(x)&=-\mu^2\frac{K_2(\mu\abs{x})}{\abs{x}}\frac{x}{\abs{x}} , \quad x\neq 0, 
        \end{split}
    \end{equation}
            and since $\rho= \rho(\ell)=L^\theta$, $\mu=\mu_0/L$ we have for $x\in A_\rho(\ell/2)$ that $\mu \abs{x+\ell/2} \to \mu_0$ as $L\to \infty$ and consequently
            \begin{equation}\label{sec5:eq:gmu+_is_small}
                \abs{\nabla G_\mu(x+\ell/2) }\lesssim L^{-3}, \quad x\in A_\rho(\ell/2).
            \end{equation}
            Together with the pointwise bound on $\nabla_x f(\ell,x)$ in \eqref{sec5:eq:Abschätzung nabla f_R:copy} we conclude
            \begin{equation} 
                \langle \nabla f(\ell,\cdot),\nabla G_\mu(\cdot +\ell/2)\rangle_{L^2(A_\rho(\Vec{R}/2))}\in O(L^{-3+\theta}).\label{sec5:eq:Cross Term nabla f_R nabla Gmu +R}
            \end{equation}
            Inserting this into \eqref{sec6:eq:prop_two_terms_fr_gmu} yields
            \begin{equation*}
                \langle \nabla f(\ell,\cdot),\nabla\Gamma(\ell,\cdot) \rangle_{L^2(A_\rho(\ell/2))} = \langle \nabla f(\ell,\cdot),\nabla G_\mu(\cdot-\ell/2) \rangle_{L^2(A_\rho(\ell/2))} + O(L^{-3+\theta}).
            \end{equation*}
            Thus equality \eqref{sec5:eq:Cross Term nabla f_R:copy} follows if we can show the following \underline{\textbf{claim}}:
            \begin{equation}\label{sec5:eq:Cross Term nabla f_R mit Gmu-}
            \begin{split}
                &4\langle \nabla f(\ell,\cdot),\nabla G_\mu(\cdot-\ell/2) \rangle_{L^2(A_\rho(\ell/2))} \\
                &= C_1(\mu_0)L^{-2}-C_2(\mu_0)\log(\mu_0L^{\theta-1})L^{-2}+ O(L^{-3+\theta}).
            \end{split}
        \end{equation}
        By direct computation one can find that
        \begin{equation*}%\label{sec5:eq:Zerlegung nabla f_R}
                \nabla_x f(\ell,x) = \sum_{j=1}^5 h_{j}(\ell,x),
            \end{equation*}
            where functions $h_{j}(\ell,\cdot)\colon A_\rho(\ell/2)\to \R^4$ are defined as
            \begin{equation*}%\label{sec6:eq:def h_R,j}
                \begin{split}
                    h_{1}(\ell,x)\coloneqq &-\frac{1}{\rho}\left(\abs{x-\ell/2}^{-2}-G_\mu(x-\ell/2)\right)\frac{x-\ell/2}{\abs{x-\ell/2}},\\
                    h_{2}(\ell,x)\coloneqq & \frac{1}{\rho} G_\mu(x+\ell/2)\frac{x-\ell/2}{\abs{x-\ell/2}} \\
                    h_{3}(\ell,x)\coloneqq &\left(2-\frac{\abs{x-\ell/2}}{\rho}\right)\left(\mu^2\frac{K_2\left(\mu\abs{x-\ell/2}\right)}{\abs{x-\ell/2}}-\frac{2}{\abs{x-\ell/2}^3}\right)\frac{x-\ell/2}{\abs{x-\ell/2}},\\
                    h_{4}(\ell,x) \coloneqq&\left(2-\frac{\abs{x-\ell/2}}{\rho}\right)\left(\mu^2\frac{K_2\left(\mu\abs{x+\ell/2}\right)}{\abs{x+\ell/2}} \right)\frac{x+\ell/2}{\abs{x+\ell/2}},\\
                    h_5(\ell,x)\coloneqq & -\frac{1}{\rho}\epsilon_1(x-\ell/2)\frac{x-\ell/2}{\abs{x-\ell/2}} + u_\rho(x-\ell/2) \epsilon_2(x-\ell/2).
                \end{split}
            \end{equation*}
            \textbf{Note: }By definition of $\epsilon_1$ and $\epsilon_2$ in Lemma~\ref{sec5:lem:AbfallverhaltenResonanz} one has $\nabla_x \epsilon_1(x) = \epsilon_2(x)$.
            
            \medskip
            The claim \eqref{sec5:eq:Cross Term nabla f_R:copy} follows if we can show
            \begin{equation}\label{sec6:eq:goal_Anulus terme mit h_R,j}
                \begin{split}
                    \scp{h_{1}(\ell,\cdot)}{\nabla G_\mu(\cdot-\ell/2)}_{L^2(A_\rho(\ell/2))} &=  D_1(\mu_0) L^{-2}- \frac{1}{4}C_2(\mu_0)\log(\mu_0L^{\theta-1})L^{-2} + O(L^{-3+\theta}),\\
                    \scp{h_{2}(\ell,\cdot)}{\nabla G_\mu(\cdot-\ell/2)}_{L^2(A_\rho(\ell/2))} &= D_2(\mu_0)L^{-2}+O(L^{-3+\theta}),\\
                    \scp{h_{3}(\ell,\cdot)}{\nabla G_\mu(\cdot-\ell/2)}_{L^2(A_\rho(\ell/2))} &= D_3(\mu_0)L^{-2}+O(L^{-3+\theta}),\\
                    \scp{h_{4}(\ell,\cdot)}{\nabla G_\mu(\cdot-\ell/2)}_{L^2(A_\rho(\ell/2))} &\in O(L^{-3+\theta}),\\
                    \scp{h_{5}(\ell,\cdot)}{\nabla G_\mu(\cdot-\ell/2)}_{L^2(A_\rho(\ell/2))} &\in O(L^{-3+\theta}),
                \end{split}
            \end{equation}
            with suitable constants $D_1(\mu_0), D_2(\mu_0), D_3(\mu_0)$ satisfying
            \begin{equation}\label{sec5:eq:aufsummieren D_j}
                C_1(\mu_0) = 4\sum_{j=1}^3 D_j(\mu_0).
            \end{equation}
            This then concludes the proof of Proposition \ref{sec5:prop:nabla Annulus Terme}. We will now prove each of the relations in \eqref{sec6:eq:goal_Anulus terme mit h_R,j} as five independent claims below. For the proof of the claims recall $\theta=4/(4+\delta)>3/(3+\delta)$. \\

\medskip
\noindent\textbf{\underline{Claim 1:}}
Let $L>0$ be sufficiently large. Then
\begin{equation}\label{sec5:eq:Crossterm h_R,1 nabla G mu}
\scp{h_{1}(\ell,\cdot)}{\nabla G_\mu(\cdot-\ell/2)}_{L^2(A_\rho(\ell/2))}
=  D_1(\mu_0) L^{-2}- \frac{1}{4}C_2(\mu_0)\log(\mu_0L^{\theta-1})L^{-2}
+ O(L^{-3+\theta}),
\end{equation}
where
\begin{equation}\label{sec5:prop2:claim_1_consts}
D_1(\mu_0)=2\pi^2\mu_0^2\Big(\tfrac32-\gamma-\log2\Big),
\qquad
C_2(\mu_0)=8\pi^2\mu_0^2.
\end{equation}

\medskip
\noindent\textbf{Proof of Claim 1:}
Write $y=x-\ell/2$ and $t=|y|$. On the annulus 
$A_\rho(\ell/2)=\{x:\rho\le |x-\ell/2|\le 2\rho\}$ we have $t\sim\rho=L^\theta$.
By the definition of $G_\mu$,
\[
h_1(\ell,y+\ell/2) 
=-L^{-\theta}\Big(t^{-2}-\mu\frac{K_1(\mu t)}{t}\Big)\frac{y}{t}=: \widetilde h_1(y),
\qquad
\nabla G_\mu(y)
=-\mu^2\frac{K_2(\mu t)}{t}\frac{y}{t}.
\]
Both vector fields are radial, hence
\[
\widetilde h_1(y)\cdot\nabla G_\mu(y)
=
L^{-\theta}\mu^2
\Big(t^{-2}-\mu\frac{K_1(\mu t)}{t}\Big)\frac{K_2(\mu t)}{t}.
\]
Note that we have
\begin{equation} \label{sec5:es:prop2claim1}
 \scp{h_1(\ell,\cdot)}{\nabla G_\mu(\cdot-\ell/2)}_{L^2(A_\rho(\ell/2))} = \scp{\widetilde h_1}{\nabla G_\mu}_{L^2(A_\rho(0))}.
\end{equation}
Rewriting the integral in spherical coordinates yields
\begin{align}
\scp{\widetilde h_1}{\nabla G_\mu}_{L^2(A_\rho(0))}
&=
2\pi^2 \rho^{-1}\mu^2
\int_{\rho}^{2\rho}
\Big(t^{-2}-\mu\frac{K_1(\mu t)}{t}\Big)
\frac{K_2(\mu t)}{t}\, t^3\,dt \notag\\
&=
2\pi^2\frac{\mu}{\rho}
\int_{\mu\rho}^{2\mu\rho}(1-sK_1(s))K_2(s)\,ds,
\label{sec5:eq:claim1_radial_red}
\end{align}
where we used $s=\mu t$. Set $a=\mu\rho=\mu_0L^{\theta-1}\to0$. From the expansions of Bessel functions in Remark \ref{app:cor:bessel_aux} we find
\[
(1-sK_1(s))K_2(s)
=
-\log(s)-\gamma+\tfrac12+\log(2)+O(s^2),
\qquad s\to0,
\]
and obtain
\[
\int_a^{2a}(1-sK_1(s))K_2(s)\,ds
=
-a\log(a)
+a\Big(\tfrac32-\gamma-\log(2)\Big)
+O(a^3).
\]
Inserting this into \eqref{sec5:eq:claim1_radial_red} gives
\[
\scp{\widetilde h_1}{\nabla G_\mu}_{L^2(A_\rho(0))}
=
2\pi^2\frac{\mu}{\rho}
\Big[
-a\log( a)
+a\Big(\tfrac32-\gamma-\log(2)\Big)
+O(a^3)
\Big].
\]

Since $\mu=\mu_0L^{-1}$ and $\rho=L^\theta$,
\[
\frac{\mu}{\rho}a=\mu^2=\mu_0^2L^{-2},
\qquad
\frac{\mu}{\rho}a^3\in O(L^{-3+\theta}),
\]
and $\log( a)=\log(\mu_0L^{\theta-1})$. Therefore,
\[
\scp{\widetilde h_1}{\nabla G_\mu}_{L^2(A_\rho(0))}
=
2\pi^2\mu_0^2\Big(\tfrac32-\gamma-\log2\Big)L^{-2}
-2\pi^2\mu_0^2L^{-2}\log(\mu_0L^{\theta-1})
+O(L^{-3+\theta}),
\]
which aggress with the constants in \eqref{sec5:prop2:claim_1_consts} and together with \eqref{sec5:es:prop2claim1} proves the claim.

\medskip
            \noindent\textbf{\underline{Claim 2:}} 
           For $L>0$ large enough,
            \begin{equation}\label{sec5:eq:Crossterm h_R,2 nabla G mu}
            \scp{h_{2}(\ell,\cdot)}{\nabla G_\mu(\cdot-\ell/2)}_{L^2(A_\rho(\ell/2))}
            = D_2(\mu_0)L^{-2}+O(L^{-3+\theta}),
            \end{equation}
            where
            \begin{equation} \label{sec5:eq:D2}
            D_2(\mu_0)=-4\pi^2\mu_0K_1(\mu_0).
            \end{equation}

\medskip
\noindent\textbf{Proof of Claim 2:}
Set $y = x - \ell/2$ and $t = |y|$. Then if $x\in A_\rho(\ell/2)$ we have $y\in A_\rho(0)$. By definition of $h_2$ and $G_\mu$ we have
\begin{equation*}
h_2(\ell, y+\ell/2) = \frac{1}{\rho} G_\mu(y+\ell)\frac{y}{t}, \qquad
\nabla G_\mu(y) = -\mu^2 \frac{K_2(\mu t)}{t}\frac{y}{t}.
\end{equation*}
Hence,
\begin{equation}\label{sec5:eq:insert_taylor}
h_2(\ell, y+\ell/2)\cdot \nabla G_\mu(y)
=- \frac{1}{\rho} G_\mu(y+\ell)\mu^2 \frac{K_2(\mu t)}{t}.
\end{equation}
and consequently
\begin{equation}\label{sec5:eq:claim2_shifted}
    \scp{h_{2}(\ell,\cdot)}{\nabla G_\mu(\cdot-\ell/2)}_{L^2(A_\rho(\ell/2))} = \scp{h_{2}(\ell,\cdot+\ell/2)}{\nabla G_\mu}_{L^2(A_\rho(0))}
\end{equation}
In the following, we compute the right--hand side of \eqref{sec5:eq:claim2_shifted}. By Taylor's theorem one finds $G_\mu(y+\ell) = G_\mu(\ell) + O(L^{-3+\theta})$ as we prove below:
Fix $y\in A_\rho(0)$, i.e. $\rho \le |y| \le 2\rho$. Define $\omega(b)=G_\mu(\ell+b y)$ for $b\in[0,1]$. By the mean value theorem there exists $b_y\in(0,1)$ such that
\begin{equation*}
G_\mu(y+\ell)-G_\mu(\ell)=\omega'(b_y)=\nabla G_\mu(\ell+b_y y)\cdot y.
\end{equation*}
Setting $\xi(y)=b_y y$ gives $|\xi(y)|<|y|\le 2\rho$, hence $\xi(y)\in B_{2\rho}(0)$, and thus
\begin{equation}\label{sec5:eq:taylor Gmu}
G_\mu(y+\ell)=G_\mu(\ell)+\nabla G_\mu(\ell+\xi(y))\cdot y.
\end{equation}
For $y\in A_\rho(0)$ define
\begin{equation*}
    q_L(y) =\nabla G_\mu(\ell+\xi(y))\cdot y.
\end{equation*}
For such $y$ we have $|\ell+\xi(y)|=L+O(\rho)=L+O(L^\theta)\sim L$. Using
\begin{equation*}
\nabla G_\mu(z)=-\mu^2\frac{K_2(\mu|z|)}{|z|}\frac{z}{|z|}, \qquad \mu=\mu_0/L,
\end{equation*}
and $\mu|\ell+\xi(y)|\geq \mu_0/2$ for $L>0$ large enough, we obtain
\begin{equation*}
|\nabla G_\mu(\ell+\xi(y))|\lesssim \frac{K_2(\mu_0/2)\mu^2}{|\ell+\xi(y)|}\sim \frac{1/L^2}{L}=L^{-3}.
\end{equation*}
Therefore, by definition
\begin{equation}\label{sec5:eq:Abschätzung_qL}
\abs{q_L(y)}=|\nabla G_\mu(\ell+\xi(y))\cdot y|\lesssim L^{-3}|y|\lesssim L^{-3+\theta}.
\end{equation}
uniformly in $y\in A_\rho(0)$. Inserting the expansion \eqref{sec5:eq:taylor Gmu} into \eqref{sec5:eq:insert_taylor} yields
\begin{equation}\label{sec5:eq:h2_scp_taylor}
    \begin{split}
        &h_2(\ell,y+\ell/2)\cdot \nabla G_\mu(y) =-\frac{1}{\rho}G_\mu(\ell)\mu^2\frac{K_2(\mu t)}{t} - \frac{1}{\rho}\mu^2\frac{K_2(\mu t)}{t} q_L(y).
    \end{split}
\end{equation}
We now integrate over $A_\rho(0)$ to obtain the desired expression. We begin with the remainder term using spherical coordinates in $\R^4$, the estimate \eqref{sec5:eq:Abschätzung_qL} and the Bessel series expansion \eqref{app1:eq:K-expansions} $s^2K_2(s)=2 + O(s^2)$ yields together
\begin{equation}\label{sec5:eq:h2_scp_main_term}
    \begin{split}
        &\frac{1}{\rho}\int_{A_\rho(0)}\mu^2\frac{K_2(\mu t)}{t} \abs{q_L(y)}\,dy \\
        &\leq 2\pi^2\frac{\sup_{y\in A_\rho(0)}|q_L(y)|}{\rho}\int_\rho^{2\rho}\mu^2 t^2 K_2(\mu t)\,dt \in O(L^{-3+\theta})
    \end{split}
\end{equation}
It remains to compute the integral of the leading term in \eqref{sec5:eq:h2_scp_taylor}. Integrating over $A_\rho(0)$ yields with $s=\mu t$ 
\begin{equation*}
    \frac{1}{\rho}G_\mu(\ell)\mu^2\int_{A_\rho(0)} \frac{K_2(\mu\abs{y})}{\abs{y}} dy = \frac{2\pi^2}{\mu \rho}G_\mu(\ell) \int_{\mu \rho}^{2\mu \rho} s^2 K_2(s) ds.
\end{equation*}
Using $s^2K_2(s)=2+O(s^2)$ and $a:=\mu\rho=\mu_0L^{\theta-1}$, we have
\begin{equation*}
\int_{a}^{2a}s^2K_2(s)\,ds
=\int_a^{2a}\big(2+O(s^2)\big)\,ds
=2a+O(a^3).
\end{equation*}
Therefore,
\begin{equation*}
\frac{2\pi^2}{\mu\rho}G_\mu(\ell)\int_{\mu\rho}^{2\mu\rho}s^2K_2(s)\,ds
=
\frac{2\pi^2}{a}G_\mu(\ell)\big(2a+O(a^3)\big)
=
4\pi^2G_\mu(\ell)+O\!\big(G_\mu(\ell)a^2\big).
\end{equation*}
Since 
\begin{equation*}
    G_\mu(\ell)=\frac{\mu K_1(\mu L)}{L}=\frac{\mu_0}{L^2}K_1(\mu_0), \qquad a^2=\mu_0^2L^{2\theta-2},
\end{equation*} 
this becomes
\begin{equation*}
4\pi^2G_\mu(\ell)+O\!\big(L^{-2}\cdot L^{2\theta-2}\big)
= 4\pi^2\mu_0 K_1(\mu_0) L^{-2}+O(L^{-4+2\theta}).
\end{equation*}
To summarize the above: We integrate \eqref{sec5:eq:h2_scp_taylor} over $A_\rho(0)$ and apply \eqref{sec5:eq:h2_scp_main_term} to find
\begin{equation}
     \scp{h_{2}(\ell,\cdot+\ell/2)}{\nabla G_\mu}_{L^2(A_\rho(0))} = \frac{1}{\rho}G_\mu(\ell)\mu^2\int_{A_\rho(0)} \frac{K_2(\mu\abs{y})}{\abs{y}} dy + O(L^{-3+\theta}).
\end{equation}
Using the subsequent estimates on the remaining integral and comparing the result with the constant $D_2(\mu_0)$ in \eqref{sec5:eq:D2} proves the claim.

\medskip
\noindent\textbf{\underline{Claim 3}:}
Let $L>0$ be sufficiently large. Then
\begin{equation}\label{sec5:eq:Crossterm h_R,3 nabla G mu}
\scp{h_{3}(\ell,\cdot)}{\nabla G_\mu(\cdot-\ell/2)}_{L^2(A_\rho(\ell/2))}
= D_3(\mu_0) L^{-2}+O(L^{-3+\theta}),
\end{equation}
where
\begin{equation*}
D_3(\mu_0)=2\pi^2\mu_0^2\big(2\log(2)-1\big).
\end{equation*}
\noindent\textbf{Proof of Claim 3:}
Set $y = x - \ell/2$ and $t = |y|$. Then if $x\in A_\rho(\ell/2)$ we have $y\in A_\rho(0)$.
In these coordinates we have
\begin{equation*}
    \begin{split}
            h_{3}(\ell,y+\ell/2)&=\left(2-\frac{t}{\rho}\right)\left(\mu^2\frac{K_2\left(\mu t\right)}{t}-\frac{2}{t^3}\right)\frac{y}{t}=:\widetilde h_3(y), \\
    \nabla G_\mu(y)
&=-\mu^2\frac{K_2(\mu t)}{t}\frac{y}{t}
    \end{split}
\end{equation*}
and consequently
\begin{equation*}
\widetilde h_3(y) \cdot \nabla G_\mu(y) =-\left(2-\frac{t}{\rho}\right)\left(\mu^2\frac{K_2(\mu t)}{t}-\frac{2}{t^3}\right)\left(\mu^2\frac{K_2(\mu t)}{t}\right).
\end{equation*}
Integrating over $A_\rho(0)$ and substitution $s=\mu t$ yields
\begin{equation*}
    \langle \widetilde{h}_3, \nabla G_\mu \rangle_{L^2(A_\rho(0))} =2\pi^2 \frac{\mu}{\rho} \int_{\mu\rho}^{2\mu\rho}\left(2\mu\rho-s\right)(2-s^2K_2(s))\frac{K_2(s)}{s} ds.
\end{equation*}
Applying the expansion of Bessel functions in \eqref{app1:eq:(2-z^2K2)K2} we find
\begin{equation*}
    (2-s^2 K_2(s))K_2(s)\frac{1}{s} = \frac{1}{s} + q(s), \quad q \in O(s\log (s)).
\end{equation*}
Defining $a:=\mu\rho=\mu_0L^{\theta-1}$, we have
\begin{equation} \label{sec4:eq:prop3_expanded}
   \frac{\mu}{\rho} \int_a^{2a} (2a-s)(2-s^2 K_2(s))K_2(s)\frac{1}{s} ds =\frac{\mu}{\rho}\int_a^{2a} \frac{2a-s}{s} ds + \frac{\mu}{\rho}\int_a^{2a} (2a-s)q(s) ds.
\end{equation}
Since \(0 \le 2a - s \le a\) for \(s \in [a,2a]\), we estimate
\begin{equation*}
\left|\int_a^{2a} (2a-s) q(s)\, ds \right|
\le a \int_a^{2a} |q(s)|\, ds
\lesssim a \int_a^{2a} s |\log s|\, ds
\lesssim a^3 |\log a|
\end{equation*}
and since $\mu=\mu_0L^{-1}$ and $\rho=L^\theta$ we find
\begin{equation} \label{sec5:eq:prop3:rem_int}
    \abs{\frac{\mu}{\rho}a^3\log(a)} \lesssim L^{-4+2\theta}\log(L) \in O(L^{-3+\theta}).
\end{equation}
Solving the first integral in the right-hand side of \eqref{sec4:eq:prop3_expanded} yields with  $a=\mu\rho$ 
\begin{equation} \label{sec5:prop3:main_int}
    \frac{\mu}{\rho}\int_a^{2a} \frac{2a-s}{s} \, ds
    =\mu^2\bigl(2\log(2)-1\bigr).
\end{equation}
Combining \eqref{sec5:prop3:main_int} and \eqref{sec5:eq:prop3:rem_int} shows with $\mu=\mu_0/L$
\begin{equation} \label{sec4:eq:prop3_int_solved}
   2\pi^2\frac{\mu}{\rho} \int_a^{2a} (2a-s)(2-s^2 K_2(s))K_2(s)\frac{1}{s} ds =2\pi^2 \mu_0^2\bigl(2\log(2)-1\bigr)L^{-2} +O(L^{-3+\theta}).
\end{equation}
Finally, the change of variables $y=x-\ell/2$ maps $A_\rho(\ell/2)$ onto $A_\rho(0)$,
which yields \eqref{sec5:eq:Crossterm h_R,3 nabla G mu}.

\medskip
\noindent\textbf{\underline{Claim 4}:} Let $L>0$ be large enough, then
\begin{equation}\label{sec5:eq:Crossterm h_R,4 nabla G mu}
    \scp{h_{4}(\ell,\cdot)}{\nabla G_\mu(\cdot-\ell/2)}_{L^2(A_\rho(\ell/2))}\in O(L^{-3+\theta}).
\end{equation}
\noindent\textbf{Proof of Claim 4:}
Set $y=x-\ell/2$ and $t=|y|$. For $y\in A_\rho(0)$ we have
\begin{equation} \label{sec5:eq:h_4 and G_mu'}
        h_{4}(\ell,y+\ell/2)
        =\left(2-\frac{|y|}{\rho}\right)\left(\mu^2\frac{K_2\!\left(\mu|y+\ell|\right)}{|y+\ell|} \right)\frac{y+\ell}{|y+\ell|},
        \qquad
        \nabla G_\mu(y)=-\mu^2\frac{K_2(\mu t)}{t}\frac{y}{t}.
\end{equation}
Since $t\in[\rho,2\rho]$, we have $\bigl|2-\frac{|y|}{\rho}\bigr|\le 2$.
Moreover, by \eqref{app1:lem:Kj_expansion} (in particular, $K_2(z)\lesssim z^{-2}$ for $z>0$) we obtain for any $p>0$
\begin{equation}\label{sec5:eq:K2_pointwise_bound}
    \mu^2\frac{K_2(\mu p)}{p}\lesssim \mu^2\frac{(\mu p)^{-2}}{p}=p^{-3}.
\end{equation}
We note that due to \eqref{sec5:eq:K2_pointwise_bound} we find
\begin{equation} \label{sec5:eq:G_mu'1norm}
    \norm{\nabla G_\mu(\cdot - \ell/2)}_{L^1(A_\rho(\ell/2))} =\mu^2\frac{K_2(\mu|y|)}{|y|}\,dy \lesssim  L^{-3}\int_{A_\rho(0)} |y|^{-3}\,dy \lesssim L^\theta
\end{equation}
Using that $u_\rho(y) = (2-\abs{y}/\rho)\leq 1$ and $\mu\abs{y+l}\in O(1)$ on $A_\rho(0)$ we conclude from \eqref{sec5:eq:h_4 and G_mu'} that
\begin{equation}\label{sec5:eq:h4_Lminus3}
    |h_4(\ell,y+\ell)| \lesssim \mu^2|y+\ell|^{-1}
    \lesssim L^{-3}, \quad y \in A_\rho(0)
\end{equation}
for $L>0$ sufficiently large.
The last step above follows by the inverse triangle inequality,
\begin{equation}\label{sec5:eq:ell_separation}
    |y+\ell|\ge |\ell|-|y|
    \ge |\ell|-2\rho
    \ge \frac{|\ell|}{2}.
\end{equation}
Using Hölder's inequality we find
\begin{equation}\label{sec5:eq:L1_estimate_nablaGmu_Anulus_second_step}
    \left|\scp{h_{4}(\ell,\cdot)}{\nabla G_\mu(\cdot-\ell/2)}_{L^2(A_\rho(\ell/2))}\right|
    \lesssim \rho\,L^{-3}.
\end{equation}
Finally, by using $\rho=L^\theta$, we conclude the claim.\\
\noindent\textbf{\underline{Claim 5}:} Let $\theta=4/(4+\delta)>3/(3+\delta)$ and $L>0$ sufficiently large. Then
\begin{equation*}
    \scp{h_{5}(\ell,\cdot)}{\nabla G_\mu(\cdot-\ell/2)}_{L^2(A_\rho(\ell/2))} \in O(L^{-3+\theta}).
\end{equation*}
\textbf{Proof of Claim 5:} Set $y=x-\ell/2$ and $t=|y|$. For $y\in A_\rho(0)$ we have
\begin{equation}\label{sec5:eq:widetildeh5}
    \widetilde h_5(y) = h_5(\ell, y + \ell/2) = -L^{-\theta}\epsilon_1(y)\frac{y}{t}+u_\rho(y) \epsilon_2(y),  \qquad
        \nabla G_\mu(y)=-\mu^2\frac{K_2(\mu t)}{t}\frac{y}{t}.
\end{equation}
To prove the claim we analyze 
\begin{equation*}
    \scp{h_5(\ell,\cdot)}{\nabla G_\mu(\cdot-\ell/2)}_{L^2(A_\rho(\ell/2))} = \scp{\widetilde h_5}{\nabla G_\mu}_{L^2(A_\rho(0))}
\end{equation*}
Due to the bounds on $\epsilon_1$ and $\epsilon_2$ in Lemma \ref{sec5:lem:AbfallverhaltenResonanz} we know for $L>0$ large enough and $y\in A_\rho(0)$ together with $u_\rho \leq 1$ we derive from \eqref{sec5:eq:widetildeh5}
\begin{equation*}
    \begin{split}
            \abs{\widetilde h_5(y)} \lesssim L^{-\theta} t^{-2-\delta} + t^{-3-\delta}.
    \end{split}
\end{equation*}
Due to the fact, that $t=\abs{y}>L^{\theta}$ for $y\in A_\rho(0)$ we see
\begin{equation*}
    \abs{\widetilde h_5(y)}\lesssim L^{-\theta(3+\delta)}.
\end{equation*}
Since $\theta=4/(4+\delta)> 3/(3+\delta)$, it follows $\abs{\widetilde h_5(y)}\lesssim L^{-3}$. Using \eqref{sec5:eq:G_mu'1norm} together with Hölder's inequality we find
\begin{equation*}
    \abs{\scp{h_5(\ell,\cdot)}{\nabla G_\mu(\cdot-\ell/2)}_{L^2(A_\rho(\ell/2))}}= \abs{\scp{\widetilde h_5}{\nabla G_\mu}_{L^2(A_\rho(0))}}\lesssim L^{-3} L^\theta \in O(L^{-3+\theta}).
\end{equation*}
This concludes claim 5.\\

\medskip
Note that the constants $D_1(\mu_0), D_2(\mu_0)$ and $D_3(\mu_0)$ do satisfy \eqref{sec5:eq:aufsummieren D_j}. As explained earlier, proving claims 1--5 completes the proof of Proposition~\ref{sec5:prop:nabla Annulus Terme}. 
\end{proof}

\subsubsection{Proof of 
Proposition~\ref{sec5:prop:nabla G mu Terme}}\label{sec5:subsec:prop3}
      \begin{proof}
        The proof proceeds in two steps. In the first step, we show that
            \begin{equation*}
                \norm{\nabla G_\mu(\cdot-\ell/2)}^2_{L^2(\Omega_\rho(\ell/2)} = \norm{\nabla G_\mu}^2_{L^2(\R^4\setminus B_\rho(0))} + O(L^{-3+\theta})
            \end{equation*}
            and evaluate the norm on the right side. In the second step, we show
            \begin{equation}\label{sec5:eq:prop3:remaining_int}
                \begin{split}
                &\scp{\nabla G_\mu(\cdot-\ell/2)}{\nabla G_\mu(\cdot+\ell/2)}_{L^2(\Omega_\rho(\ell/2))}\\
                = &\int_{\R^4} \nabla G_\mu(x-\ell/2) \cdot \nabla G_\mu(x+\ell/2)\, dx + O(L^{-3+\theta})\, .
                \end{split}
            \end{equation}
            The remaining integral in the right-hand side of \eqref{sec5:eq:prop3:remaining_int} will be solved in the Appendix. We give that as Lemma \ref{app2:lem:Faltung Ableitung Gmu mit sich selbst}.
            Recall that for any $\ell \in \R^4$ with $L=|\ell|$ we always have
            \begin{equation*}
               \rho = L^{\theta}, \quad \mu= \mu_0 L^{-1},
            \end{equation*}
            for some $\theta\in (0,1)$ and $\mu_0>0$ to be chosen later.
            
            \medskip
            \noindent \textbf{Step 1: } Applying the expansion of Bessel functions in \ref{app1:lem:Kj_expansion} we find
            \begin{equation*}
                \abs{\nabla G_\mu(y)}=\mu^2\frac{K_2(\mu \abs{y})}{\abs{y}} \lesssim \abs{y}^{-3}
            \end{equation*}
            for any $y\in B_\rho(0)$. Thus for any $x\in B_\rho(-\ell/2)$
            \begin{equation}
                \abs{\nabla G_\mu(x-\ell/2)} \lesssim \abs{x-\ell/2}^{-3} \lesssim L^{-3} \, .\label{sec5:eq:Abschätzung nabla Gmu auf anderem Ball}
            \end{equation}
            Since $\rho=L^\theta$ we have $\abs{B_\rho(-\ell/2)}\lesssim L^{4\theta}$ and therefore
            \begin{equation}
                \|\nabla G_\mu(\cdot-\ell/2)\|^2_{L^2(B_\rho(-\ell/2))}\lesssim L^{-6+4\theta} \in O(L^{-3+\theta}).\label{sec5:eq:Norm Nabla Gmu +R klein auf Ball um R}
            \end{equation}
            Note that $\Omega_\rho(\ell/2) = \R^4 \setminus (B_\rho(\ell/2) \cup B_\rho(-\ell/2))$ and due to \eqref{sec5:eq:Norm Nabla Gmu +R klein auf Ball um R} the function $\nabla G_\mu$ is small on $B_\rho(-\ell/2)$ and consequently
            \begin{equation}
                \|\nabla(G_\mu(\cdot-\ell/2)\|^2_{L^2(\Omega_{\rho}(\ell/2))} = \|\nabla G_\mu\|^2_{L^2(\R^4\setminus B_\rho(0))} + O(L^{-3+\theta}).\label{sec5:eq:Ball dazunehmen liefert kleinen Fehler bei Norm nabla Gmu}
            \end{equation}
            A direct calculation shows
            \begin{equation*}
                \|\nabla G_\mu\|^2_{L^2(\R^4\setminus B_\rho(0))} = 2\pi^2\int_\rho^\infty \abs{\mu^2 \frac{K_2(\mu r)}{r}}^2 r^3 dr = 2\pi^2 \mu^2 \int_{\mu\rho}^\infty sK_2(s)^2 ds.
            \end{equation*}
            We compute the antiderivative involving $K_2$ in Remark~\ref{app:cor:bessel_aux} explicitly and thus arrive at
            \begin{align*}
                \|\nabla G_\mu\|^2_{L^2(\R^4\setminus B_\rho(0))}= 2\pi^2\mu^2\left[\frac{1}{2}s^2\left(K_2(s)^2-K_1(s)K_3(s)\right)\right]_{s=\mu\rho}^\infty. %\\
                %&=2\pi^2 \mu^2\left(-\frac{1}{2}(\mu\rho)^2\left(K_2((\mu\rho))^2-K_1((\mu\rho))K_3((\mu\rho))\right) \right).
            \end{align*}
            Applying the Bessel series expansion in \eqref{app1:eq:K1K3-K2^2} together with $\rho=L^\theta $ and $ \mu=\mu_0 L^{-1} $, yields
            \begin{equation*}
            \begin{split}
                \|\nabla G_\mu\|^2_{L^2(\R^4\setminus B_\rho(0))} & =\frac{1}{2}\left(4\pi^2 L^{-2\theta}+C_2(\mu_0)\log(\mu_0L^{\theta-1})L^{-2}+C_3(\mu_0) L^{-2} \right)\\
                 &+ O(L^{-3+\theta})
            \end{split}
            \end{equation*}
            where $C_2(\mu_0),C_3(\mu_0)$ are as in \eqref{sec5:eq:Konstanten Norm nabla Gmu}. Due to \eqref{sec5:eq:Ball dazunehmen liefert kleinen Fehler bei Norm nabla Gmu} we get 
            \begin{equation*}
                 2\|\nabla G_\mu(\cdot-\ell/2)\|^2_{L^2(\Omega_{\rho}(\ell/2))} =4\pi^2 L^{-2\theta}+C_2(\mu_0)\log(\mu_0L^{\theta-1})L^{-2}+C_3(\mu_0) L^{-2}+ O(L^{-3+\theta}).
            \end{equation*}
            This finishes the proof of \eqref{sec5:eq:Ergebnis Norm nabla Gmu}.\\

\medskip
            \noindent\textbf{Step 2:} First, we show 
            \begin{equation*}
                \int_{B_\rho(-\ell/2)}\nabla G_\mu(x-\ell/2)\nabla G_\mu(x+\ell/2)\, dx \in O(R^{-3+\theta}).
            \end{equation*}
            Due to \eqref{sec5:eq:Abschätzung nabla Gmu auf anderem Ball} we see
            \begin{equation*}
                \abs{\int_{B_\rho(-\ell/2)}\nabla G_\mu(x-\ell/2)\nabla G_\mu(x+\ell/2)\, dx}\lesssim L^{-3}\int_{B_\rho(-\ell/2)}\abs{\nabla G_\mu(x+\ell/2)}\, dx.
            \end{equation*}
            After a translation and a transformation to spherical coordinates it follows
            \begin{equation*}
                 \abs{\int_{B_\rho(\ell/2)} \nabla G_\mu(x+\ell/2) \nabla G_\mu(x-\ell/2) dx} \lesssim  L^{-3} \int_0^\rho \mu^2 r^2 K_2(\mu r)\, dr.
            \end{equation*}
            Since $\mu=\mu_0/L, \rho =L^\theta$ and $s^2K_2(s)\in O(1)$ for $s\ll 1$ this implies
            \begin{equation*}
                \abs{\int_{B_\rho(\ell/2)} \nabla G_\mu(x+\ell/2) \nabla G_\mu(x-\ell/2) dx} \lesssim  L^{-2} \int_0^{\mu\rho} s^2K_2(s)\, ds \in O(L^{-3+\theta}).
            \end{equation*}
            By symmetry the same statement holds for the integral over $B_\rho(-\ell/2)$. Consequentially we obtain
            \begin{equation}\label{sec5:eq:Cross Term nabla Gmu auf R4 erweitern}
            \begin{split}
                 &\scp{\nabla G_\mu(\cdot+\ell/2)}{\nabla G_\mu(\cdot-\ell/2)}_{L^2(\Omega_{\rho}(\ell/2))}\\
                =&\int_{\R^4} \nabla G_\mu(x+\ell/2)\cdot \nabla G_\mu(x-\ell/2) dx+ O(L^{-3+\theta}).
            \end{split}
            \end{equation}
            We compute the integral above in the appendix in Lemma \ref{app2:lem:Faltung Ableitung Gmu mit sich selbst} which together with $\mu=\mu_0/L$ gives
            \begin{equation*}
                \int_{\R^4} \nabla G_\mu(x+\ell/2)\cdot \nabla G_\mu(x-\ell/2) dx=%\scp{\nabla G_\mu(\cdot+\ell/2)}{\nabla G_\mu(\cdot-\ell/2)}_{L^2(\R^4)}=
                \sum_{j=1}^42\pi^2\left(\mu_0K_1(\mu_0) L^{-2}-\mu_0^2 K_2(\mu_0) L^{-2}\frac{\ell_j^2}{L^2}\right).
            \end{equation*}
            It follows
            \begin{equation*}
                2\scp{\nabla G_\mu(\cdot+\ell/2)}{\nabla G_\mu(\cdot-\ell/2)}_{L^2(\Omega_{\rho}(\ell/2))}= C_4(\mu_0) L^{-2} + O(L^{-3+\theta}).
            \end{equation*}
            where $C_4(\mu_0)$ is as in \eqref{sec5:eq:Konstanten Norm nabla Gmu}. This finishes the proof of \eqref{sec5:eq:cross Term Gmu} and completes the proof of Proposition~\ref{sec5:prop:nabla G mu Terme}. 
        \end{proof}
%%%%%%%%%%%%%%%%%%%%%%%%%%%%%%%%%%%%%%%%
%%%%%%%%%%% SUBSECTION KINETIC AUX
%%%%%%%%%%%%%%%%%%%%%%%%%%%%%%%%%%%%%%%%%
\subsection{Kinetic Control}\label{sec5:sub:aux}
In this subsection we establish the estimate on the kinetic terms in \eqref{sec3:eq:for_born_op} from Lemma~\ref{sec2:lem:proto_efimov} by proving Lemma~\ref{sec5:lem:aux}. Lemma~\ref{sec5:lem:aux} is the key input for the adiabatic reduction in the proof of
Theorem~\ref{sec2:thrm:main}. The estimate controls the defect between
the kinetic energy associated with translations of the potential wells in the $(x_3,x_4)$-plane
and the overall kinetic energy in that two-dimensional direction. In this subsection we choose $\ell(r)\in \R^4$ with $\ell(r) = (0,0,r)$ where $r\in \R^2$ such that $L=|\ell(r)|=|r|$ and as before
\begin{equation*}
    \mu= \mu_0/L, \quad \rho = L^\theta
\end{equation*}
for some $\mu_0>0$ and $\theta\in (0,1)$ to be chosen later.

Before turning to the proof, we briefly explain the structure of the
argument. The above defect does not vanish identically, since the
Schrödinger $H[\ell(r)]$ is not invariant under translations in the
$(x_3,x_4)$–plane due to its genuine double–well structure.
However, the test function $\phi(\ell(r),\cdot)$ is constructed such that it
coincides locally with the zero–energy solution of the corresponding
single–well operator in a neighborhood of each individual potential
well, which is indeed invariant under translations in the $(x_3,x_4)$-plane. As a consequence, these local contributions to the kinetic
expression vanish. Meaning for $x=(x_1,x_2,x_3,x_4) \in \R^4$ and $\varphi_0$ the zero-energy solution
\begin{equation}\label{sec5:eq:chainrule}
    P_r^2 \varphi_0( x \pm \ell(r)/2 ) = \frac{1}{4}P_{(x_3,x_4)}^2 \varphi_0(x\pm \ell(r)/2 ) \, .
\end{equation}

It follows that the leading contribution to the kinetic estimate
originates entirely from configurations far away from both wells,
namely from the region $\Omega_\rho$, where $\phi(\ell,\cdot)$ is a superposition of shifted Green functions $G_\mu$ (namely the function $\Gamma$). In
the present section we reduce the kinetic estimate to this far–field
regime and identify the relevant contributions. The explicit analysis
of the far–field terms involves a detailed study of Bessel functions
and is formulated as the independent
Proposition~\ref{sec5:prop:Ableitung nach R von G mu Termen} below. The proof of that
proposition is deferred to
Section~\ref{subsec64: Energy außerhalb terme}.

\begin{proposition}\label{sec5:prop:Ableitung nach R von G mu Termen}
        For $\ell \in \R^4$ with $\ell= (0,0,r)$ where $r\in \R^2$ and $L=|\ell(r)|=\abs{r}$ large enough let $\mu=\mu_0/L$ and $\Gamma(\ell, \cdot )$ be the function which is defined in \eqref{sec5:eq:def_Gamma&f} as
        \begin{equation*}
            \Gamma(\ell(r),x ) = G_\mu(x+\ell(r)/2) +G_\mu(x-\ell(r)/2).
        \end{equation*}
        Then 
        \begin{equation*}
          \left \langle \Gamma(\ell(r),\cdot ) , \left(P_r^2-\frac{1}{4}P_{(x_3,x_4)}^2\right)  \Gamma(\ell(r),\cdot  )  \right\rangle_{L^2(\Omega_\rho)}  =   C_5(\mu_0) L^{-2}+O(L^{-3+\theta})
        \end{equation*}
        where
        \begin{align*}
            C_5(\mu_0) =(4\pi^2/3)\left(1+\mu_0 K_1(\mu_0)+\tfrac{7}{2}\mu_0^2 K_0(\mu_0)\right).
        \end{align*}
    \end{proposition}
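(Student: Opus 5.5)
Writing $x=(x',x'')$ with $x'=(x_1,x_2)$ and $x''=(x_3,x_4)$, we have $\ell(r)=(0,0,r)$. Then $G_\mu(x\pm\ell/2)=g(|x'|^2+|x''\pm r/2|^2)$ with $g$ radial. The chain rule gives $\nabla_r G_\mu(x\pm\ell/2)=\pm\tfrac12\nabla_{x''}G_\mu(x\pm\ell/2)$. We therefore plan to compute
\begin{equation*}
\bigl\langle \Gamma,\bigl(P_r^2-\tfrac14P_{x''}^2\bigr)\Gamma\bigr\rangle_{L^2(\Omega_\rho)}
=\int_{\Omega_\rho}|\nabla_r\Gamma|^2-\tfrac14|\nabla_{x''}\Gamma|^2\,dx .
\end{equation*}
This is the quadratic form on $\Omega_\rho$, consistent with how the term is used in the kinetic control. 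With $a_\pm=\nabla_{x''}G_\mu(x\pm\ell/2)$, we have $\nabla_r\Gamma=\tfrac12(a_+-a_-)$ and $\nabla_{x''}\Gamma=a_++a_-$. Hence the integrand equals
\begin{equation*}
\tfrac14|a_+-a_-|^2-\tfrac14|a_++a_-|^2=-a_+\cdot a_- .
\end{equation*}
All diagonal terms cancel exactly, and in particular the $L^{-2\theta}$ and logarithmic contributions drop out. The whole expression thus reduces to $-\int_{\Omega_\rho}\nabla_{x''}G_\mu(x+\ell/2)\cdot\nabla_{x''}G_\mu(x-\ell/2)\,dx$.

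Next I will extend the integral from $\Omega_\rho$ to all of $\R^4$ with error $O(L^{-3+\theta})$. The argument is verbatim Step~2 in the proof of Proposition~\ref{sec5:prop:nabla G mu Terme}. On $B_\rho(\mp\ell/2)$ one factor is $O(L^{-3})$ by \eqref{sec5:eq:Abschätzung nabla Gmu auf anderem Ball}. The other factor has $L^1(B_\rho)$ norm $\lesssim\mu^{-1}\int_0^{\mu\rho}s^2K_2(s)\,ds\lesssim L^{\theta}$. This leaves the explicit integral
\begin{equation*}
-\sum_{j=3,4}\int_{\R^4}\partial_jG_\mu(x+\ell/2)\,\partial_jG_\mu(x-\ell/2)\,dx .
\end{equation*}

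For the explicit integral, I will use Lemma~\ref{app2:lem:Faltung Ableitung Gmu mit sich selbst}, whose componentwise form was quoted in Step~2 above. It reads
\begin{equation*}
\int\partial_jG_\mu(x+\ell/2)\,\partial_jG_\mu(x-\ell/2)\,dx=2\pi^2\bigl(\mu_0K_1(\mu_0)-\mu_0^2K_2(\mu_0)\ell_j^2/L^2\bigr)L^{-2}.
\end{equation*}
Since $\ell_3^2+\ell_4^2=L^2$, this yields $-2\pi^2(2\mu_0K_1(\mu_0)-\mu_0^2K_2(\mu_0))L^{-2}$. Should the quoted form be imprecise, I will instead compute it in Fourier space. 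The Fourier transform of $G_\mu$ is a multiple of $(|k|^2+\mu^2)^{-1}$, so the integral becomes $\int k_j^2e^{ik\cdot\ell}(|k|^2+\mu^2)^{-2}\,dk$. This equals $-\partial_{\ell_j}^2$ of the 4D Fourier transform of $(|k|^2+\mu^2)^{-2}$, which is a multiple of $K_0(\mu L)$. Using $K_0'=-K_1$ and $(K_1(s)/s)'=-K_2(s)/s$ produces combinations of $K_0,K_1,K_2$ at $\mu_0$.

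The main obstacle is reconciling the constant with the stated $C_5(\mu_0)$. That constant has the prefactor $4\pi^2/3$, the term $1$, and a $K_0$ term, none of which arise from my cancellation alone. This suggests that the intended left-hand side is not the exact quadratic form restricted to $\Omega_\rho$. The likely sources are boundary contributions on $\partial B_\rho(\pm\ell/2)$ from integrating by parts, $P_r^2$ acting on the $\rho(L)$- and $\mu(L)$-dependence of $\Gamma$, or the $x''$-averaging producing the $1/3$. I would first check the $\mu$-dependence: $\mu=\mu_0/|r|$ varies with $r$, so $\nabla_r\Gamma$ acquires the extra term $\partial_r\mu\,\partial_\mu\Gamma$ of size $L^{-1}\cdot\mu\,\partial_\mu G_\mu$. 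Then $\|\partial_r\mu\,\partial_\mu\Gamma\|^2$ and its cross term with $a_\pm$ give further $L^{-2}$ contributions involving $\int s^3K_0(s)^2$-type integrals, which plausibly account for the $1$ and $\tfrac72\mu_0^2K_0(\mu_0)$. I would compute these via the identities in Remark~\ref{app:cor:bessel_aux}, again extending to $\R^4$ with $O(L^{-3+\theta})$ errors, and sum all pieces to match $C_5(\mu_0)$.
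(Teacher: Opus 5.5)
\textbf{Gap.} Your reduction rests on $\nabla_r G_\mu(x\pm\ell/2)=\pm\tfrac12\nabla_{x''}G_\mu(x\pm\ell/2)$, and this is false here. $\Gamma(\ell(r),\cdot)$ depends on $r$ also through $\mu=\mu_0/\abs{r}$. Writing $u_\pm=G_\mu(\cdot\pm\ell/2)$ and $a_\pm=\nabla_{x''}u_\pm$, the correct gradient is
\[
\nabla_r u_\pm=\pm\tfrac12 a_\pm-\frac{\mu}{L^2}\,r\,p_\pm,\qquad p_\pm=\partial_\mu u_\pm=-\mu K_0(\mu\abs{\cdot\pm\ell/2}) .
\]
The extra terms are not lower order. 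For example, $\abs{\nabla_r\mu}^2\norm{p_\pm}^2=\frac{\mu^2}{L^2}\,\mu^2\int_{\R^4}K_0(\mu\abs{x})^2\,dx=\frac{2\pi^2}{3}L^{-2}$. So the claimed exact cancellation fails. The term $-\langle a_+,a_-\rangle$ is only one of four $L^{-2}$ contributions; your extension of it to $\R^4$ and its value $2\pi^2(\mu_0^2K_2(\mu_0)-2\mu_0K_1(\mu_0))L^{-2}$ are correct and agree with the paper. Your last paragraph does single out the $\mu$-dependence, correctly, but nothing is computed. The other candidates you list play no role: $\Gamma$ does not depend on $\rho$, the form is the pointwise-in-$r$ integral $\int_{\Omega_\rho}\abs{\nabla_r\Gamma}^2-\frac14\abs{\nabla_{x''}\Gamma}^2$ with no integration by parts, and the $\frac13$ comes from $\int_0^\infty s^3K_0(s)^2\,ds=\frac13$. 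As it stands the proposal does not prove the statement.

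\textbf{What is needed.} The paper's proof is the corrected version of your computation. The integrand becomes
\[
-a_+\cdot a_- -\frac{\mu}{L^2}\bigl(r\cdot(a_+-a_-)\bigr)(p_++p_-)+\frac{\mu^2}{L^2}(p_++p_-)^2 .
\]
The pieces are handled as follows.
\begin{itemize}
\item The squares $\frac{\mu^2}{L^2}p_\pm^2$ give $\frac{4\pi^2}{3}L^{-2}$ in total; this is the ``$1$'' in $C_5$.
\item By the $K_0*K_0$ formula, $\frac{2\mu^2}{L^2}\langle p_+,p_-\rangle=\frac{2\pi^2}{3}\mu_0^2K_2(\mu_0)L^{-2}$.
\item Diagonal mixed terms such as $\langle r\cdot a_+,p_+\rangle$ vanish over $\R^4\setminus B_\rho(-\ell/2)$ by oddness about $-\ell/2$. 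Only the ball $B_\rho(\ell/2)$ remains, and there the term is $O(L^{-3+\theta})$.
\item The off-diagonal mixed terms combine under $x\mapsto -x$ into $-\frac{2\mu}{L^2}\langle r\cdot a_+,p_-\rangle$. This is evaluated with the $K_0*\partial_iG_\mu$ formula of Appendix~\ref{app:integrals}.
\item Finally use $K_2=K_0+2z^{-1}K_1$. Restricting to $\Omega_\rho$ costs $O(L^{-3+\theta})$ by the same ball estimates you already used.
\end{itemize}

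\textbf{Caution before matching $C_5$.} With $p_-=-\mu K_0(\mu\abs{\cdot-\ell/2})$ I get $-\frac{2\mu}{L^2}\langle r\cdot a_+,p_-\rangle=-2\pi^2\mu_0^2K_0(\mu_0)L^{-2}$. A direct Fourier computation agrees: there the whole mixed term equals $-\frac{4\mu^2}{L^2}\int_{\R^4}(k\cdot\ell)\sin(k\cdot\ell)(\abs{k}^2+\mu^2)^{-3}dk$, and the integral is $\frac{\pi^2}{2}K_0(\mu_0)L^2>0$. The paper instead records $+2\pi^2\mu_0^2K_0(\mu_0)L^{-2}$. My sign would put $\frac12\mu_0^2K_0(\mu_0)$ in place of $\frac72\mu_0^2K_0(\mu_0)$ inside $C_5$. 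This is harmless for the application, since the true value is smaller and both versions tend to $\frac{8\pi^2}{3}$ as $\mu_0\to0^+$. But you should not expect to reproduce the $\frac72$ exactly.
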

        \begin{proof}
            We defer the proof of Proposition \ref{sec5:prop:Ableitung nach R von G mu Termen} to Section \ref{subsec64: Energy außerhalb terme}.
        \end{proof}
We will give now the proof of Lemma~\ref{sec5:lem:aux} under the assumption of Proposition~\ref{sec5:prop:Ableitung nach R von G mu Termen}.
\begin{proof}[Proof of ~\ref{sec5:lem:aux}]
    The proof proceeds in three steps. In the first step we prove that
    \begin{equation} \label{sec5:eq:kin_est_first_step}
        \begin{split}
            &\left \langle \widetilde\Phi(\ell(r),\cdot), \left(P_r^2-\frac{1}{4}P_{(x_3,x_4)}^2\right) \widetilde\Phi(\ell(r),\cdot)  \right\rangle \\
            &\leq \norm{\phi(\ell(r),\cdot)}^{-2}  \int_{\R^4} \left( \abs{P_r \phi(\ell(r),x)}^2 - \frac{1}{4} \abs{P_{(x_3,x_4)} \phi(\ell(r),x)}^2 \right) dx 
        \end{split}
    \end{equation}
    and in the second step we estimate the integral right-hand side of \eqref{sec5:eq:kin_est_first_step} over the regions in the vicinity of the individual wells, meaning the sets $A_\rho(\pm \ell/2)$ and $B_\rho(\pm\ell/2)$. In the final third step we apply Proposition~\ref{sec5:prop:Ableitung nach R von G mu Termen} to control the contribution from regions far outside the potential wells and conclude the statement.

\medskip
\textbf{Step 1:}
    The function $\widetilde \Phi(\ell(r),\cdot) $ is normalized for any fixed $r\in \R^2$ meaning that first a function $\phi(\ell(r),\cdot)$ was defined for fixed $r\in \R^2$ as function in $L^2(\R^4)$ and then normalized such that $\widetilde \Phi(\ell(r),\cdot) = \phi(\ell(r),\cdot)/\norm{\phi(\ell(r),\cdot)}$ where $\norm{\cdot }$ denotes the $L^2(\R^4,dx)$ norm. Thus, $\norm{\phi(\ell(r),\cdot)}$ is a function depending on $r\in \R^2$. We define for $r\in \R^2$
    \begin{equation*}
        N(r) = \norm{\phi(\ell(r),\cdot)},
    \end{equation*}
    and by abuse of notation, we write $\phi(\ell(r),\cdot)=\phi(r,\cdot)$.
    
    Following Remark~\ref{sec5:rem:weak_l_der} the mapping $r\mapsto N(r)$ is in $H^{1}_{\loc}(\R^2)$ and due to the quotient rule we find for any $r\in \R^2\setminus\{0\}$ and $x\in \R^4$
    \begin{equation*}
        P_r \widetilde \Phi(\ell(r),x)  = \left((P_r \phi(r,x))N(r)- \phi(r,x) (P_rN(r))\right)N(r)^{-2}
    \end{equation*}
     Note that both $N(r)$ and $P_r N(r)$ are independent of $x$ and consequently expanding the square and integrating against $x$ yields
    \begin{equation*}
        \begin{split}
            \norm{P_r \widetilde \Phi(\ell(r),\cdot)  }^2 = N(r)^{-4}\Bigg{ (}&N(r)^{2} \int_{\R^4}\abs{P_r \phi(r,x)}^2 dx  + (P_r N(r))^2\int_{\R^4} \abs{ \phi(\ell(r), x)}^2 dx\\
            &-2\left[ N(r)(P_r N(r)) \int_{\R^4} \phi(r,x) P_r \phi(r,x)  dx \right] \Bigg{ )}. 
        \end{split}
    \end{equation*}
     Using the definition of $N(r)$ together with the fact that $N(r)$ and  $P_r N(r)$ are independent of $x$ we find
        \begin{equation} \label{sec5:eq:r_der_square_compl.}
        \begin{split}
            \norm{P_r \widetilde \Phi(\ell(r),\cdot)  }^2 
            =N(r)^{-4}\Bigg{ (}&N(r)^{2} \int_{\R^4}\abs{P_r \phi(r,x)}^2 dx +   (P_r N(r))^2 N(r)^2\\
            &-N(r)(P_r N(r)) 2 \int_{\R^4} \phi(r,x) P_r \phi(r,x)  dx  \Bigg{ )} 
        \end{split}
    \end{equation}
    Since $P_r \abs{\phi(r,x)}^2 = 2 \phi(r,x) P_r \phi(r,x)$ we find 
    \begin{equation} \label{sec5:eq:r_der_out}
        \begin{split}
            2 \int_{\R^4} \phi(r,x) P_r \phi(r,x)  dx  = P_r N(r)^2 = 2 N(r) \left( P_r N(r) \right). 
        \end{split}
    \end{equation}
    Inserting \eqref{sec5:eq:r_der_out} into \eqref{sec5:eq:r_der_square_compl.} we arrive at
        \begin{equation} \label{sec5:eq:r_der_simpl}
        \begin{split}
            \norm{P_r \widetilde \Phi(\ell(r),\cdot)  }^2 
            &=N(r)^{-2} \left( \int_{\R^4}\abs{P_r \phi(r,x)}^2 dx -   (P_r N(r))^2 \right) \\
            &\leq N(r)^{-2} \int_{\R^4}\abs{P_r \phi(r,x)}^2 dx \, .
        \end{split}
    \end{equation}
    Consequently, inserting \eqref{sec5:eq:r_der_simpl} into  the left-hand side of\eqref{sec5:eq:kin_est_first_step} concludes the proof of Step 1.

\medskip
\textbf{Step 2:} In this step we estimate the right-hand side of \eqref{sec5:eq:kin_est_first_step} on the sets $A_\rho(\pm \ell/2)$ and $B_\rho(\pm\ell/2)$ and show
\begin{equation}
    \begin{split}
    \norm{P_r\phi(r,\cdot)}^2 -\frac{1}{4} \norm{P_{(x_3,x_4)}\phi(r,\cdot)}^2  =&\norm{P_r\Gamma(\ell(r),\cdot) }^2_{L^2(\Omega_\rho(\ell/2))} - \frac{1}{4} \norm{P_{(x_3,x_4)}\Gamma(\ell(r),\cdot)}^2_{L^2(\Omega_\rho(\ell/2))} \label{sec5:eq:kin_est_second_step}\\
    &+ O(L^{-3+\theta}\log(L)).
    \end{split}
\end{equation}
We begin with the set $B_\rho(\pm \ell(r)/2)$. By definition of $\phi$ in \eqref{sec5:eq:Testfunktion auf Gebieten} we have
\begin{equation*}
    \phi(r,x) =\varphi_0(x\pm \ell(r)/2) \quad \text{for all } x\in B_\rho(\pm\ell(r)/2)\, .
\end{equation*}
Consequently, by the chain rule for derivatives we have 
\begin{equation*}
    \|P_r\phi(r,\cdot)\|^2_{L^2(B_\rho(\pm\ell/2))} - \frac{1}{4}\|P_{(x_3,x_4)}\phi(r,\cdot)\|^2_{L^2(B_\rho(\pm\ell/2))} = 0.
\end{equation*}
We continue with the integral over the annuli  $A_\rho(\pm \ell(r)/2)$. Let $A_\pm=A_\rho(\pm \ell(r)/2)$ we then show
\begin{equation}\label{sec5:eq:proof_kin_est_step_2_aim}
    \begin{split}
    &\norm{\nabla_r\phi(r,\cdot)}^2_{L^2(A_{\pm})} -\frac{1}{4} \norm{\nabla_{(x_3,x_4)}\phi(r,\cdot)}^2_{L^2(A_{\pm})}\\
            = &\norm{\nabla_r\Gamma(\ell(r),\cdot) }^2_{L^2(A_{\pm})} - \frac{1}{4} \norm{\nabla_{(x_3,x_4)}\Gamma(\ell(r),\cdot)}^2_{L^2(A_{\pm})} + O(L^{-3+\theta}\log(L)).
    \end{split}
\end{equation}
We conduct the proof for $A_+$, the proof for $A_-$ is similar. By the definition of $f$ and $\Gamma$ in \eqref{sec5:eq:def_Gamma&f} we have for any $r\in \R^2$ and $x\in A_\rho(\ell(r))$
\begin{equation}\label{sec5:eq:der_fR}
    P_r\phi(r,x) = P_rf(\ell(r),x) + P_r\Gamma(\ell(r),x).
\end{equation}
By abuse of notation we will write $f(\ell(r),x)=f(r,x)$ and $\Gamma(\ell(r),x)=\Gamma(r,x)$ in the following. 
Inserting \eqref{sec5:eq:der_fR} into the norms in left-hand side of \eqref{sec5:eq:proof_kin_est_step_2_aim} and completing the square yields
\begin{equation*}
    \begin{split}
         \norm{P_r\phi}^2_{L^2(A_{+})} &=\norm{P_r f}^2_{L^2(A_{+})} + \norm{P_r\Gamma}^2_{L^2(A_{+})}+ 2 \langle P_r f, P_r\Gamma\rangle_{L^2(A_{+})} ,\\
          \norm{P_{(x_3,x_4)}\phi}^2_{L^2(A_{+})} &=\norm{P_{(x_3,x_4)} f}^2_{L^2(A_{+})} + \norm{P_{(x_3,x_4)} \Gamma}^2_{L^2(A_{+})} + 2 \langle P_{(x_3,x_4)} f, P_{(x_3,x_4)}\Gamma\rangle_{L^2(A_{+})},
    \end{split}
\end{equation*}
where, by abuse of notation, we will write
\begin{equation*}
    \norm{P_r\phi}^2_{L^2(A_{+})} = \norm{P_r\phi(r,\cdot)}^2_{L^2(A_{+})}
\end{equation*}
and similar for the other norms and inner products. Consequently, to prove \eqref{sec5:eq:proof_kin_est_step_2_aim} it suffices to show
\begin{equation} \label{sec5:eq:suff1}
    \norm{P_r f}^2_{L^2(A_+)}-\frac{1}{4}\norm{P_{(x_3,x_4)}f}^2_{L^2(A_+)} \in O(L^{-3+\theta}\log(L))
\end{equation}
and
\begin{equation}\label{sec5:eq:suff2}
     \langle P_r f, P_r\Gamma \rangle_{L^2(A_{+})} - \frac{1}{4} \langle P_{(x_3,x_4)} f, P_{(x_3,x_4)}\Gamma\rangle_{L^2(A_{+})} \in O(L^{-3+\theta}\log(L)).
\end{equation}
Next we apply Lemma~\ref{app:lem:Anulus terme} which is valid for the choice  $\theta=4/(4+\delta)\in(3/(3+\delta),1)$ and states
\begin{equation} \label{sec5:eq:decay_fr}
    \abs{ P_x f(r,x)} \lesssim L^{-2-\theta} \log(L) \, \text{ and } \abs{ P_r f(r,x)} \lesssim L^{-2-\theta} \log(L) \quad x \in A_+
\end{equation}
The proof of Lemma~\ref{app:lem:Anulus terme} is deferred to Section \ref{subsec:6.1 interpolation}.

The set $A_+$ is an annulus centered around $\ell/2$ with inner radius $\rho = L^\theta$ and outer radius $2\rho$. Consequently, $\abs{A_+} \lesssim L^{4\theta}$ which together with the decay properties of $f$ above  immediately proves \eqref{sec5:eq:suff1}. We continue by proving \eqref{sec5:eq:suff2}. Recall that by definition $\mu=\mu(r)=\mu_0/L$ and
\begin{equation*}
            \Gamma(r,x ) = G_\mu(x+\ell(r)/2) +G_\mu(x-\ell(r)/2).
\end{equation*}
To show \eqref{sec5:eq:suff2} we discuss the following estimate for both functions $G_\mu(\cdot \pm \ell/2)$ independently, meaning we prove
\begin{equation}\label{sec5:eq:suff2_both_trans}
    \begin{split}
        &\langle P_r f,  P_r G_\mu(\cdot \pm \ell(r)/2) \rangle_{L^2(A_+)} - \frac{1}{4}\langle P_{(x_3,x_4)} f, P_{(x_3,x_4)}  G_\mu(\cdot \pm \ell(r)/2)\rangle_{L^2(A_{+})} \\
        &\in O(L^{-3+\theta}\log(L)).
    \end{split}
\end{equation}
The expression above involving $G_\mu(\cdot +\ell/2) $ is easier to treat since a short computation shows for $x\in B_{2\rho}(\ell(r)/2) $
\begin{equation*}
        \abs{P_{(x_3,x_4)} G_\mu(x+\ell(r)/2)} \lesssim L^{-3}\quad\text{ and }\quad \abs{P_r G_\mu(x+\ell(r)/2)} \lesssim L^{-3}.
\end{equation*}
The inequality \eqref{sec5:eq:suff2_both_trans} follows in this case from the estimate above together with \eqref{sec5:eq:decay_fr} and Schwarz inequality immediately. To prove \eqref{sec5:eq:suff2} it remains to show \eqref{sec5:eq:suff2_both_trans} for the term involving $G_\mu(\cdot -\ell(r)/2) $.

A direct calculation using the definition of $G_\mu$ shows
\begin{equation*}
  P_r G_\mu(x-\ell(r)/2) = -i \frac{\mu^2}{L}K_0\left(\mu\abs{ x-\ell(r)/2}\right)\frac{r}{L}-\frac{1}{2}P_{(x_3,x_4)}G_\mu(x-\ell(r)/2).
\end{equation*}
 We define $g(r,\cdot)\colon A_\rho(\ell/2)\to\R^2$ as
\begin{equation*}
    -i g(r,x) = \frac{1}{2}P_{(x_3,x_4)}f(r,x) + P_r f(r,x)
\end{equation*}
Then the left-hand side of \eqref{sec5:eq:suff2_both_trans} with $\mu = \mu_0/L$ reads
\begin{equation} \label{sec5:eq:step_2_gr_terms}
    \begin{split}
    &\scp{P_r f}{P_r G_\mu(\cdot-\ell(r)/2)}_{L^2(A_+)} - \frac{1}{4}\scp{P_{(x_3,x_4)}f}{P_{(x_3,x_4)}G_\mu(\cdot-\ell(r)/2)}_{L^2(A_+)}\\
    = &\frac{1}{2}i\scp{P_{(x_3,x_4)}f}{\mu_0^2L^{-3}K_0\left(\mu_0L^{-1}\abs{\cdot-\ell(r)/2}\right)\frac{r}{L}}_{L^2(A_+)}\\
    &-i\scp{g}{P_rG_\mu(\cdot-\ell(r)/2)}_{L^2(A_+)}.
    \end{split}
\end{equation}
Applying the series expansions of Bessel functions in \eqref{app1:eq:K-expansions} one can find on $A_+$
\begin{equation}\label{sec5:eq:decaypropertiesBesselAbleitungR:01}
    \begin{split}
    \abs{\frac{\mu_0^2}{L^3}K_0\left(\mu\abs{x-\ell(r)/2}\right)}&\lesssim L^{-3}\log(L),  \\
    \abs{P_rG_\mu(x-\ell(r)/2)}&\lesssim L^{-3\theta} .
    \end{split}
\end{equation}
For the choice $\theta=4/(4+\delta)>3/(3+\delta)$ we will prove in Lemma~\ref{app:lem:Anulus terme} the following estimates on the set $A_+$
\begin{equation}\label{sec5:eq:decaypropertiesBesselAbleitungR}
    \begin{split}
        \abs{g(r,x)}&\lesssim L^{-3}\log(L), \\
        \abs{P_{(x_3,x_4)}f(r,x)}&\lesssim L^{-2-\theta}\log(L).
    \end{split}
\end{equation}
The proof of Lemma~\ref{app:lem:Anulus terme} is deferred to Section~\ref{subsec:6.1 interpolation}. Applying the estimates in \eqref{sec5:eq:decaypropertiesBesselAbleitungR:01} and \eqref{sec5:eq:decaypropertiesBesselAbleitungR} together with the fact that $\abs{A_+}\lesssim L^{4\theta}$ to the right--hand side of \eqref{sec5:eq:step_2_gr_terms} yields \eqref{sec5:eq:suff2_both_trans} which concludes step 2. 

\medskip
\noindent\textbf{Step 3:} In this step we combine the result of Proposition \ref{sec5:prop:Ableitung nach R von G mu Termen} and the previous steps to finish the proof of Lemma~\ref{sec5:lem:aux}.
Combining \eqref{sec5:eq:kin_est_first_step} and \eqref{sec5:eq:kin_est_second_step} we arrive at
\begin{equation*}
    \begin{split}
        \left \langle \widetilde\Phi(\ell,\cdot), \left(P_r^2-\frac{1}{4}P_{(x_3,x_4)}^2\right) \widetilde\Phi(\ell,\cdot)  \right\rangle&\leq \norm{\phi(\ell(r),\cdot}^{-2}\bigg[\norm{P_r\Gamma }^2_{L^2(\Omega_\rho(\ell/2))}\\
        &- \frac{1}{4} \norm{P_{(x_3,x_4)}\Gamma}^2_{L^2(\Omega_\rho(\ell/2))} + O(L^{-3+\theta}\log(L))\bigg].%\label{sec5:eq:kin_est_second_step}
    \end{split}
\end{equation*}
Applying Proposition~\ref{sec5:prop:Ableitung nach R von G mu Termen} then yields
\begin{equation}\label{sec5:eq:almost_done_kin}
    \left \langle \widetilde\Phi(\ell,\cdot), \left(P_r^2-\frac{1}{4}P_{(x_3,x_4)}^2\right) \widetilde\Phi(\ell,\cdot)  \right\rangle \leq \norm{\phi(\ell(r),\cdot)}^{-2}\left(C_5(\mu_0) L^{-2} + O(L^{-3+\theta}\log(L))\right).
\end{equation}
According to Lemma~\ref{sec5:lem:Norm Testfunktion} we have
\begin{equation*}
    \norm{\phi(\ell(r),\cdot)}^{2} = 4\pi^2\log(L) + O(1)
\end{equation*}
and combining this with \eqref{sec5:eq:almost_done_kin} we arrive at
\begin{equation}\label{sec5:eq:kin_est_done}
    \left \langle \widetilde\Phi(\ell,\cdot), \left(P_r^2-\frac{1}{4}P_{(x_3,x_4)}^2\right) \widetilde\Phi(\ell,\cdot)  \right\rangle \leq \left((4\pi^2)^{-1}C_5(\mu_0) L^{-2}\log(L)^{-1} + O(L^{-3+\theta})\right).
\end{equation}
Finally we note that in the limit $\mu_0 \to 0^+$
\begin{equation*}
   (4\pi^2)^{-1}C_5(\mu_0) =\frac{1}{3}\left(1+\mu_0 K_1(\mu_0)+\tfrac{7}{2}\mu_0^2 K_0(\mu_0)\right) \to \frac{2}{3}
\end{equation*}
which follows from series expansion of Bessel functions in \eqref{app1:lem:Kj_expansion}. 
Recall that we aim to show and upper bound for the left--hand side of \eqref{sec5:eq:kin_est_done} of the form
$(2/3+\varepsilon)L^{-2}\log(L)^{-1}$ for some given fixed $\varepsilon>0$. Consequently, we choose $L_0(\varepsilon)>0$ large enough such that for any $L>L_0(\varepsilon)$ the contribution in $O(L^{-3+\theta})$ can only add a term of the size $(\varepsilon/2)L^{-2}\log(L)^{-1}$ and choose $\mu_0>0$ small enough such that
\begin{equation*}
    (4\pi^2)^{-1}C_5(\mu_0) =\frac{1}{3}\left(1+\mu_0 K_1(\mu_0)-\tfrac{7}{2}\mu_0^2 K_0(\mu_0)\right) \leq  \frac{2}{3} + \varepsilon/2
\end{equation*}
This concludes the proof of Lemma~\ref{sec5:lem:aux}.
\end{proof}

\section{Auxiliary Estimates for the Test Function}\label{sec:aux_lemmatas}
In this section we give the proof of Lemma~\ref{app:lem:Anulus terme} and Proposition \ref{sec5:prop:Ableitung nach R von G mu Termen}. Throughout the section for a given $\ell \in \R^4\setminus\{0\}$ with $L=|\ell|$ we have
\[
\rho=\rho(L)=L^{\theta}, \qquad \mu=\mu(L)=\mu_0/L,
\]
with parameters $\theta=4/(4+\delta)>3/(3+\delta)$ and $\mu_0>0$. 
\subsection{On the Interpolation Between $\Gamma$ and $\varphi_0$}\label{subsec:6.1 interpolation}
We continue by proving the following Lemma on the decay of the interpolation function $f$ defined in \eqref{sec5:eq:def_Gamma&f}.

\begin{lemma}\label{app:lem:Anulus terme}
For given $\ell\in \R^4$ with $L=|\ell|>0$ let $f(\ell,\cdot)$ be the function defined in \eqref{sec5:eq:def_Gamma&f}. We always assume $L>0$ large enough. Then for all $x\in A_\rho(\pm \ell/2)$,
\begin{equation}\label{sec6:eq:Abschätzung f_R}
    \abs{f(\ell,x)}\lesssim  L^{-2}\log(L)
\end{equation}
and
\begin{equation}\label{sec5:eq:Abschätzung nabla f_R}
    \abs{P_x f(\ell,x)}\lesssim  L^{-2-\theta}\log(L).
\end{equation}
The weak derivative of $f$ with respect to $\ell$ admits the decomposition
\begin{equation}\label{definition_g_remainder}
   P_{\ell} \, f(\ell,x)= -ig(\ell,x)\mp \frac{1}{2}P_{x} f(\ell,x), \quad  x\in A_\rho(\pm\ell/2).
\end{equation}
The function $g\colon A_\rho(+\ell/2)\cup A_\rho(-\ell/2)\to\R$ obeys
\begin{equation}\label{sec5:eq:Abschätzung g_r}
    \abs{g(\ell,x)}\lesssim L^{-3}\log(L).
\end{equation}
\end{lemma}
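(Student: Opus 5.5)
By the reflection symmetry $\phi(\ell,\cdot)=\phi(-\ell,\cdot)$ it suffices to work on $A_\rho(\ell/2)$. There we set $y=x-\ell/2$, $t=|y|\in[\rho,2\rho]$, and write $u=u_\rho(y)=2-t/\rho\in[0,1]$. From Definition~\ref{sec5:def:Testfunktion} and \eqref{sec5:eq:def_Gamma&f}, on $A_\rho(\ell/2)$ we have
\begin{equation*}
f(\ell,x)=u\bigl(t^{-2}-G_\mu(y)\bigr)+u\,\epsilon_1(y)-u\,G_\mu(y+\ell).
\end{equation*}
All estimates follow from this identity, Lemma~\ref{sec5:lem:AbfallverhaltenResonanz} with $c_0=1$, $\kappa=1/2$, and the small-argument expansions of $K_0,K_1,K_2$ from Appendix~\ref{app:Bessel}. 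Throughout, $a=\mu t\in[\mu_0L^{\theta-1},2\mu_0L^{\theta-1}]\to0$.

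\emph{Bound on $f$.} Since $aK_1(a)=1+\tfrac{a^2}{2}\log(a/2)+O(a^2)$, we get $t^{-2}-G_\mu(y)=t^{-2}(1-aK_1(a))=O(\mu^2|\log a|)=O(L^{-2}\log L)$. The error term satisfies $|\epsilon_1(y)|\lesssim t^{-2-\delta}\le L^{-\theta(2+\delta)}\lesssim L^{-2}$, because $\theta(2+\delta)\ge2$ for $\theta=4/(4+\delta)$. Finally $|y+\ell|\ge L/2$, so $G_\mu(y+\ell)\lesssim L^{-2}$. Together with $u\le1$ this gives \eqref{sec6:eq:Abschätzung f_R}.

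\emph{Bound on $P_xf$.} We use the decomposition $\nabla f=\sum_{j=1}^5h_j$ from the proof of Proposition~\ref{sec5:prop:nabla Annulus Terme}.
\begin{itemize}
\item $h_1$ and $h_2$ carry the factor $\rho^{-1}$ in front of the quantities bounded in the previous step, so they are $O(L^{-2-\theta}\log L)$.
\item $h_3$ contains $t^{-3}\bigl(a^2K_2(a)-2\bigr)$. Using $a^2K_2(a)=2-\tfrac{a^2}{2}+O(a^4\log a)$, this is $O(\mu^2/t)=O(L^{-2-\theta})$.
\item $h_4$ is $O(\mu^2/L)=O(L^{-3})$.
\item $h_5$ is bounded by $L^{-\theta}t^{-2-\delta}+t^{-3-\delta}\lesssim L^{-\theta(3+\delta)}$, which is $\le L^{-3}$ since $\theta>3/(3+\delta)$.
\end{itemize}
Because $\theta<1$, every term is $\lesssim L^{-2-\theta}\log L$.

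\emph{Bound on $g$.} Since $\mu=\mu_0/L$ and $\rho=L^\theta$ depend on $\ell$, differentiating in $\ell$ at fixed $x$ splits into two parts. Differentiating through $y=x-\ell/2$ and $y+\ell=x+\ell/2$ gives $-\tfrac12\nabla_y$ on each term with $y$-dependence. This reproduces $-\tfrac12P_xf$, apart from the $G_\mu(x+\ell/2)$ term, which contributes $\pm\tfrac12\nabla G_\mu(y+\ell)+\nabla G_\mu(y+\ell)=O(L^{-3})$. What remains, $g$, collects the derivatives through the parameters:
\begin{itemize}
\item $\partial_\ell\rho=\theta L^{\theta-1}\hat\ell$, which acts on $u$ and gives $\partial_\ell u=(t/\rho^2)\theta L^{\theta-1}\hat\ell$, of size $O(L^{-1})$; multiplied by the bracket of size $O(L^{-2}\log L)$ this is $O(L^{-3}\log L)$.
\item $\partial_\ell\mu=-\mu_0L^{-2}\hat\ell$, which acts on $G_\mu$. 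Since $\partial_\mu(\mu K_1(\mu t)/t)=-\mu K_0(\mu t)$, the resulting terms are $O(\mu L^{-2}|\log a|)=O(L^{-3}\log L)$.
\end{itemize}
The $\epsilon_1$ term contributes nothing beyond what is already accounted for.

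The main obstacle is the cancellation hidden in $h_3$ and in the leading part of $f$: the crude bound $t^{-3}$ for each of $\nabla(t^{-2})$ and $\nabla G_\mu$ is far too weak. We need the precise expansions of $K_1$ and $K_2$ to second order, including the $a^2\log a$ term, to extract the factor $\mu^2$. These expansions should be checked carefully against the formulas \eqref{app1:eq:K-expansions} in the appendix.
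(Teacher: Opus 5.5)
Your proof is correct and follows the paper's argument essentially step by step. You use the same representation $f=u_\rho(y)\bigl(\varphi_0(y)-G_\mu(y)-G_\mu(y+\ell)\bigr)$ on the annulus, the same $h_1,\dots,h_5$ decomposition of $\nabla_x f$ with the same size bounds, and the same chain-rule isolation of $g$ from the $\ell$-dependence of $\rho$ and $\mu$ (using $\partial_\mu G_\mu=-\mu K_0$). Your rewritings $t^{-2}-G_\mu(y)=t^{-2}(1-aK_1(a))$ and $h_3\propto t^{-3}(a^2K_2(a)-2)$ simply make explicit the Bessel cancellations the paper calls ``immediate''. The only blemish is your garbled expression for the leftover $G_\mu(x+\ell/2)$ contribution, which is exactly $-u_\rho(y)\nabla G_\mu(y+\ell)=O(L^{-3})$.
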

\begin{proof}
We establish the bounds on the set \(A_\rho(\ell/2)\); the argument for \(x \in A_\rho(-\ell/2)\) is then similar. The proof proceeds in three steps. First, we prove
\eqref{sec6:eq:Abschätzung f_R}. Second, we compute \(P_x f\) and
show the bound \eqref{sec5:eq:Abschätzung nabla f_R}. Third, we
compute \(P_\ell f\), isolate the function \(g\), and derive the
estimate \eqref{sec5:eq:Abschätzung g_r}.

    \medskip
    \noindent\textbf{Step 1:} 
            By definition we have for any $x\in A_\rho(\ell/2)$
            \begin{equation}\label{sec6:eq:fR_long}
                f(\ell,x) = u_\rho(x-\ell/2)\left(\varphi_0(x-\ell/2)-G_\mu(x-\ell/2) - G_\mu(x+\ell/2)\right).
            \end{equation}
           For $x\in A_\rho(\ell/2)$ we have $\abs{u_\rho(x-\ell/2)}\leq 1$. Using the series expansion of modified Bessel functions in \eqref{app1:eq:K-expansions} and the properties of $\varphi_0$ in Lemma \ref{sec5:lem:AbfallverhaltenResonanz} implies uniformly in $x\in A_\rho(\ell/2)$
            \begin{equation} \label{sec6:eq:first estimates Step 1 Annulus Lemma}
                \begin{split}
                    \abs{\varphi_0(x-\ell/2)-G_\mu(x-\ell/2)}&\lesssim L^{-2}\log(L) + \abs{\epsilon_1(x-\ell/2)},\\
                \abs{G_\mu(x+\ell/2)}&\lesssim L^{-2}.
                \end{split}
            \end{equation}
            From Lemma~\ref{sec5:lem:AbfallverhaltenResonanz} we conclude $\abs{\epsilon_1(x-\ell/2)}\lesssim\abs{x-\ell/2}^{-2-\delta}$. With the choice $\theta>3/(3+\delta)$ we see for $x\in A_\rho(\ell/2)$
            \begin{equation*}
                \abs{\epsilon_1(x-\ell/2)}\lesssim L^{-\theta(2+\delta)} \in O(L^{-3+\theta}).
            \end{equation*}
            Inserting this into \eqref{sec6:eq:first estimates Step 1 Annulus Lemma} implies
            \begin{equation}\label{sec6:eq:second estimates Step 1 Annulus Lemma}
                \begin{split}
                    \abs{\varphi_0(x-\ell/2)-G_\mu(x-\ell/2)}&\lesssim L^{-2}\log(L),\\
                \abs{G_\mu(x+\ell/2)}&\lesssim L^{-2}.
                \end{split}
            \end{equation}
            Using \eqref{sec6:eq:second estimates Step 1 Annulus Lemma} in \eqref{sec6:eq:fR_long} directly proves the inequality \eqref{sec6:eq:Abschätzung f_R}.
            
    \medskip
    \noindent\textbf{Step 2:} We will compute $P_x f= -i\nabla_x f$ and show \eqref{sec5:eq:Abschätzung nabla f_R}. 
            Taking the derivative of $G_\mu$ defined in \eqref{sec:eq:Gmu_def} and $\varphi_0$ in \eqref{sec5:eq:exact_tail} gives
            \begin{equation}\label{sec6:eq:Ableitung_Gmu}
                \begin{split}
                    \nabla_x G_\mu(x)&=-\mu^2\frac{K_2(\mu\abs{x})}{\abs{x}}\frac{x}{\abs{x}} , \quad x\neq 0, \\
                    \nabla_x \varphi_0(x) &= - 2\frac{x}{\abs{x}^{4}}+ \epsilon_2(x), \quad \abs{x}>\rho_0\,.
                \end{split}
            \end{equation}
            and by definition of $u_\rho$ in \eqref{sec5:cut_off_u}
            \begin{equation*}
                \nabla_x u_\rho(x-\ell/2) = -\rho^{-1}\left( \frac{x-\ell/2}{\abs{x-\ell/2}}\right), \quad x\in A_\rho(\ell/2).
            \end{equation*}
        Combining the derivatives above yields for $x\in A_\rho(\ell/2)$
            \begin{equation*}%\label{sec5:eq:Zerlegung nabla f_R}
                P_x f(\ell,x) =-i\nabla_x f(\ell,x)= -i\sum_{j=1}^5 h_{j}(\ell,x),
            \end{equation*}
            where functions $h_{j}(\ell,\cdot)\colon A_\rho(\ell/2)\to \R^4$ are defined as
            \begin{equation*}%\label{sec6:eq:def h_R,j}
                \begin{split}
                    h_{1}(\ell,x)\coloneqq &-\frac{1}{\rho}\left(\abs{x-\ell/2}^{-2}-G_\mu(x-\ell/2)\right)\frac{x-\ell/2}{\abs{x-\ell/2}},\\
                    h_{2}(\ell,x)\coloneqq & \frac{1}{\rho} G_\mu(x+\ell/2)\frac{x-\ell/2}{\abs{x-\ell/2}} \\
                    h_{3}(\ell,x)\coloneqq &u_\rho(x-\ell/2)\left(\mu^2\frac{K_2\left(\mu\abs{x-\ell/2}\right)}{\abs{x-\ell/2}}-\frac{2}{\abs{x-\ell/2}^3}\right)\frac{x-\ell/2}{\abs{x-\ell/2}},\\
                    h_{4}(\ell,x) \coloneqq&u_\rho(x-\ell/2)\left(\mu^2\frac{K_2\left(\mu\abs{x+\ell/2}\right)}{\abs{x+\ell/2}} \right)\frac{x+\ell/2}{\abs{x+\ell/2}},\\
                    h_5(\ell,x)\coloneqq & -\frac{1}{\rho} \epsilon_1(x-\ell/2)\frac{x-\ell/2}{\abs{x-\ell/2}} + u_\rho(x-\ell/2)\epsilon_2(x-\ell/2).
                \end{split}
            \end{equation*}
            \textbf{Note: }By definition of $\epsilon_1$ and $\epsilon_2$ in Lemma~\ref{sec5:lem:AbfallverhaltenResonanz} one has $\nabla_x \epsilon_1(x) = \epsilon_2(x)$ and recall that $u_\rho(x-\ell/2)=2-\abs{x-\ell/2}/\rho$ and on $x\in A_\rho(\ell/2)$ we have $0\leq u_\rho(x-\ell/2)\leq 1$.\\ 
            
            For $L>0$ large enough it follows $\abs{x+\ell/2}\geq L/2$ for $x\in A_\rho(\ell/2)$. Using the series expansion of modified Bessel functions of the second kind in \eqref{app1:eq:K-expansions}, the decay of $\epsilon_1$ and $\epsilon_2$ in Lemma \ref{sec5:lem:AbfallverhaltenResonanz} together with $\theta>3/(3+\delta)$ and $\mu=\mu_0/L$ give immediately on $A_\rho(\ell/2)$ the following bounds
            \begin{equation}\label{sec6:eq:estimates h_R,j}
                \begin{split}
                    \abs{h_{1}(\ell,x)}&\lesssim L^{-2-\theta}\log(L),\\
                    \abs{h_{2}(\ell,x)}&\lesssim L^{-2-\theta},\\
                    \abs{h_{3}(\ell,x)}&\lesssim L^{-2-\theta},\\
                    \abs{h_{4}(\ell,x)}&\lesssim L^{-3},\\
                    \abs{h_{5}(\ell,x)}&\lesssim L^{-3}.
                \end{split}
            \end{equation}
            From the decay of $h_{1}$ we conclude inequality \eqref{sec5:eq:Abschätzung nabla f_R}.

            \medskip
            \noindent\textbf{Step 3:} We compute \(P_{\ell} f\), isolate the function \(g\), and derive the estimate \eqref{sec5:eq:Abschätzung g_r}. First note that by chain rule ($P_\rho = -i\partial_\rho$, $P_\mu = -i\partial_\mu$)
            \begin{equation}\label{sec6:eq:Kettenregel u_rho}
                \begin{split}
                    \nabla_{\ell} u_{\rho}(x-\ell/2) &= \left(\partial_\rho u_\rho(x-\ell/2) \right)\nabla_{\ell}\rho(\ell)\\
                    &\phantom{=}+ \left(\nabla_x u_\rho(x-\ell/2)\right)\left(\frac{x-\ell/2}{\abs{x-\ell/2}}\cdot \nabla_{\ell}\left(\abs{x-\ell/2}\right)\right)\\
                    &=\theta L^{\theta -1}\left(\partial_\rho u_\rho(x-\ell/2)\right)(\ell/L) - \frac{1}{2}\nabla_x u_\rho(x-\ell/2)
                \end{split}
            \end{equation}
            and
            \begin{equation}\label{sec6:eq:Kettenregel G_mu}
            \begin{split}
                \nabla_{\ell}G_\mu(x\pm\ell/2) &= \left(\partial_\mu G_\mu(x\pm\ell/2)\right)\nabla_{\ell}\mu + \left(\nabla_x G_\mu(x\pm\ell/2)\right)\left(\frac{x\pm\ell/2}{\abs{x-\ell/2}}\cdot \nabla_{\ell}\abs{x\pm\ell/2}\right)\\
                &=-\mu_0 L^{-2}\left(\partial_\mu G_\mu(x\pm\ell/2)\right)(\ell/L) \pm\frac{1}{2}\nabla_x G_\mu(x\pm\ell/2)
            \end{split}
            \end{equation}
            
            Using the definitions of $u_\rho$ in \eqref{sec5:cut_off_u} and $G_\mu$ in \eqref{sec:eq:Gmu_def} we find the derivatives
            \begin{equation}\label{sec6:eq:Ableitung nach mu und rho}
            \begin{split}
                 \partial_\rho u_\rho(x-\ell/2) &= \frac{\abs{x-\ell/2}}{L^{2\theta}}, \\
                 \partial_\mu G_\mu(x\pm\ell/2) &=  \frac{K_1\left( \mu \abs{x\pm\ell/2}\right)}{\abs{x\pm\ell/2}} + \mu K_1'\left( \mu \abs{x\pm\ell/2}\right).
            \end{split}
            \end{equation}
            Applying the recurrence formula for derivatives of Bessel functions in \cite[p. 79 eq. (3)]{book:W:1944} (see \eqref{app1:eq:reoc}) we find
            \begin{equation}\label{sec6:eq:ableitung Gmu nach mu}
                \partial_\mu G_\mu(x\pm\ell/2) = - \mu K_0\left(\mu\abs{x\pm\ell/2}\right).
            \end{equation}
            In the lemma, the function $g$ is defined on the annulus $A_\rho(\ell/2)$ as
            \begin{equation*}
                -ig(\ell,x)=P_{\ell} f(\ell,x)+\frac{1}{2}P_x f(\ell,x)\, .
            \end{equation*}
            We find by applying \eqref{sec6:eq:Kettenregel u_rho}- \eqref{sec6:eq:ableitung Gmu nach mu} the explicit expression
            \begin{equation}\label{sec6:g_est}
                \begin{split}
                    g(\ell,x)= &-\mu_0 L^{-2} u_\rho(x-\ell/2)\left(\mu K_0\left(\mu\abs{x-\ell/2}\right)+\mu K_0\left(\mu\abs{x+\ell/2}\right)\right)\frac{\ell}{L}\\
                &+\left(\varphi_0(x-\ell/2)-G_\mu(x-\ell/2)-G_\mu(x+\ell/2)\right)\theta\frac{\abs{x-\ell/2}}{L^{1+\theta}}\frac{\ell}{L}\\
                &-\underbrace{u_\rho(x-\ell/2)\nabla_x G_\mu(x+\ell/2)}_{=h_{4}(\ell,x)}.
                \end{split}
            \end{equation}
            Using $\mu=\mu_0/L$ and estimating the last two lines in \eqref{sec6:g_est} by application of  \eqref{sec6:eq:second estimates Step 1 Annulus Lemma} and \eqref{sec6:eq:estimates h_R,j} we arrive at
            \begin{equation}\label{sec6:eq:vorletzte Abschätzung g}
                g(\ell,x)= -\mu_0^2 L^{-3} u_\rho(x-\ell/2)\left(
                K_0\left(\mu\abs{x-\ell/2}\right)+ K_0\left(\mu\abs{x+\ell/2}\right)\right)\frac{\ell}{L} + O(L^{-3}\log(L)).
            \end{equation}
            Due to the series expansion for Bessel functions in \eqref{app1:eq:K-expansions} and since $\mu\abs{x-\ell/2}\sim L^{-1+\theta} $ we see
            \begin{equation}\label{sec6:eq:letzte Abschätzung g}
                \abs{-\mu_0^2 L^{-3} u_\rho(x-\ell/2)\left(
                K_0\left(\mu\abs{x-\ell/2}\right)+ K_0\left(\mu\abs{x+\ell/2}\right)\right)} \lesssim L^{-3}\log(L).
            \end{equation}
            Inserting \eqref{sec6:eq:letzte Abschätzung g} into the right-hand side of \eqref{sec6:eq:vorletzte Abschätzung g} concludes the proof of \eqref{sec5:eq:Abschätzung g_r}.
        \end{proof}

\subsection{Proof of Proposition \ref{sec5:prop:Ableitung nach R von G mu Termen}}\label{subsec64: Energy außerhalb terme}
        \begin{proposition}[=Proposition \ref{sec5:prop:Ableitung nach R von G mu Termen}]\label{sec5:prop:Ableitung nach R von G mu Termen:copy}
   For $\ell \in \R^4$ with $\ell= (0,0,r)$ where $r\in \R^2$ and $L=|\ell(r)|=\abs{r}$ large enough let $\mu=\mu(r)=\mu_0/L$ and $\Gamma(\ell, \cdot )$ be the function which is defined in \eqref{sec5:eq:def_Gamma&f} as
        \begin{equation*}
            \Gamma(\ell,x ) = G_\mu(x+\ell/2) +G_\mu(x-\ell/2).
        \end{equation*}
        Then 
        \begin{equation}\label{sec6:eq:prop Ableitung nach R außerhalb to show}
          \left \langle \Gamma(\ell(r),\cdot ) , \left(P_r^2-\frac{1}{4}P_{(x_3,x_4)}^2\right)  \Gamma(\ell(r),\cdot  )  \right\rangle_{L^2(\Omega_\rho)}  =   C_5(\mu_0) L^{-2}+O(L^{-3+\theta})
        \end{equation}
        where
        \begin{align*}
            C_5(\mu_0) =(4\pi^2/3)\left(1+\mu_0 K_1(\mu_0)+\tfrac{7}{2}\mu_0^2 K_0(\mu_0)\right).
        \end{align*}
        \end{proposition}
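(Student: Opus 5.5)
The plan is to compute the integrand of \eqref{sec6:eq:prop Ableitung nach R außerhalb to show} pointwise, using the chain rule from Step~3 of Lemma~\ref{app:lem:Anulus terme}. There, for $\ell=\ell(r)$ and $\mu=\mu_0/|r|$, one has
\begin{equation*}
\nabla_r G_\mu(x\mp\ell/2)=\frac{\mu^2}{L}K_0(\mu|x\mp\ell/2|)\frac{r}{L}\mp\frac12\nabla_{(x_3,x_4)}G_\mu(x\mp\ell/2).
\end{equation*}
Abbreviate $G_\pm=G_\mu(\cdot\pm\ell/2)$, $e=r/L$, and $a(x)=\mu^2L^{-1}\bigl(K_0(\mu|x-\ell/2|)+K_0(\mu|x+\ell/2|)\bigr)$. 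Then
\begin{equation*}
\nabla_r\Gamma = a\,e-\tfrac12\nabla_{(x_3,x_4)}(G_--G_+).
\end{equation*}
Expanding the squares, the diagonal gradient terms $\tfrac14|\nabla_{(x_3,x_4)}G_\pm|^2$ cancel against those in $\tfrac14|\nabla_{(x_3,x_4)}\Gamma|^2$. This leaves
\begin{equation*}
|\nabla_r\Gamma|^2-\tfrac14|\nabla_{(x_3,x_4)}\Gamma|^2
= a^2 - a\,e\cdot\nabla_{(x_3,x_4)}(G_--G_+) - \nabla_{(x_3,x_4)}G_+\cdot\nabla_{(x_3,x_4)}G_- .
\end{equation*}
So the proposition reduces to three integrals over $\Omega_\rho(\ell/2)$. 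The point of this reduction is that the $L^{-2\theta}$-type singular contributions $|\nabla G_\pm|^2$, which dominated Proposition~\ref{sec5:prop:nabla G mu Terme}, have disappeared. Every remaining term is at most logarithmically singular, like $K_0$, or is a product of two wells.

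Next I extend all three integrals from $\Omega_\rho(\ell/2)$ to $\R^4$, exactly as in Step~2 of the proof of Proposition~\ref{sec5:prop:nabla G mu Terme}. On $B_\rho(\pm\ell/2)$ one has $a\lesssim L^{-3}\log L$ and $|\nabla G_\mu(y)|\lesssim|y|^{-3}$. The far well contributes $O(L^{-3})$ pointwise, and $\int_{B_\rho}|y|^{-3}dy\lesssim\rho$. Hence each ball contribution is bounded by terms of the form
\begin{equation*}
L^{-6}\log^2 L\cdot L^{4\theta},\qquad L^{-3}\log L\cdot L^{\theta},\qquad L^{-3}\cdot L^{\theta}.
\end{equation*}
All of these are $O(L^{-3+\theta})$ after absorbing the logarithm, using $\theta<1$ or slightly shrinking the exponent. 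This matches the accuracy claimed.

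The full-space integrals are then explicit, and I evaluate them by scaling $y\mapsto y/\mu$ together with the Fourier representation $\widehat{G_\mu}\propto(|k|^2+\mu^2)^{-1}$.
\begin{enumerate}
\item[(i)] For $\int a^2$, the diagonal part is
\begin{equation*}
2\,\mu^4L^{-2}\cdot 2\pi^2\mu^{-4}\int_0^\infty s^3K_0(s)^2\,ds=\frac{4\pi^2}{3}L^{-2},
\end{equation*}
using $\int_0^\infty s^3K_0^2\,ds=1/3$. The overlap part $2\mu^4L^{-2}\int K_0(\mu|x-\ell/2|)K_0(\mu|x+\ell/2|)\,dx$ is computed from $K_0(\mu|y|)=-\partial_\mu G_\mu(y)/\mu$ and Corollary~\ref{app2:cor:Faltung Gmu mit sich selbst}, by differentiating $\langle G_\mu(\cdot+\ell/2),G_\mu(\cdot-\ell/2)\rangle$ with respect to the mass parameters. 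It yields a multiple of $\mu_0^2K_0(\mu_0)L^{-2}$.
\item[(ii)] For the mixed term, I integrate by parts in $(x_3,x_4)$, moving the gradient onto $K_0$. Since $\ell\in\{0\}\times\R^2$, the vector $e$ lies in the $(x_3,x_4)$-plane. The self-well pieces $K_0(\mu|y|)\,e\cdot\nabla G_\mu(y)$ are odd in $y$ and vanish, so only cross-well convolutions of the form $\nabla K_0 * G_\mu$ evaluated at $\pm\ell$ remain. These are again obtained by differentiating the closed form of Corollary~\ref{app2:cor:Faltung Gmu mit sich selbst} in $\mu$ and in $\ell$. They produce terms in $\mu_0K_1(\mu_0)$ and $\mu_0^2K_0(\mu_0)$.
\item[(iii)] For the last integral, I use Lemma~\ref{app2:lem:Faltung Ableitung Gmu mit sich selbst} summed only over $j=3,4$. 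This gives
\begin{equation*}
-2\pi^2\bigl(2\mu_0K_1(\mu_0)-\mu_0^2K_2(\mu_0)\bigr)L^{-2}.
\end{equation*}
I then rewrite $K_2=K_0+2K_1/\mu_0$ to express the result through $K_0,K_1$.
\end{enumerate}
Summing (i)–(iii) should reproduce $C_5(\mu_0)$.

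The main obstacle is step (ii) together with the final bookkeeping of constants. The mixed term requires convolution identities not stated so far, namely $\mu$-derivatives of Corollary~\ref{app2:cor:Faltung Gmu mit sich selbst}. Signs there are easy to get wrong, for instance in $\partial_\mu G_\mu=-\mu K_0$ and in the orientation $e\cdot\ell/L=1$. I would first verify the $\mu_0\to0$ limit, which must give $C_5\to 8\pi^2/3$, as a consistency check before fixing the coefficients of $\mu_0K_1$ and $\mu_0^2K_0$.
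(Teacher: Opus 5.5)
Your reduction is the paper's: the chain rule \eqref{eq:grad_r_u_pm}, the pointwise cancellation of $\tfrac14|\nabla_{(x_3,x_4)}G_\pm|^2$, the extension from $\Omega_\rho$ to $\R^4$, and the Fourier identities of Appendix~\ref{app:integrals}. The gap is that you stop exactly where the content of the proposition lies, and some of your predictions for that step are wrong.

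Put $\alpha_\pm=\mu^2L^{-1}K_0(\mu|x\pm\ell/2|)$, so that $a=\alpha_++\alpha_-$.

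\textbf{Overlap in (i).} By Lemma~\ref{app2:lem:Faltung K0 mit sich selbst}, $2\int\alpha_+\alpha_-=\tfrac{2\pi^2}{3}\mu_0^2K_2(\mu_0)L^{-2}$. This is not a multiple of $\mu_0^2K_0(\mu_0)$. Corollary~\ref{app2:cor:Faltung Gmu mit sich selbst} with a single mass does not yield it, since a $\mu$-derivative only produces $K_0*G_\mu$.

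\textbf{Term (ii).} The self-well pieces vanish. Under $x\mapsto-x$ the cross pieces combine to $2\int\alpha_-\,e\cdot\nabla_{(x_3,x_4)}G_+$. Lemma~\ref{app2:lem:Faltung K0 mit Gmu} with $a=\ell/2$, $b=-\ell/2$ gives $\int K_0(\mu|x-\ell/2|)\,\partial_{x_i}G_\mu(x+\ell/2)\,dx=-\pi^2K_0(\mu_0)\ell_i$. Hence (ii) equals $-2\pi^2\mu_0^2K_0(\mu_0)L^{-2}$, with no $K_1$ term. This sign is what one expects: near $\ell/2$, where $\alpha_-$ is largest, $\nabla G_\mu(\cdot+\ell/2)$ points toward $-\ell/2$, i.e.\ against $e$.

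\textbf{Sum.} Your (iii) is correct. Adding it and using $\mu_0^2K_2=\mu_0^2K_0+2\mu_0K_1$ gives
\begin{equation*}
\frac{4\pi^2}{3}L^{-2}+\frac{2\pi^2}{3}\mu_0^2K_2(\mu_0)L^{-2}-2\pi^2\mu_0^2K_0(\mu_0)L^{-2}+2\pi^2\bigl(\mu_0^2K_2(\mu_0)-2\mu_0K_1(\mu_0)\bigr)L^{-2}
=\frac{4\pi^2}{3}\Bigl(1+\mu_0K_1(\mu_0)+\tfrac12\mu_0^2K_0(\mu_0)\Bigr)L^{-2}.
\end{equation*}

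So your closing claim, ``summing (i)--(iii) should reproduce $C_5(\mu_0)$'', cannot be completed: the coefficient of $\mu_0^2K_0(\mu_0)$ comes out as $\tfrac12$, not $\tfrac72$. The paper obtains $\tfrac72$ because \eqref{sec6:eq:value I_3} records this same mixed term as $+2\pi^2\mu_0^2K_0(\mu_0)L^{-2}$. With $\partial_\mu G_-=-\mu K_0(\mu|x-\ell/2|)$ and Lemma~\ref{app2:lem:Faltung K0 mit Gmu} applied as above, it is negative. Your planned $\mu_0\to0$ check cannot detect this, since $\mu_0^2K_0(\mu_0)\to0$. For the same reason the discrepancy is harmless downstream: Lemma~\ref{sec5:lem:aux} only uses $(4\pi^2)^{-1}C_5(\mu_0)\to\tfrac23$. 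What your argument actually proves is the proposition with the corrected constant.

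A smaller point concerns your ball estimates. The bound $a\lesssim L^{-3}\log L$ is false on $B_\rho(\pm\ell/2)$ near the centers, where $K_0$ is log-singular. Integrating instead gives $\int_{B_\rho(\ell/2)}\alpha_-|\nabla G_-|\sim L^{-3+\theta}\log L$, which is not $O(L^{-3+\theta})$, and ``shrinking the exponent'' would change the statement. However, this self-well piece vanishes exactly by parity about the center of the ball, as in \eqref{sec6:eq:I term vanishes}. All other ball contributions are $O(L^{-3+\theta})$, as you say.
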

        \begin{proof}
             For any admissible, $y_1,y_2:\R^2\times \R^4 \to \R$ define
        \begin{equation}
            \begin{split}\label{sec6:eq:def Form Q}
                \mathcal{Q}(y_1,y_2) &\coloneqq \scp{\nabla_r y_1(r,\cdot)}{\nabla_r y_2(r,\cdot)}_{L^2(\Omega_\rho)}\\
                &\phantom{\coloneqq}-(1/4)\scp{\nabla_{(x_3,x_4)} y_1(r,\cdot)}{\nabla_{(x_3,x_4)} y_2(r,\cdot)}_{L^2(\Omega_\rho)},
            \end{split}
            \end{equation}
            and $\mathcal{Q}(y_1)\coloneqq \mathcal{Q}(y_1,y_1)$. By abuse of notation we denote $\Gamma : \R^2 \times \R^4 \to \R$ defined as $\Gamma(r,x) = \Gamma(\ell(r),x)$ by the same latter and therefore make sense of the expression $\mathcal{Q}(\Gamma)$ and similar for functions $G_\mu(\cdot \pm \ell(r)/2)$.
            
            We will show \eqref{sec6:eq:prop Ableitung nach R außerhalb to show}, which then reads
            \begin{equation*}
                \mathcal{Q}(\Gamma) = C_5(\mu_0) L^{-2} + O(L^{-3+\theta}).
            \end{equation*}
            By definition of $\Gamma$ in \eqref{sec5:eq:def_Gamma&f} it follows
            \begin{equation*}
                \mathcal{Q}(\Gamma) =\mathcal{Q}(G_\mu(\cdot+\ell/2))+\mathcal{Q}(G_\mu(\cdot-\ell/2)) +2 \mathcal{Q}(G_\mu(\cdot+\ell/2),G_\mu(\cdot-\ell/2)).
            \end{equation*}
            In the first step we show
            \begin{equation}\label{sec6:eq:goal step1 ableitung nach R}
                \mathcal{Q}(G_\mu(\cdot+\ell/2))+\mathcal{Q}(G_\mu(\cdot-\ell/2)) = (4\pi^2/3) L^{-2}+ O(L^{-3+\theta}).
            \end{equation}
            In the second step we will prove
            \begin{equation*}
                2 \mathcal{Q}(G_\mu(\cdot+\ell/2),G_\mu(\cdot-\ell/2)) = (4\pi^2/3)(\mu_0K_1(\mu_0)+\tfrac{7}{2}\mu_0^2K_0(\mu_0)) + O(L^{-3+\theta})
            \end{equation*}
            which then finishes the proof of \eqref{sec6:eq:prop Ableitung nach R außerhalb to show}. We use the aberration $\Omega_\rho(\ell/2) = \Omega_\rho $ as $\ell(r) = (0,0,r)$ is fixed for some $L>0$ large enough. 
            
            \medskip
            \noindent\textbf{Step 1:} We will show relation \eqref{sec6:eq:goal step1 ableitung nach R}. For any fixed $r\in \R^2$ we define, $u_\pm = G_{\mu(r)}(\cdot \pm \ell(r)/2)$ then the $r$-gradient of $u_\pm$ can be written as
\begin{equation}\label{eq:grad_r_u_pm}
\nabla_r u_\pm(x)
=
\pm \frac{1}{2}\,\nabla_{(x_3,x_4)} u_\pm(x)
-\frac{\mu}{L}\,\frac{r}{L}\,\partial_\mu u_\pm(x).
\end{equation}
Squaring this expression yields
\begin{equation}\label{eq:grad_r_u_pm_squared}
\abs{\nabla_r u_\pm(x)}^2
=
\frac{1}{4}\,\abs{\nabla_{(x_3,x_4)} u_\pm(x)}^2
+\frac{\mu^2}{L^2}\,\abs{\partial_\mu u_\pm(x)}^2
\mp \frac{\mu}{L}\, \frac{r}{L} \Bigl(\nabla_{(x_3,x_4)} u_\pm(x)\,\partial_\mu u_\pm(x)\Bigr).
\end{equation}
Thus, for the bilinear form $\mathcal{Q}$ we find
\begin{equation}\label{eq:Q_u_pm}
\begin{split}
    \mathcal Q(u_\pm)
=
\frac{\mu^2}{L^2}\,\norm{\partial_\mu u_\pm}_{L^2(\Omega_\rho)}^2
\mp \mu\,\frac{1}{L^2}\,
\scp{\,r\cdot\nabla_{(x_3,x_4)} u_\pm\,}{\partial_\mu u_\pm}_{L^2(\Omega_\rho)} \\
= \frac{\mu^2}{L^2}\,\norm{\partial_\mu u_+}_{L^2(\Omega_\rho)}^2
- \mu\,\frac{1}{L^2}\,
\scp{\,r\cdot\nabla_{(x_3,x_4)} u_+\,}{\partial_\mu u_+}_{L^2(\Omega_\rho)}.
\end{split}
\end{equation}
Here we have used the reflection $x \mapsto -x$, which leaves $\Omega_\rho$ invariant and satisfies
$u_+(x)=u_-(-x)$, we obtain
\[
\nabla_{(x_3,x_4)} u_+(x) = -\,\nabla_{(x_3,x_4)} u_-(-x),
\qquad
\partial_\mu u_+(x)=\partial_\mu u_-(-x).
\]
Hence, summing in the following sense, we arrive at
\begin{equation}\label{sec6:eq:Q u pm aufsummiert}
    \mathcal Q(u_+) + \mathcal Q(u_-) = 2\frac{\mu^2}{L^2}\,\norm{\partial_\mu u_\pm}_{L^2(\Omega_\rho)}^2 - 2\mu\,\frac{1}{L^2}\,
\scp{\,r\cdot\nabla_{(x_3,x_4)} u_+\,}{\partial_\mu u_+}_{L^2(\Omega_\rho)}.
\end{equation}
Next we show that by the underlying symmetry
\begin{equation}\label{sec6:eq:I term vanishes}
    \scp{\,r\cdot\nabla_{(x_3,x_4)} u_+\,}{\partial_\mu u_+}_{L^2(\R^4\setminus B_\rho(-\ell/2))}=0.
\end{equation}
This follows from the transformation $x\mapsto x+\ell/2$ which maps
\begin{equation*}
    \R^4\setminus B_\rho(-\ell/2) \to \R^4\setminus B_\rho(0).
\end{equation*}
By the definition of $u_+$ together with the derivative of $G_\mu$ in \eqref{sec6:eq:Ableitung_Gmu} we arrive at
\begin{equation}\label{sec6:eq:I term after transformation}
    \scp{\,r\cdot\nabla_{(x_3,x_4)} u_+\,}{\partial_\mu u_+}_{L^2(\R^4\setminus B_\rho(-\ell/2))}=\int_{\R^4\setminus B_\rho(0)}  \left(-\mu^2\frac{K_2(\mu\abs x)}{\abs x} \frac{r\cdot x}{\abs x}\right) \partial_\mu\left(\mu\frac{K_1(\mu\abs x)}{\abs x}\right) \, dx
\end{equation}
The right-hand side of \eqref{sec6:eq:I term after transformation} vanishes due to symmetry, which proves \eqref{sec6:eq:I term vanishes}. Using \eqref{sec6:eq:I term vanishes} in \eqref{sec6:eq:Q u pm aufsummiert} implies with the definition of $\Omega_\rho$
\begin{equation}\label{sec6:eq: Q u pm aufsummiert Zwischenergebnis}
    \mathcal Q(u_+) + \mathcal Q(u_-) = 2\frac{\mu^2}{L^2}\,\norm{\partial_\mu u_+}_{L^2(\Omega_\rho)}^2 + 2\mu\,\frac{1}{L^2}\,
\scp{\,r\cdot\nabla_{(x_3,x_4)} u_+\,}{\partial_\mu u_+}_{L^2(B_\rho(+\ell/2))}.
\end{equation}
Next we show that $\abs{\nabla_{(x_3,x_4)} u_+}$ and $\abs{\partial_\mu u_+}$ are sufficiently small on $B_\rho(+\ell/2)$. On the set $B_\rho(+\ell/2)$ for $L>0$ large enough we have
            \begin{equation*}
                \mu \abs{x+\ell/2} = \mu_0 L^{-1}\abs{x+\ell/2} \geq  \mu_0/2
            \end{equation*}
            and using that $K_0$ and $K_2$ are monotonically decreasing yields with the derivatives of $G_\mu$ with respect to $x$ and $\mu$ in \eqref{sec6:eq:Ableitung_Gmu} and \eqref{sec6:eq:Ableitung nach mu und rho}
            \begin{equation}\label{sec6:eq:sup_ests:copy}
                \begin{split}
                    \sup_{x\in B_\rho(+ \ell/2)}\abs{\nabla_{x}u_+(x)} &= \sup_{x\in B_\rho(+ \ell/2)}\mu^2 \frac{K_2\left(\mu\abs{x+\ell/2}\right)}{\abs{x+\ell/2}} \lesssim L^{-3}, \\
                    \sup_{x\in B_\rho(+ \ell/2)} \abs{\partial_\mu u_+(x)}&= \sup_{x\in B_\rho(+ \ell/2)} \abs{\mu K_0(\mu|x+ \ell/2|)} \lesssim L^{-1}.
                \end{split}
            \end{equation}
Since $\abs{B_\rho(0)}\sim L^{4\theta}$ and applying \eqref{sec6:eq:sup_ests:copy} we derive from \eqref{sec6:eq: Q u pm aufsummiert Zwischenergebnis}
\begin{equation}\label{sec6:eq: Q u pm aufsummiert Zwischenergebnis2}
    \mathcal Q(u_+) + \mathcal Q(u_-) = 2\frac{\mu^2}{L^2}\,\norm{\partial_\mu u_+}_{L^2(\Omega_\rho)}^2 + O(R^{-3+\theta}).
\end{equation}
Next, we show
\begin{equation}\label{eq:Omega_decomposition}
\norm{\partial_\mu u_+}_{L^2(\Omega_\rho)}^2
=
\norm{\partial_\mu u_+}_{L^2(\R^4)}^2
-
\sum_{\sigma=\pm}
\norm{\partial_\mu u_+}_{L^2(B_\rho(\sigma\ell/2))}^2 =\norm{\partial_\mu u_+}_{L^2(\R^4)}^2+O(L^{1+\theta}).
\end{equation}
Due to \eqref{sec6:eq:sup_ests:copy} we have
\begin{equation}\label{sec6:eq: dmu u+ auf plus lhalbe}
\norm{\partial_\mu u_+}_{L^2(B_\rho(\ell/2))}^2
\lesssim L^{-2}\abs{B_\rho}
\sim L^{-2}\rho^4
=
L^{-2+4\theta}.
\end{equation}
By definition and change to spherical coordinates we have
\begin{equation*}
    \norm{\partial_\mu u_+}_{L^2(B_\rho(-\ell/2))}^2=\mu^2 \int_{B_\rho(-\ell/2)} K_0(\mu\abs{x+\ell/2})^2 dx=\mu^2\int_0^\rho K_0(\mu t)^2 t^3 \, dt.
\end{equation*}
 Substituting $s=\mu t$ and set $a=\mu\rho=\mu_0L^{\theta-1}\to0$ implies
 \begin{equation*}
     \norm{\partial_\mu u_+}_{L^2(B_\rho(-\ell/2))}^2=\mu^{-2} \int_0^aK_0(s)^2s^3 ds.
 \end{equation*}
 Since $K_0(s)\sim -\log(s)$ for $s\to 0^+$ (see \eqref{app1:eq:K-expansions}) we conclude
 \begin{equation}\label{sec6:eq: dmu u+ auf minus lhalbe}
     \norm{\partial_\mu u_+}_{L^2(B_\rho(-\ell/2))}^2\lesssim \mu^{-2} a^4 \log(a)^2\lesssim L^{-2+4\theta}\log(L)^2\lesssim L^{-1+3\theta}.
 \end{equation}
 Using \eqref{sec6:eq: dmu u+ auf plus lhalbe} and \eqref{sec6:eq: dmu u+ auf minus lhalbe} implies \eqref{eq:Omega_decomposition}. By applying \eqref{eq:Omega_decomposition} to \eqref{sec6:eq: Q u pm aufsummiert Zwischenergebnis2} it follows
 \begin{equation*}
     \mathcal Q(u_+) + \mathcal Q(u_-) = 2\frac{\mu^2}{L^2}\,\norm{\partial_\mu u_+}_{L^2(\R^4)}^2 + O(R^{-3+\theta}).
 \end{equation*}
 It remains to compute the remaining norm. By definition of $u_+$ we have
 \begin{equation*}
     \norm{\partial_\mu u_+}_{L^2(\R^4)}^2 = \mu^2\int_{\R^4}K_0(\mu\abs{x})^2 dx.
 \end{equation*}
 This integral is solved explicitly in the appendix as Corollary \ref{app2:cor:L2 norm K_0}. This concludes the proof of relation \eqref{sec6:eq:goal step1 ableitung nach R}.

\noindent\textbf{Step 2:} In the following we prove
            \begin{equation}\label{sec6:eq:goal step2 ableitung nach R}
                2 \mathcal{Q}(u_+,u_-) = 2\pi^2\left(4/3\mu_0^2K_2(\mu_0)+\mu_0^2K_0(\mu_0)-2\mu_0K_1(\mu_0)\right)L^{-2}+O(L^{-3+\theta}).
            \end{equation}
            Since all functions are real-valued we obtain by definition
            \begin{equation}\label{sec6:eq:Crossterm Q erste Version}
                \begin{split}
                    2 \mathcal{Q}(u_+,u_-)
                    = 2\int_{\Omega_\rho} \nabla_r u_+(x)\cdot\nabla_r u_-(x) \,dx
                    -(1/2) \nabla_{(x_3,x_4)}u_+(x)\cdot\nabla_{(x_3,x_4)}u_-(x) dx
                \end{split}
            \end{equation}
            By inserting \eqref{eq:grad_r_u_pm} into \eqref{sec6:eq:Crossterm Q erste Version} we arrive at
            \begin{equation}\label{sec6:eq:Crossterm Q ausmultipliziert}
            \begin{split}
                 2 \mathcal{Q}(u_+,u_-) = &2\frac{\mu^2}{L^2}\scp{\partial_\mu u_+}{\partial_\mu u_-}_{L^2(\Omega_\rho)}-\frac{\mu}{L^2}\scp{r\cdot\nabla_{(x_3,x_4)}u_+}{\partial_\mu u_-}_{L^2(\Omega_\rho)}\\
                 &+ \frac{\mu}{L^2}\scp{r\cdot \nabla_{(x_3,x_4)}u_-}{\partial_\mu u_+}_{L^2(\Omega_\rho)}- \scp{\nabla_{(x_3,x_4)}u_+}{\nabla_{(x_3,x_4)}u_-}_{L^2(\Omega_\rho)}.
            \end{split}
            \end{equation}
            Due to the symmetry of $\Omega_\rho$ and the substitution $x\mapsto -x$ in the first term of the second line of \eqref{sec6:eq:Crossterm Q ausmultipliziert} we obtain 
            \begin{equation}\label{sec6:eq:Crossterm Q vereinfacht}
            \begin{split}
                 2 \mathcal{Q}(u_+,u_-) = &2\frac{\mu^2}{L^2}\scp{\partial_\mu u_+}{\partial_\mu u_-}_{L^2(\Omega_\rho)}-2\frac{\mu}{L^2}\scp{r\cdot\nabla_{(x_3,x_4)}u_+}{\partial_\mu u_-}_{L^2(\Omega_\rho)}\\
                 &- \scp{\nabla_{(x_3,x_4)}u_+}{\nabla_{(x_3,x_4)}u_-}_{L^2(\Omega_\rho)}.
            \end{split}
            \end{equation}
            We claim, similar to step 1, that extending the domain of integration in the inner products on the right-hand side of \eqref{sec6:eq:Crossterm Q vereinfacht} by the sets $B_\rho(\pm\vec{R}/2)$ yields only corrections in $O(L^{-3+\theta})$, meaning
            \begin{align}
                \frac{\mu^2}{L^2}\scp{\partial_\mu u_+}{\partial_\mu u_-}_{L^2(B_\rho(\pm\ell/2))}&\in O(L^{-3+\theta}),\label{sec6:eq:claim first crossterm}\\
                \frac{\mu}{L^2}\scp{r\cdot\nabla_{(x_3,x_4)}u_+}{\partial_\mu u_-}_{L^2(B_\rho(\pm\ell/2))}&\in O(L^{-3+\theta}),\label{sec6:eq:claim second crossterm}\\
                \scp{\nabla_{(x_3,x_4)}u_+}{\nabla_{(x_3,x_4)}u_-}_{L^2(B_\rho(\pm\ell/2))}&\in O(L^{-3+\theta})\label{sec6:eq:claim third crossterm}.
            \end{align}
            For convenience we present these calculations at the end of this proof. Consequently, using \eqref{sec6:eq:claim first crossterm}-\eqref{sec6:eq:claim third crossterm} in \eqref{sec6:eq:Crossterm Q vereinfacht} we obtain
            \begin{equation}\label{sec6:eq:Ableitung nach R second step Zerlegung mit Fehler}
            \begin{split}
                 2 \mathcal{Q}(u_+,u_-) = &2\frac{\mu^2}{L^2}\scp{\partial_\mu u_+}{\partial_\mu u_-}_{L^2(\R^4)}-2\frac{\mu}{L^2}\scp{r\cdot\nabla_{(x_3,x_4)}u_+}{\partial_\mu u_-}_{L^2(\R^4)}\\
                 &- \scp{\nabla_{(x_3,x_4)}u_+}{\nabla_{(x_3,x_4)}u_-}_{L^2(\R^4)}+O(L^{-3+\theta}).
            \end{split}
            \end{equation}
            The remaining inner products are evaluated in the Appendix \ref{app:integrals}. We present these calculations as the content of Lemma \ref{app2:lem:Faltung K0 mit sich selbst}, Lemma \ref{app2:lem:Faltung K0 mit Gmu} and Lemma \ref{app2:lem:Faltung Ableitung Gmu mit sich selbst}. In particular, we show in Lemma \ref{app2:lem:Faltung K0 mit sich selbst}
            \begin{equation}\label{sec6:eq:value I_2}
                2\frac{\mu^2}{L^2}\scp{\partial_\mu u_+}{\partial_\mu u_-}_{L^2(\R^4)}=(2\pi^2/3)\mu_0^2K_2(\mu_0) L^{-2},
            \end{equation}
            in Lemma \ref{app2:lem:Faltung K0 mit Gmu}
            \begin{equation}\label{sec6:eq:value I_3}
                -2\frac{\mu}{L^2}\scp{r\cdot\nabla_{(x_3,x_4)}u_+}{\partial_\mu u_-}_{L^2(\R^4)} = 2\pi^2\mu_0^2K_0(\mu_0) L^{-2}.
            \end{equation}
            Due to Lemma \ref{app2:lem:Faltung Ableitung Gmu mit sich selbst} we conclude
            \begin{equation}\label{sec6:eq:value I_4}
                - \scp{\nabla_{(x_3,x_4)}u_+}{\nabla_{(x_3,x_4)}u_-}_{L^2(\R^4)} = 2\pi^2(\mu_0^2K_2(\mu_0)-2\mu_0K_1(\mu_0))L^{-2}.
            \end{equation}
            Inserting \eqref{sec6:eq:value I_2}-\eqref{sec6:eq:value I_4} in the right-hand side of \eqref{sec6:eq:Ableitung nach R second step Zerlegung mit Fehler} 
            \begin{equation*}
                 2 \mathcal{Q}(u_+,u_-) =2\pi^2\left(4/3\mu_0^2K_2(\mu_0)+\mu_0^2K_0(\mu_0)-2\mu_0K_1(\mu_0)\right)L^{-2}+O(L^{-3+\theta}).
            \end{equation*}
            Applying the recurrence relations of modified Bessel functions in \eqref{app1:eq:reoc} completes, under the assumption of \eqref{sec6:eq:claim first crossterm}-\eqref{sec6:eq:claim third crossterm}, the proof of step 2. It remains to prove \eqref{sec6:eq:claim first crossterm}-\eqref{sec6:eq:claim third crossterm}.\\
            We start by proving \eqref{sec6:eq:claim first crossterm}. Recall the definition of $u_\pm$, \eqref{sec6:eq: dmu u+ auf plus lhalbe} and \eqref{sec6:eq: dmu u+ auf minus lhalbe} to see
            \begin{equation}\label{sec6:eq:gesammelte Ball Abschätzungen upm}
                \begin{split}
                    \norm{\partial_\mu u_+}_{L^2(B_\rho(+\ell/2))}^2= \norm{\partial_\mu u_-}_{L^2(B_\rho(-\ell/2))}^2 &\lesssim L^{-2+4\theta},\\
                    \norm{\partial_\mu u_+}_{L^2(B_\rho(-\ell/2))}^2= \norm{\partial_\mu u_-}_{L^2(B_\rho(+\ell/2))}^2 &\lesssim L^{-2+4\theta}\log(L).
                \end{split}
            \end{equation}
            Applying the Cauchy-Schwarz inequality yields \eqref{sec6:eq:claim first crossterm}.\\
            To prove \eqref{sec6:eq:claim second crossterm} first note that by Cauchy-Schwarz and the fact that $2ab\leq a^2+b^2$ for $a,b\in\R$ we obtain
            \begin{equation}\label{sec6:Crossterm step2 Ball plus}
                \abs{\frac{\mu}{L^2}\scp{r\cdot\nabla_{(x_3,x_4)}u_+}{\partial_\mu u_-}_{L^2(B_\rho(+\ell/2))}} \lesssim \norm{\nabla_{(x_3,x_4)}u_+}^2_{L^2(B_\rho(+\ell/2))} + \frac{\mu^2}{L^2} \norm{\partial_\mu u_-}^2_{L^2(B_\rho(+\ell/2))}.
            \end{equation}
            For $x\in B_\rho(+\ell/2)$ we know by definition
            \begin{equation}\label{sec6:eq:Abschätzung nabla x u+}
                \abs{\nabla_{(x_3,x_4)}u_+(x)}\lesssim L^{-3}.
            \end{equation}
            Since $\abs{B_\rho(+\ell/2)}\sim R^{4\theta}$, we see by inserting \eqref{sec6:eq:gesammelte Ball Abschätzungen upm} and \eqref{sec6:eq:Abschätzung nabla x u+} into \eqref{sec6:Crossterm step2 Ball plus}
            \begin{equation}\label{sec6:Crossterm step2 Ball plus Ergebnis}
                \frac{\mu}{L^2}\scp{r\cdot\nabla_{(x_3,x_4)}u_+}{\partial_\mu u_-}_{L^2(B_\rho(+\ell/2))} \in O(L^{-3+\theta}).
            \end{equation}
            For $x\in B_\rho(-\ell/2)$ it follows $\abs{\mu/L^2\partial_\mu u_-(x)\cdot r}\lesssim L^{-3}$. By adding the derivatives with respect to $x_1$ and $x_2$, this implies
            \begin{equation}\label{sec6:eq:second claim Anfang schwieriger Ball}
                 \abs{\frac{\mu}{L^2}\scp{r\cdot\nabla_{(x_3,x_4)}u_+}{\partial_\mu u_-}_{L^2(B_\rho(-\ell/2))}}\lesssim L^{-3} \int_{B_\rho(-\ell/2)} \abs{\nabla_{x}u_+(x)}\, dx.
            \end{equation}
            By using the derivative of $u_+$, we see
            \begin{equation*}
                \int_{B_\rho(-\ell/2)} \abs{\nabla_{x}u_+(x)} \, dx = \int_{B_\rho(-\ell/2)} \mu^2\frac{K_2(\mu\abs{x+\ell/2})}{\abs{x+\ell/2}} \, dx.
            \end{equation*}
            Passing to spherical coordinates yields
            \begin{equation}\label{sec6:eq: Zwischenschritt L1 Norm auf Ball von nabla u plus}
                \int_{B_\rho(-\ell/2)} \abs{\nabla_{x}u_+(x)} \, dx = \mu^{-1} \int_0^{\mu\rho} s^2 K_2(s)\, ds
            \end{equation}
            The integrand on the right--hand side of \eqref{sec6:eq: Zwischenschritt L1 Norm auf Ball von nabla u plus} is bounded and therefore
            \begin{equation}\label{sec6:eq:L1 Norm auf Ball von nabla u plus}
                \int_{B_\rho(-\ell/2)} \abs{\nabla_{x}u_+(x)} \, dx \lesssim L^{\theta}.
            \end{equation}
            Inserting \eqref{sec6:eq:L1 Norm auf Ball von nabla u plus} into \eqref{sec6:eq:second claim Anfang schwieriger Ball} implies
            \begin{equation}\label{sec6:Crossterm step2 Ball minus Ergebnis}
                \frac{\mu}{L^2}\scp{r\cdot\nabla_{(x_3,x_4)}u_+}{\partial_\mu u_-}_{L^2(B_\rho(-\ell/2))}\in O(L^{-3+\theta}).
            \end{equation}
            Combining \eqref{sec6:Crossterm step2 Ball plus Ergebnis} and \eqref{sec6:Crossterm step2 Ball minus Ergebnis} implies \eqref{sec6:eq:claim second crossterm}.\\
            In order to prove \eqref{sec6:eq:claim third crossterm}, it is enough to show
            \begin{equation}\label{sec6:eq:claim to show third crossterm}
                \scp{\nabla_{(x_3,x_4)}u_+}{\nabla_{(x_3,x_4)}u_-}_{L^2(B_\rho(-\ell/2))}\in O(L^{-3+\theta})
            \end{equation}
            due to symmetry. Applying \eqref{sec6:eq:L1 Norm auf Ball von nabla u plus} and $\abs{\nabla_{(x_3,x_4)}u_-(x)}\lesssim L^{-3}$ for $x\in B_\rho(-\ell/2)$ implies \eqref{sec6:eq:claim to show third crossterm}.
            This finishes the proof of \eqref{sec6:eq:claim first crossterm}-\eqref{sec6:eq:claim third crossterm} and therefore concludes the proof of Proposition \ref{sec5:prop:Ableitung nach R von G mu Termen:copy}.
        \end{proof}
\newpage
%%%%%%%%%%%%%%%%%%%%%%%%%%%%%%%%%%%%%%%%%%%
%%%%%%%%%%%%%%%%%%%%%%%%%%%%%%%%%%%%%%%%%%%
%%%%%%   APENDIX
%%%%%%%%%%%%%%%%%%%%%%%%%%%%%%%%%%%%%%%%%%%
%%%%%%%%%%%%%%%%%%%%%%%%%%%%%%%%%%%%%%%%%%%
\appendix

\section{On Some Series Expansion of Bessel Functions}\label{app:Bessel}
In the proof of Lemma~\ref{sec2:lem:proto_efimov} we use the expansion of the
modified Bessel functions of the second kind $K_j$ for $j\in\{1,2,3\}$ at
zero. We derive the near–zero behavior of $K_0$ and $K_1$ from their
series representations, and then obtain the expansions of $K_2$ and $K_3$
from the recurrence relations.

We use the following representations of $K_0$, $K_1$ and $I_{\nu}$ from
\cite[p.~80, eqs.~(14)–(15) and p.~77, eq.~(2)]{book:W:1944}
\begin{equation} \label{app1:eq:def_bessels}
    \begin{aligned}
        K_0(z) &= -\log(z/2)\, I_0(z)
        + \sum_{m=0}^{\infty} \frac{(z^{2}/4)^{m}}{(m!)^{2}}\, \Psi(m+1), \\[0.4em]
        K_1(z) &= z^{-1}
        + \log(z/2)\, I_1(z)
        - \frac{z}{4} \sum_{m=0}^{\infty}
          \frac{(z^{2}/4)^{m}}{m!(m+1)!}\, \bigl(\Psi(m+1)+\Psi(m+2)\bigr), \\[0.4em]
        I_{\nu}(z) &= \sum_{m=0}^{\infty}
        \frac{(z/2)^{\nu+2m}}{m!\,\Gamma_\ast(\nu+m+1)}, \qquad \nu \ge 0 ,
    \end{aligned}
\end{equation}
where $\Gamma_\ast$ is the usual gamma function and $\Psi=\Gamma_\ast'/\Gamma_\ast$ is the digamma function. We use the symbol $\Gamma_\ast$ to avoid confusion with the function $\Gamma$ defined earlier.

\begin{remark}
Although one can work with complex arguments $z \in \mathbb{C}$ by choosing
a branch of the logarithm and restricting to sectors avoiding the negative
real axis, we only require the near–zero expansions along the positive real
axis. For this reason, and to simplify notation, we state and use all
asymptotic expansions for real $z>0$, even though the arguments extend to
appropriate complex sectors.
\end{remark}

Following \cite[p.~79, eq.~(1) -- (4) ]{book:W:1944}, the recurrence relations for
the modified Bessel functions of the second kind and their derivatives are
\begin{equation}\label{app1:eq:reoc}
    \begin{split}
        K_{\nu+1}(z) &= K_{\nu-1}(z) + 2 \nu\, z^{-1} K_{\nu}(z),
    \qquad \nu \ge 1, z>0. \\
        -K_\nu'(z) &= K_{\nu-1}(z)+ \nu z^{-1} K_{\nu}(z),
    \qquad \nu \ge 1, z>0. \\
   -2K_\nu'(z) &=  K_{\nu-1}(z) + K_{\nu+1}(z) \qquad \nu \ge 1, z>0. \\
   -K_\nu'(z) &= K_{\nu+1}(z) -\nu z^{-1} K_{\nu}(z) \qquad \nu \ge 1, z>0.
    \end{split}
\end{equation}
(where the latter two relations follow from the first two.) Equipped with these textbook relations, we prove the following standard expansions for which we did not find an obvious citable source.
\pagebreak[4]
\begin{lemma}[Near–Zero Expansion of Modified Bessel Functions of the Second Kind]
\label{app1:lem:Kj_expansion}
Let $K_j$, $j\in\{0,1,2,3\}$, be the modified Bessel functions defined in
\eqref{app1:eq:def_bessels} and satisfying \eqref{app1:eq:reoc}. Then, as
$z \to 0^{+}$,
\begin{equation}\label{app1:eq:K-expansions}
\begin{aligned}
K_0(z) &= -\log(z/2) - \gamma + O\!\bigl(z^{2}\log z\bigr), \\[0.4em]
K_1(z) &= \frac{1}{z}
+ \frac{z}{2}\!\left(\log(z/2) + \gamma - \frac{1}{2}\right)
+ O\!\bigl(z^{3}\log z\bigr), \\[0.4em]
K_2(z) &= \frac{2}{z^{2}} - \frac{1}{2}
+ O\!\bigl(z^{2}\log z\bigr), \\[0.4em]
K_3(z) &= \frac{8}{z^{3}} - \frac{1}{z} 
+ O\!\bigl(z \log z\bigr).
\end{aligned}
\end{equation}
\end{lemma}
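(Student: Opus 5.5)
We derive the expansions of $K_0$ and $K_1$ directly from the series in \eqref{app1:eq:def_bessels}, and then obtain those of $K_2$ and $K_3$ from the first recurrence relation in \eqref{app1:eq:reoc}.

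\textbf{Step 1: $K_0$ and $K_1$.} For $K_0$ we use $I_0(z)=1+z^2/4+O(z^4)$ and $\Psi(1)=-\gamma$. The $m=0$ term of the series then contributes $-\gamma$, and every term with $m\ge1$ is $O(z^2)$. The product $-\log(z/2)I_0(z)$ equals $-\log(z/2)+O(z^2\log z)$. Together these give $K_0(z)=-\log(z/2)-\gamma+O(z^2\log z)$.

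For $K_1$ we use $I_1(z)=z/2+O(z^3)$, so that $\log(z/2)I_1(z)=\tfrac z2\log(z/2)+O(z^3\log z)$. In the sum, the $m=0$ term is $-\tfrac z4(\Psi(1)+\Psi(2))=-\tfrac z4(-2\gamma+1)=\tfrac z2(\gamma-\tfrac12)$, and the terms with $m\ge1$ are $O(z^3)$. Hence $K_1(z)=z^{-1}+\tfrac z2(\log(z/2)+\gamma-\tfrac12)+O(z^3\log z)$.

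Convergence of the series and termwise control of the tails are harmless here: the series are entire power series times at most one logarithm, so their remainders are uniformly bounded on $(0,1]$.

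\textbf{Step 2: $K_2$ via the recurrence with $\nu=1$.} The first relation gives
\[
K_2(z)=K_0(z)+2z^{-1}K_1(z)=K_0(z)+\frac{2}{z^2}+\Big(\log(z/2)+\gamma-\tfrac12\Big)+O(z^2\log z).
\]
Inserting the expansion of $K_0$, the terms $\log(z/2)$ and $\gamma$ cancel. This leaves $K_2(z)=2z^{-2}-\tfrac12+O(z^2\log z)$.

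\textbf{Step 3: $K_3$ via the recurrence with $\nu=2$.} The first relation gives
\[
K_3(z)=K_1(z)+4z^{-1}K_2(z)=\frac1z+O(z\log z)+\frac{8}{z^3}-\frac{2}{z}+O(z\log z),
\]
that is, $K_3(z)=8z^{-3}-z^{-1}+O(z\log z)$.

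The computation is routine. The only point needing care is the bookkeeping of error orders: the factor $z^{-1}$ in the recurrence degrades each remainder by one power of $z$. This is why the error for $K_1$ must be carried to order $z^3\log z$, so that $K_2$ has error $O(z^2\log z)$ and $K_3$ has error $O(z\log z)$, as stated.
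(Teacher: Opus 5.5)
Your proof is correct and takes the same route as the paper. You read off $K_0$ and $K_1$ from the series in \eqref{app1:eq:def_bessels}, using $\Psi(1)=-\gamma$, $\Psi(2)=1-\gamma$, $I_0(z)-1=O(z^2)$ and $I_1(z)-z/2=O(z^3)$, and then apply the first recurrence in \eqref{app1:eq:reoc} with $\nu=1$ and $\nu=2$. In Step 2 you also correctly keep the term $\tfrac12\bigl(\log(z/2)+\gamma-\tfrac12\bigr)$ in $z^{-1}K_1(z)$; its cancellation against $K_0$ is what produces the $-\tfrac12$, and the paper's write-up of this step misstates $z^{-1}K_1(z)$ as $2/z^2+O(z^2\log z)$, so your version is the accurate one.
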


\begin{remark}
By abuse of notation we also use Landau Notation for small arguments in the following sense. For a function $z\mapsto R(z)$ defined for $z\in(0,\infty)$, the notation $R\in O(f)$ as
$z \to 0^{+}$ means that there exist constants $C>0$ and $\delta>0$ such
that
\[
|R(z)| \le C\, |f(z)| \qquad \text{for all } 0 < z < \delta .
\]
\end{remark}
\begin{proof}[Proof of Lemma \ref{app1:lem:Kj_expansion}]
    First we prove the behavior of $K_0$ stated in \eqref{app1:eq:K-expansions}. 
    Defining 
    \begin{equation*}
        s_m(z) =  \frac{(z^{2}/4)^{m}}{(m!)^{2}}\, \Psi(m+1), \quad m \in \N_0
    \end{equation*}
    and noting $s_0(z) = \Psi(1) = -\gamma$ is the Euler--Mascheroni constant we conclude from \eqref{app1:eq:def_bessels}
    \begin{equation}\label{app1:eq:K_0_relevant}
        K_0(z) + \log(z/2) + \gamma = - \log(z/2)\left( I_0(z) -1 \right) + \sum_{m=1}^\infty s_m(z)\, .
    \end{equation}
    Directly from the series expansion in\eqref{app1:eq:def_bessels} we find
    \begin{equation}\label{app1:eq:I_0_small}
        I_0(z) - 1 =  z^2/4 + O(z^4) \, .
    \end{equation}
    Combining \eqref{app1:eq:K_0_relevant} and \eqref{app1:eq:I_0_small} and noting the $s_m \in O(z^2)$ for $m\geq 1$ proves the expansion for $K_0$ in \eqref{app1:eq:K-expansions} immediately.

    Next we prove the behavior of $K_1$ stated in \eqref{app1:eq:K-expansions}. We define
    \begin{equation*}
        a_m(z) = \frac{(z^{2}/4)^{m}}{m!(m+1)!}\, \bigl(\Psi(m+1)+\Psi(m+2)\bigr)
    \end{equation*}
    then $a_0 = \Psi(1) + \Psi(2) = 1-2\gamma$. Using the series expansion of $K_1$ in \eqref{app1:eq:K-expansions} we find
    \begin{equation}\label{app1:eq:K_1_relevant}
        \begin{split}
            K_1(z) - z^{-1} &- \log(z/2)(z/2) - (1-2\gamma)(z/4) \\
            &= \log(z/2)\left( I_1(z)-(z/2) \right) + (z/4)\sum_{m=1}^\infty a_m(z)
        \end{split}  
    \end{equation}
    Using the series expansion of $I_1$ in \eqref{app1:eq:def_bessels} we conclude
    \begin{equation}\label{app1:eq:I_1_small}
        I_1(z)-(z/2) = z^3/16 + O(z^5)\, .
    \end{equation}
    Combining \eqref{app1:eq:I_1_small} and \eqref{app1:eq:K_1_relevant} together with $z\cdot a_m \in O(z^3)$ yields the statement for $K_1$ in \eqref{app1:eq:K-expansions}.

    We use now the recurrence relation \eqref{app1:eq:reoc} to find the near zero extension of $K_2$ and $K_3$. Using $\nu=1$ in \eqref{app1:eq:reoc} gives
    \begin{equation*}
        K_2(z) = K_0(z) + 2z^{-1}K_1(z) \, .
    \end{equation*}
    Note that $z^{-1} K_1(z) = 2/z^2 + O(z^2 \log(z))$ and consequently the relation
    \begin{equation*}
        K_2(z) =  \frac{2}{z^{2}} - \frac{1}{2}+ O\!\bigl(z^{2}\log z\bigr)
    \end{equation*}
    follows immediately. A similar calculation with $\eqref{app1:eq:reoc}$ for $\nu=2$ yields the remaining extension for $K_3$.
\end{proof}

\begin{remark}\label{app:cor:bessel_aux}
    We note the following relations that follow by application of Lemma \ref{app1:lem:Kj_expansion}.
    \begin{align}
                \frac{1}{2}z^2\left(K_0(z)K_2(z)-K_1(z)^2)\right) &= -\log(z)+ O(1). \label{app1:eq:K0K2-K1^2}\\
                \frac{1}{2}z^2\left(K_3(z)K_1(z)-K_2(z)^2 \right) &= \frac{2}{z^2} + \frac{1}{2}\left(4\log(z)+4\gamma-1-2\log(4)\right)+O(z^2\log(z))\label{app1:eq:K1K3-K2^2}\\
                (1-zK_1(z))K_2(z) &=\frac{1}{2}\left(-2\log(z)-2\gamma+1+2\log(2)\right)+O(z^2\log(z)).\label{app1:eq:(1-zK1)K2}\\
                (2-z^2 K_2(z))K_2(z) &= 1+ O(z^2\log(z))\label{app1:eq:(2-z^2K2)K2}
        \end{align}
        Applying the product rule and using the recurrence relations directly shows for any $\nu\geq 1$ and $z>0$
        \begin{equation} \label{app1:lem:Stammfunktionen Bessel}
            sK_\nu(z)^2=\frac{d}{dz}\left(\frac{1}{2}z^2\left(K_\nu(z)^2-K_{\nu-1}(z)K_{\nu+1}(z)\right)\right).
        \end{equation}
\end{remark}

\section{Evaluation of Integrals}\label{app:integrals}
    In this section we provide the evaluation of the integrals involving convolutions of Bessel functions appearing in section \ref{sec:aux_lemmatas}. We will use the following convention of the Fourier transform
    \begin{equation}
        \mathcal{F}(f)(k)=(2\pi)^{-d/2} \int_{\R^d} f(x)e^{-ikx}\, dx.
    \end{equation}
    \begin{lemma}\label{app2:lem:Faltung K0 mit sich selbst}
        Let $\mu>0$, $a,b\in \R^4,a\neq b$. Then
        \begin{equation}\label{app2:eq:Faltung K0 mit sich selbst}
            \int_{\R^4}K_0(\mu\abs{x-a}) K_0(\mu\abs{x-b}) \, dx = 2\pi^2 \frac{\abs{a-b}^2}{6\mu^2}K_2(\mu\abs{a-b}).
        \end{equation}
        \begin{proof}
            For $a\neq b$ and $\mu>0$, the integral on the left--hand side of \eqref{app2:eq:Faltung K0 mit sich selbst} is finite. 
Indeed, $K_0(t)$ decays exponentially as $t\to\infty$ and satisfies $K_0(t)\sim -\log t$ as $t\downarrow0$ (see \eqref{app1:eq:K-expansions}). 
Thus, the integrand has at most logarithmic singularities near $x=a$ and $x=b$, which are integrable in dimension $4$ since $(\log|x|)^2\in L^1_{\mathrm{loc}}(\mathbb R^4)$, while it decays exponentially as $|x|\to\infty$.
\medskip
        
            Define $f_\mu\colon\R^4\to\R, \, f_\mu(x)=K_0(\mu\abs{x})$.
            Let $f\colon\R^d\to\R$ such that $f(x)=f_0(\abs{x})$. Then for $k\in \R^4$
            \begin{equation*}
                \mathcal{F}(f)(k)= \abs{k}^{-d/2+1}\int_0^\infty r^{d/2}  f_0(r) J_{d/2-1}(\abs{k}r)\, dr.
            \end{equation*}
            Since $f_\mu$ is radially symmetric, we see
            \begin{equation*}
                \mathcal{F}(f_\mu)(k) = \abs{k}^{-1}\int_0^\infty r^2 K_0(\mu r) J_1(\abs{k}r)\, dr.
            \end{equation*}
            The integral on the right--hand side is explicitly solved in \cite{gradshteyn2014table}[p.665, 6.521.12] such that
            \begin{equation}
                \mathcal{F}(f_\mu)(k)=\frac{2}{\left(\mu^2+\abs{k}^2\right)^2}.\label{app2:eq:Fouriertransformierte f mu}
            \end{equation}
            Denoting by $\ast$ the usual convolution of functions, yields
            \begin{equation*}
                \int_{\R^4}K_0(\mu\abs{x-a}) K_0(\mu\abs{x-b}) \, dx = \left(f_\mu\ast f_\mu\right)(a-b).
            \end{equation*}
            Therefore, by taking the Fourier Transform, we obtain
            \begin{equation}
                \int_{\R^4}K_0(\mu\abs{x-a}) K_0(\mu\abs{x-b}) \, dx = 4\pi^2 \mathcal{F}^{-1}\left(\mathcal{F}(f_\mu)^2\right)(a-b).\label{app2:eq:Faltung K_0 fouriertransformiert Rechnung}
            \end{equation}
            A direct calculation shows for $k\in\R^4$
            \begin{equation*}
                \mathcal{F}(f_\mu)^2(k) = \frac{4}{\left(\mu^2+\abs{k}^2\right)^4} = \frac{1}{12\mu}\frac{d}{d\mu}\left(\frac{1}{\mu}\frac{d}{d\mu} \mathcal{F}(f_\mu)(k)\right).
            \end{equation*}
            We next justify interchanging the inverse Fourier transform with derivatives in~$\mu$.

Fix $\mu_\ast>0$ and define $I_\ast=[\mu_\ast/2,2\mu_\ast]\subset(0,\infty)$. Set
\begin{equation*}
m(\mu,k):=\mathcal F(f_\mu)(k)=\frac{2}{(\mu^2+|k|^2)^2},
\qquad \mu>0,\ k\in\mathbb R^4.
\end{equation*} 
We have,
\begin{equation*}
\partial_\mu m(\mu,k)=-\frac{8\mu}{(\mu^2+|k|^2)^3},
\end{equation*}
and a second differentiation produces only terms of order $(\mu^2+|k|^2)^{-3}$ and $(\mu^2+|k|^2)^{-4}$. Hence, for $\mu\in I_\ast$, there exists $C_\ast >0$ such that
\begin{equation*}
|\partial_\mu^j m(\mu,k)| \le \frac{C_\ast}{(1+|k|^2)^3}, \qquad j=1,2,
\end{equation*}
and $(1+|k|^2)^{-3}\in L^1(\mathbb R^4)$. Thus $m(\mu,\cdot)$, $\partial_\mu m(\mu,\cdot)$, and $\partial_\mu^2 m(\mu,\cdot)$ are dominated on $I_\ast$ by a common $L^1$-function independent of $\mu$.

Applying dominated convergence
we may interchange $\partial_\mu$ (and $\partial_\mu^2$) with $\mathcal F^{-1}$ for all $\mu\in I_\ast$. Since $\mu_\ast>0$ was arbitrary, this holds for all $\mu>0$. Multiplication by functions bounded on $I_\ast$ (e.g.\ $1/\mu$) does not affect the argument.

It follows
            \begin{equation*}
                4\pi^2 \mathcal{F}^{-1}\left(\mathcal{F}(f_\mu)^2\right)(a-b)= 2\pi^2 \frac{1}{6\mu}\frac{d}{d\mu}\left(\frac{1}{\mu}\frac{d}{d\mu}K_0(\mu\abs{a-b})\right).
            \end{equation*}
            By differentiating and \eqref{app2:eq:Faltung K_0 fouriertransformiert Rechnung} we obtain
            \begin{equation*}
                \int_{\R^4}K_0(\mu\abs{x-a}) K_0(\mu\abs{x-b}) \, dx=2\pi^2\frac{\abs{a-b}^2}{6\mu^2} K_{2}(\mu\abs{a-b}).
            \end{equation*}
        \end{proof}
    \end{lemma}

    \begin{corollary}\label{app2:cor:L2 norm K_0}
        Let $\mu>0$. Then
        \begin{equation*}
            \int_{\R^4}K_0(\mu\abs{x})^2 dx = \frac{2\pi^2}{3\mu^4}.
        \end{equation*}
        \begin{proof}
           Choose $b=0$ and $a=re_1$, we take the limit  $r\downarrow0$. By dominated convergence (using the logarithmic behavior of $K_0$ at $0$ and its exponential decay at $\infty$) and the expansion $K_2(z)\sim 2z^{-2}$ as $z\downarrow0$ from \eqref{app1:eq:K-expansions}, the stated identity follows from Lemma \ref{app2:lem:Faltung K0 mit sich selbst}.

        \end{proof}
    \end{corollary}
\color{black}
    \begin{lemma}\label{app2:lem:Faltung K0 mit Gmu}
        Let $\mu>0$ and $a,b\in\R^4, a\neq b$ and $G_\mu$ as in Lemma \ref{sec5:lem:EigenschaftenGmu}. Then
        \begin{equation*}
            \begin{split}
                \int_{\R^4} K_0(\mu\abs{x-a})\partial_{x_i}G_\mu(x-b)\, dx &=  2\pi^2 \frac{1}{2}\abs{a-b} K_0(\mu\abs{a-b})\frac{b_i-a_i}{\abs{b-a}}\\
                &=\pi^2 K_0(\mu\abs{a-b})(b_i -a_i)
            \end{split}
        \end{equation*}
        \begin{proof}
            Let $f_\mu$ be as in the proof of Lemma \ref{app2:lem:Faltung K0 mit sich selbst}. Then
            \begin{equation*}
                \int_{\R^4} K_0(\mu\abs{x-a})\partial_{x_i}G_\mu(x-b)\, dx = \int_{\R^4} K_0(\mu\abs{x-a})\left(-\partial_{b_i}G_\mu(x-b)\right)\, dx.
            \end{equation*}
            Therefore 
            \begin{equation}\label{app2:eq:Ableitung x zu Ableitung b}
                \int_{\R^4} K_0(\mu\abs{x-a})\partial_{x_i}G_\mu(x-b)\, dx = -\partial_{b_i} \left((f_\mu\ast G_\mu)(a-b)\right).
            \end{equation}
            Due to the fact that $G_\mu$ is the Greens function (see \eqref{sec5:eq:GleichungGmu}) we know for $k\in\R^4$
            \begin{equation}
                \mathcal{F}(G_\mu)(k) = \frac{1}{\mu^2+\abs{k}^2}.\label{app2:eq:Fouriertransformierte G mu}
            \end{equation}
            As in the previous Lemma we have
            \begin{equation}\label{app2:eq:Fourier Faltung fmu Gmu}
                (f_\mu\ast G_\mu)(a-b) = 4\pi^2\mathcal{F}^{-1}\left(\mathcal{F}(f_\mu)\cdot\mathcal{F}(G_\mu)\right)(a-b).
            \end{equation}
            A direct calculation shows
            \begin{equation}\label{app2:eq:Produkt Fourier fmu Gmu}
                \mathcal{F}(f_\mu)\cdot\mathcal{F}(G_\mu) (k) = -\frac{1}{4\mu}\frac{d}{d\mu}\mathcal{F}(f_\mu)(k).
            \end{equation}
            Since $\mathcal F(f_\mu)\mathcal F(G_\mu)(k)=2(\mu^2+|k|^2)^{-3}\in L^1(\mathbb R^4)$ and its $\mu$–derivative is bounded by
            \begin{equation*}
                C(1+|k|^2)^{-3}\in L^1(\mathbb R^4)
            \end{equation*} locally in $\mu$, dominated convergence justifies interchanging $\partial_\mu$ with $\mathcal F^{-1}$.

            Inserting \eqref{app2:eq:Fourier Faltung fmu Gmu} and \eqref{app2:eq:Produkt Fourier fmu Gmu} into \eqref{app2:eq:Ableitung x zu Ableitung b} implies 
            \begin{equation*}
                \int_{\R^4} K_0(\mu\abs{x-a})\partial_{x_i}G_\mu(x-b)\, dx = \partial_{b_i}\left( 2\pi^2 \frac{1}{2\mu} \frac{d}{d\mu} K_0(\mu\abs{a-b})\right).
            \end{equation*}
            Therefore,
            \begin{equation*}
                \int_{\R^4} K_0(\mu\abs{x-a})\partial_{x_i}G_\mu(x-b)\, dx = 2\pi^2\partial_{b_i}\left(-\frac{\abs{a-b}K_1(\mu\abs{a-b})}{2\mu}\right).
            \end{equation*}
            Using the identity $\frac{d}{dz}\big(zK_1(z)\big)=-zK_0(z)$ with $z=\mu|a-b|$ and $\partial_{b_i}|a-b|=\frac{b_i-a_i}{|a-b|}$, we obtain
            \begin{equation*}
                \int_{\R^4} K_0(\mu\abs{x-a})\partial_{x_i}G_\mu(x-b)\, dx = 2\pi^2 \frac{1}{2}\abs{a-b} K_0(\mu\abs{a-b})\frac{b_i-a_i}{\abs{b-a}}.
            \end{equation*}
        \end{proof}
    \end{lemma}

    \begin{corollary}\label{app2:cor:Faltung Gmu mit sich selbst}
        Let $\mu>0$, $a,b\in\R^4, a\neq b$ and $G_\mu$ as in Definition \ref{sec5:lem:EigenschaftenGmu}. Then 
        \begin{equation*}
            \int_{\R^4}  G_\mu(x-a)G_\mu(x-b)\, dx =2\pi^2 K_0(\mu\abs{a-b}).
        \end{equation*}
        \begin{proof}
            As before, we get
            \begin{equation*}
                \int_{\R^4}  G_\mu(x-a)G_\mu(x-b)\, dx = 4\pi^2\mathcal{F}^{-1}\left(\mathcal{F}(G_\mu)^2\right)(a-b).
            \end{equation*}
            Recall \eqref{app2:eq:Fouriertransformierte G mu} and \eqref{app2:eq:Fouriertransformierte f mu} to see for $k\in\R^4$
            \begin{equation*}
                \mathcal{F}(G_\mu)(k)^2 = \frac{1}{2} \mathcal{F}(f_\mu)(k).
            \end{equation*}
            Therefore,
            \begin{equation*}
                \int_{\R^4}  G_\mu(x-a)G_\mu(x-b)\, dx=2\pi^2 f_\mu(a-b) =2\pi^2 K_0(\mu\abs{a-b}).
            \end{equation*}
        \end{proof}
    \end{corollary}

    \begin{lemma}\label{app2:lem:Faltung Ableitung Gmu mit sich selbst}
        Let $\mu>0$, $a,b\in\R^4, a\neq b$ and $G_\mu$ as in Definition \ref{sec5:lem:EigenschaftenGmu}. Then 
        \begin{equation}
            \int_{\R^4} \partial_{x_i} G_\mu(x-a) \partial_{x_i}G_\mu(x-b)\, dx = 2\pi^2 \left(\mu \frac{K_1(\mu\abs{a-b})}{\abs{a-b}}-\mu^2K_2(\mu\abs{a-b})\frac{(a_i-b_i)^2}{\abs{a-b}^2} \right).
        \end{equation}
        \begin{proof}
            By taking the inner derivative into account we have
            \begin{equation*}
                \int_{\R^4} \partial_{x_i} G_\mu(x-a) \partial_{x_i}G_\mu(x-b)\, dx = \int_{\R^4}\partial_{a_i} G_\mu(x-a)\partial_{b_i}G_\mu(x-b)\, dx.
            \end{equation*}
            The interchange of $\partial_{a_i}\partial_{b_i}$ with the integral in \eqref{app2:eq:Faltung Ableitung Gmu mit sich selbst Zwischenschritt} is justified by differentiation under the integral sign (dominated convergence), using that $a\neq b$ separates the singularities of $G_\mu$ and that $G_\mu$ (and its derivatives) decay exponentially at infinity.

            Thus, we arrive at
            \begin{equation}
                \int_{\R^4} \partial_{x_i} G_\mu(x-a) \partial_{x_i}G_\mu(x-b)\, dx = \partial_{a_i}\partial_{b_i} \int_{\R^4}  G_\mu(x-a)G_\mu(x-b)\, dx.\label{app2:eq:Faltung Ableitung Gmu mit sich selbst Zwischenschritt}
            \end{equation}
            Due to Corollary \ref{app2:cor:Faltung Gmu mit sich selbst} we find
            \begin{equation*}
                 \int_{\R^4} \partial_{x_i} G_\mu(x-a) \partial_{x_i}G_\mu(x-b)\, dx = 2\pi^2\partial_{a_i}\partial_{b_i} K_0(\mu\abs{a-b}).
            \end{equation*}
            It follows by taking the derivative
            \begin{equation*}
                \int_{\R^4} \partial_{x_i} G_\mu(x-a) \partial_{x_i}G_\mu(x-b)\, dx = 2\pi^2 \partial_{a_i}\left(\mu\frac{K_1(\mu\abs{a-b})}{\abs{a-b}}(a_i-b_i)\right)
            \end{equation*}
            and we therefore arrive at
            \begin{equation*}
                \int_{\R^4} \partial_{x_i} G_\mu(x-a) \partial_{x_i}G_\mu(x-b)\, dx = 2\pi^2 \left(-\mu^2K_2(\mu\abs{a-b})\frac{(a_i-b_i)^2}{\abs{a-b}^2}+\mu\frac{K_1(\mu\abs{a-b})}{\abs{a-b}} \right).
            \end{equation*}
        \end{proof}
    \end{lemma}
\section{On the Zero--Energy Resonance in $\R^4$}\label{app:D}
\begin{lemma}
\label{app:lem:Resonanz}
Let $V \in L^2{(\R^4)}$ be real--valued, radially symmetric, and compactly supported in
$B_{\rho_0}(0)=\{x\in \R^4: \abs{x} < \rho_0\}$. Assume that the operator
\[
h=P_x^2+V
\]
has a virtual level at zero in the sense of Definition~\ref{sec1:def:virtual_lvl}.
Then there exists a (up to normalization) unique function
$\varphi_0\in\dot H^1(\R^4)$ solving $h\varphi_0=0$ in the sense of distributions, which can be chosen real-valued and positive.
Moreover, there exists a constant $c_0\in\R$ such that
\begin{equation}
\label{sec5:eq:exact_tail:copy}
\varphi_0(x)=c_0 |x|^{-2}
\qquad\text{for all } |x|>\rho_0 .
\end{equation}
\end{lemma}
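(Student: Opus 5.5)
The plan is to work in the radial reduction. Since $V$ is radial, the operator $h$ commutes with rotations. Outside $B_{\rho_0}(0)$ the equation $h\varphi=0$ is simply $\Delta\varphi=0$. For a radial solution this reads $\varphi''+\tfrac{3}{r}\varphi'=0$, whose general solution is $\alpha+\beta r^{-2}$.

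The first task is existence of a zero-energy solution in $\dot H^1(\R^4)$, which I would obtain variationally from the virtual level assumption in Definition~\ref{sec1:def:virtual_lvl}. For each $\varepsilon>0$ the operator $h(\varepsilon)=-\Delta+(1+\varepsilon)V$ has negative spectrum. Because $V$ is compactly supported and $L^2$, this negative spectrum is discrete, so $h(\varepsilon)$ has a ground state $\psi_\varepsilon$ with eigenvalue $-E_\varepsilon<0$. The ground state is nonnegative and radial by uniqueness and rotation invariance. As $\varepsilon\to0$ we have $E_\varepsilon\to0$, since $h(0)\ge0$.

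I would normalize $\psi_\varepsilon$ so that $\norm{\psi_\varepsilon}_{L^2(B_{\rho_0})}=1$. Then I would show that $\norm{\nabla\psi_\varepsilon}_{L^2}$ stays bounded. This follows from the identity $\norm{\nabla\psi_\varepsilon}^2=-(1+\varepsilon)\langle V\psi_\varepsilon,\psi_\varepsilon\rangle-E_\varepsilon\norm{\psi_\varepsilon}^2$, combined with a local form-boundedness of $V$ on $B_{\rho_0}$ (Hölder and Sobolev, using $V\in L^2$).

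Next I extract a weak limit $\varphi_0\in\dot H^1$, which is radial, nonnegative and solves $h\varphi_0=0$ distributionally. For nontriviality I rely on compactness of the embedding $H^1(B_{\rho_0})\hookrightarrow L^2(B_{\rho_0})$, which preserves the normalization on the ball. This nontriviality step is the main obstacle: it is where the strict negativity for $\varepsilon>0$ has to be converted into a nonzero limit. If the bounded-gradient argument is delicate, I would fall back on the explicit ODE, namely a shooting argument for the radial equation $-u''-\tfrac{3}{r}u'+Vu=0$ together with a Sturm comparison in $\varepsilon$.

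Positivity would then follow from the Harnack inequality, or from ODE uniqueness applied to a nonnegative nontrivial solution.

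For the tail, I apply the ODE outside $B_{\rho_0}$ to get $\varphi_0=\alpha+\beta|x|^{-2}$. Membership in $L^2(|x|^{-2}dx)$ forces $\alpha=0$, since the constant is not in $L^2(\R^4,|x|^{-2}dx)$ at infinity. This gives \eqref{sec5:eq:exact_tail:copy} with $c_0=\beta$.

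Uniqueness up to normalization goes as follows. Given any solution $\varphi\in\dot H^1$, I would decompose it into spherical harmonics. For angular momentum $k\ge1$, the exterior solutions are $r^{k}$ and $r^{-2-k}$, and these decay fast enough to be $L^2$. A nonzero component would therefore produce a genuine eigenvalue at zero, and hence negative energy for $h(\varepsilon)$ with $\varepsilon<0$ small as well. The claim is that the virtual level is non-degenerate and the ground-state channel is the radial one, so I expect to show these components vanish. I flag this angular argument as the less robust point.

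For the radial component, two $\dot H^1$ solutions of the same second-order ODE that are both regular at the origin are proportional, which gives uniqueness.
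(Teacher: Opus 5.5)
You construct $\varphi_0$ from radial ground states of $h(\varepsilon)$ normalized on $B_{\rho_0}$. Your existence, positivity and tail arguments are sound; nontriviality is in fact immediate from $\norm{\psi_\varepsilon}_{L^2(B_{\rho_0})}=1$ plus Rellich. This route also builds radiality of $\varphi_0$ in from the start. The paper's construction, with arbitrary negative-energy trial states normalized in $\dot H^1$, does not guarantee this, even though the paper then writes the exterior solution as $c_1+c_0|x|^{-2}$.

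The gap is the step you flagged: excluding the channels $k\ge1$. The reason you give does not work. If $\phi\in L^2$ solves $h\phi=0$, then $\langle\phi,V\phi\rangle=-\norm{\nabla\phi}^2<0$, so $\langle\phi,h(\varepsilon)\phi\rangle=\varepsilon\langle\phi,V\phi\rangle$ is positive for $\varepsilon<0$, not negative. Moreover, a zero eigenvalue is fully compatible with Definition~\ref{sec1:def:virtual_lvl}, which only requires $h(0)\ge0$ and negative spectrum for $\varepsilon>0$. ``Non-degeneracy of the virtual level'' is exactly the uniqueness claim, so it cannot be used as an input.

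The paper's route does not supply this step either. Choosing $\lambda$ with $w(x_0)=0$ and invoking unique continuation only works for radial $w$: then $w=c|x|^{-2}$ outside $B_{\rho_0}$, and a zero at $x_0$ forces vanishing on an open set. A single zero of a general solution triggers nothing.

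The missing idea is the centrifugal barrier. First, any $\varphi\in\dot H^1$ with $h\varphi=0$ satisfies $q(\varphi):=\norm{\nabla\varphi}^2+\langle\varphi,V\varphi\rangle=0$. Indeed, $q(\varphi,\chi)=0$ for $\chi\in C_c^\infty$ extends to $\chi\in\dot H^1$ by density, because $|\langle u,Vw\rangle|\le\norm{V}_2\norm{u}_4\norm{w}_4\lesssim\norm{u}_{\dot H^1}\norm{w}_{\dot H^1}$. Since $V$ is radial, each component $\varphi_k=f(r)Y_k(\omega)$ is again an $\dot H^1$ solution; here $Y_k$ is an $L^2(S^3)$-normalized spherical harmonic of degree $k$. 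Hence
\[
0=q(\varphi_k)=\int_0^\infty\bigl(|f'|^2+V|f|^2\bigr)r^3\,dr+k(k+2)\int_0^\infty|f|^2\,r\,dr .
\]
The first integral equals $(2\pi^2)^{-1}q\bigl(f(|\cdot|)\bigr)$, which is $\ge0$ by $h(0)\ge0$. This applies because $f(|x|)\in\dot H^1$, with norm controlled by that of $\varphi_k$. Hence $f\equiv0$ for every $k\ge1$.

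For the radial channel, avoid appealing to ``regularity at the origin'': with $V\in L^2$ only, $r=0$ can be a genuinely singular point of the ODE. Instead, two radial $\dot H^1$ solutions are both multiples of $r^{-2}$ on $(\rho_0,\infty)$, so a suitable combination vanishes there. It then vanishes on all of $(0,\infty)$ by uniqueness for $(r^3u')'=r^3Vu$, whose coefficients lie in $L^1_{\mathrm{loc}}(0,\infty)$.
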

\begin{proof}
By assumption, for every $\varepsilon>0$ there exists $\varphi_\varepsilon\in \dot H^1(\mathbb R^4)$ with
\begin{equation*}
\langle \varphi_\varepsilon,(P_x^2+(1+\varepsilon)V)\varphi_\varepsilon\rangle < 0 .
\end{equation*}
We normalize
\begin{equation*}
\|P_x\varphi_\varepsilon\|_{2}=1.
\end{equation*}
Then $(\varphi_\varepsilon)$ is bounded in $\dot H^1(\mathbb R^4)$, and hence there exists a subsequence
denoted by the same letter and a function $\varphi_0\in \dot H^1(\mathbb R^4)$ such that
\begin{equation*}
P_x\varphi_\varepsilon \rightharpoonup P_x\varphi_0
\quad \text{weakly in } L^2(\mathbb R^4).
\end{equation*}

Since $h(0)\ge 0$, we have
\begin{equation*}
0\le \langle \varphi_\varepsilon,(P_x^2+V)\varphi_\varepsilon\rangle
= \|P_x\varphi_\varepsilon\|_2^2+\langle \varphi_\varepsilon,V\varphi_\varepsilon\rangle
=1+\langle \varphi_\varepsilon,V\varphi_\varepsilon\rangle .
\end{equation*}
On the other hand,
\begin{equation*}
\langle \varphi_\varepsilon,(P_x^2+(1+\varepsilon)V)\varphi_\varepsilon\rangle
=1+(1+\varepsilon)\langle \varphi_\varepsilon,V\varphi_\varepsilon\rangle \le 0 .
\end{equation*}
Consequently,
\begin{equation*}
-1 \le \langle \varphi_\varepsilon,V\varphi_\varepsilon\rangle
\le -\frac{1}{1+\varepsilon},
\end{equation*}
and hence
\begin{equation*}
\langle \varphi_\varepsilon,V\varphi_\varepsilon\rangle \to -1
\quad \text{as } \varepsilon\downarrow 0 .
\end{equation*}

Let $K\subset\mathbb R^4$ be compact with smooth boundary and $\supp(V)\subset K$.
By the Sobolev inequality in dimension four, $(\varphi_\varepsilon)$ is bounded in
$L^4(\mathbb R^4)$ and thus in $L^4(K)$.
Together with $\|P_x\varphi_\varepsilon\|_2=1$, this implies boundedness in $H^1(K)$.
By the Rellich--Kondrachov theorem, after passing to a subsequence,
\begin{equation*}
\varphi_\varepsilon \to \varphi_0
\quad \text{strongly in } L^2(K).
\end{equation*}
Since $V\in L^2_{\mathrm{loc}}(\mathbb R^4)$ and $(\varphi_\varepsilon)$ is bounded in $L^4(K)$,
it follows that
\begin{equation*}
\langle \varphi_\varepsilon,V\varphi_\varepsilon\rangle
\to
\langle \varphi_0,V\varphi_0\rangle .
\end{equation*}
Hence
\begin{equation*}
\langle \varphi_0,V\varphi_0\rangle=-1 ,
\end{equation*}
and in particular $\varphi_0\not\equiv 0$.

By weak lower semicontinuity of the norm,
\begin{equation*}
\|P_x\varphi_0\|_2
\le \liminf_{\varepsilon\to 0}\|P_x\varphi_\varepsilon\|_2
=1.
\end{equation*}
Since $h(0)\ge 0$ extends from $H^1$ to $\dot H^1$ by density, we obtain
\begin{equation*}
0 \le \|P_x\varphi_0\|_2^2+\langle \varphi_0,V\varphi_0\rangle
\le 1-1=0.
\end{equation*}
Therefore,
\begin{equation*}
\|P_x\varphi_0\|_2^2+\langle \varphi_0,V\varphi_0\rangle=0 .
\end{equation*}
Outside the support of $V$, the function $\varphi_0$ satisfies
\begin{equation*}
P^2_x\varphi_0 = 0.
\end{equation*}
By radial symmetry, the general solution in $|x|>\rho_0$ is
\begin{equation*}
\varphi_0(x)=c_1 + c_0|x|^{-2}.
\end{equation*}
Since $\varphi_0\in \dot H^1(\mathbb R^4)$, the constant term must vanish,
that is $c_1=0$, and thus
\begin{equation*}
\varphi_0(x)=c_0|x|^{-2}, \qquad |x|>\rho_0.
\end{equation*}

Since $h$ has real coefficients, $\overline{\varphi_0}$ is also a solution of
$h\varphi_0=0$, and we may assume that $\varphi_0$ is real--valued.
Moreover, the inequality $|\nabla|\varphi_0||\le|\nabla\varphi_0|$ a.e.\ implies
$|\varphi_0|\in\dot H^1(\R^4)$ and
\[
\|\nabla|\varphi_0|\|_2^2+\langle|\varphi_0|,V|\varphi_0|\rangle
\le
\|\nabla\varphi_0\|_2^2+\langle\varphi_0,V\varphi_0\rangle=0 .
\]
Since $h\ge0$ as a quadratic form, equality holds and $|\varphi_0|$ is again a
zero--energy solution. Replacing $\varphi_0$ by $|\varphi_0|$, we may therefore
assume $\varphi_0\ge0$.

Let $\psi\in \dot H^1(\R^4)$ be any real--valued distributional solution of $h\psi=0$. Choosing $\lambda:=\psi(x_0)/\varphi_0(x_0)$ for any $|x_0|>\rho_0$,
the difference $w:=\psi-\lambda\varphi_0$ satisfies $hw=0$ and vanishes at $x_0$.
By unique continuation for Schr\"odinger operators with
$V\in L^{2}_{\mathrm{loc}}(\R^4)$ (see \cite[Theorem 6.3]{JK:1985} and Remark after \cite[Theorem D.3.2]{thesis:S:2025}), it follows that $w\equiv0$ on $\R^4$.
Hence $\psi=\lambda\varphi_0$, proving uniqueness up to normalization.
\end{proof}
\begin{remark} \label{app:remarkc2}
    In Lemma~\ref{app:lem:Resonanz} we have used that the potential is radially symmetric and compactly supported. In Lemma~\ref{sec5:lem:AbfallverhaltenResonanz} we have stated a similar result for short-range potential at the cost of additional contribution to the resonance function that decay faster. The proof of Lemma~\ref{sec5:lem:AbfallverhaltenResonanz} is an extension of the proof provided above and can be found in \cite[Chapter 3.2 and 3.3]{thesis:S:2025} the subsequent remarks together with \cite[Appendix D]{thesis:S:2025}. One can also follow the original analysis on virtual levels by Yafaev in \cite{Y:1975} where Hölder continuity of the resolvent operator is proven (in dimension $d=3$) to find such decay properties also in dimension $d=4$.
\end{remark}
For convenience and since we could not easily find it in the literature, we give now a straight forward example of a short--range potential that exhibits a virtual level. Such examples are well known and a standard consequence, mainly by application of the Birman--Schwinger principle.

\medskip
\textbf{Example of a critical Potential: } Let $\lambda>0$ and define for $n\geq 3$ the operator
\begin{equation*}
    \hat{h}_\lambda = -\Delta_x - \lambda 1_{\{{\abs{x}<1\}} }\quad \text{ on } L^2(\R^n)
\end{equation*}
\begin{corollary} \label{cor:lambda_crit}
  Then there exists $\lambda_c>0$ such that for any $\lambda < \lambda_c$ it is  $\sigma_{\disc}(\hat{h}_\lambda) = \emptyset $ and for $\lambda>\lambda_c$ one has $\sigma_{\disc}(\hat{h}_\lambda) \neq \emptyset $.   
\end{corollary}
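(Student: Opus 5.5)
The plan is to work with the radial reduction and the Birman--Schwinger principle, using monotonicity in $\lambda$ to define $\lambda_c$. First I will note that $\hat h_\lambda$ is relatively compact perturbation of $-\Delta$. Hence $\sigma_{\ess}(\hat h_\lambda)=[0,\infty)$, and $\sigma_{\disc}(\hat h_\lambda)\neq\emptyset$ is equivalent to $\inf\sigma(\hat h_\lambda)<0$, i.e.\ to the existence of $\psi\in H^1(\R^n)$ with
\begin{equation*}
\norm{\nabla\psi}^2-\lambda\int_{\{\abs{x}<1\}}\abs{\psi}^2\,dx<0 .
\end{equation*}
Since the quadratic form is monotonically decreasing in $\lambda$, the set $\Lambda:=\{\lambda>0:\ \inf\sigma(\hat h_\lambda)<0\}$ is an up-set. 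I will then define $\lambda_c:=\inf\Lambda$, so that it remains to show $0<\lambda_c<\infty$. Once that holds, $\lambda>\lambda_c$ gives a negative eigenvalue, and $\lambda<\lambda_c$ gives $\hat h_\lambda\ge0$, hence no discrete spectrum.

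\textbf{Finiteness of $\lambda_c$.} I will take a fixed $\psi\in C_c^\infty(B_1(0))$ with $\psi\neq0$. Then the form equals $\norm{\nabla\psi}^2-\lambda\norm{\psi}^2$, which is negative for $\lambda>\norm{\nabla\psi}^2/\norm{\psi}^2$. Hence $\lambda_c\le\lambda_1^D(B_1)$, the first Dirichlet eigenvalue of the unit ball.

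\textbf{Positivity of $\lambda_c$.} This is the only step where $n\ge3$ enters. I will use Hardy's inequality $\norm{\nabla\psi}^2\ge\frac{(n-2)^2}{4}\int\abs{x}^{-2}\abs{\psi}^2dx$. Since $\abs{x}^{-2}\ge1$ on $\{\abs{x}<1\}$, this gives
\begin{equation*}
\lambda\int_{\{\abs{x}<1\}}\abs{\psi}^2\le \lambda\int\abs{x}^{-2}\abs{\psi}^2\le \frac{4\lambda}{(n-2)^2}\norm{\nabla\psi}^2 .
\end{equation*}
Consequently $\hat h_\lambda\ge \bigl(1-\tfrac{4\lambda}{(n-2)^2}\bigr)(-\Delta)\ge0$ for $\lambda\le (n-2)^2/4$, so $\lambda_c\ge (n-2)^2/4>0$. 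As an alternative, I could apply Birman--Schwinger: the operator $\sqrt{\lambda}1_{B_1}(-\Delta)^{-1}1_{B_1}\sqrt\lambda$ is bounded in dimension $n\ge3$ because the Green's function $c_n\abs{x-y}^{2-n}$ is locally integrable. Negative eigenvalues then appear exactly when $\lambda\norm{1_{B_1}(-\Delta)^{-1}1_{B_1}}>1$, which also identifies $\lambda_c$ as the inverse of this norm.

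\textbf{Main obstacle.} No step is really hard. The only care needed is the strict dichotomy at the endpoints. Monotonicity plus the definition of the infimum gives exactly the statement for $\lambda\neq\lambda_c$, and the case $\lambda=\lambda_c$ need not be addressed. It is also worth remarking, as a link to Definition~\ref{sec1:def:virtual_lvl}, that $\hat h_{\lambda_c}$ has a virtual level at zero. Indeed, $\hat h_{\lambda_c}\ge0$ because the set of $\lambda$ with $\hat h_\lambda\ge0$ is closed (the form is continuous in $\lambda$), while $(1+\varepsilon)\lambda_c>\lambda_c$ produces negative spectrum. This gives the explicit example requested in Remark~\ref{sec1:rem:virtual_lvl} for $n=4$.
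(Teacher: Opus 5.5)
Your proof is correct, but it takes a different route from the paper at the key step. The paper gets $\lambda_c>0$ from the Cwikel--Lieb--Rozenblum bound $N(\lambda)\le c_n\lambda^{n/2}\abs{B_1}$. It gets negative spectrum for large $\lambda$ from a test function equal to $1$ on $B_1$, which is the same variational idea as your Dirichlet-eigenfunction bound. You replace CLR by Hardy's inequality. That step is elementary and self-contained, and it yields the explicit two-sided bound $(n-2)^2/4\le\lambda_c\le\lambda_1^D(B_1)$. CLR gives more than the statement needs, namely a quantitative eigenvalue count for every $\lambda$. Both arguments use $n\ge3$ at exactly this point. The paper's phrase ``for $n\geq 2$'' is too generous here, since CLR fails for $n=2$, as the paper's own subsequent remark concedes; your Hardy step makes the dimensional restriction transparent.

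You define the same $\lambda_c=\inf\{\lambda>0: N(\lambda)\ge1\}$ as the paper. The paper supplements monotonicity of $E(\lambda)=\inf\sigma(\hat h_\lambda)$ with continuity of the spectral measure in $\lambda$ and strict monotonicity of $E$ once it is negative. Your up-set observation shows that monotonicity alone already gives the dichotomy for $\lambda\neq\lambda_c$. You invoke continuity only where it is genuinely needed: the closedness argument giving $\hat h_{\lambda_c}\ge0$. That is precisely what the paper's subsequent remark requires, namely that $\hat h_{\lambda_c}$ has a virtual level in the sense of Definition~\ref{sec1:def:virtual_lvl}. The paper addresses this only implicitly through ``$E$ has to vanish''. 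Your Birman--Schwinger aside additionally identifies $\lambda_c=\norm{1_{B_1}(-\Delta)^{-1}1_{B_1}}^{-1}$, a characterization the paper does not give.
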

\begin{remark}
    Corollary \ref{cor:lambda_crit} does proof that the operator $\hat{h}_{\lambda_c}$ has a virtual level in the sense of Definition \ref{sec1:def:virtual_lvl}. 
\end{remark}
\begin{proof} Due to Weyl's theorem the essential spectrum of $\hat{h}_\lambda$ is $[0,\infty)$ for any $\lambda>0$.
    Due to the Cwikel--Lieb--Rozenblum bound (see e.g. \cite[Theorem XIII.12]{book:RS:1978}) the number of negative eigenvalues $N(\lambda)$ fulfills for $n\geq 2$ 
    \begin{equation*}
        N(\lambda) \leq c_n \lambda^{n/2} \int 1_{\{{\abs{x}<1\}}}(x) dx \lesssim \abs{\lambda}^{n/2}
    \end{equation*}
    for a suitable $c_n>0$. Consequently, there exists $\lambda_0>0$ such that for any $\lambda<\lambda_0$, there are no negative eigenvalues. Moreover, there exists $\lambda_1>0$ such that $\hat{h}_\lambda$ has at least one negative eigenvalue for any $\lambda>\lambda_1$. To show that we construct a test function $\varphi \in L^2$ such that
    \begin{equation*}
       \langle \varphi, \hat{h}_\lambda \varphi \rangle < 0.
    \end{equation*}
    Take, for example any $\varphi \in C_c^\infty$ with $\varphi \equiv 1$ on $\{x\in \R^n: \abs{x}<1\}$ then
    \begin{equation*}
        \lambda \int_{\R^n}1_{\{{\abs{x}<1\}}}(x)  \abs{\varphi(x)}^2 dx = \lambda \abs{B_1(0)}.
    \end{equation*}
    For $\lambda>0$ large enough
    \begin{equation*}
        \langle \varphi, \hat{h}_\lambda \varphi \rangle= \norm{ \nabla \varphi}_2^2  - \lambda \abs{B_1(0)} <0.
    \end{equation*}
    We define the critical coupling $0<\lambda_c<\lambda_1$ as
    \begin{equation*}
        \lambda_c = \inf \{ \lambda>0: N(\lambda) \geq 1 \}.
    \end{equation*}
    Define the ground state energy $E(\lambda) = \inf \sigma(\hat{h}_\lambda)$. Following \cite[Lemma before Theorem XIII.10]{book:RS:1978} we conclude that the spectral measure of $\hat{h}_\lambda$ is continuous in $\lambda \in \R$. Consequently, $E$ has to vanish for some point in the interval $[\lambda_0,\lambda_1)$. Furthermore, $E$ is monotone decreasing for $\lambda\in [0,\infty)$ and strictly monotone once $E$ is negative. It follows that $\lambda_c>0$ exists and is unique.
\end{proof}
\begin{remark}
    The example above uses the Cwikel--Lieb-Rozenblum bound which does not hold in dimension $n=2$. In dimension $n\leq 2$ any shallow potential well does produce negative eigenvalues (see, e.g., \cite{S:1976}). Consequently, the vanishing potential $V\equiv 0$ as a virtual level at zero in dimension $n\leq 2$. 

    Another class of non--trivial critical potentials that are not short-range (meaning they decay as $\abs{x}^{-2}$ at infinity) that have repulsive and attractive regimes are described in \cite{HJL:2023} for any dimension $d\geq 1$.  
\end{remark}

\smallskip\noindent 
%%%%%%%% BIB TEX %%%%%%%%%%%%%%%%%%%%%
\bibliographystyle{plain}
\bibliography{references}
\end{document}

%% file: figure_001.tex
  \begin{tikzpicture}[scale=1, >=Stealth]
    
    % Positions of particles
    \coordinate (r1) at (0,0);
    \coordinate (r2) at (5,1);
    \coordinate (r3) at (1,-3);
    \coordinate (r4) at (6,-3);

    % Particles Beschriftung
    \fill (r1) circle (4pt) node[above left] {$\mathbf 1$};
    \fill (r2) circle (4pt) node[above right] {$\mathbf 2$};
    \fill (r3) circle (4pt) node[below left] {$\mathbf 3$};
    \fill (r4) circle (4pt) node[below right] {$\mathbf 4$};

    % Massenschwerpunkte
    \coordinate (R12) at ($1/2 *(r1)+1/2*(r2)$);
    \coordinate (R34) at ($1/2 *(r3)+1/2*(r4)$);

    \coordinate (R123) at ($1/3*(r1)+1/3*(r2)+1/3*(r3)$);
    \coordinate (R124) at ($1/3*(r1)+1/3*(r2)+1/3*(r4)$);
    \coordinate (R134) at ($1/3*(r1)+1/3*(r3)+1/3*(r4)$);
    \coordinate (R234) at ($1/3*(r2)+1/3*(r3)+1/3*(r4)$);

    \coordinate (R) at ($1/2*(R12)+1/2 *(R34)$);

    % Massenschwerpunkte Beschriftung
    \fill (R123) circle (2pt) node[below] {$\mathbf x_{123}$};
    \fill (R124) circle (2pt) node[below] {$\mathbf x_{124}$};
    \fill (R134) circle (2pt) node[below] {$\mathbf x_{134}$};
    \fill (R234) circle (2pt) node[below] {$\mathbf x_{234}$};
    \fill (R) circle (2pt) node[below left] {$\mathbf \vartheta$};

    % Center of mass Koordinaten
    \draw[->, thick, blue]
        (r1) -- (r2)
        node[midway, above] {$\boldsymbol{q_{12}} $};
    \draw[->, thick, cyan]
        (r3) -- (r4)
        node[midway, below left] {$\boldsymbol{q_{34}} $};
    \draw[->, thick, red]
        (R12) -- (R34)
        node[below, above right] {$\boldsymbol{\xi_{12,34}} $};

    % Ergänzend etas
    \draw[->, thin, orange]
        (R12) -- (r3)
        node[pos=0.66, left] {$\boldsymbol{\eta_{12}^-} $};
    \draw[->, thin, orange]
        (R12) -- (r4)
        node[pos=0.66, right] {$\boldsymbol{\eta_{12}^+} $};
    \draw[->, thin, orange]
        (R34) -- (r1)
        node[pos=0.66, left] {$\boldsymbol{\eta_{34}^-} $};
    \draw[->, thin, orange]
        (R34) -- (r2)
        node[pos=0.66, right] {$\boldsymbol{\eta_{34}^+} $};

\end{tikzpicture}

%% file: figure_002.tex
\begin{tikzpicture}[scale=1, >=Stealth]
        % Positions of particles
    \coordinate (r1) at (0,0);
    \coordinate (r2) at (5,1);
    \coordinate (r3) at (1,-3);
%    \coordinate (r4) at (6,-3);

    % Particles Beschriftung
    \fill (r1) circle (4pt) node[above left] {$\mathbf 1$};
    \fill (r2) circle (4pt) node[above right] {$\mathbf 2$};
    \fill (r3) circle (4pt) node[below left] {$\mathbf 3$};
%    \fill (r4) circle (4pt) node[below right] {$\mathbf 4$};

    % Massenschwerpunkte
    \coordinate (R12) at ($1/2 *(r1)+1/2*(r2)$);
%    \coordinate (R34) at ($1/2 *(r3)+1/2*(r4)$);

    \coordinate (R123) at ($1/3*(r1)+1/3*(r2)+1/3*(r3)$);
%    \coordinate (R124) at ($1/3*(r1)+1/3*(r2)+1/3*(r4)$);
%    \coordinate (R134) at ($1/3*(r1)+1/3*(r3)+1/3*(r4)$);
%    \coordinate (R234) at ($1/3*(r2)+1/3*(r3)+1/3*(r4)$);

    \coordinate (R) at ($1/2*(R12)+1/2 *(R34)$);

    % Massenschwerpunkte Beschriftung
    \fill (R123) circle (2pt) node[below] {$\mathbf x_{123}$};
%    \fill (R124) circle (2pt) node[below] {$\mathbf x_{124}$};
%    \fill (R134) circle (2pt) node[below] {$\mathbf x_{134}$};
%    \fill (R234) circle (2pt) node[below] {$\mathbf x_{234}$};
%    \fill (R) circle (2pt) node[below left] {$\mathbf R$};

    % Center of mass Koordinaten
    \draw[->, thick, blue]
        (r1) -- (r2)
        node[midway, above] {$\boldsymbol{q_{12}} $};
    % \draw[->, thick, purple]
    %     (r3) -- (r4)
    %     node[midway, below left] {$\boldsymbol{q_{34}} $};
    % \draw[->, thick, blue]
    %     (R12) -- (R34)
    %     node[below, above right] {$\boldsymbol{\xi_{12,34}} $};

    % Ergänzend etas
    \draw[->, thin, orange]
        (R12) -- (r3)
        node[pos=0.66, left] {$\boldsymbol{\eta_{12}^-} $};
    % \draw[->, thin, orange]
    %     (R12) -- (r4)
    %     node[pos=0.66, right] {$\boldsymbol{\eta_{12}^+} $};
    % \draw[->, thin, orange]
    %     (R34) -- (r1)
    %     node[pos=0.66, left] {$\boldsymbol{\eta_{34}^-} $};
    % \draw[->, thin, orange]
    %     (R34) -- (r2)
    %     node[pos=0.66, right] {$\boldsymbol{\eta_{34}^+} $};
\end{tikzpicture}

%% file: figure_003.tex
   \begin{tikzpicture}[scale=0.6]

% Parameter: Länge des Vektors L 
\def\L{12} 

% Radien
\def\rInner{2}
\def\rOuter{4}

% Zentren bei ± L/2
\coordinate (C1) at ({\L/2},0);
\coordinate (C2) at ({-\L/2},0);

% Automatische Rahmengröße
\pgfmathsetmacro{\xmin}{-\L/2 - \rOuter - 2}
\pgfmathsetmacro{\xmax}{ \L/2 + \rOuter + 2}
\pgfmathsetmacro{\ymin}{-\rOuter - 1}
\pgfmathsetmacro{\ymax}{ \rOuter + 2}

--- Annuli (zuerst zeichnen) ---
\fill[blue,
    fill opacity=0.12]
  (C1) circle (\rOuter)
  (C1) circle (\rInner);

\fill[blue,
    fill opacity=0.12]
  (C2) circle (\rOuter)
  (C2) circle (\rInner);

\fill[white] (C1) circle (\rInner);
\fill[white] (C2) circle (\rInner);

 % --- Innere Bälle ---
 \fill[magenta, fill opacity=0.25] (C1) circle (\rInner);
 \fill[magenta, fill opacity=0.25] (C2) circle (\rInner);

% --- Randlinien ---
\draw[thick] (C1) circle (\rInner);
\draw[thick] (C1) circle (\rOuter);

\draw[thick] (C2) circle (\rInner);
\draw[thick] (C2) circle (\rOuter);

% --- x-Achse ---
%\draw[->] (\xmin+0.5,0) -- (\xmax-1,0) node[right] {$\ell$};

% --- Nullpunkt ---
\draw plot[only marks, mark=x, mark size=5pt,
  line width=0.8pt] coordinates {(0,0)};
\node[below] at (0,0) {$x=0$};

% --- Markierung der Zentren ---
\fill (C1) circle (2pt);
\fill (C2) circle (2pt);

% Beschriftung der Zentren
\node[below] at (C1) {$+\ell/2$};
\node[below] at (C2) {$-\ell/2$};

% --- Beschriftungen der inneren Bälle (nach unten verschoben) ---
\node at ($(C1)+(0,1)$) {$B_\rho(+\ell/2)$};
\node at ($(C2)+(0,1)$) {$B_\rho(-\ell/2)$};

% --- Beschriftungen der Annuli ---
\node at ($(C1)+(0,2.8)$) {$A_\rho(+\ell/2)$};
\node at ($(C2)+(0,2.8)$) {$A_\rho(-\ell/2)$};

% --- Beschriftung des äußeren Gebietes ---
\node[font=\large] at (0,2) {$\Omega_{2\rho}(\ell/2)$};

\end{tikzpicture}

%% file: figure_004.tex
\begin{tikzpicture}[
    >=Latex,
    scale=1.0,
    axis/.style={thick},
    axisarrow/.style={->, thick},
    dashedline/.style={dashed, thick},
]

% -------------------------------------------------
% Parameter
% -------------------------------------------------
\def\xmin{-0.5}
\def\xmax{13.5}

\def\xbreakL{2.2}
\def\xbreakR{2.8}

\def\s{1.2}
\def\L{8}
\def\suppV{1.6}
\pgfmathsetmacro{\PI}{3.141592653589793}

% -------------------------------------------------
% Platzhalter-Funktionen
% -------------------------------------------------
\pgfmathdeclarefunction{Gammafun}{1}{%
    \pgfmathparse{9*(1/2*(4/sqrt(3)-1/2)+1/2)/(#1-8)^2}
}

\pgfmathdeclarefunction{fleft}{1}{%
    \pgfmathparse{3/2*(4/sqrt(3)-1/2)/(\L-#1)+1/2}
}

\pgfmathdeclarefunction{resunknown}{1}{%
    \pgfmathparse{4 - (#1-\L-0.5)*(#1-\L+0.5)*cos(4*\PI*(#1-(\L-0.5)) r)}
}

\pgfmathdeclarefunction{resknownl}{1}{%
    \pgfmathparse{4/sqrt(2)/sqrt(8-#1)}
}

\pgfmathdeclarefunction{resknownr}{1}{%
    \pgfmathparse{4/sqrt(2)/sqrt(#1-8)}
}

% \pgfmathdeclarefunction{fright}{1}{%
%     \pgfmathparse{3/2*(4/sqrt(3)-1/2)/(#1-\L)+1/2}
% }

\pgfmathdeclarefunction{fright}{1}{%
    \pgfmathparse{2*sqrt(3)/(#1-8)}
}

\pgfmathdeclarefunction{Greensright}{1}{%
    \pgfmathparse{2/sqrt(3)*exp(3)*exp(8-#1)}
}

% -------------------------------------------------
% x-Achse mit Unterbrechung
% -------------------------------------------------

% linker Teil (OHNE Pfeil)
\draw[axis] (\xmin,0) -- (\xbreakL,0);

% Unterbrechung //
\draw[thick] (\xbreakL+0.05,-0.12) -- (\xbreakL+0.25,0.12);
\draw[thick] (\xbreakL+0.25,-0.12) -- (\xbreakL+0.45,0.12);

% rechter Teil (MIT Pfeil)
\draw[axisarrow] (\xbreakR,0) -- (\xmax,0) node[right] {$x\parallel\ell$};

% -------------------------------------------------
% Achsenmarken (Ticks)
% -------------------------------------------------

% Tick bei x = 0
\draw[thick] (0,-0.12) -- (0,0.12);
\node[below] at (0,-0.12) {$x=0$};

% Tick bei x = \ell
\draw[thick] (\L,-0.12) -- (\L,0.12);
\node[below] at (\L,-0.12) {$x=\ell/2$};

% -------------------------------------------------
% y-Achse
% -------------------------------------------------
\draw[axisarrow] (0,-0.2) -- (0,5) node[above] {$\phi(\ell,\cdot)$};

% -------------------------------------------------
% Vertikale gestrichelte Linien
% -------------------------------------------------
\foreach \x in {\L+0.5,\L-0.5,\L-1.5,\L+1.5,\L-3,\L+3} {
    \draw[dashedline] (\x,0) -- (\x,4.5);
}

% -------------------------------------------------
% Abstandsmarkierungen
% -------------------------------------------------

\draw[
    <->,
    thick
]
(\L-1.5,0.4) -- (\L,0.4);
\node[black] at (\L-1,0.6) {$\rho$};

\draw[
    <->,
    thick
]
(\L-3,0.2) -- (\L,0.2);
\node[black] at (\L-2,0.4) {$2\rho$};

\draw[
    <->,
    thick
]
(\L-0.5,0.6) -- (\L,0.6);
\node[black] at (\L-0.25,0.8) {$\rho_0$};

% -------------------------------------------------
% Gamma-Kurve (unterbrochen)
% -------------------------------------------------

% linker Teil
\draw[thick,orange!80!black,domain=0:\xbreakL,smooth]
    plot (\x,{Gammafun(\x)});

% rechter Teil
\draw[thick,orange!80!black,domain=\xbreakR:\L-3,smooth]
    plot (\x,{Gammafun(\x)});

\node[orange!80!black] at (\xbreakL,0.9)
    {$\Gamma$};

% -------------------------------------------------
% Übergang links
% -------------------------------------------------

\fill[
    blue,
    fill opacity=0.12
]
(\L-3,0) rectangle (\L-1.5,4.5);

\draw[thick,blue,domain=\L-3:\L-1.5,smooth]
    plot (\x,{fleft(\x)});

\node[blue] at (\L-2.3,2.5) {$\Gamma+f$};

% -------------------------------------------------
% Mittlerer Bereich
% -------------------------------------------------
% \draw[thick,blue,domain=\L-0.5:\L+0.5,smooth]
%     plot (\x,{fmid(\x)});

\fill[
    magenta,
    fill opacity=0.25
]
(\L-1.5,0) rectangle (\L+1.5,4.5);

\draw[
    thick,
    magenta,
    domain=\L-0.5:\L+0.5,
    smooth,
    samples=200
]
plot (\x,{resunknown(\x)});

\draw[
    thick,
    magenta,
    domain=\L-1.5:\L-0.5,
    smooth,
    samples=200
]
plot (\x,{resknownl(\x)});

\draw[
    thick,
    magenta,
    domain=\L+0.5:\L+1.5,
    smooth,
    samples=200
]
plot (\x,{resknownr(\x)});

\node[magenta] at (\L,5) {$\varphi_0(\cdot-\ell/2)$};

% -------------------------------------------------
% Übergang rechts
% -------------------------------------------------
\draw[thick,blue,domain=\L+1.5:\L+3,smooth]
    plot (\x,{fright(\x)});

\fill[
    blue,
    fill opacity=0.12
]
(\L+1.5,0) rectangle (\L+3,4.5);

% -------------------------------------------------
% Fortsetzung rechts
% -------------------------------------------------
\draw[thick,orange!80!black,domain=\L+3:\L+5,smooth]
    plot (\x,{Greensright(\x)});

\end{tikzpicture}